\providecommand{\customgenericname}{}
\newcommand{\newcustomtheorem}[2]{%
  \newenvironment{#1}[1]
  {%
   \renewcommand\customgenericname{#2}%
   \renewcommand\theinnercustomgeneric{##1}%
   \innercustomgeneric
  }
  {\endinnercustomgeneric}
}
\newtheorem{theorem}{Theorem} \newtheorem{lemma}{Lemma}
\newtheorem{claim}{Claim} \newtheorem{definition}{Definition}
\newtheorem{proposition}{Proposition}
\newenvironment{condition}[1]
{\innercustomthm}
{\endinnercustomthm}
\newcommand{\codepar}[1]{\ensuremath{[\![#1]\!]}}
\newcommand{\HR}[1]{\textcolor{red} {#1}}
\newcommand{\DL}[1]{\textcolor{brown}{#1}}
\crefname{equation}{Eq.\!}{Eqs.\!}
\crefname{figure}{Fig.\!}{Figs.\!}
\mathchardef\mhyphen="2D
\begin{document}

\title{Achieving fault tolerance on capped color codes with few ancillas}


\author{Theerapat Tansuwannont}
\email{t.tansuwannont@duke.edu}
\affiliation{
    Institute for Quantum Computing and Department of Physics and Astronomy,
    University of Waterloo,
    Waterloo, Ontario, N2L 3G1, Canada
    }
\affiliation{
	Department of Electrical and Computer Engineering, Duke University, Durham, NC 27708, USA
}
    
\author{Debbie Leung}
\email{wcleung@uwaterloo.ca}
\affiliation{
   Institute for Quantum Computing and Department of Combinatorics and Optimization,
    University of Waterloo,
    Waterloo, Ontario, N2L 3G1, Canada
    }
\affiliation{
	Perimeter Institute for Theoretical Physics, 
	Waterloo, Ontario, N2L 2Y5, Canada
}

\begin{abstract}
Attaining fault tolerance while maintaining low overhead is one of the main challenges in a practical implementation of quantum circuits. 
One major technique that can overcome this problem is the flag technique, in which high-weight errors arising from a few faults can be detected by a few ancillas and distinguished using subsequent syndrome measurements. The technique can be further improved using the fact that for some families of codes, errors of any weight are logically equivalent if they have the same syndrome and weight parity, as previously shown in \cite{TL20}. 
In this work, we develop a notion of distinguishable fault set which captures both concepts of flags and weight parities, and extend the use of weight parities in error correction from \cite{TL20} to families of capped and recursive capped color codes. We also develop fault-tolerant protocols for error correction, measurement, state preparation, and logical $T$ gate implementation via code switching, which are sufficient for performing fault-tolerant Clifford computation on a capped color code, and performing fault-tolerant universal quantum computation on a recursive capped color code. Our protocols for a capped or a recursive capped color code of any distance require only 2 ancillas, assuming that the ancillas can be reused. The concept of distinguishable fault set also leads to a generalization of the definitions of fault-tolerant gadgets proposed by Aliferis, Gottesman, and Preskill.
\end{abstract}

\pacs{03.67.Pp}

\maketitle

\section{Introduction}
\label{sec:Intro}%


Fault-tolerant error correction (FTEC), a procedure which suppresses error propagation in a quantum circuit, is one of the most important components for building large-scale quantum computers. Given that the physical error rate is below some constant threshold value, an FTEC scheme along with other schemes for fault-tolerant quantum computation (FTQC) allow us to fault-tolerantly simulate any quantum circuit with arbitrarily low logical error rates \cite{Shor96,AB08,Kitaev97,KLZ96,Preskill98,TB05,ND05,AL06,AGP06}. However, lower logical error rate requires more overhead (e.g., quantum gates and ancilla qubits) \cite{Steane03,PR12,CJL16b,TYC17}. Therefore, fault-tolerant protocols which require a small number of ancillas and give high threshold value are very desirable for practical implementation.


Traditional FTEC schemes require substantial number of ancillas for error syndrome measurements. For example, the Shor error correction (EC) scheme \cite{Shor96,DA07} which is applicable to any stabilizer code requires as many ancillas as the maximum weight of the stabilizer generators. The Knill EC scheme \cite{Knill05a}, which is also applicable to any stabilizer code, requires two code block of ancillas. Meanwhile, the Steane EC scheme \cite{Steane97,Steane02} which is applicable to any CSS code requires one code block of ancillas. (The Shor scheme also requires repeated syndrome measurement, while the Knill and the Steane schemes do not.) There are several recently proposed schemes which require fewer ancillas. Yoder and Kim proposed an FTEC scheme for the \codepar{7,1,3} code which requires only 2 ancillas \cite{YK17}, and their scheme is further developed into a well-known flag FTEC scheme for the \codepar{5,1,3} code and the \codepar{7,1,3} code which also require only 2 ancillas \cite{CR17a} (where an \codepar{n,k,d} stabilizer code encodes $k$ logical qubits into $n$ physical qubits and has distance $d$). In general, a flag FTEC scheme for any stabilizer code requires as few as $d+1$ ancillas where $d$ is the code distance \cite{CR20}, with further reduction known for certain families of codes \cite{CR17a,CB18,TCL20,CKYZ20,CZYHC20}. The flag technique can also be applied to other schemes for FTQC \cite{CR17b,CC19,SCC19,BCC19,BXG19,Vui18,GMB19,LA19,DB20,RBMS21}.


How errors spread during the protocols depends on several factors such as the order of quantum gates in the circuits for syndrome measurement and the choice of stabilizer generators being measured. The idea behind the flag technique is that a few ancillas are added to the circuits in order to detect errors of high weight arising from a few faults, and the errors will be distinguished by their syndromes obtained from subsequent syndrome measurements. Note that some possible errors may be logically equivalent and need not be distinguished, and for some families of codes, we can tell whether the errors are logically equivalent using their syndromes and error weight parities. Reference \cite{TL20} combines the ideas of flags and weight parities to construct an FTEC scheme for a \codepar{49,1,9} concatenated Steane code, which can correct up to 3 faults and requires only 2 ancillas. In such a scheme, the weight parity of the error in each subblock, which is the lower-level \codepar{7,1,3} code, are determined by the results from measuring the generators of the higher-level \codepar{7,1,3} code. The scheme in \cite{TL20} uses very few ancillas compared to conventional schemes for a concatenated code (which is constructed by replacing physical qubit by a code block) and is expected to be applicable to concatenated codes other than the \codepar{49,1,9} code.  


There are families of codes that attain high distance without code concatenation. Topological codes in which the code distance can be made arbitrarily large by increasing the lattice size are good candidates for practical implementation of quantum computers since fault-tolerant protocols for these codes typically give very high accuracy thresholds \cite{DKLP02,DP10,BH13,DP13,ABCB14,BSV14,Delfosse14,BLPSW16,BNB16,CR18_deep,DBT18,DP18,KD19,KP19,MKJ19,NB19,VBK21}. Examples of two-dimensional (2D) topological stabilizer codes are 2D toric codes \cite{Kitaev97,BK98} and 2D color codes \cite{BM06}. These codes are suitable for physical implementations using superconducting qubits \cite{FMMC12,CZYHC20,CKYZ20} and qubits realized by Majorana zero modes \cite{KarzigScalable17,CBDH20} since qubits can be arranged on a 2D plane and only quantum gates involving neighboring qubits are required. Toric codes and color codes can be transformed to one another using the techniques developed in \cite{KYP15} (see also \cite{VB19}).


The simplest way to perform FTQC on a topological stabilizer code is to implement logical gates by applying physical gates transversally since doing so does not spread errors (therefore fault tolerant). Unfortunately, it is known by the Eastin-Knill theorem that a universal set of quantum operations cannot be achieved using only transversal gates \cite{EK09}. Moreover, logical gates which can be implemented transversally on a 2D topological stabilizer code are in the Clifford group \cite{BK13} (see also \cite{PY15}).
The Clifford group can be generated by the Hadamard gate ($H$), the $\frac{\pi}{4}$-gate ($S$), and the CNOT gate \cite{CRSS97,Gottesman98b}.
A transversal CNOT gate is achievable by both 2D toric codes and 2D color codes since these codes are in the CSS code family \cite{CS96,Steane96b}.  In addition, the 2D color codes have transversal $H$ and $S$ gates \cite{BM06}, so, any Clifford operation can be implemented transversally on any 2D color code.


Implementing only Clifford gates on a 2D color code is not particularly interesting since Clifford operation can be efficiently simulated by a classical computer (the result is known as Gottesman-Knill theorem) \cite{Gottesman97,NC00}. However, universality can be achieved by Clifford gates together with any gate not in the Clifford group \cite{NRS01}.
There are two compelling approaches for implementing a non-Clifford gate on a 2D color code: magic state distillation \cite{BK05} and code switching \cite{PR13,ADP14,Bombin15,KB15}. The former approach focuses on producing high-fidelity $T$ states from noisy $T$ states and Clifford operations, where $|T\rangle = (|0\rangle+\sqrt{i}|1\rangle)/\sqrt{2}$ is the state that can be used to implement non-Clifford $T = \bigl( \begin{smallmatrix}1 & 0\\ 0 & \sqrt{i} \end{smallmatrix}\bigr)$ operation. By replacing any physical gates and qubits with logical gates and blocks of code, a logical $T$ gate can be implemented using a method similar to that proposed in \cite{BK05}. The latter approach uses the gauge fixing method to switch between a 2D color code (in which Clifford gates are transversal) and a 3D color code (in which the $T$ gate is transversal). A recent study \cite{BKS21} which compares the overhead required for these two approaches shows that code switching does not outperform magic state distillation when certain FT schemes are used, except for some small values of physical error rate. Nevertheless, their results do not rule out the possibilities of FT schemes have yet to be discovered, in which the authors are hopeful that such schemes could reduce the overhead required for either of the aforementioned approaches.


The EC technique using weight parities introduced in \cite{TL20} was originally developed for the [[49,1,9]] code obtained from concatenating the \codepar{7,1,3} codes. The \codepar{7,1,3} code is also the smallest 2D color code. Surprisingly, we find that 2D color codes of any distance have certain properties which make similar technique applicable, under appropriate modifications of the original code to be described in this paper.
In order to obtain the weight parity of an error on a 2D color code, we need to make measurements of stabilizer generators of a bigger code which contains the 2D color code as a subcode.
In contrast to \cite{TL20}, the bigger code in this work is not obtained from code concatenation. Our development for FTEC protocols leads to a family of capped color codes, which are CSS subsystem codes \cite{Poulin05,Bacon06}. We study two stabilizer codes obtained from a (subsystem) capped color code through gauge fixing, namely capped color codes in H form and T form.
The code in H form which contains a 2D color code as a subcode has transversal Clifford gates, while the code in T form has transversal CNOT and transversal $T$ gates. In fact, our capped color codes bear similarities to the subsystem codes presented in \cite{JB16,BC15,JBH16}, in which qubits can be arranged on a 2D plane. In this work, we focus mainly on the construction of circuits for measuring generators of a capped color code in H form, and the construction of an FTEC scheme as well as other fault-tolerant schemes for measurement, state preparation, and Clifford operation. We also prove that our fault-tolerant schemes for capped color codes in H form of \emph{any distance} require only 2 ancillas (assuming that the ancillas can be reused). In addition, we construct a family of recursive capped color codes by recursively encoding the top qubit of capped color codes. Circuits for measuring generators of capped color codes in H form also work for recursive capped color codes, so fault-tolerant Clifford computation on a recursive capped color code of any distance using only 2 ancillas is possible. We also show that a logical $T$ gate can be fault-tolerantly implemented on a recursive capped color code of any distance using only 2 ancillas via code switching, leading to a complete set of operations for fault-tolerant universal quantum computation.




This paper is organized as follows: In \cref{sec:flag_n_WPEC}, we provide a brief review on EC technique using flags and error weight parities. We also develop the notion of distinguishable fault set in \cref{def:distinguishable}, which is the central idea of this work. In \cref{sec:3D_code}, we review basic properties of the 3D color code of distance 3 (which is defined as a subsystem code). We then provide a construction of circuits for measuring the stabilizer generators of the 3D color code in H form which give a distinguishable fault set. In \cref{sec:CCC}, we define families of capped and recursive capped color codes, whose properties are very similar to those of the 3D color code of distance 3. Afterwards, circuits for measuring the stabilizer generators of the capped color code in H form are constructed using ideas from the previous section. 
We prove \cref{thm:main} which states sufficient conditions for the circuits that can give a distinguishable fault set, then prove \cref{thm:main2,thm:main3} which state that for a capped color code in H form of any distance, a distinguishable fault set can be obtained if the circuits for measuring generators are flag circuits of a particular form.
The circuits which work for capped color codes are also applicable to recursive capped color codes.
In \cref{sec:FT_protocol}, we discuss an alternative version of fault-tolerant gadgets whose definitions are modified so that they are compatible with the notion of distinguishable fault set. 
Afterwards, we construct fault-tolerant protocols for capped and recursive capped color codes in H form. Some protocols described in this work are also applicable to other stabilizer codes whose generator measurement circuits give a distinguishable fault set.
Last, we discuss our results and provide directions for future work in \cref{sec:discussions}.

\section{Flags and error weight parities in error correction}
\label{sec:flag_n_WPEC}

In this section, we start by providing a brief review on the flag EC technique applied to the case of one fault in \cref{subsec:flag_ana}. Next, we extend the idea to the case of multiple faults in \cref{subsec:fault_set} and introduce the notion of distinguishable fault set in \cref{def:distinguishable}. Afterwards, we explain how weight parities can be used in error correction in \cref{subsec:WPEC_ana}. The equivalence of Pauli errors with the same syndrome and weight parity proved for the \codepar{7,1,3} Steane code in \cite{TL20} is also extended to a bigger family of codes in \cref{lem:err_equivalence}.

\subsection{Flag error correction}
\label{subsec:flag_ana}





Quantum computation is prone to noise, and an error on a few qubits can spread and cause a big problem in the computation if the error is not treated properly. One way to protect quantum data against noise is to use a quantum error correcting code (QECC) to encode a small number of logical qubits into a larger number of physical qubits. A quantum \codepar{n,k,d} stabilizer code \cite{Gottesman96,Gottesman97} encodes $k$ logical qubits into $n$ physical qubits and can correct errors up to weight $\tau = \lfloor(d-1)/2\rfloor$. Quantum error correction (QEC) is a process that aims to undo the corruption that happens to a codeword. 

A stabilizer code is a simultaneous $+1$ eigenspace of a list of commuting independent Pauli operators; they generates the stabilizer group for the code. For a stabilizer code, the error correction (EC) procedure involves measurements of stabilizer generators, which results in an error syndrome. The QEC is designed so that the more likely Pauli errors are either logically equivalent or have distinguishable syndrome. 
If the weight of the Pauli error $E$ occurred to a codeword is no bigger than $\tau$, $E$ can be identified by the error syndrome $\vec{s}(E)$ obtained from the generator measurements, and be corrected by applying $E^\dagger$ to the codeword.

The above working principle for a stabilizer code assumes that the syndrome measurements are perfect. In practice, every step in a quantum computation, including those in the syndrome measurements, is subject to error. An initial error can lead to a complex overall effect in the circuit. We adhere to the following terminologies and noise model in our discussion.

\begin{definition}{Location, noise model, and fault} \cite{AGP06}
	
	A circuit consists of a number of time steps and a number of qubits and is specified by operations to the qubits in each time step.  The operations can be single qubit state preparation, 1- or 2-qubit gates, or single qubit measurement.  (When nothing happens to a qubit, it goes through the 1-qubit gate of identity.)  
	A \emph{location} is labeled by a time step and the index (or indices) of a qubit (or pair of qubits) involved in an operation.  
	
	We consider the circuit-level noise in which every location is followed by \emph{depolarizing noise}: every one-qubit operation is followed by a single-qubit Pauli error $I, X, Y,$ or $Z$, and every two-qubit operation is followed by a two-qubit Pauli error of the form $P_1\otimes P_2$ where $P_1,P_2 \in \{I,X,Y,Z\}$. For a single qubit measurement (which outputs a classical bit of information), the operation is followed by either no error or a bit-flip error; this is equivalent to having a single-qubit $X$ (or $Z$) error before a measurement in $Z$ (or $X$) basis.
	
	A \emph{fault} is specified by a location and a nontrivial 1- or 2-qubit Pauli operation which describes a deviation from the ideal operation on the location. This Pauli operation is called the ``Pauli error due to the fault''.
	\label{def:noise_model}
\end{definition}

A small number of faults during the measurements can lead to an error of weight higher than $\tau$ which may cause the EC protocol to fail. To see this, first, we describe how an error of weight 1 or 2 arising from a faulty operation can propagate through a circuit and become an error of higher weight. Specifically, a Hadamard gate and a CNOT gate will transform $X$-type and $Z$-type errors as follows: 
\begingroup
\setlength\arraycolsep{1pt}	
\begin{equation}
	\begin{matrix}
		H: \quad &X &\mapsto &Z, \quad &Z &\mapsto &X, \\
		\mathrm{CNOT}: \quad &XI &\mapsto &XX, \quad &ZI &\mapsto &ZI, \\
		&IX &\mapsto &IX, \quad &IZ &\mapsto &ZZ.
	\end{matrix} \nonumber
\end{equation}%
\endgroup 

To see how errors from a few faults can cause an EC protocol to fail, let us consider a circuit for measuring a stabilizer generator of the Steane code as an example. The \codepar{7,1,3} Steane code \cite{Steane96b} is a stabilizer code which can be described by the following generators:
\begingroup
\setlength\arraycolsep{1pt}	
\begin{equation}
	\begin{matrix}
		g^x_1: &I &I &I &X &X &X &X, & \quad & g^z_1: &I &I &I &Z &Z &Z &Z,\\
		g^x_2: &I &X &X &I &I &X &X, & \quad & g^z_2: &I &Z &Z &I &I &Z &Z,\\
		g^x_3: &X &I &X &I &X &I &X, & \quad & g^z_3: &Z &I &Z &I &Z &I &Z.
	\end{matrix}
\end{equation}%
\endgroup
Logical $X$ and logical $Z$ operators of the Steane code are $X^{\otimes 7} M$ and $Z^{\otimes 7} N$ for any stabilizers $M,N$. The syndrome is a 6-bit string of the form ($\vec{s}_x|\vec{s}_z$), with the $i$-th bit being 0 (or 1) if measuring the $i$-th generator (ordered as $g^x_{1}$, $g^x_{2}$, $g^x_{3}$, then $g^z_{1}$, $g^z_{2}$, $g^z_{3}$) gives $+1$ (or $-1$) eigenvalue.




Suppose that during the syndrome measurement, all circuits for measuring stabilizer generators are perfect except for a circuit for measuring $g^z_1$ which has at most 1 fault.
Consider a circuit for measuring $g^z_1$ and storing the syndrome using one ancilla qubit (called the \emph{syndrome ancilla}) as in \cref{subfig:nonflag_circuit}. Also, assume that at most one CNOT gate causes either $II,IZ,ZI,$ or $ZZ$ error. Because of error propagation, a $Z$ error occurred to the syndrome ancilla can propagate back to one or more data qubit(s). As a result, we find that possible errors on data qubits arising from at most 1 CNOT fault (up to multiplication of $g^z_1$) are,
\begin{equation}
	I,Z_4,Z_5,Z_6,Z_7,Z_6Z_7.
\end{equation}
A circuit fault may also cause the syndrome bit to flip. In order to obtain the syndrome exactly corresponding to the data error, one can perform full syndrome measurements until the outcomes are repeated two times in a row, then do the error correction using the repeated syndrome.
However, note that the Steane code which can correct any error up to weight 1 must be able to correct the following errors as well:
\begin{equation}
	I,Z_1,Z_2,Z_3,Z_4,Z_5,Z_6,Z_7.
\end{equation}

\begin{figure}[tbp]
	\centering
	\begin{subfigure}{0.23\textwidth}
		\includegraphics[width=\textwidth]{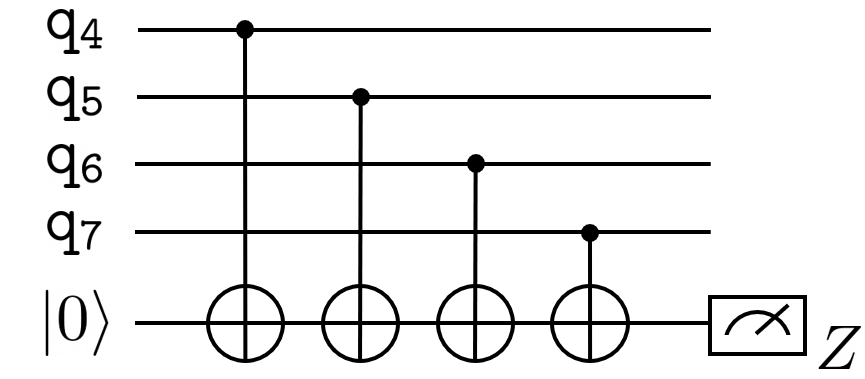}
		\captionsetup{justification=centering}
		\caption{}
		\label{subfig:nonflag_circuit}
	\end{subfigure}	
	\begin{subfigure}{0.29\textwidth}
		\includegraphics[width=\textwidth]{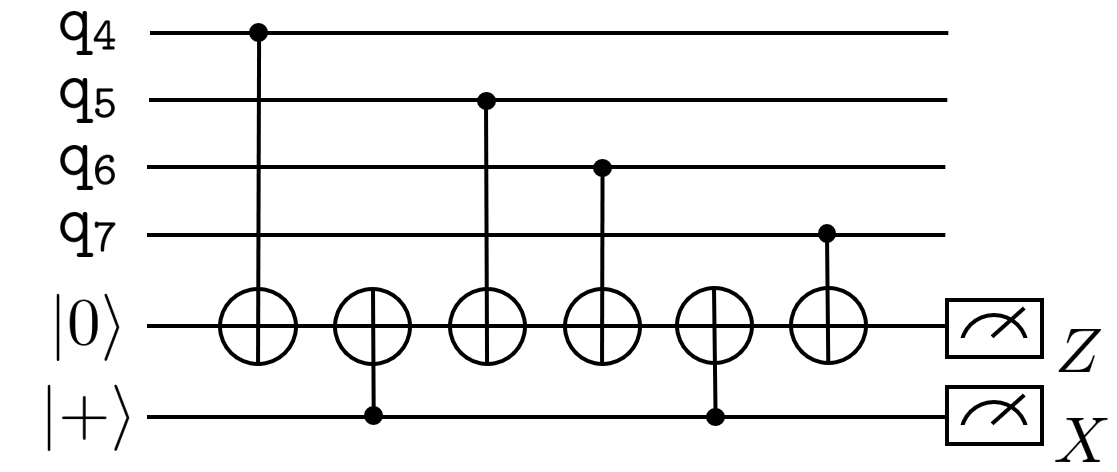}
		\captionsetup{justification=centering}
		\caption{}
		\label{subfig:flag_circuit}
	\end{subfigure}
	\begin{subfigure}{0.21\textwidth}
		\includegraphics[width=\textwidth]{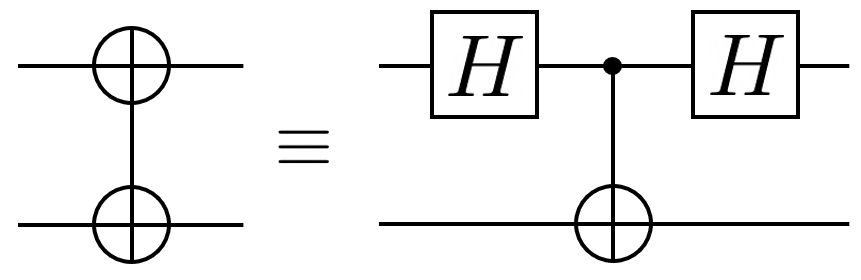}
		\captionsetup{justification=centering}
		\caption{}
		\label{subfig:XNOT}
	\end{subfigure}
	\caption{(a) An example of non-flag circuit for measuring generator $g_1^z$ of the \codepar{7,1,3} code. Only qubits on which the operator acts are displayed. The measurement result 0 and 1 obtained from the syndrome ancilla correspond to the $+1$ and $-1$ eigenvalues of $g_1^z$. (b) An example of flag circuit for measuring $g_1^z$. The state of the flag ancilla can flip from $\left|+\right\rangle$ to $\left|-\right\rangle$ if some fault occurs in between two flag CNOT gates. A circuit for measuring $X$-type generator can be obtained by replacing each CNOT gate with the gate shown in (c).}
	\label{fig:flag_n_nonflag}
\end{figure}

\noindent Errors $Z_1$ and $Z_6Z_7$ have the same syndrome $(0,0,1|0,0,0)$ but are not logically equivalent, and subsequent syndrome measurements cannot distinguish between these two cases. This means that if a CNOT fault leads to the $Z_6Z_7$ error, a correction step for the syndrome $(0,0,1|0,0,0)$ that applies $Z_1^\dagger$ to the data qubits will result in a logical error $Z_1Z_6Z_7$ on the data qubits, causing the EC protocol to fail.

The goal of this work is to design an EC protocol which is \emph{fault tolerant}; that is, we want to make sure that any subsequent error arising from a small number of faults will still be correctable by the protocol regardless of its weight (the formal definitions of fault tolerance will be discussed in \cref{subsec:FT_def}).


One way to solve the error distinguishing issue is to use traditional FTEC schemes such as the ones proposed by Shor \cite{Shor96,DA07}, Steane \cite{Steane97,Steane02}, or Knill \cite{Knill05a}. However, these schemes require a large number of ancillas. 
An alternative way to solve the problem is to add an additional ancilla qubit in a circuit for measuring $g^z_1$ as shown in \cref{subfig:flag_circuit}. A circuit of this form is called \emph{flag circuit} \cite{CR17a} (in contrast to the circuit in \cref{subfig:nonflag_circuit}, which is called \emph{non-flag circuit}). The additional ancilla qubit is called \emph{flag ancilla}, which is initially prepared in the state $|+\rangle$. There are two types of CNOT gates in a flag circuit: a \emph{data CNOT} which couples one of the data qubits and the syndrome ancilla, and a \emph{flag CNOT} which couples the flag ancilla and the syndrome ancilla. Whenever a data CNOT in between two flag CNOTs causes either $IZ$ or $ZZ$ error, a $Z$ error will propagate from the syndrome ancilla to the flag ancilla, causing the state of the flag ancilla to flip to $|-\rangle$. In general, a flag circuit may have more than one flag ancilla, and data and flag CNOTs may be arranged in a complicated way so that a certain number of faults can be caught by the flag ancillas.

By using the circuit in \cref{subfig:flag_circuit} for measuring $g^z_1$, we find that possible errors on the data qubits arising from at most 1 CNOT fault corresponding to each flag measurement outcome are,
\begin{equation}
	\begin{matrix*}[l]
		0: &I,Z_4,Z_5,Z_6,Z_7,\\
		1: &I,Z_4,Z_6Z_7,Z_7,
	\end{matrix*} \label{eq:flag_ex}
\end{equation}
where the outcome 0 and 1 correspond to $|+\rangle$ and $|-\rangle$ states, respectively. We can see that the flag measurement outcome is 1 whenever $Z_6Z_7$ occurs. In contrast, an input error $Z_1$ will not flip the state of the flag ancilla, so it always corresponds to the flag measurement outcome 0. Therefore, $Z_1$ and $Z_6Z_7$ can be distinguished using the flag measurement outcome, and an appropriate error correction for each case can be applied to correct such an error. The main advantage of the flag technique is that the number of ancillas required for the flag FTEC protocol is relatively small compared to that required for the traditional FTEC protocols (assuming that ancilla preparation and measurement are fast and the ancillas can be reused).

\subsection{Distinguishable fault set}
\label{subsec:fault_set}



For a general stabilizer code which can correct errors up to weight $\tau=\lfloor(d-1)/2\rfloor$, we would like to construct circuits for syndrome measurement in a way that all possible errors arising from up to $t$ faults (where $t \leq \tau$) can be corrected, and $t$ is as close to $\tau$ as possible. Note that these errors include any single-qubit errors and errors arising from any fault in any circuit involved in the syndrome measurement. For simplicity, this work will focus mainly on a stabilizer code in the Calderbank-Shor-Steane (CSS) code family \cite{CS96,Steane96b}, in which $X$-type and $Z$-type errors can be detected and corrected separately.  


For a given CSS code, a circuit for measuring $Z$-type generator will look similar to a circuit in \cref{subfig:nonflag_circuit} or \cref{subfig:flag_circuit}, except that there will be $w$ data CNOT gates for a $Z$-type generator of weight $w$. A circuit can have any number of flag ancillas (or have no flag ancillas). There are several factors that can determine the ability to distinguish possible errors; for example, the number of flag ancillas, the ordering of data and flag CNOT gates, and the choice of generators being used for the syndrome measurement \cite{CR17a}. A circuit for measuring $X$-type generator is similar to a circuit for measuring $Z$-type generator, except that each CNOT gate is replaced by the gate displayed in \cref{subfig:XNOT}.

For a given $t$, finding all possible combinations of faults up to $t$ faults can be laborious since there are many circuits involved in the syndrome measurement, and each circuit have many gates. To simplify our analysis, we will first consider the case that there is only one CNOT fault in one of the circuits for measuring $Z$-type generators (similar to \cref{subfig:nonflag_circuit} or \cref{subfig:flag_circuit}). Suppose that there are a total of $c$ flag ancillas involved in a single round of full syndrome measurement (counted from all circuits). We define a \emph{flag vector} $\in \mathbb{Z}^c_2$ to be a bitstring wherein each bit is the measurement outcome of each flag ancilla. There are two mathematical objects associated with each fault: a data error arising from the fault, and a flag vector corresponding to the fault.

Recall that a faulty CNOT gate can cause a two-qubit error of the form $P_1\otimes P_2$ where $P_1,P_2 \in \{I,X,Y,Z\}$. However, there are many cases of a single fault which are equivalent, meaning that they can give rise to the same data error and the same flag vector. We find that all possible cases in which a single fault can lead to a purely $Z$-type error on the data qubits can be obtained by considering only (1) the cases that a faulty CNOT gate in a circuit for measuring $Z$-type generator causes $IZ$ error, and (2) the cases that a $Z$ error occurs to any data or ancilla qubit.
%
This follows from the following facts \cite{TCL20}:
\begin{enumerate}
	\item The case that a faulty CNOT gate causes $ZZ$ error is equivalent to the case that the preceding CNOT gate causes error $IZ$ (while the case that the first CNOT gate in a circuit causes $ZZ$ error is equivalent to the case that a $Z$ error occurs to an ancilla qubit).
	\item The case that a faulty CNOT gate causes $XZ,YZ$ error is equivalent to the case that an $X$ error occurs to a data qubit and a faulty CNOT gate causes $IZ$ or $ZZ$ error.
	\item The case that a faulty CNOT gate causes $XI,YI,ZI,IX,XX,YX$ or $ZX$ error can be considered as the case that a single-qubit error occurs to a data qubit since an $X$ error occurred to the syndrome ancilla will not propagate back to any data qubit.
	\item The case that a faulty CNOT gate causes $IY,XY,YY$ or $ZY$ error is similar to the case that a faulty CNOT gate causes $IZ,XZ,YZ$ or $ZZ$ error,
	\item An ancilla preparation or measurement fault can be considered as the case that $X$ or $Z$ error occurred to an ancilla qubit (either syndrome or flag ancilla).
	\item A CSS code can detect and correct $X$-type and $Z$-type errors separately, and a single fault in a circuit for measuring $X$-type generator cannot cause an $Z$-type error of weight greater than 1 (and vice versa).
\end{enumerate}
Moreover, if $X$-type and $Z$-type generators have similar forms and the gate permutations in the measuring circuits are the same, then all possible faults that can lead to $X$-type errors on the data qubits are of similar form.

If there are many faults during the protocol, the data errors and the flag vectors caused by each fault can be combined \cite{TL20}. In particular, a fault combination can be defined as follows:
\begin{definition}{Fault combination}
	
	A \emph{fault combination} $\Lambda =\{\lambda_{1},\lambda_{2},\dots,\lambda_{r}\}$ is a set of $r$ faults $\lambda_{1}, \lambda_{2}, \cdots, \lambda_{r}$. 
	Suppose that the Pauli error due to the fault $\lambda_{i}$ can propagate through the circuit and lead to \emph{data error} $E_{i}$ and \emph{flag vector} $\vec{f}_{i}$. The \emph{combined data error} $\mathbf{E}$ and \emph{cumulative flag vector} $\vec{\mathbf{f}}$ corresponding to $\Lambda$ are defined as follows:
	\begin{align}
		\mathbf{E}&=\prod_{i=1}^r E_{i}, \label{eq:combined_E}\\
		\vec{\mathbf{f}}&=\sum_{i=1}^r \vec{f}_{i}\;(\mathrm{mod}\;2). \label{eq:cumulative_f}
	\end{align}
	\label{def:fault_combi}%
\end{definition}
\noindent Note that the error syndrome of the combined data error is $\vec{s}(\mathbf{E})=\sum_{i=1}^r \vec{s}(E_{i})\;(\mathrm{mod}\;2)$. For example, suppose that a fault combination $\Lambda$ arises from two faults $\lambda_1$ and $\lambda_2$ which can lead to data errors $E_1$ and $E_2$, and cumulative flag vectors $\vec{f}_1$ and $\vec{f}_2$. Then, the combined data error $\mathbf{E}$ and the cumulative flag vector $\vec{\mathbf{f}}$ of $\Lambda$ are $\mathbf{E}=E_1 \cdot E_2$ and $\vec{\mathbf{f}}=\vec{f}_1+\vec{f}_2\;(\mathrm{mod}\;2)$.



When faults occur in an actual protocol, the faulty locations and the combined data error are not known. In order to determine the combined data error so that the error correction can be done, we will try to measure the error syndrome of the combined data error, and calculate the cumulative flag vector from the flag measurement results obtained since the beginning of the protocol. These measurements, in turn, are subject to errors. The full syndrome measurements will be performed until the syndromes and the cumulative flag vectors are repeated for a certain number of times (similar to the Shor FTEC scheme); the full details of the protocol will be described in \cref{subsec:FTEC_ana}. 
(Note that by defining the cumulative flag vector as a sum of flag vectors, we lose the information of the ordering in which each fault occurs. However, we find that fault-tolerant protocols presented in this work can still be constructed without such information.)

As previously explained, error correction can fail if there are different faults that lead to non-equivalent errors but there is no way to distinguish them using their error syndromes or flag measurement results. 
To avoid this, all possible fault combinations must satisfy some conditions so that they can be distinguished. In particular, for a given set of circuits for measuring stabilizer generators, all possible fault combinations can be found, and their corresponding combined data error and cumulative flag vector can be calculated. Let the fault set $\mathcal{F}_t$ be the set of all possible fault combinations arising from up to $t$ faults. We will be able to distinguish all fault combinations if the fault set satisfies the conditions in the following definition:

\begin{definition}{Distinguishable fault set}
	
	Let the \emph{fault set} $\mathcal{F}_t$ denote the set of all possible fault combinations arising from up to $t$ faults and let $S$ be the stabilizer group of the quantum error correcting code used to encode the data. We say that $\mathcal{F}_t$ is \emph{distinguishable} if for any pair of fault combinations $\Lambda_p,\Lambda_q \in \mathcal{F}_t$, at least one of the following conditions is satisfied:
	\begin{enumerate}
		\item $\vec{s}(\mathbf{E}_p) \neq \vec{s}(\mathbf{E}_q)$, or
		\item $\vec{\mathbf{f}}_p \neq \vec{\mathbf{f}}_q$, or
		\item $\mathbf{E}_p = \mathbf{E}_q\cdot M$ for some stabilizer $M \in S$,
	\end{enumerate}
	where $\mathbf{E}_p,\vec{\mathbf{f}}_p$ correspond to $\Lambda_p$, and $\mathbf{E}_q,\vec{\mathbf{f}}_q$ correspond to $\Lambda_q$.
	Otherwise, we say that $\mathcal{F}_t$ is \emph{indistinguishable}.
	\label{def:distinguishable}%
\end{definition}

An example of a distinguishable fault set with $t=1$ is the fault set corresponding to \cref{eq:flag_ex} (assuming that a fault occurs in a circuit for measuring $g_1^z$ only). In that case, we can see that for any pair of faults, either the syndromes of the data errors or the flag measurement outcomes are different. 


The following proposition states the relationship between `correctable' and `detectable' faults. This is similar to the fact that a stabilizer code of distance $d$ can detect errors up to weight $d-1$ and can correct errors up to weight $\tau=\lfloor(d-1)/2\rfloor$ \cite{Gottesman97}.


\begin{proposition}
	$\mathcal{F}_t$ is distinguishable if and only if a fault combination corresponding to a nontrivial logical operator and the zero cumulative flag vector is not in $\mathcal{F}_{2t}$.
	\label{prop:2t}%
\end{proposition}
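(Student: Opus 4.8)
The statement is the fault-tolerance analogue of the familiar fact that a distance-$d$ code corrects $\tau=\lfloor(d-1)/2\rfloor$ errors, and I would prove it by establishing the equivalent contrapositive: $\mathcal{F}_t$ is \emph{in}distinguishable if and only if $\mathcal{F}_{2t}$ contains a fault combination whose combined data error is a nontrivial logical operator and whose cumulative flag vector is $\vec 0$. The engine of the argument is that fault combinations can be merged and split. Given $\Lambda_p,\Lambda_q\in\mathcal{F}_t$, one forms a single fault combination $\Lambda\in\mathcal{F}_{2t}$ by taking the union of their fault locations and, at any location appearing in both, multiplying the two Pauli errors (and dropping the location if the product is trivial). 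Since the propagation of a Pauli error through a Clifford circuit is multiplicative, $\Lambda$ has combined data error $\mathbf{E}=\mathbf{E}_p\mathbf{E}_q$ and cumulative flag vector $\vec{\mathbf{f}}=\vec{\mathbf{f}}_p+\vec{\mathbf{f}}_q\pmod 2$, in accordance with \cref{def:fault_combi}. Conversely, any $\Lambda\in\mathcal{F}_{2t}$ can be partitioned into two disjoint sub-combinations $\Lambda_p,\Lambda_q$ each containing at most $t$ faults, hence each in $\mathcal{F}_t$, and re-merging them recovers $\mathbf{E}=\mathbf{E}_p\mathbf{E}_q$ and $\vec{\mathbf{f}}=\vec{\mathbf{f}}_p+\vec{\mathbf{f}}_q$.

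Next I would record an algebraic reformulation of conditions (1) and (3) of \cref{def:distinguishable}. If $\vec{s}(\mathbf{E}_p)=\vec{s}(\mathbf{E}_q)$, then $\vec{s}(\mathbf{E}_p\mathbf{E}_q)=\vec{s}(\mathbf{E}_p)+\vec{s}(\mathbf{E}_q)=\vec 0\pmod 2$, so $\mathbf{E}_p\mathbf{E}_q$ commutes with every stabilizer; and, using $\mathbf{E}_q^2=I$, condition (3)---that $\mathbf{E}_p=\mathbf{E}_q M$ for some $M\in S$---holds if and only if $\mathbf{E}_p\mathbf{E}_q\in S$. (If $\mathbf{E}_p\mathbf{E}_q$ belongs to $S$ it squares to $I$, which forces $\mathbf{E}_p$ and $\mathbf{E}_q$ to commute, so the ordering of this product is immaterial.) Thus, for a pair with equal syndromes, condition (3) \emph{fails} precisely when $\mathbf{E}_p\mathbf{E}_q$ commutes with all stabilizers but is not itself a stabilizer, i.e.\ is a nontrivial logical operator. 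Together with condition (2), which is simply $\vec{\mathbf{f}}_p+\vec{\mathbf{f}}_q=\vec 0$, this says: a pair $\Lambda_p,\Lambda_q$ violates all three conditions of \cref{def:distinguishable} if and only if $\mathbf{E}_p\mathbf{E}_q$ is a nontrivial logical operator and $\vec{\mathbf{f}}_p+\vec{\mathbf{f}}_q=\vec 0$.

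The proposition then follows by combining the two ingredients. If $\mathcal{F}_t$ is indistinguishable, choose an offending pair $\Lambda_p,\Lambda_q\in\mathcal{F}_t$ and merge it into $\Lambda\in\mathcal{F}_{2t}$; by the reformulation and the merge rule, $\Lambda$ has a nontrivial logical operator as its combined data error and $\vec 0$ as its cumulative flag vector. Conversely, given such a $\Lambda\in\mathcal{F}_{2t}$, split it into $\Lambda_p,\Lambda_q\in\mathcal{F}_t$; then $\mathbf{E}_p\mathbf{E}_q=\mathbf{E}$ is a nontrivial logical operator and $\vec{\mathbf{f}}_p+\vec{\mathbf{f}}_q=\vec{\mathbf{f}}=\vec 0$, so the pair witnesses indistinguishability. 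The step I expect to need the most care is the merge/split bookkeeping: checking that combining two fault combinations acting on possibly overlapping locations genuinely yields an element of $\mathcal{F}_{2t}$ with exactly the claimed combined data error and cumulative flag vector---this is where multiplicativity of Pauli propagation and the convention that a trivial Pauli means ``no fault'' are used---and the small phase/ordering point that any product of data errors lying in $S$ must commute. Under the CSS-code reduction of \cref{subsec:fault_set}, where all data errors of a fixed type commute and carry no nontrivial phase, these subtleties disappear.
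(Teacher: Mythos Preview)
Your proposal is correct and follows essentially the same route as the paper's own proof: merge a bad pair in $\mathcal{F}_t$ into a single element of $\mathcal{F}_{2t}$ for one direction, and split an element of $\mathcal{F}_{2t}$ into two elements of $\mathcal{F}_t$ for the other, using additivity of syndromes and flag vectors. The paper handles the merge/split step more tersely by writing $\tilde{\Lambda}_r = \Lambda_p \cup \Lambda_q$ as a set-theoretic union of fault sets; your extra bookkeeping about overlapping locations and phase/ordering is more careful than what the paper records, but not required for the argument to go through.
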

\begin{proof}
	($\Rightarrow$) Let $\Lambda_p,\Lambda_q \in \mathcal{F}_t$ be fault combinations arising from up to $t$ faults, let $\tilde{\Lambda}_r \in \mathcal{F}_{2t}$ be a fault combination arising from up to $2t$ faults, and let $S$ be the stabilizer group. First, observe that for any $\tilde{\Lambda}_r \in \mathcal{F}_{2t}$, there exist $\Lambda_p,\Lambda_q \in \mathcal{F}_t$ such that $\tilde{\Lambda}_r = \Lambda_p \cup \Lambda_q$ (where the union of two fault combinations is similar to the union of two sets). Now suppose that $\mathcal{F}_t$ is distinguishable. Then, for each pair of $\Lambda_p,\Lambda_q$ in $\mathcal{F}_t$, $\vec{s}(\mathbf{E}_p)\neq \vec{s}(\mathbf{E}_q)$ or $\vec{\mathbf{f}}_p \neq \vec{\mathbf{f}}_q$ or $\mathbf{E}_p = \mathbf{E}_q\cdot M$ for some stabilizer $M \in S$. 
	We find that $\tilde{\Lambda}_r=\Lambda_p \cup \Lambda_q$ corresponds to $\mathbf{E}_r$ and $\vec{\mathbf{f}}_r$ such that $\vec{s}(\mathbf{E}_r)=\vec{s}(\mathbf{E}_p)+\vec{s}(\mathbf{E}_q)\neq 0$ or  $\vec{\mathbf{f}}_r = \vec{\mathbf{f}}_p+\vec{\mathbf{f}}_q\neq 0$ or $\mathbf{E}_r = \mathbf{E}_p \cdot \mathbf{E}_q = M$ for some stabilizer $M \in S$. This is true for any $\tilde{\Lambda}_r \in \mathcal{F}_{2t}$, meaning that there is no fault combination in $\mathcal{F}_{2t}$ which corresponds to a nontrivial logical operator and the zero cumulative flag vector.
	
	($\Leftarrow$) As before, we know that for any $\tilde{\Lambda}_r \in \mathcal{F}_{2t}$, there exist $\Lambda_p,\Lambda_q \in \mathcal{F}_t$ such that $\tilde{\Lambda}_r = \Lambda_p \cup \Lambda_q$. Now suppose that $\mathcal{F}_t$ is indistinguishable. Then, there are some pair of $\Lambda_p,\Lambda_q$ in $\mathcal{F}_t$ such that $\vec{s}(\mathbf{E}_p)= \vec{s}(\mathbf{E}_q)$, $\vec{\mathbf{f}}_p = \vec{\mathbf{f}}_q$, and $\mathbf{E}_p \cdot \mathbf{E}_q$ is not a stabilizer in $S$. For such pair, we find that $\tilde{\Lambda}_r=\Lambda_p \cup \Lambda_q$ corresponds to $\mathbf{E}_r$ and $\vec{\mathbf{f}}_r$ such that $\vec{s}(\mathbf{E}_r)=\vec{s}(\mathbf{E}_p)+\vec{s}(\mathbf{E}_q)= 0$, $\vec{\mathbf{f}}_r = \vec{\mathbf{f}}_p+\vec{\mathbf{f}}_q = 0$, and $\mathbf{E}_r = \mathbf{E}_p \cdot \mathbf{E}_q$ is not a stabilizer in $S$. Therefore, there is a fault combination corresponding to a nontrivial logical operator and the zero cumulative flag vector in $\mathcal{F}_{2t}$. 
\end{proof}



Finding a circuit configuration which gives a distinguishable fault set is one of the main goals of this work. We claim that for a given set of circuits for measuring generators of a stabilizer code, if the fault set is distinguishable, an FTEC protocol for such a code can be constructed. However, we will defer the proof of this claim until \cref{subsec:FTEC_ana}.




\subsection{Finding equivalent errors using error weight parities}
\label{subsec:WPEC_ana}

One goal of this work is to find a good combination of stabilizer code and a set of circuits for measuring the code generators in which the corresponding fault set is distinguishable. As we see in \cref{def:distinguishable}, whether each pair of fault combinations can be distinguished depends on the syndrome of the combined data error and the cumulative flag vector corresponding to each fault combination, and these features heavily depend on the structure of the circuits. However, we should note that there is no need to distinguish a pair of fault combinations whose combined data errors are logically equivalent. Therefore, if the circuits for a particular code are designed in a way that large portions of fault combinations can give equivalent errors, the fault set arising from the circuits will be more likely distinguishable.

For a general stabilizer code, it is not obvious to see whether two Pauli errors with the same syndrome are logically equivalent or off by a multiplication of some nontrivial logical operator. Fortunately, for some CSS codes, it is possible to check whether two Pauli errors with the same syndrome are logically equivalent by comparing their error weight parities, defined as follows:
\begin{definition}
	The \emph{weight parity} of Pauli error $E$, denoted by $\mathrm{wp}(E)$, is 0 if $E$ has even weight, or is 1 if $E$ has odd weight.
\end{definition}

In \cite{TL20}, we prove that for the \codepar{7,1,3} Steane code and the \codepar{23,1,7} Golay code, errors with the same syndrome and weight parity are logically equivalent. In this work, the idea is further extended to a family of \codepar{n,k,d} CSS codes in which $n$ is odd, $k$ is 1, all stabilizer generators have even weight, and $X^{\otimes n}$ and $Z^{\otimes n}$ are logical $X$ and logical $Z$ operators, respectively. The lemma (adapted from Claim 1 in \cite{TL20}) is as follows:

\begin{lemma}
	Let $C$ be an \codepar{n,k,d} CSS code in which $n$ is odd, $k=1$, all stabilizer generators have even weight, and $X^{\otimes n}$ and $Z^{\otimes n}$ are logical $X$ and logical $Z$ operators. Also, let $S_x,S_z$ be subgroups generated by $X$-type and $Z$-type generators of $C$, respectively. Suppose $E_1,E_2$ are Pauli errors of any weights with the same syndrome. 
	\begin{enumerate}
		\item If $E_1,E_2$ are $Z$-type errors, then $E_1,E_2$ have the same weight parity if and only if $E_1 = E_2 \cdot M$ for some $M \in S_z$.
		\item If $E_1,E_2$ are $X$-type errors, then $E_1,E_2$ have the same weight parity if and only if $E_1 = E_2 \cdot M$ for some $M \in S_x$.
	\end{enumerate}
	\label{lem:err_equivalence}%
\end{lemma}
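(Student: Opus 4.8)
The plan is to reduce everything to two facts: that the weight parity $\mathrm{wp}$ is a group homomorphism on $Z$-type (resp.\ $X$-type) Pauli operators, and that for a CSS code with $k=1$ the $Z$-type Paulis with trivial syndrome form exactly $S_z\sqcup Z^{\otimes n}S_z$. I would prove part~1 (the $Z$-type case) in full; part~2 then follows by the symmetric argument with $X\leftrightarrow Z$ and $S_x\leftrightarrow S_z$, so no separate work is needed there.

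First I would record the homomorphism property: the product of two $Z$-type Paulis is supported on the symmetric difference of their supports, and $|A\triangle B|\equiv|A|+|B|\pmod{2}$, so $\mathrm{wp}(E\cdot F)=\mathrm{wp}(E)+\mathrm{wp}(F)\pmod{2}$ for any $Z$-type $E,F$. Since each generator of $S_z$ has even weight by hypothesis, every element of $S_z$ has weight parity $0$. This already yields the ``$\Leftarrow$'' direction of part~1: if $E_1=E_2\cdot M$ with $M\in S_z$ then $\mathrm{wp}(E_1)=\mathrm{wp}(E_2)+\mathrm{wp}(M)=\mathrm{wp}(E_2)$ (and multiplying by a stabilizer does not change the syndrome, consistent with the hypothesis).

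For the ``$\Rightarrow$'' direction, note that $E_1E_2$ is $Z$-type and, since $E_1,E_2$ share a syndrome, has trivial syndrome; hence it commutes with every stabilizer generator, in particular with all $X$-type generators. In the $\mathbb{F}_2$ description, the $Z$-type Paulis commuting with all $X$-type generators form the orthogonal complement $V_x^\perp$ of the $m_x$-dimensional $X$-generator space, so $\dim V_x^\perp=n-m_x=m_z+1$ using $m_x+m_z=n-k=n-1$; this complement contains the $m_z$-dimensional space $V_z$ of $Z$-type stabilizers, and it also contains $Z^{\otimes n}$ because a logical operator commutes with all stabilizers, and $Z^{\otimes n}\notin S_z$. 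By the dimension count it follows that $V_x^\perp=S_z\sqcup Z^{\otimes n}S_z$. Thus either $E_1E_2\in S_z$, which gives the claim, or $E_1E_2\in Z^{\otimes n}S_z$, in which case $\mathrm{wp}(E_1E_2)=\mathrm{wp}(Z^{\otimes n})=1$ because $n$ is odd --- contradicting $\mathrm{wp}(E_1)=\mathrm{wp}(E_2)$. Hence $E_1E_2\in S_z$, i.e.\ $E_1=E_2\cdot M$ for some $M\in S_z$. The one step that needs care is the coset decomposition of $V_x^\perp$: it relies on the dimension count together with the hypothesis that $Z^{\otimes n}$ is a genuine representative of logical $Z$ (so that it lies in $V_x^\perp$ but not in $S_z$), after which everything else is routine bookkeeping with the additive weight parity.
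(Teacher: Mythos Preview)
Your proof is correct and follows essentially the same approach as the paper's: both reduce to showing that a $Z$-type operator with trivial syndrome lies in $S_z \sqcup Z^{\otimes n}S_z$, then use the weight-parity homomorphism (even-weight generators, odd $n$) to rule out the nontrivial coset. The only cosmetic difference is that you establish the coset decomposition via an explicit $\mathbb{F}_2$ dimension count on $V_x^\perp$, whereas the paper invokes the normalizer structure (generated by stabilizers and logical $X$, $Z$) directly; these are equivalent formulations of the same fact.
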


\begin{proof}
	
	We focus on the first case when $E_1,E_2$ are $Z$-type errors and omit the similar proof for the second case. First, recall that the normalizer group of the stabilizer group (the subgroup of Pauli operators that commute with all stabilizers) is generated by the stabilizer generators together with the logical $X$ and the logical $Z$. Since $E_1,E_2$ have the same syndrome, their product $N = E_1 E_2$ has trivial syndrome, and is thus in the normalizer group. So we can express $N$ as a product of the stabilizer generators and the logical $X$ and $Z$'s.  
	But there is no $X$-type factors (since $N$ is $Z$-type). Therefore, 
	$N = M (Z^{\otimes n})^a$ where $M \in S_z$ and $a \in \{0,1\}$. 
	
	Next, we make an observation.  Let $M_1, M_2$ be two $Z$-type operators, with respective weights $w_1,w_2$.  
	The weight of the product $M_1 M_2$ is $w_1+w_2-2c$, where $c$ is the number of qubits supported on both $M_1$ and $M_2$.
	From this observation, and the fact that all generators have even weight, we know $M$ has even weight.   
	Also, from the same observation, and the hypothesis that $E_1, E_2$ have the same weight parity, $N$ also has even weight. 
	If $a=1$, $N = M (Z^{\otimes n})^a$ will contradict the observation, so, $a=0$, $N=M$, and  
	$E_1 E_2 = M \in S_z$ as claimed. On the other hand, if we assume that $E_1, E_2$ have different weight parities, then $N$ has odd weight and $a=1$, which implies that $E_1E_2 = M (Z^{\otimes n})$ for some $M \in S_z$.
\end{proof}

\cref{lem:err_equivalence} provides a possible way to perform error correction using syndromes and weight parities, and it can help us find a good code and circuits in which the fault set is distinguishable. In particular, for a given CSS code satisfying \cref{lem:err_equivalence}, if the error syndrome and the weight parity of the data error can be measured perfectly, then an EC operator which can map the erroneous codeword back to the original codeword can be determined without failure. The EC operator can be any Pauli operator that has the same syndrome and the same weight parity as those of the data error. For example, if the \codepar{7,1,3} Steane code is being used and the data error is $Z_1Z_3Z_6Z_7$, we can use $Z_1 Z_2$ as an EC operator to do the error correction.

However, measuring the weight parity should not be done directly on the codeword; measuring weight parities of $Z$-type and $X$-type errors correspond to measuring $X^{\otimes n}$ and $Z^{\otimes n}$, respectively, which may destroy the superposition of the encoded state. Moreover, $X^{\otimes n}$ and $Z^{\otimes n}$ do not commute. Fortunately, if we have two codes $C_1,C_2$ such that $C_1$ is a subcode of $C_2$, then the weight parity of an error on $C_1$ can sometimes be determined by the measurement results of the generators of $C_2$.


In \cite{TL20} in which an FTEC protocol for a \codepar{49,1,9} concatenated Steane code is developed, we consider the case that $C_1$ is the \codepar{7,1,3} Steane code and $C_2$ is the \codepar{49,1,9} concatenated code. The error weight parities for each subblock of the 7-qubit code are determined by the syndrome obtained from the measurement of the \codepar{49,1,9} code generators. Afterwards, error correction is performed blockwisely using the weight parity of the error in each subblock, together with the syndrome obtained from the measurement of the 7-qubit code generators for such a subblock. We also find some evidences suggesting that a similar error correction technique may be applicable to other concatenated codes such as the concatenated Golay code and a concatenated Steane code with more than 2 levels of concatenation.



In this work, we will use a different approach; we will consider a case that $C_2$ is not constructed from concatenating $C_1$'s. In \cref{sec:3D_code}, we will consider the 3D color code of distance 3 in the form that has a 2D color code of distance 3 as a subcode, and we will try to construct circuits for measuring its generators which give a distinguishable fault set. We will extend the construction ideas to families of capped and recursive color codes in \cref{sec:CCC}. Fault-tolerant protocols for the code and circuits which gives a distinguishable fault set will be discussed in \cref{sec:FT_protocol}.

\section{Syndrome measurement circuits for the 3D color code of distance 3}
\label{sec:3D_code}

In this section, we will try to find circuits for measuring generators of the 3D color code of distance 3 which gives a distinguishable fault set. We will first define a 3D color code of distance 3 as a CSS subsystem code and observe some of its properties which is useful for fault tolerant quantum computation. Afterwards, we will give the CNOT orderings for the circuits which can make the fault set become distinguishable. 

\subsection{The 3D color code of distance 3}
\label{subsec:3D_code_def}




First, let us consider the qubit arrangement as displayed in \cref{fig:3D_code}\hyperlink{target:3D}{a}. A 3D color code of distance 3 \cite{Bombin15} is a \codepar{15,1,3} CSS subsystem code \cite{Poulin05,Bacon06} which can be described by the stabilizer group $S_\mathrm{3D} = \langle v_i^x, v_i^z\rangle$ and the gauge group $G_\mathrm{3D} = \langle v_i^x,v_i^z,f_j^x,f_j^z\rangle$, $i=0,1,2,3$ and $j=1,2,...,6$, where $v_i^x$'s and $f_j^x$'s (or $v_i^z$'s and $f_j^z$'s) are $X$-type (or $Z$-type) operators defined on the following set of qubits: 
\begin{itemize}
	\item $v_0^x$ (or $v_0^z$) is defined on $\mathtt{q_0},\mathtt{q_1},\mathtt{q_2},\mathtt{q_3},\mathtt{q_4},\mathtt{q_5},\mathtt{q_6},\mathtt{q_7}$
	\item $v_1^x$ (or $v_1^z$) is defined on $\mathtt{q_1},\mathtt{q_2},\mathtt{q_3},\mathtt{q_5},\mathtt{q_8},\mathtt{q_9},\mathtt{q_{10}},\mathtt{q_{12}}$
	\item $v_2^x$ (or $v_2^z$) is defined on $\mathtt{q_1},\mathtt{q_3},\mathtt{q_4},\mathtt{q_6},\mathtt{q_8},\mathtt{q_{10}},\mathtt{q_{11}},\mathtt{q_{13}}$
	\item $v_3^x$ (or $v_3^z$) is defined on $\mathtt{q_1},\mathtt{q_2},\mathtt{q_4},\mathtt{q_7},\mathtt{q_8},\mathtt{q_9},\mathtt{q_{11}},\mathtt{q_{14}}$
	\item $f_1^x$ (or $f_4^z$) is defined on $\mathtt{q_1},\mathtt{q_2},\mathtt{q_3},\mathtt{q_5}$
	\item $f_2^x$ (or $f_5^z$) is defined on $\mathtt{q_1},\mathtt{q_3},\mathtt{q_4},\mathtt{q_6}$
	\item $f_3^x$ (or $f_6^z$) is defined on $\mathtt{q_1},\mathtt{q_2},\mathtt{q_4},\mathtt{q_7}$
	\item $f_4^x$ (or $f_1^z$) is defined on $\mathtt{q_1},\mathtt{q_4},\mathtt{q_8},\mathtt{q_{11}}$
	\item $f_5^x$ (or $f_2^z$) is defined on $\mathtt{q_1},\mathtt{q_2},\mathtt{q_8},\mathtt{q_9}$
	\item $f_6^x$ (or $f_3^z$) is defined on $\mathtt{q_1},\mathtt{q_3},\mathtt{q_8},\mathtt{q_{10}}$
\end{itemize}
where qubit $i$ in \cref{fig:3D_code}\hyperlink{target:3D}{a} is denoted by $\mathtt{q}_i$. Graphically, $v_i^x$'s and $v_i^z$'s are 8-body volumes shown in \cref{fig:3D_code}\hyperlink{target:3D}{b}, and $f_j^x$'s and $f_j^z$'s are 4-body faces shown in \cref{fig:3D_code}\hyperlink{target:3D}{c}. Note that $f_j^x$ and $f_k^z$ anticommute when $j=k$, and they commute when $j \neq k$. The dual lattice of the 3D color code of distance 3 is illustrated in \cref{fig:3D_code}\hyperlink{target:3D}{d}, where each vertex represents each stabilizer generator.

\begin{figure}[tbp]
	\centering
	\hypertarget{target:3D}{}
	\includegraphics[width=0.28\textwidth]{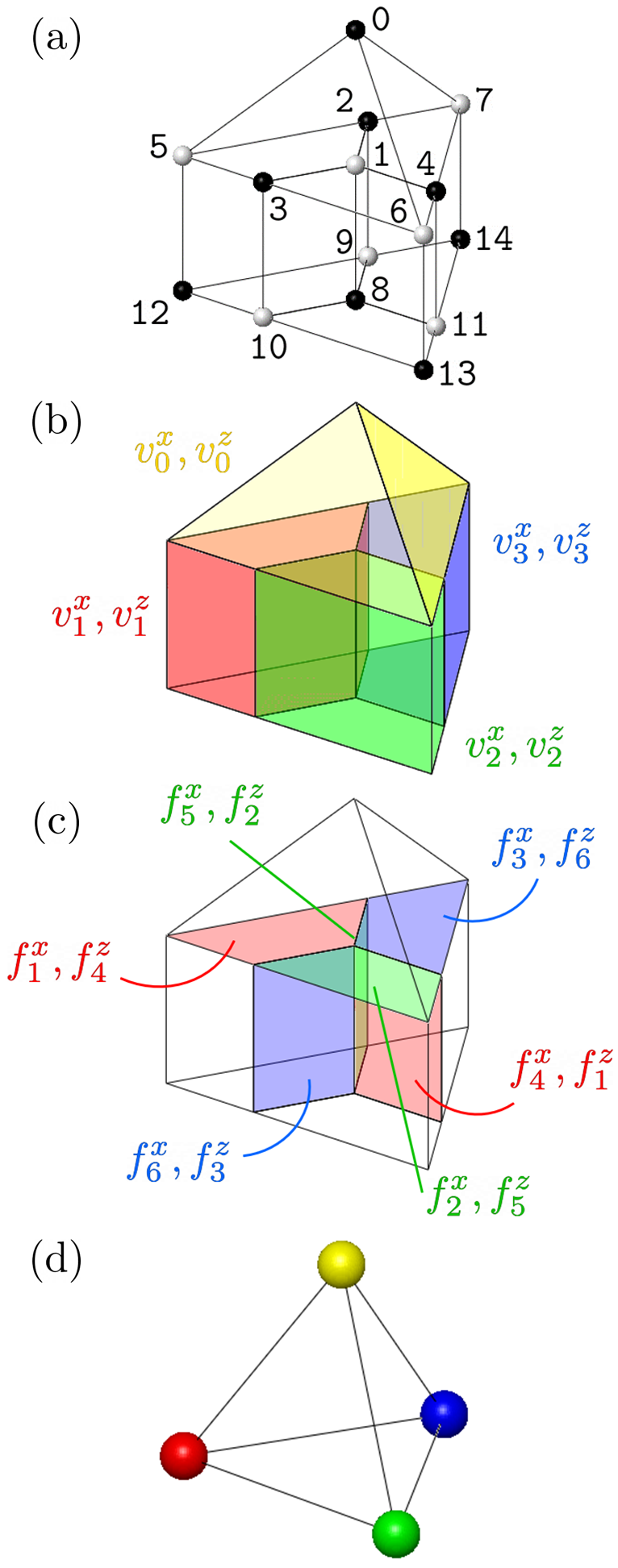}
	\caption{The 3D color code of distance 3. In (a), qubits are represented by vertices. Note that the set of qubits are bipartite, as displayed by black and white colors. Stabilizer generators and gauge generators of the code are illustrated by volume operators in (b) and face operators in (c), respectively. The dual lattice of the code is shown in (d).}
	\label{fig:3D_code}
\end{figure}

The 3D color code of distance 3 can be viewed as the \codepar{15,7,3} Hamming code in which 6 out of 7 logical qubits become gauge qubits. From the subsystem code previously described, a \codepar{15,1,3} stabilizer code can be constructed by fixing some gauge qubits; i.e., choosing some gauge operators which commute with one another and including them in the stabilizer group. In this work, we will discuss two possible ways to construct a stabilizer code from the 3D color code of distance 3. The resulting codes will be called the 3D color code in H form and the 3D color code in T form.\\





\noindent\textbf{The 3D color code of distance 3 in H form}

Let us consider the center plane of the code shown in \cref{fig:3D_code}\hyperlink{target:3D}{a} which covers $\mathtt{q_1}$ to $\mathtt{q_7}$. We can see that the plane looks exactly like the 2D color code of distance 3 \cite{BM06}, whose stabilizer group is $S_\mathrm{2D}=\langle f_1^x,f_2^x,f_3^x,f_4^z,f_5^z,f_6^z\rangle$ (the 2D color code of distance 3 is equivalent to the \codepar{7,1,3} Steane code). The 3D color code in H form is constructed by adding the stabilizer generators of the 2D color code to the old generating set of the 3D color code; the stabilizer group of the 3D color code of distance 3 in H form is
\begin{align}
	S_{\mathrm{H}}=\langle &v_0^x,v_1^x,v_2^x,v_3^x,f_1^x,f_2^x,f_3^x,\nonumber\\ 
	&v_0^z,v_1^z,v_2^z,v_3^z,f_4^z,f_5^z,f_6^z\rangle. \label{eq:S_H}
\end{align}
We can choose logical $X$ and logical $Z$ operators of this code to be $X^{\otimes n}M$ and $Z^{\otimes n}N$ for some stabilizers $M,N \in S_{\mathrm{H}}$. One important property of the code in H form for fault-tolerant quantum computation is that the logical Hadamard, $S$, and CNOT gates are transversal; i.e., $\bar{H}=H^{\otimes n}$ is a logical Hadamard gate, $\bar{S} = {(S^\dagger)^{\otimes n}}$ is a logical $S$ gate, and  $\overline{\mathrm{CNOT}}=\mathrm{CNOT}^{\otimes n}$ is a logical CNOT gate, where $H = \frac{1}{\sqrt{2}}\bigl( \begin{smallmatrix}1 & 1\\ 1 & -1\end{smallmatrix}\bigr)$ and $S = \bigl( \begin{smallmatrix}1 & 0\\ 0 & i\end{smallmatrix}\bigr)$. 

Note that the choice of stabilizer generators for $S_{\mathrm{H}}$ is not unique. However, the choice of generators determines how the error syndrome will be measured, and different choices of generators can give different fault sets. The circuits for measuring generators discussed later in \cref{subsec:3D_code_config} only correspond to the choice of generators in \cref{eq:S_H}.
\\

\noindent\textbf{The 3D color code of distance 3 in T form}

Compared to the code in H form, the 3D color code of distance 3 in T form is constructed from different gauge operators of the \codepar{15,1,3} subsystem code. In particular, the generators of the code in T form consist of the generators of the \codepar{15,1,3} subsystem code and all $Z$-type 4-body face generators; i.e., the stabilizer group of the code in T form is
\begin{align}
	S_{\mathrm{T}}=\langle &v_0^x,v_1^x,v_2^x,v_3^x,f_1^z,f_2^z,f_3^z,\nonumber\\ 
	&v_0^z,v_1^z,v_2^z,v_3^z,f_4^z,f_5^z,f_6^z\rangle.
\end{align}

Similar to the code in H form, we can choose logical $X$ and logical $Z$ operators of this code to be $X^{\otimes n}M$ and $Z^{\otimes n}N$ for some stabilizers $M,N \in S_{\mathrm{T}}$. Also, CNOT gate is transversal in the code of T form. However, one major difference from the code in H form is that Hadamard and $S$ gates are not transversal in this code. Instead, a $T$ gate is transversal; a logical $T$ gate can be implemented by applying $T$ gates on all qubits represented by black vertices in \cref{fig:3D_code}\hyperlink{target:3D}{a} and applying $T^\dagger$ gates on all qubits represented by white vertices, where $T = \bigl( \begin{smallmatrix}1 & 0\\ 0 & \sqrt{i} \end{smallmatrix}\bigr)$.

In fact, the code in T form is equivalent to the \codepar{15,1,3} quantum Reed-Muller code. Note that \cref{lem:err_equivalence} is applicable to both codes in H form and T form since they have all code properties required by the lemma, even though $X$-type and $Z$-type generators are not similar in the case of the code in T form.
\\

\noindent\textbf{Code switching}


It is possible to transform between the code in H form and the code in T form using the technique called \emph{code switching} \cite{PR13,ADP14,Bombin15,KB15}. The process involves measurements of gauge operators of the \codepar{15,1,3} subsystem code, which can be done as follows: Suppose that we start from the code in H form. We can switch to the code in T form by first measuring $f_1^z,f_2^z$ and $f_3^z$. Afterwards, we must apply an $X$-type Pauli operator that 
\begin{enumerate}
	\item commutes with all $v_i^x$'s and $v_i^z$'s ($i=0,1,2,3$), and
	\item commutes with $f_4^z,f_5^z,f_6^z$, and
	\item for each $j=1,2,3$, commutes with $f_j^z$ if the outcome from measuring such an operator is 0 (the eigenvalue is $+1$) or anticommutes with $f_j^z$ if the outcome is 1 (the eigenvalue is $-1$).
\end{enumerate}
Switching from the code in T form to the code in H form can be done similarly, except that $f_1^x,f_2^x$ and $f_3^x$ will be measured and the operator to be applied must be a $Z$-type Pauli operator that commutes or anticommutes with $f_1^x,f_2^x$ and $f_3^x$ (depending on the measurement outcomes).


Transversal gates satisfy the conditions for fault-tolerant gate gadgets proposed in \cite{AGP06} (see \cref{subsec:FT_def}), thus they are very useful for fault-tolerant quantum computation. It is known that universal quantum computation can be performed using only $H,S,$ CNOT, and $T$ gates \cite{CRSS97,Gottesman98b,NRS01}. However, for any QECC, universal quantum computation cannot be achieved using only transversal gates due to the Eastin-Knill theorem \cite{EK09}. Fortunately, the code switching technique allows us to perform universal quantum computation using both codes in H form and T form; any logical Clifford gate can be performed transversally on the code in H form since the Clifford group can be generated by $\{H,S,\mathrm{CNOT}\}$, and a logical $T$ gate can be performed transversally on the code in T form.
For the 3D color code of distance 3, code switching can be done fault-tolerantly using the above method \cite{Bombin15,KB15} or a method presented in \cite{BKS21} which involves a logical Einstein-Podolsky-Rosen (EPR) state.


\subsection{Circuit configuration for the 3D color code of distance 3}
\label{subsec:3D_code_config}



In this section, circuits for measuring the generators of the 3D color code of distance 3 in H form will be developed. Here we will try to find CNOT orderings for the circuits which make fault set $\mathcal{F}_1$ distinguishable (where $\mathcal{F}_1$ is the set of all fault combinations arising from up to 1 fault as defined in \cref{def:distinguishable}). The ideas used for the circuit construction in this section will be later adapted to the circuits for measuring generators of a capped or a recursive color code (capped and recursive capped codes will be defined in \cref{subsec:CCC_def,subsec:RCCC_def}, and the circuit construction will be discussed in \cref{subsec:CCC_config}). Fault-tolerant protocols for the 3D color code of distance 3 is similar to fault-tolerant protocols for capped color codes, which will be later discussed in \cref{sec:FT_protocol}.


For simplicity, since $X$-type and $Z$-type data errors can be corrected separately and $X$-type and $Z$-type generators of our choice have the same form, we will only discuss the case that a single fault can give rise to a $Z$-type data error. Similar analysis will also be applicable to the case of $X$-type errors. We start by observing that the 2D color code of distance 3 is a subcode of the the 3D color code of distance 3 in H form, where the 2D color code lies on the center plane of the code illustrated in \cref{fig:3D_code}\hyperlink{target:3D}{a}. The 2D color code is a code to which \cref{lem:err_equivalence} is applicable, meaning that if we can measure the syndrome and the weight parity of any $Z$-type Pauli error that occurred on the center plane, we can always find a Pauli operator logically equivalent to such an error. Moreover, we can see that the generator $v_0^x$ has support on all qubits on the center plane ($\mathtt{q_1}$ to $\mathtt{q_7}$). This means that the weight parity of a $Z$-type error on the center plane can be obtained by measuring $v_0^x$. For these reasons, we can always find an error correction operator for any $Z$-type error that occurred on the center plane using the measurement outcomes of $f_1^x,f_2^x,f_3^x$ (which give the syndrome of the error evaluated on the 2D color code) and the measurement outcome of $v_0^x$ (which gives the weight parity of the error).





All circuits for measuring generators of the 3D color code in H form used in this section are non-flag circuits. Each circuit has $w$ data CNOTs where $w$ is the weight of the operator being measured. The circuit for each generator looks similar to the circuit in \cref{fig:circuit_3D}, but the ordering of data CNOTs has yet to be determined. 

\begin{figure}[tbp]
	\centering
	\includegraphics[width=0.28\textwidth]{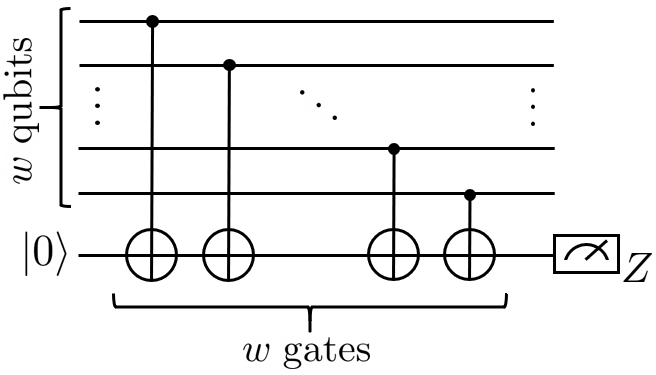}
	\caption{A non-flag circuit for measuring a $Z$-type generator of weight $w$ for the 3D color code of distance 3. The ordering of the CNOT gates for each generator has yet to be determined.}
	\label{fig:circuit_3D}
\end{figure}

Our goal is to find CNOT orderings for all circuits involved in the syndrome measurement so that $\mathcal{F}_1$ is distinguishable. Thus, we have to consider all possible errors arising from a single fault, not only the errors occurred on the center plane. Let us first consider an arbitrary single fault which can lead to a purely $Z$-type error. Since the 3D color code in H form has distance 3, all $Z$-type errors of weight 1 correspond to different syndromes. All we have to worry about are single faults which can lead to a $Z$-type error of weight $>1$ that has the same syndrome as some error of weight 1 but is not logically equivalent to such an error. Note that a $Z$-type error of weight $>1$ arising from a single fault can only be caused by a faulty CNOT gate in some circuit for measuring a $Z$-type generator. 


We can divide the generators of the 3D color code in H form into 3 categories: 
\begin{enumerate}
	\item $\mathtt{cap}$ generators, consisting of $v_0^x$ and $v_0^z$,
	\item $\mathtt{f}$ generators, consisting of $f_1^x,f_2^x,f_3^x,f_4^z,f_5^z,f_6^z$,
	\item $\mathtt{v}$ generators, consisting of $v_1^x,v_2^x,v_3^x,v_1^z,v_2^z,v_3^z$.
\end{enumerate}
($v_0^x$ and $v_0^z$ are considered separately from other $\mathtt{v}$ generators because they cover all qubits on the center plane.) Here we will analyze the pattern of $Z$-type errors arising from the measurement of $Z$-type generators of each category. The syndrome of each $Z$-type error will be represented in the form $(u,\vec{v},\vec{w})$, where $u,\vec{v},\vec{w}$ are syndromes obtained from the measurement of $\mathtt{cap}$, $\mathtt{f}$, and $\mathtt{v}$ generators of $X$ type, respectively. Note that for each $\mathtt{v}$ generator, there will be only one $\mathtt{f}$ generator such that the set of supporting qubits of the $\mathtt{v}$ generator contains all supporting qubits of the $\mathtt{f}$ generator (for example, $v_1^x$ and $f_1^x$, or $v_1^z$ and $f_4^z$).



Let us start by observing the syndromes of any $Z$-type error of weight 1. An error on the following qubits gives the syndrome of the following form:
\begin{itemize}
	\item an error on $\mathtt{q_0}$ gives syndrome $(1,\vec{0},\vec{0})$,
	\item an error on $\mathtt{q}_i$ ($i=1,\dots,7$) gives syndrome of the form $(1,\vec{q}_i,\vec{q}_i)$,
	\item an error on $\mathtt{q}_{\mathtt{7}+i}$ ($i=1,\dots,7$) gives syndrome of the form $(0,\vec{0},\vec{q}_i)$,
\end{itemize}
where $\vec{q}_i \in \mathbb{Z}_2^3$ is not zero (see \cref{tab:err_list_d3} as an example). We can see that all $Z$-type errors of weight 1 give different syndromes as expected. Next, let us consider a $Z$-type error $E$ of any weight which occurs only on the center plane. Suppose that the weight parity of $E$ is $\mathrm{wp}$ ($\mathrm{wp}$ is 0 or 1), and the syndrome of $E$ obtained from measuring $f_1^x,f_2^x,f_3^x$ is $\vec{p}$. Then, the syndrome of $E$ obtained from measuring all $X$-type generators is as follows:
\begin{itemize}
	\item an error $E$ on the center plane gives syndrome of the form $(\mathrm{wp},\vec{p},\vec{p})$.
\end{itemize}

\noindent We find that:
\begin{enumerate}
	\item $E$ and the error on $\mathtt{q_0}$ will have the same syndrome if $E$ has odd weight and $\vec{p}$ is trivial, which means that $E$ is equivalent to $Z^{\otimes 7}$ on the center plane. In this case, $E$ and $Z_0$ are logically equivalent up to a multiplication of $v_0^z$ and some stabilizer.
	\item $E$ and an error on $\mathtt{q}_i$ ($i=1,2,\dots,7$) will have the same syndrome if $E$ has odd weight and $\vec{p} = \vec{q}_i$ for some $i$. In this case, $E$ and $Z_i$ have the same weight parity and the same syndrome (evaluated by the generators of the 2D color code), meaning that $E$ and $Z_i$ are logically equivalent by \cref{lem:err_equivalence}.
	\item $E$ and an error on $\mathtt{q}_i$ ($i=7,8,\dots,14$) cannot have the same syndrome since $\vec{q}_i \neq \vec{0}$.
\end{enumerate}
Therefore, a $Z$-type error of any weight occurred only on the center plane either has syndrome different from those of $Z$-type errors of weight 1, or is logically equivalent to some $Z$-type error of weight 1.


Because of the aforementioned properties of a $Z$-type error on the center plane, we will try to design circuits for measuring $Z$-type generators so that most of the possible $Z$-type errors arising from a single fault are on the center plane. Finding a circuit for any $\mathtt{f}$ generator is easy since for the 3D color code in H form, any $\mathtt{f}$ generator lies on the center plane, so any CNOT ordering will work. Finding a circuit for a $\mathtt{cap}$ generator is also easy; if the first data CNOT in the circuit is the one that couples $\mathtt{q_0}$ with the syndrome ancilla, we can make sure that all possible $Z$-type errors arising from a faulty CNOT in this circuit are on the center plane (up to a multiplication of $v_0^z$ or $v_0^x$). 


Finding a circuit for measuring a $\mathtt{v}$ generator is not obvious. Since some parts of any $\mathtt{v}$ generator of $Z$ type are on the center plane and some parts are off the plane, some $Z$-type errors from a faulty data CNOT have support on some qubits which are not on the center plane. We want to make sure that in such cases, the error will not cause any problem; i.e., its syndrome must be different from those of other $Z$-type errors, or it must be logically equivalent to some $Z$-type error. In particular, we will try to avoid the case that a CNOT fault can cause a $Z$ error of weight $>1$ which is totally off-plane. This is because such a high-weight error and some $Z_i$ with $i=8,9,...,14$ may have the same syndrome but they are not logically equivalent (for example, $Z_{10}Z_{12}$ and $Z_{13}$ have the same syndrome but they are not logically equivalent).

One possible way to avoid such an error is to arrange the data CNOTs so that the qubits on which they act are alternated between on-plane and off-plane qubits. An ordering of data CNOTs used in the circuit for any $\mathtt{v}$ generator will be referenced by the ordering of data CNOTs used in the circuit for its corresponding $\mathtt{f}$ generator. For example, if the ordering of data CNOTs used for $f_4^z$ is (2,5,3,1), then the ordering of data CNOTs used for $v_1^z$ will be (2,9,5,12,3,10,1,8). A configuration of data CNOTs for a $\mathtt{v}$ generator similar to this setting will be called \emph{sawtooth configuration}. Using this configuration for every $\mathtt{v}$ generator, we find that there exists a CNOT ordering for each generator such that all possible (non-equivalent) $Z$-type errors from all circuits can be distinguished.


An example of the CNOT orderings which give a distinguishable fault set can be represented by the diagram in \cref{fig:diagram_3D}. The diagram looks similar to the 2D color code on the center plane, thus all $\mathtt{f}$ generators are displayed. The meanings of the diagram are as follows:
\begin{enumerate}
	\item Each arrow represents the ordering of data CNOTs for each $\mathtt{f}$ generator: the qubits on which data CNOTs act start from the qubit at the tail of an arrow, then proceed counterclockwise.
	\item The ordering of data CNOTs for each $\mathtt{v}$ generator can be obtained from its corresponding $\mathtt{f}$ generator using the sawtooth configuration.
	\item The ordering of data CNOTs for the $\mathtt{cap}$ generator is in numerical order.
\end{enumerate}
From the diagram, the exact orderings of data CNOTs for $\mathtt{f}$, $\mathtt{v}$, and $\mathtt{cap}$ generators are,
\begin{enumerate}
	\item $\mathtt{f}$ generators: (2,5,3,1), (3,6,4,1), and (4,7,2,1).
	\item $\mathtt{v}$ generators: (2,9,5,12,3,10,1,8), (3,10,6,13,4,11,1, 8), and (4,11,7,14,2,9,1,8).
	\item $\mathtt{cap}$ generator: (0,1,2,3,4,5,6,7).
\end{enumerate}
(Please note that these are not the only CNOT orderings which give a distinguishable fault set.)

\begin{figure}[tbp]
	\centering
	\includegraphics[width=0.18\textwidth]{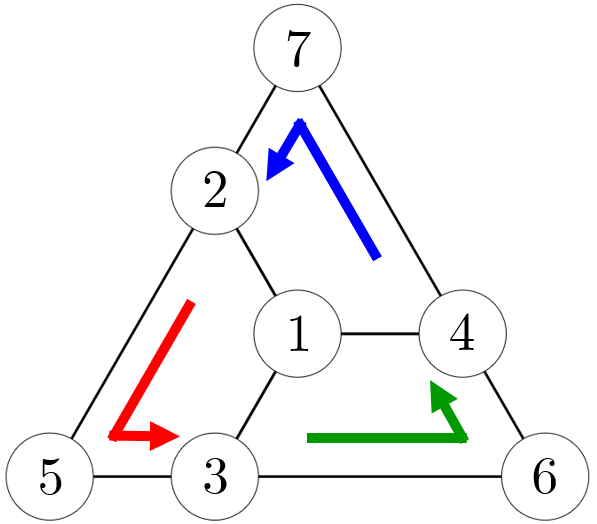}
	\caption{An example of the orderings of CNOT gates for the 3D color code of distance 3 in H form which give a distinguishable fault set $\mathcal{F}_1$. For each $\mathtt{f}$ generator, the qubits on which data CNOT gates act start from the tail of each arrow, then proceed counterclockwise. The ordering of CNOT gates for the $\mathtt{cap}$ generator is determined by the qubit numbering.}
	\label{fig:diagram_3D}
\end{figure}


Possible $Z$-type errors of weight greater than 1 depend heavily on the ordering of CNOT gates in the circuits for measuring $Z$-type generators. The exhaustive list of all possible $Z$-type errors arising from 1 fault and their syndrome corresponding to the CNOT orderings in \cref{fig:diagram_3D} is given in \cref{tab:err_list_d3}. From the list, we find that any pair of possible $Z$-type errors either have different syndromes or are logically equivalent.


Since $X$-type and $Z$-type generators have the same form, this result is also applicable to the case of $X$-type errors. In general, a single fault in any circuit can cause an error of mixed types. However, note that a single fault in a circuit for measuring a $Z$-type generator cannot cause an $X$-type error of weight $>1$ (and vice versa), and $X$-type and $Z$-type errors can be detected and corrected separately. Therefore, our results for $X$-type and $Z$-type errors implies that all fault combinations arising from up to 1 fault satisfy the condition in \cref{def:distinguishable}. This means that $\mathcal{F}_1$ is distinguishable, and the protocols in \cref{sec:FT_protocol} will be applicable. Since the circuits for measuring generators of the 3D color code are non-flag circuits, only one ancilla is required in each protocol (assuming that the qubit preparation and measurement are fast and the ancilla can be reused).


In the next section, we will generalize our technique to families of capped and recursive capped color codes, which have similar properties to the 3D color code of distance 3. Capped and recursive capped color code will be defined in \cref{subsec:CCC_def,subsec:RCCC_def} respectively, and the construction of circuits for measuring the code generators will be discussed in \cref{subsec:CCC_config}.

\begin{table*}[tbp]
	\begin{center}
		\begin{tabular}{| c | c | c | c | c || c | c | c | c | c |}
			\hline
			\multirow{2}{*}{Fault origin} & \multirow{2}{*}{Error} & \multicolumn{3}{| c ||}{Syndrome $(u,\vec{v},\vec{w})$} & \multirow{2}{*}{Fault origin} & \multirow{2}{*}{Error} & \multicolumn{3}{| c |}{Syndrome $(u,\vec{v},\vec{w})$}\\
			\cline{3-5} \cline {8-10}
			& & $u$ & $\vec{v}$ & $\vec{w}$ & & & $u$ & $\vec{v}$ & $\vec{w}$ \\
			\hline
			$\mathtt{q_{0}}$ & $Z_{0}$ & 1 & (0,0,0) & (0,0,0) & \multirow{7}{*}{$v_0^z$} & $Z_{0}$ & 1 & (0,0,0) & (0,0,0) \\
			\cline{1-5} \cline {7-10}
			$\mathtt{q_{1}}$ & $Z_{1}$ & 1 & (1,1,1) & (1,1,1) & & $Z_{0}Z_{1}$ & 0 & (1,1,1) & (1,1,1) \\
			\cline{1-5} \cline {7-10}
			$\mathtt{q_{2}}$ & $Z_{2}$ & 1 & (1,0,1) & (1,0,1) & & $Z_{0}Z_{1}Z_{2}$ & 1 & (0,1,0) & (0,1,0) \\
			\cline{1-5} \cline {7-10}
			$\mathtt{q_{3}}$ & $Z_{3}$ & 1 & (1,1,0) & (1,1,0) & & $Z_{0}Z_{1}Z_{2}Z_{3}$ & 0 & (1,0,0) & (1,0,0) \\
			\cline{1-5} \cline {7-10}
			$\mathtt{q_{4}}$ & $Z_{4}$ & 1 & (0,1,1) & (0,1,1) & & $Z_{5}Z_{6}Z_{7}$ & 1 & (1,1,1) & (1,1,1) \\
			\cline{1-5} \cline {7-10}
			$\mathtt{q_{5}}$ & $Z_{5}$ & 1 & (1,0,0) & (1,0,0) & & $Z_{6}Z_{7}$ & 0 & (0,1,1) & (0,1,1) \\
			\cline{1-5} \cline {7-10}
			$\mathtt{q_{6}}$ & $Z_{6}$ & 1 & (0,1,0) & (0,1,0) & & $Z_{7}$ & 1 & (0,0,1) & (0,0,1) \\
			\cline{1-5} \cline {6-10}
			$\mathtt{q_{7}}$ & $Z_{7}$ & 1 & (0,0,1) & (0,0,1) & \multirow{7}{*}{$v_1^z$} & $Z_{2}$ & 1 & (1,0,1) & (1,0,1) \\
			\cline{1-5} \cline {7-10}
			$\mathtt{q_{8}}$ & $Z_{8}$ & 0 & (0,0,0) & (1,1,1) & & $Z_{2}Z_{9}$ & 1 & (1,0,1) & (0,0,0) \\
			\cline{1-5} \cline {7-10}
			$\mathtt{q_{9}}$ & $Z_{9}$ & 0 & (0,0,0) & (1,0,1) & & $Z_{2}Z_{9}Z_{5}$ & 0 & (0,0,1) & (1,0,0) \\
			\cline{1-5} \cline {7-10}
			$\mathtt{q_{10}}$ & $Z_{10}$ & 0 & (0,0,0) & (1,1,0) & & $Z_{2}Z_{9}Z_{5}Z_{12}$ & 0 & (0,0,1) & (0,0,0) \\
			\cline{1-5} \cline {7-10}
			$\mathtt{q_{11}}$ & $Z_{11}$ & 0 & (0,0,0) & (0,1,1) & & $Z_{10}Z_{1}Z_{8}$ & 1 & (1,1,1) & (1,1,0) \\
			\cline{1-5} \cline {7-10}
			$\mathtt{q_{12}}$ & $Z_{12}$ & 0 & (0,0,0) & (1,0,0) & & $Z_{1}Z_{8}$ & 1 & (1,1,1) & (0,0,0) \\
			\cline{1-5} \cline {7-10}
			$\mathtt{q_{13}}$ & $Z_{13}$ & 0 & (0,0,0) & (0,1,0) & & $Z_{8}$ & 0 & (0,0,0) & (1,1,1) \\
			\cline{1-5} \cline {6-10}
			$\mathtt{q_{14}}$ & $Z_{14}$ & 0 & (0,0,0) & (0,0,1) & \multirow{7}{*}{$v_2^z$} & $Z_{3}$ & 1 & (1,1,0) & (1,1,0) \\
			\cline{1-5} \cline {7-10}
			\multirow{3}{*}{$f_4^z$} & $Z_{2}$ & 1 & (1,0,1) & (1,0,1) & & $Z_{3}Z_{10}$ & 1 & (1,1,0) & (0,0,0) \\
			\cline{2-5} \cline {7-10}
			& $Z_{2}Z_{5}$ & 0 & (0,0,1) & (0,0,1) & & $Z_{3}Z_{10}Z_{6}$ & 0 & (1,0,0) & (0,1,0) \\
			\cline{2-5} \cline {7-10}
			& $Z_{1}$ & 1 & (1,1,1) & (1,1,1) & & $Z_{3}Z_{10}Z_{6}Z_{13}$ & 0 & (1,0,0) & (0,0,0) \\
			\cline{1-5} \cline {7-10}
			\multirow{3}{*}{$f_5^z$} & $Z_{3}$ & 1 & (1,1,0) & (1,1,0) & & $Z_{11}Z_{1}Z_{8}$ & 1 & (1,1,1) & (0,1,1) \\
			\cline{2-5} \cline {7-10}
			& $Z_{3}Z_{6}$ & 0 & (1,0,0) & (1,0,0) & & $Z_{1}Z_{8}$ & 1 & (1,1,1) & (0,0,0) \\
			\cline{2-5} \cline {7-10}
			& $Z_{1}$ & 1 & (1,1,1) & (1,1,1) & & $Z_{8}$ & 0 & (0,0,0) & (1,1,1) \\
			\cline{1-5} \cline {6-10}
			\multirow{3}{*}{$f_6^z$} & $Z_{4}$ & 1 & (0,1,1) & (0,1,1) & \multirow{7}{*}{$v_3^z$} & $Z_{4}$ & 1 & (0,1,1) & (0,1,1) \\
			\cline{2-5} \cline {7-10}
			& $Z_{4}Z_{7}$ & 0 & (0,1,0) & (0,1,0) & & $Z_{4}Z_{11}$ & 1 & (0,1,1) & (0,0,0) \\
			\cline{2-5} \cline {7-10}
			& $Z_{1}$ & 1 & (1,1,1) & (1,1,1) & & $Z_{4}Z_{11}Z_{7}$ & 0 & (0,1,0) & (0,0,1) \\
			\cline{1-5} \cline {7-10}
			\multicolumn{5}{| c ||}{\multirow{2}{*}{}} & & $Z_{4}Z_{11}Z_{7}Z_{14}$ & 0 & (0,1,0) & (0,0,0) \\
			\cline {7-10}
			\multicolumn{5}{| c ||}{} & & $Z_{9}Z_{1}Z_{8}$ & 1 & (1,1,1) & (1,0,1) \\
			\cline {7-10}
			\multicolumn{5}{| c ||}{} & & $Z_{1}Z_{8}$ & 1 & (1,1,1) & (0,0,0) \\
			\cline {7-10}
			\multicolumn{5}{| c ||}{} & & $Z_{8}$ & 0 & (0,0,0) & (1,1,1) \\
			\hline
		\end{tabular}
	\end{center}
	\caption{All possible $Z$-type errors arising from 1 fault and their syndrome corresponding to the CNOT orderings in \cref{fig:diagram_3D}. Any pair of possible $Z$-type errors on the list either have different syndromes or are logically equivalent.}
	\label{tab:err_list_d3}
\end{table*}












\section{Syndrome measurement circuits for a capped color code}
\label{sec:CCC}

In the previous section, we have seen that it is possible to construct circuits for the 3D color code of distance 3 such that the fault set is distinguishable. In this section, we will extend our construction ideas to quantum codes of higher distance. First, we will introduce families of capped and recursive capped color codes, whose properties are similar to those of the 3D color codes, but the structures of the recursive capped color codes of higher distance are more suitable for our construction rather than those of the 3D color codes of higher distance (as defined in \cite{Bombin15}). Afterwards, we will apply the error correction ideas using weight parities from the previous section and develop the main theorem of this work, which can help us find proper CNOT orderings for a capped or a recursive capped color code of any distance. 

\subsection{Capped color codes}
\label{subsec:CCC_def}


We begin by defining some notations for the 2D color codes \cite{BM06} and stating some code properties. A 2D color code of distance $d$ ($d=3,5,7,\dots$) is an \codepar{n_\mathrm{2D},1,d} CSS code where $n_\mathrm{2D} = (3d^2+1)/4$. The number of stabilizer generators of each type is $r = (n_\mathrm{2D}-1)/2$ (note that the total number of generators is $2r$). For any 2D color code, it is possible to choose generators so that those of each type ($X$ or $Z$) are 3-colorable. The three smallest 2D color codes are shown in \cref{fig:2D_code}.

\begin{figure}[tbp]
	\centering
	\includegraphics[width=0.4\textwidth]{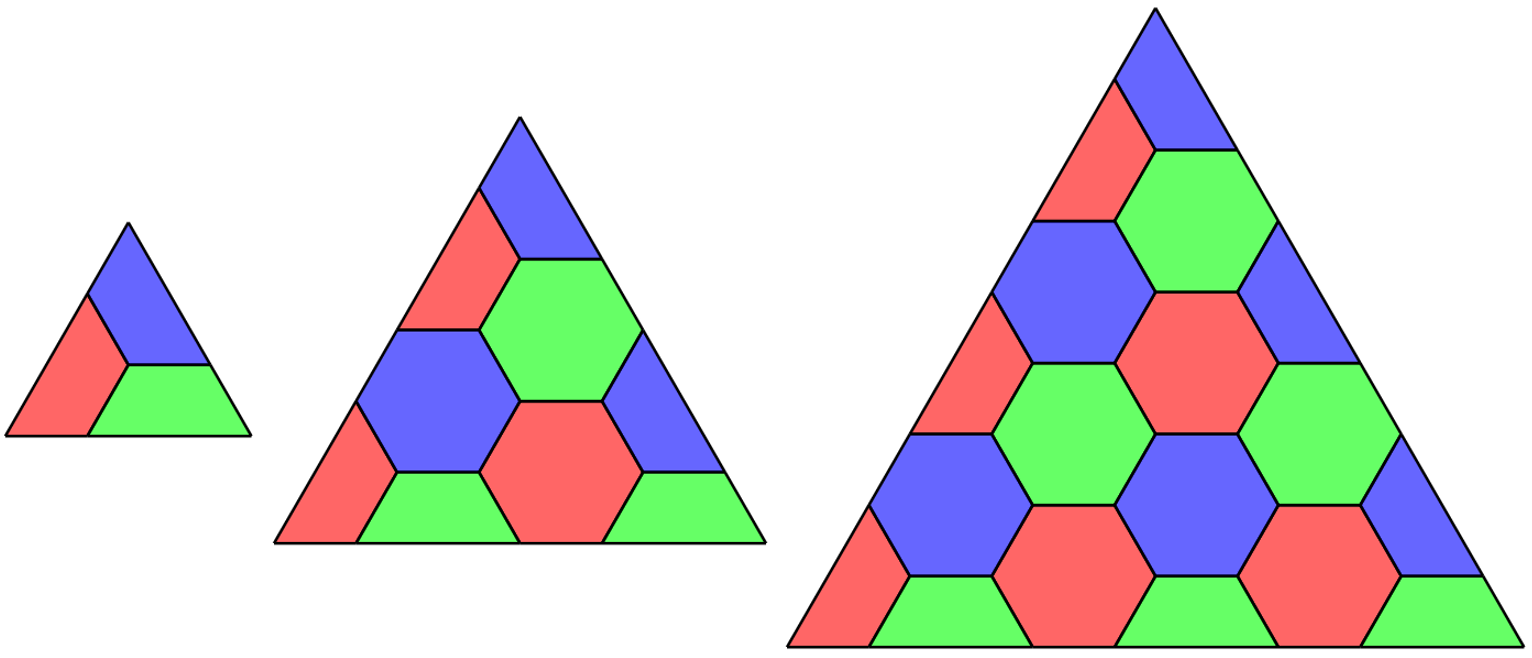}
	\captionsetup{justification=centering}
	\caption{2D color codes of distance 3, 5, and 7.}
	\label{fig:2D_code}%
\end{figure}


A 2D color code of any distance has the following properties \cite{KB15}:
\begin{enumerate}
	\item the number of qubits $n_\mathrm{2D}$ is odd,
	\item every generator has even weight,
	\item the code encodes 1 logical qubit,
	\item logical $X$ and logical $Z$ operators are of the form $X^{\otimes n_\mathrm{2D}}M$ and $Z^{\otimes n_\mathrm{2D}}N$, where $M,N$ are some stabilizers,
	\item the set of physical qubits of a 2D color code is bipartite.
\end{enumerate}
With properties 1-4, we can see that \cref{lem:err_equivalence} is applicable to a 2D color code of any distance.


A \emph{capped color code} $CCC(d)$ is constructed from 2 layers of the 2D color code of distance $d$ plus one qubit. Thus, the number of qubits of $CCC(d)$ is $2n_\mathrm{2D}+1=3(d^2+1)/2$. Examples of capped color codes with $d=5$ and $7$ are displayed in \cref{fig:CCC}\hyperlink{target:CCC}{a}, and their dual lattices are shown in \cref{fig:CCC}\hyperlink{target:CCC}{b}. Let $\mathtt{q}_i$ denote qubit $i$. For convenience, we will divide each code into 3 areas: the \emph{top qubit} (consisting of $\mathtt{q_0}$), the \emph{center plane} (consisting of $\mathtt{q_1}$ to $\mathtt{q}_{n_\mathrm{2D}}$), and the \emph{bottom plane} (consisting of $\mathtt{q}_{n_\mathrm{2D}+1}$ to $\mathtt{q}_{2n_\mathrm{2D}}$). We will primarily use the center plane as a reference, and sometimes call the qubits on the center plane \emph{on-plane qubits} and call the qubits on the bottom plane \emph{off-plane qubits}. Note that the set of physical qubits of $CCC(d)$ is also bipartite (as colored in black and white in \cref{fig:CCC}\hyperlink{target:CCC}{a}) since the set of physical qubits of any 2D color code is bipartite.

\begin{figure}[tbp]
	\centering
	\hypertarget{target:CCC}{}
	\includegraphics[width=0.47\textwidth]{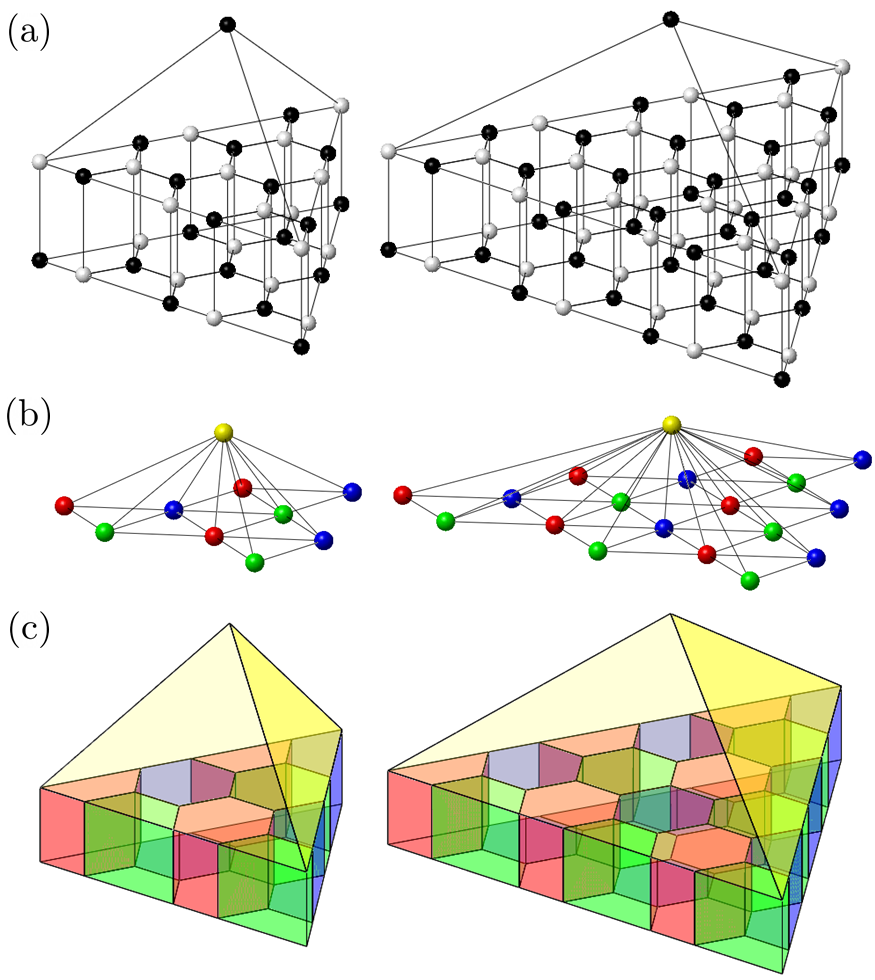}
	\caption{Capped color codes $CCC(d)$ with $d=5$ (left) and $d=7$ (right). (a) The set of qubits of any capped color code is bipartite, as displayed by black and white vertices. (b) The dual lattice of each capped color code. (c) Stabilizer generators of each code can be illustrated by volume operators.}
	\label{fig:CCC}
\end{figure}


A capped color code $CCC(d)$ is a CSS subsystem code \cite{Poulin05,Bacon06}. Its stabilizer generators are volume operators which can be defined as follows:
\begin{enumerate}
	\item $v_0^x$ and $v_0^z$ are $X$-type and $Z$-type operators that cover $\mathtt{q_0}$ and all qubits on the center plane. These operators are called $\mathtt{cap}$ generators; and
	\item $v_1^x,\dots,v_{r}^x$ and $v_1^z,\dots,v_{r}^z$ are $X$-type and $Z$-type operators in which each $v_i^x$ (or $v_i^z$) acts as an $X$-type (or a $Z$-type) generator of the 2D color code on both center and bottom planes. These operators are called $\mathtt{v}$ generators. 
\end{enumerate}
The stabilizer generators of a capped color code are illustrated in \cref{fig:CCC}\hyperlink{target:CCC}{c}. Using these notations, the stabilizer group of the code is
\begin{equation}
	S_\mathrm{CCC} = \langle v_0^x,v_1^x,\dots,v_{r}^x,v_0^z,v_1^z,\dots,v_{r}^z\rangle.
\end{equation}


For each $CCC(d)$, the generators of the gauge group are face operators which can be defined as follows:
\begin{enumerate}
	\item $f_1^x,\dots,f_{r}^x$ are $X$-type operators in which each operator acts as an $X$-type generator of the 2D color code on the center plane.
	\item $f_{r+1}^z,\dots,f_{2r}^z$ are $Z$-type operators in which each operator acts as a $Z$-type generator of the 2D color code on the center plane, and $f_i^x$ and $f_{r+i}^z$ ($i=1,\dots,r$) act on the same set of qubits.
	\item $f_1^z,\dots,f_{r}^z$ and $f_{r+1}^x,\dots,f_{2r}^x$ are $Z$-type and $X$-type operators that satisfy the following conditions:
	\begin{enumerate}
		\item $f_i^x$ and $f_j^z$ anticommute when $i=j$ ($i,j=1,\dots,2r$),
		\item $f_i^x$ and $f_j^z$ commute when $i \neq j$ ($i,j=1,\dots,2r$),
		\item $f_i^z$ and $f_{r+i}^x$ ($i=1,\dots,r$) act on the same set of qubits.
	\end{enumerate}	
\end{enumerate}
With these notations, the gauge group of each $CCC(d)$ is,
\begin{equation}
	G_\mathrm{CCC}=\langle v_i^x,v_i^z,f_j^x,f_j^z\rangle,\;
\end{equation}
where $i=0,1,\dots,r$ and $j=1,\dots,2r$.

Another way to define the gauge group of each $CCC(d)$ is to use gauge generators of weight 4 which are vertical face operators lying between the center and the bottom planes, instead of $f_1^z,\dots,f_{r}^z$ and $f_{r+1}^x,\dots,f_{2r}^x$ defined previously. Let $e_1^z,\dots,e_{r}^z$ and $e_{r+1}^x,\dots,e_{2r}^x$ denote such generators (where $e_i^z$ and $e_{r+i}^x$ act on the same set of qubits). Each pair of $e_i^z$ and $e_{r+i}^x$ can be represented by an edge on a 2D color code. For example, vertical face generators $e_1^z,\dots,e_{r}^z$ and $e_{r+1}^x,\dots,e_{2r}^x$ of capped color codes with $d=5$ and $d=7$ are depicted in \cref{fig:CCC_edge}. Note that $\{f_1^z,\dots,f_{r}^z\}$ and $\{e_1^z,\dots,e_{r}^z\}$ (or $\{f_{r+1}^x,\dots,f_{2r}^x\}$ and $\{e_{r+1}^x,\dots,e_{2r}^x\}$) generate the same group. Therefore, the gauge group of each $CCC(d)$ can also be written as,
\begin{equation}
	G_\mathrm{CCC}=\langle v_i^x,v_i^z,f_{j}^x,e_{r+j}^x,f_{r+j}^z,e_{j}^z\rangle,\;
\end{equation}
where $i=0,1,\dots,r$ and $j=1,\dots,r$.

\begin{figure}[tbp]
	\centering
	\includegraphics[width=0.45\textwidth]{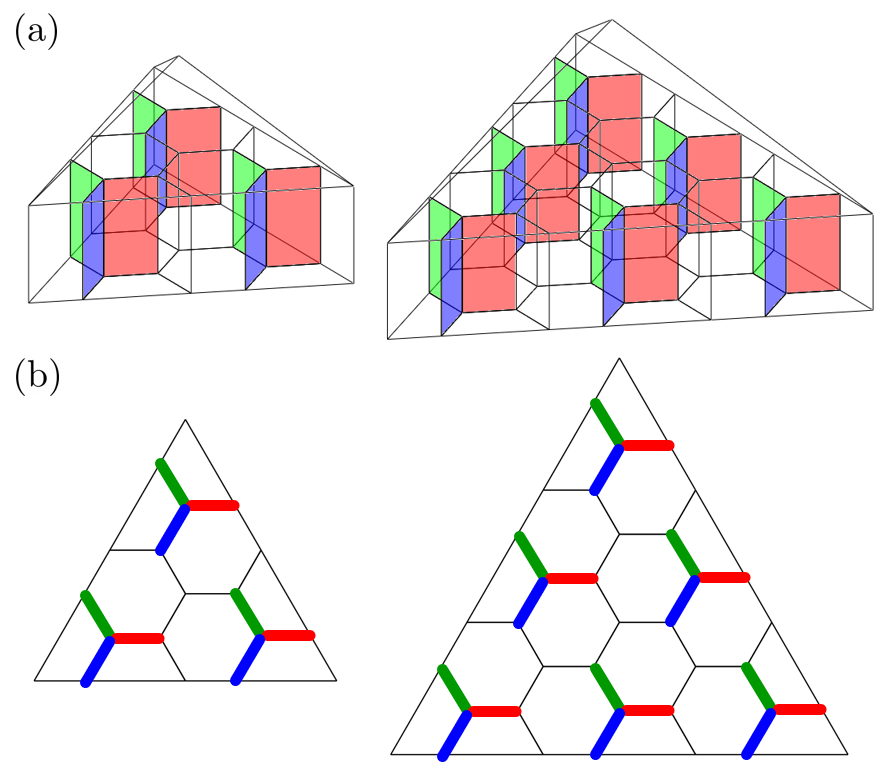}
	\caption{(a) Vertical face generators $e_1^z,\dots,e_{r}^z$ and $e_{r+1}^x,\dots,e_{2r}^x$ of capped color codes $CCC(d)$ with $d=5$ (left) and $d=7$ (right) ($e_i^z$ and $e_{r+i}^x$ act on the same set of qubits). The operators of each code can be represented by edges on a 2D color code as shown in (b).}
	\label{fig:CCC_edge}%
\end{figure}

It should be noted that in this work, the term ``color code'' is used to describe a subsystem code satisfying two conditions proposed in \cite{KB15}. This may be different from common usages in other literature in which the term refers to a stabilizer code. A capped color code is actually a color code in 3 dimensions since the dual lattice of the code (see \cref{fig:CCC}\hyperlink{target:CCC}{b} for examples) is 4-colorable and can be constructed by attaching tetrahedra together (see \cite{KB15} for more details). However, the capped color code and the 3D color code defined in \cite{Bombin15} are different codes.



A capped color code is a subsystem code which encodes 1 logical qubit, meaning that there are $n_\mathrm{2D}$ gauge qubits for each $CCC(d)$. We can clearly see that $CCC(3)$ is exactly the 3D color code of distance 3 discussed in \cref{subsec:3D_code_def}. Similarly, a stabilizer code encoding 1 logical qubit can be obtained from $CCC(d)$ by choosing $n_\mathrm{2D}$ independent, commuting gauge operators and including them in the stabilizer group. This work will discuss two possible ways to do so, and the resulting codes will be called the code in H form and the code in T form (similar to the case of the 3D color code of distance 3).

\pagebreak


\noindent\textbf{Capped color codes in H form}

Observe that the center plane of $CCC(d)$ which covers qubits 1 to $n_\mathrm{2D}$ looks exactly like the 2D color code of distance $d$. The stabilizer group of the 2D color code is $S_\mathrm{2D}=\langle f_1^x,\dots,f_{r}^x,f_{r+1}^z,\dots,f_{2r}^z \rangle$. A capped color code in H form constructed from $CCC(d)$ can be obtained by adding the stabilizer generators of the 2D color code to the original generating set of $CCC(d)$. Thus, the stabilizer group of the code in H form is,
\begin{equation}
	S_\mathrm{H} = \langle v_i^x,v_i^z,f_j^x,f_{r+j}^z\rangle, \label{eq:S_H_CCC}
\end{equation}
where $i=0,1,\dots,r$ and $j=1,2,\dots,r$. Logical $X$ and logical $Z$ operators of this code are of the form $X^{\otimes n}M$ and $Z^{\otimes n}N$, where $M,N$ are some stabilizers in $S_\mathrm{H}$. Note that \cref{lem:err_equivalence} is applicable to the code in H form constructed from any $CCC(d)$.

The code in H form is a code of distance $d$. This can be proved as follows:
\begin{proposition}
	The capped color code in H form constructed from $CCC(d)$ has distance $d$.
	\label{prop:distance_H}
\end{proposition}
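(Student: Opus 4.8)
The plan is to show separately that the distance is at most $d$ and at least $d$. The upper bound is immediate: the center plane of $CCC(d)$ is literally a 2D color code of distance $d$, whose logical operators (e.g. $Z^{\otimes n_\mathrm{2D}}$ supported on the center plane, times a suitable $f$-type stabilizer) can be taken to have weight $d$. Since $Z^{\otimes n}N$ is a logical $Z$ of the H form code for appropriate $N \in S_\mathrm{H}$, and the center-plane weight-$d$ operator differs from $Z^{\otimes n}$ by an element of $S_\mathrm{H}$ (it is a product of the $f_{r+j}^z$ generators together with $v_0^z$), there is a logical operator of weight $d$, so the distance is $\le d$. One must of course check this combination is a genuine logical, not a stabilizer — but that follows because on the 2D color code it is the nontrivial logical $Z$.

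For the lower bound I would argue that no nontrivial logical operator has weight $< d$. By the CSS structure it suffices to treat $Z$-type logical operators (the $X$-type case being symmetric, since $S_\mathrm{H}$ contains both the $f_j^x$ and $f_{r+j}^z$ symmetrically and $X^{\otimes n}$, $Z^{\otimes n}$ are both logical). Let $L$ be a $Z$-type operator that commutes with all $X$-type generators in $S_\mathrm{H}$ but is not a stabilizer; I want $|L| \ge d$. The key reduction: multiplying $L$ by the $\mathtt{v}$-type $Z$-stabilizers $v_i^z$ and the $\mathtt{cap}$ stabilizer $v_0^z$, I can push the support of $L$ entirely onto the center plane. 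Concretely, the bottom-plane qubits see only the $v_i^z$ generators (restricted to the bottom plane these act as the 2D color code $Z$-generators, which generate everything orthogonal to the bottom-plane $X$-generators), so any pattern on the bottom plane that is consistent with commuting with all $v_i^x$ — and $L$ restricted to the bottom plane must be such a pattern because $v_i^x$ acts identically on the center and bottom planes and commutation is forced — can be cancelled by a product of $v_i^z$'s, at the cost of adding the matching center-plane support. After this reduction $L'$ is supported on the top qubit and the center plane only; using $v_0^z$ I can also remove the top qubit. Then $L'$ lives purely on the center plane, commutes there with all $f_j^x$, hence is a (possibly trivial) logical operator of the 2D color code; if it is trivial as a 2D logical it is a product of the $f_{r+j}^z$'s hence in $S_\mathrm{H}$, contradicting nontriviality of $L$, so $L'$ is a nontrivial 2D color code logical and therefore $|L'| \ge d$. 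Finally I need $|L| \ge |L'|$ — i.e. that the reduction never decreased the weight below $d$; the cleaner route is to instead observe that \emph{any} representative of the logical class, in particular $L$, must have weight $\ge d$, which follows once I know the minimum weight in the class equals the 2D minimum: the reduction shows every class has a representative on the center plane of weight $\ge d$, and I separately need that no other representative is lighter.

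The cleanest way to close that last gap, and the step I expect to be the main obstacle, is to exhibit a low-weight "detecting" argument rather than reasoning about representatives: show that for any $Z$-type $L$ with $1 \le |L| \le d-1$, either $L$ fails to commute with some $v_i^x$ or $f_j^x$ (so $L$ is detectable, not a logical), or $L \in S_\mathrm{H}$. Equivalently, apply \cref{lem:err_equivalence}: the H form code satisfies the hypotheses of the lemma ($n$ odd, $k=1$, all generators even weight, $X^{\otimes n}, Z^{\otimes n}$ logical), so two $Z$-errors with the same syndrome and the same weight parity are related by a $Z$-stabilizer. A nontrivial weight-$<d$ logical $L$ would have trivial syndrome (same as the identity) and some weight parity; if $\mathrm{wp}(L)=0$, the lemma forces $L \in S_z \subseteq S_\mathrm{H}$, a contradiction; if $\mathrm{wp}(L)=1$, then $L$ has the same syndrome and weight parity as $Z^{\otimes n}$, so $L = Z^{\otimes n}\cdot M$ for $M \in S_z$, whence $|L| = |Z^{\otimes n} M| \ge n - (\text{overlap})$; here I would use that $M$ being a product of even-weight $Z$-generators confined in a controlled way, together with $n = 2n_\mathrm{2D}+1$, forces $|Z^{\otimes n}M| \ge d$ — indeed the minimum weight of $Z^{\otimes n}M$ over $M \in S_z$ is exactly the minimum weight of a $Z$-logical, which the center-plane embedding shows is $\le d$, and the 2D color code distance bound shows is $\ge d$. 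So the real content is bounding $\min_{M \in S_z}|Z^{\otimes n}M| \ge d$, and I would prove that by the support-pushing reduction above combined with the fact that the 2D color code has distance exactly $d$. I expect the bookkeeping of which stabilizer products clear the bottom plane without inflating weight below $d$ to be the fiddly part; everything else is a direct application of \cref{lem:err_equivalence} and the known properties of 2D color codes.
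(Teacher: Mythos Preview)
Your proposal has two genuine issues.

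\textbf{Upper bound.} A weight-$d$ 2D logical $Z$ on the \emph{center} plane is not a logical of the H-form code: since $v_0^x$ covers $\mathtt{q_0}$ together with all center-plane qubits and the 2D logical has odd weight $d$, they anticommute. (Your claimed factorization also fails: $Z^{\otimes n}\cdot v_0^z$ is supported on the bottom plane, and multiplying further by $f_{r+j}^z$'s cannot remove that bottom-plane support.) The correct witness lives on the \emph{bottom} plane: a weight-$d$ 2D logical $Z$ there commutes with $v_0^x$ and all $f_j^x$ (no overlap) and with all $v_i^x$ (by the 2D property), and has odd weight, so by \cref{lem:err_equivalence} it is a nontrivial logical of the H-form code. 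This is an easy fix, but your stated representative is not in the normalizer.

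\textbf{Lower bound.} Your support-pushing argument only produces a center-plane representative $L''$ with $|L''|\ge d$; as you acknowledge, this does not bound $|L|$ itself. Your subsequent appeal to \cref{lem:err_equivalence} correctly dispatches the even-parity case, but for the odd-parity case you land on ``$\min_{M\in S_z}|Z^{\otimes n}M|\ge d$''---which is precisely the distance lower bound you set out to prove---and then propose to close it with the same support-pushing. That is circular; the ``fiddly bookkeeping'' you anticipate is not a technicality but the entire content of the statement.

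The paper bypasses the representative-juggling with a direct parity argument. Write the weight of $L$ as $(a,b,c)$ over top qubit, center plane, bottom plane. Since $L$ commutes with all $f_j^x$ and with all $f_j^x\cdot v_j^x$ (the latter being the 2D $X$-generators on the bottom plane), both $L|_{\text{center}}$ and $L|_{\text{bottom}}$ have trivial 2D syndrome; having weight $<d$ each, they must be 2D \emph{stabilizers}, hence $b$ and $c$ are even. Commutation with $v_0^x$ forces $a+b$ even, so $a=0$, and $|L|=b+c$ is even. By \cref{lem:err_equivalence} on the full H-form code, $L$ is then a stabilizer, contradicting nontriviality. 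The missing idea in your argument is this per-layer parity count, which lets you conclude even weight directly from $|L|<d$ without ever switching representatives.
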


\begin{proof}
	In order to prove that the distance of a stabilizer code is $d$, we will show that the weight of a nontrivial logical operator is at least $d$; that is, any Pauli error of weight $<d$ is either a stabilizer or an error with a nontrivial syndrome, and there exists a nontrivial logical operator of weight exactly $d$. Since the capped color code in H form is a CSS code and $X$-type and $Z$-type generators have the same form, we can consider $X$-type and $Z$-type errors separately. For an error occurred on the code in H form, we will represent its weight by a triple $(a,b,c)$ where $a,b,c$ are the weights of the errors occurred on the top qubit, the center plane, and the bottom plane, respectively.
	
	Suppose that a $Z$-type error has weight $k<d$. The weight of such an error will be of the form $(a,b,c)$ with $a=0$ and $b+c=k$, or with $a=1$ and $b+c=k-1$. Observe that the stabilizer generators of the 2D color code on the center plane (which is a subcode of the capped color code in H form) are $f_1^x,\dots,f_{r}^x$ and $f_{r+1}^z,\dots,f_{2r}^z$. Moreover, the 2D color code on the bottom plane is also a subcode of the capped color code in H form, whose stabilizer generators are $f_1^x\cdot v_1^x,\dots,f_{r}^x\cdot v_{r}^x$ and $f_{r+1}^z\cdot v_1^z,\dots,f_{2r}^z\cdot v_{r}^z$ (the syndrome obtained by measuring $\mathtt{v}$ generators is the sum of the syndromes obtained from the 2D color codes on both planes). Since both 2D color codes on the center and the bottom planes have distance $d$, any $Z$-type error of weight $<d$ which occurs solely on the center or the bottom plane either has nontrivial syndrome or acts as a stabilizer on such a plane. From the possible forms of error, a $Z$-type error of weight $<d$ on the capped color code in H form corresponding to the trivial syndrome must act as a stabilizer on both planes and commute with $v_0^x$. Using \cref{lem:err_equivalence}, the weight of such an operator must be in the form $(0,b,c)$ where $b,c$ are even numbers. So the total weight of the error is even, and it cannot be a logical $Z$ operator (by \cref{lem:err_equivalence}). Therefore, any $Z$-type error of weight $<d$ is either a stabilizer or an error with a nontrivial syndrome. The same analysis is applicable to $X$-type errors of weight $<d$.
	
	Next, we will show that there exists a logical $Z$ operator of weight exactly $d$. Consider a $Z$-type operator whose weight is of the form $(0,0,d)$ and acts as a logical $Z$ operator on the 2D color code on bottom plane (the operator exists because the 2D color code has distance $d$). Such an operator commutes with all generators of the capped color code in H form and has odd weight. By \cref{lem:err_equivalence}, this operator is a logical $Z$ operator. The proof is now completed.
\end{proof}

The capped color code in H form constructed from $CCC(d)$ is an \codepar{n,1,d} code where $n=2n_\mathrm{2D}+1$. Similar to the 3D color code of distance 3 in H form, it is not hard to verify that Hadamard, $S$, and CNOT gates are transversal; their logical gates are $\bar{H}=H^{\otimes n}$, $\bar{S} = {(S^\dagger)^{\otimes n}}$, and $\overline{\mathrm{CNOT}}=\mathrm{CNOT}^{\otimes n}$.

It should be noted that there are many other choices of stabilizer generators that can give the same code as what is constructed here. However, different choices of generators can give different fault sets, which may or may not be distinguishable. In \cref{subsec:CCC_config}, we will only discuss circuits for measuring generators corresponding to \cref{eq:S_H_CCC}.
\\


\noindent\textbf{Capped color codes in T form}

A capped color code in T form is constructed from $CCC(d)$ by adding all $Z$-type face generators of weight 4 to the old generating set of $CCC(d)$. That is, the stabilizer group of the code in T form is
\begin{equation}
	S_\mathrm{T} = \langle v_i^x,v_i^z,f_j^z\rangle,
\end{equation}
where $i=0,1,\dots,r$ and $j=1,2,\dots,2r$, or equivalently,
\begin{equation}
	S_\mathrm{T} = \langle v_i^x,v_i^z,e_k^z,f_{r+k}^z\rangle,
\end{equation}
where $i=0,1,\dots,r$ and $k=1,2,\dots,r$. Similar to the code in H form, logical $X$ and logical $Z$ operators of this code are of the form $X^{\otimes n}M$ and $Z^{\otimes n}N$, where $M,N$ are some stabilizers in $S_\mathrm{T}$. Note that \cref{lem:err_equivalence} is also applicable to the code in T form constructed for any $CCC(d)$.

Unlike the code in H form, the capped color code in T form constructed from $CCC(d)$ is a code of distance $3$ regardless of the parameter $d$, i.e., it is an \codepar{n,1,3} code where $n=2n_\mathrm{2D}+1$. The proof of the code distance is as follows:
\begin{proposition}
	The capped color code in T form constructed from $CCC(d)$ has distance 3.
	\label{prop:distance_T}
\end{proposition}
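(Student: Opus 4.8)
The plan is to mirror the proof of \cref{prop:distance_H}: establish (i) that every non-identity Pauli of weight $<3$ anticommutes with some stabilizer generator of the code in T form, and (ii) that there is a non-trivial logical operator of weight exactly $3$. Since the code is CSS it suffices to treat $X$-type and $Z$-type errors separately in part (i); a mixed weight-$2$ error $X_aZ_b$ is automatically detectable, because one checks that every single-qubit $X$ (resp.\ $Z$) operator already anticommutes with some $Z$-type (resp.\ $X$-type) generator — $v_0^z$ or some $v_i^z$ in one case, $v_0^x$ or some $v_i^x$ in the other — and $X$- and $Z$-syndromes occupy disjoint sectors. I will record the support of an error by how many qubits it touches on the top qubit $\mathtt{q_0}$, the center plane, and the bottom plane, identifying each center qubit with the bottom qubit at the same $2$D position.

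The relevant structural facts are: in T form the only $X$-type generators are $v_0^x$ (supported on $\mathtt{q_0}$ and all center qubits) and $v_1^x,\dots,v_r^x$ (each a $2$D $X$-generator acting on \emph{both} planes), whereas the $Z$-type generators include the vertical edge operators $e_j^z$ (weight $4$, the two endpoints of an edge on both planes) and the center faces $f_{r+k}^z$ (a $2$D $Z$-generator on the center plane only). For a $Z$-type error, commuting with $v_0^x$ forces an even number of its qubits onto $\{\mathtt{q_0}\}\cup(\text{center})$, and commuting with each $v_i^x$ forces an even overlap with $\mathrm{supp}(v_i^x)$; running through weight $1$ and $2$, the only survivor is a weight-$2$ error whose two qubits sit on the same side (both center, or both bottom) at distinct $2$D positions $p\neq q$ lying in the same set of $2$D $X$-generators. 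For $X$-type errors one argues symmetrically with $v_0^z$, the $v_i^z$, and the $e_j^z$ and $f_{r+k}^z$: e.g.\ two bottom (or two center) qubits would have to appear an even number of times in every edge, impossible since every qubit of a $2$D color code has degree $\ge 3$, so all weight-$\le 2$ $X$-type candidates die outright. Hence part (i) reduces to the single fact that in a $2$D color code of distance $d\ge 3$ no two distinct qubits lie in the same set of $X$-type (resp.\ $Z$-type) generators — equivalently, the $2$D code has no stabilizer of weight $\le 2$, which follows from its distance being $\ge 3$ together with its face generators (hence all their products, by the lattice structure) having weight $\ge 4$. I expect this reduction, i.e.\ correctly exploiting that the $v$-generators act identically on the two planes while the faces and edges do not, to be the step requiring the most care.

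For part (ii), the natural witness is $Z_0\,Z_{(p,\mathrm c)}\,Z_{(p,\mathrm b)}$, where $(p,\mathrm c)$ and $(p,\mathrm b)$ are the center and bottom qubits at an arbitrary $2$D position $p$. It commutes with $v_0^x$ (overlap $\{\mathtt{q_0},(p,\mathrm c)\}$ of size $2$) and with every $v_i^x$ (overlap $2\,[\,p\in\mathrm{supp}_i\,]$, always even), and trivially with all $Z$-type generators, so it is in the normalizer. To see it is not a stabilizer, write the logical $X$ as $\bar X = X^{\otimes n}M$ with $M\in S_\mathrm{T}$; the $X$-part of $M$ is a product of $v_i^x$'s, so $\bar X$ overlaps our operator in $3+(\text{even})$ qubits, i.e.\ they anticommute, whence the operator is $\bar Z$ times a stabilizer. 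Finally, none of $Z_0$, $Z_0Z_{(p,\mathrm c)}$, $Z_0Z_{(p,\mathrm b)}$, $Z_{(p,\mathrm c)}Z_{(p,\mathrm b)}$ commutes with all of $v_0^x$ and the $v_i^x$, so this logical operator has weight exactly $3$. Combining (i) and (ii) shows the code in T form has distance $3$ for every $d$.
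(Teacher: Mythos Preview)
Your proof is correct and essentially matches the paper's: case-split on where a weight-$\le 2$ error sits among top/center/bottom, reduce the surviving cases to the fact that no two qubits of a $2$D color code lie in the same set of faces, then exhibit a weight-$3$ logical $Z$ of the form $Z_0 Z_{(p,\mathrm c)} Z_{(p,\mathrm b)}$ (the paper invokes \cref{lem:err_equivalence} rather than checking anticommutation with $\bar X$, but that is equivalent). Two small remarks: the paper streamlines the $X$-type half by observing that $S_{\mathrm T}$ contains all $Z$-type generators of $S_{\mathrm H}$, so the analysis in \cref{prop:distance_H} already detects every $X$-error of weight $<d$; and your ``e.g.'' using the vertical edge operators $e_j^z$ together with vertex degree $\ge 3$ is shaky as written (only $r$ specific edges are in the generating set, not all lattice edges), but it is also unnecessary --- the $v_i^z$ alone already force the same-face-set constraint by the symmetric argument you give for $Z$-type errors.
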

\begin{proof}
	Similar to the proof of \cref{prop:distance_H}, we will show that (1) any Pauli error of weight $<3$ is either a stabilizer or an error with a nontrivial syndrome, and (2) there exists a nontrivial logical operator of weight exactly $3$. However, for the capped color code in T form, $X$-type and $Z$-type generators have different forms, so we have to analyze both types of errors. Observe that all of the $Z$-type generators of the code in H form are also $Z$-type generators of the code in T form, thus we can use the analysis in the proof of \cref{prop:distance_H} to show that any $X$-type error of weight $<d$ is either a stabilizer or an error with a nontrivial syndrome. Thus, we only have to show that any $Z$-type error of weight $<3$ is either a stabilizer or an error with a nontrivial syndrome, and there exists a logical $Z$ operator of weight exactly $3$. Similar to the proof of \cref{prop:distance_H}, we will represent its weight by a triple $(a,b,c)$ where $a,b,c$ are the weights of the errors occurred on the top qubit, the center plane, and the bottom plane, respectively.
	
	The $X$-type generators of the capped color code in T form are $v_0^x,v_1^x,\dots,v_{r}^x$.  First, let us consider any $Z$-type error of weight 1. We can easily verify that the error anticommutes with at least one $X$-type generator, so its syndrome is nontrivial. Next, consider a $Z$-type error of weight 2. The weight of the error will have one of the following forms: $(0,2,0),(0,1,1),(0,0,2),(1,1,0),$ or $(1,0,1)$. We find that (1) a $Z$-type error of the form $(0,1,1)$ or $(1,0,1)$ anticommutes with $v_0^x$, and (2) a $Z$-type errors of the form $(0,2,0),(0,0,2),$ or $(1,1,0)$ anticommutes with at least one $\mathtt{v}$ generator (since $\mathtt{v}$ generators act as generators of the 2D color code on both planes simultaneously, and the 2D color code has distance $d$). Therefore, the syndrome of any $Z$-type error of weight 2 is nontrivial.
	
	Next, we will show that there exists a logical $Z$ operator of distance exactly 3. Consider a $Z$-type operator of weight 3 of the form $Z_{0}Z_{i}Z_{r+i}$, where $i=1,2,\dots,r$. We can verify that such an operator commutes with all $X$-type generators. Since the operator has odd weight, it is a logical $Z$ operator by \cref{lem:err_equivalence}.
\end{proof}


CNOT and $T$ gates are transversal for the code in T form, while Hadamard and $S$ gates are not. In order to prove the transversalily of the $T$ gate, we will use the following lemma \cite{KB15}:

\begin{lemma}
	Let $C$ be an \codepar{n,k,d} CSS subsystem code in which $n$ is odd, $k$ is 1, and $X^{\otimes n}$ and $Z^{\otimes n}$ are bare logical $X$ and $Z$ operators\footnote{A bare logical operator is a logical operator that acts on the logical qubit(s) of a subsystem code and does not affect the gauge qubit(s); see \cite{Poulin05,Bacon06,Bravyi11}.}. Also, let $Q$ be the set of all physical qubits of $C$, and let $p$ be any positive integer. Suppose there exists $V \subset Q$ such that for any $m=1,\dots,p$, for every subset $\{g_1^x,\dots,g_m^x\}$ of the $X$-type gauge generators of the code, the following holds:
	\begin{equation}
		\left| V \cap \bigcap_{i=1}^m G_i \right| = \left| V^c \cap \bigcap_{i=1}^m G_i \right| \mod 2^{p-m+1}, \label{eq:eq_lem2}
	\end{equation}
	where $G_i$ is the set of physical qubits that support $g_i^x$. Then, a logical $R_p$ gate (denoted by $\bar{R}_p$) can be implemented by applying $R_p^q$ to all qubits in $V$ and applying $R_p^{-q}$ to all qubits in $V^c$, where $R_p=\mathrm{diag}\left(1,\mathrm{exp}\left(2\pi i/2^p\right)\right)$, $q$ is a solution to $q(|V|-|V^c|)=1 \mod 2^p$, and $V^c=Q\backslash V$.
	\label{lem:transversal_R}
\end{lemma}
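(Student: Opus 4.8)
The plan is to take the physical operator $U$ that applies $R_p^{\,q}$ to every qubit in $V$ and $R_p^{-q}$ to every qubit in $V^c$, and to verify directly that $U$ maps the code space to itself and acts on the logical qubit as $R_p$ while leaving the gauge qubits alone. Since $R_p$ is diagonal, I would work in the computational basis $\{\ket{x}:x\in\mathbb{F}_2^n\}$, on which $U\ket{x}=\exp\!\left(\tfrac{2\pi i q}{2^p}\bigl(|x|_V-|x|_{V^c}\bigr)\right)\ket{x}$, where $|x|_T$ denotes the number of $1$'s of $x$ in the positions $T$. Using the CSS structure, the $Z$-type stabilizers confine every codeword to the span of $\{\ket{x}:x\in S_Z^{\perp}\}$ (with $S_Z$ the $\mathbb{F}_2$-span of the supports of the $Z$-type stabilizers), and the $X$-type stabilizers force the amplitudes to be constant on cosets of $S_X$; hence the code space is spanned by the orthogonal family of orbit states $\ket{u+S_X}:=\sum_{s\in S_X}\ket{u+s}$, with $u$ running over coset representatives in $S_Z^{\perp}/S_X$. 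Because the stabilizer group sits inside the gauge group, $S_X$ lies inside $W$, the $\mathbb{F}_2$-span of the supports $G_1,\dots,G_N$ of the $X$-type gauge generators, and $S_Z^{\perp}$ is spanned by $W$ together with the all-ones vector $\mathbf{1}\in\mathbb{F}_2^n$ (the support of the bare logical $\bar X=X^{\otimes n}$). Thus every relevant $u$ can be written $u=h+a\mathbf{1}$ with $h\in W$ and $a\in\{0,1\}$, and the whole problem collapses to controlling $|h|_V-|h|_{V^c}$ for $h\in W$. This structural reduction — identifying the right spanning set and seeing that it is governed by a single quantity on $W$ — is the step that needs the most care; everything after it is combinatorics.

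So suppose $h=G_{i_1}\triangle\cdots\triangle G_{i_\ell}$, the symmetric difference (i.e.\ the $\mathbb{F}_2$-sum) of the supports indexed by $A=\{i_1,\dots,i_\ell\}$. The engine of the proof is the symmetric-difference inclusion--exclusion identity
\[
\Bigl|\,T\cap\triangle_{i\in A}G_i\,\Bigr|=\sum_{\emptyset\neq B\subseteq A}(-2)^{|B|-1}\,\Bigl|\,T\cap\bigcap_{i\in B}G_i\,\Bigr|,
\]
valid for any set $T$; applying it with $T=V$ and $T=V^c$ and subtracting writes $|h|_V-|h|_{V^c}$ as a sum over nonempty $B\subseteq A$ of $(-2)^{|B|-1}$ times $\bigl|V\cap\bigcap_{B}G_i\bigr|-\bigl|V^c\cap\bigcap_{B}G_i\bigr|$. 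For $|B|=m\le p$, hypothesis \eqref{eq:eq_lem2} makes this last difference divisible by $2^{p-m+1}$, so the term is divisible by $2^{m-1}\cdot 2^{p-m+1}=2^{p}$; for $|B|=m>p$ the factor $(-2)^{m-1}$ is already divisible by $2^{m-1}\ge 2^{p}$. (This is precisely why \eqref{eq:eq_lem2} is needed only for $m=1,\dots,p$.) Summing, $|h|_V-|h|_{V^c}\equiv 0\pmod{2^{p}}$ for every $h\in W$.

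It then remains to package this. A short computation gives $|u|_V-|u|_{V^c}\equiv a\,(|V|-|V^c|)\pmod{2^p}$ for $u=h+a\mathbf{1}$, which depends on $u$ only through $a$; therefore $U$ multiplies the orbit state $\ket{u+S_X}$ — every vector $u+s$ of which, with $s\in S_X\subseteq W$, has the same value of $a$ — by the single scalar $\exp\!\left(\tfrac{2\pi i q a(|V|-|V^c|)}{2^p}\right)$, so $U$ preserves the code space. Since every stabilizer generator has even weight, so does every $h\in W$ and every $s\in S_X$, whereas $n$ is odd; hence the $\bar Z=Z^{\otimes n}$ eigenvalue of $\ket{u+S_X}$ equals $(-1)^{a}$, so $a$ labels the logical $Z$-basis and the gauge content is carried entirely by $h\bmod S_X$. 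Thus $U$ acts on the code space as $\mathrm{diag}\!\bigl(1,\exp(\tfrac{2\pi i q(|V|-|V^c|)}{2^p})\bigr)$ on the logical qubit and as the identity on the gauge qubits. Finally $q(|V|-|V^c|)\equiv 1\pmod{2^p}$ by hypothesis — and such a $q$ exists because $|V|-|V^c|=n-2|V^c|$ is odd — so the logical action is exactly $R_p=\mathrm{diag}(1,\exp(2\pi i/2^p))$, which completes the proof.
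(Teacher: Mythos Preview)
The paper does not supply its own proof of this lemma; it is quoted from \cite{KB15} and used as a black box. Your argument is essentially the standard one and is correct: the orbit-state description of the CSS code space, the reduction of $S_Z^\perp$ to $W+\langle\mathbf{1}\rangle$, and the inclusion--exclusion identity for symmetric differences together give exactly the divisibility $|h|_V-|h|_{V^c}\equiv 0\pmod{2^p}$ that makes the phase depend only on the logical label $a$.

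One small wording slip: when you argue that every $h\in W$ has even weight, you attribute this to ``every stabilizer generator has even weight,'' but $W$ is spanned by the $X$-type \emph{gauge} generators, not just the stabilizer generators. The correct reason is already implicit in the hypotheses: since $Z^{\otimes n}$ is a \emph{bare} logical it commutes with every $X$-type gauge generator, forcing each of those generators (and hence every element of $W$) to have even weight. With that fix the parity argument showing $\mathbf{1}\notin W$ and identifying $a$ with the $\bar Z$-eigenvalue goes through as written.
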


The proof of the transversality of the $T$ gate is as follows:
\begin{proposition}
	A $T$ gate is transversal for the capped color code in T form constructed from any $CCC(d)$.
	\label{prop:transversal_T}
\end{proposition}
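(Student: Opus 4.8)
The plan is to apply \cref{lem:transversal_R} with $p=3$, so that $R_3 = \mathrm{diag}(1, e^{2\pi i/8}) = \mathrm{diag}(1,\sqrt{i})$ is exactly the $T$ gate. The capped color code in T form is a CSS subsystem code with $n = 2n_\mathrm{2D}+1$ odd, $k=1$, and $X^{\otimes n}, Z^{\otimes n}$ as bare logical operators, so the hypotheses of the lemma on $C$ are met. It remains to exhibit a subset $V \subset Q$ of physical qubits satisfying the congruence \cref{eq:eq_lem2} for all $m=1,2,3$ and all size-$m$ subsets of the $X$-type gauge generators. The natural candidate, by analogy with the $d=3$ case discussed in \cref{subsec:3D_code_def}, is to take $V$ to be one part of the bipartition of $Q$ (say the black vertices in \cref{fig:CCC}\hyperlink{target:CCC}{a}), so that $V^c$ is the white vertices; then $\bar T$ is implemented by $T$ on black qubits and $T^\dagger$ on white qubits.

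First I would record the $X$-type gauge generators of the code in T form: these are the volume generators $v_0^x, v_1^x, \dots, v_r^x$ together with the vertical-face generators $e_{r+1}^x, \dots, e_{2r}^x$ of weight $4$ (equivalently the $f_{r+i}^x$'s). The key combinatorial input is that every generator, and every pairwise and triple-wise intersection of generator supports, is \emph{balanced} with respect to the bipartition in the appropriate modular sense. Concretely I need: (i) for $m=1$, each $|V \cap G_i| - |V^c \cap G_i| \equiv 0 \bmod 4$; (ii) for $m=2$, each $|V \cap (G_i \cap G_j)| - |V^c \cap (G_i \cap G_j)| \equiv 0 \bmod 2$; (iii) for $m=3$, each $|V \cap (G_i \cap G_j \cap G_k)| = |V^c \cap (G_i \cap G_j \cap G_k)|$ exactly (mod $1$ is automatic, so this is $\bmod 2^{3-3+1}=2$, i.e.\ the intersection meets both color classes equally $\bmod 2$ — and in fact triple intersections here will be small enough that one checks them directly). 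I would verify (i) by listing the generator weights and their black/white splits: the $\mathtt{cap}$ generator $v_0^x$ has support $\mathtt{q_0}$ plus the whole center plane; the $\mathtt v$ generators $v_i^x$ act as a 2D-color-code generator on center and bottom planes simultaneously; the vertical faces $e_{r+i}^x$ are weight-$4$ operators on two center and two bottom qubits. Because a 2D color code generator has even weight and (being a face on a bipartite lattice) splits evenly between the two color classes, each of these $X$-generators is split $a$–$a$ between black and white with $2a$ divisible by $4$ after accounting for the top qubit; I would check the top-qubit bookkeeping carefully since $\mathtt{q_0}$ is the lone unpaired vertex and its color determines a sign.

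The main obstacle is the $m=2$ and $m=3$ intersection conditions, which require understanding the geometry of how two or three of these volume/face operators overlap — in particular how the center-plane and bottom-plane copies of 2D-color-code faces intersect, and how the vertical faces $e^x$ intersect the volumes. The strategy here is to reduce everything to intersections of faces on a single 2D color code: two $\mathtt v$ generators $v_i^x, v_j^x$ intersect in (center-plane overlap of the two 2D faces) $\cup$ (bottom-plane copy of the same overlap), which is a bipartite-balanced set, so the difference is $0$; a $\mathtt v$ generator and $v_0^x$ intersect only on the center plane in a single 2D-face, again balanced; intersections involving a vertical face $e_{r+i}^x$ are confined to the edge it represents (two on-plane, two off-plane qubits), so the relevant intersections are tiny and checked by hand. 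For $m=3$ the triple intersections are even more constrained and I expect them to be empty or a single bipartite-balanced pair in all cases, so the exact equality holds. Once \cref{eq:eq_lem2} is verified, \cref{lem:transversal_R} gives $q$ solving $q(|V|-|V^c|) \equiv 1 \bmod 8$; since $|V|-|V^c| = \pm 1$ (the bipartition of the odd-size set $Q$ is off by one), we get $q = \pm 1$, and the lemma yields $\bar T$ implemented by $T$ on $V$ and $T^\dagger$ on $V^c$, completing the proof.
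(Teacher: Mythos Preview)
Your approach is the same as the paper's---apply \cref{lem:transversal_R} with $p=3$ and $V$ the black vertices of the bipartition---but there is a genuine conceptual slip that would derail the verification. The capped color code in T form is a \emph{stabilizer} code: its gauge group equals its stabilizer group $S_{\mathrm T}=\langle v_i^x,v_i^z,f_j^z\rangle$. The $X$-type gauge generators are therefore only $v_0^x,\dots,v_r^x$; the vertical-face operators $e_{r+1}^x,\dots,e_{2r}^x$ (equivalently $f_{r+1}^x,\dots,f_{2r}^x$) are \emph{not} in $S_{\mathrm T}$, since each $e_{r+i}^x$ anticommutes with $e_i^z\in S_{\mathrm T}$. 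Including them forces you to check many superfluous intersection conditions that are not required by \cref{lem:transversal_R} and may not even hold, so the proof as outlined would not close. Once you restrict to the $v_i^x$'s, the $m=1,2,3$ checks are exactly the short case analysis the paper carries out (cap--volume, volume--volume adjacent/non-adjacent, and triple intersections), and in every case $|V\cap\bigcap G_i|=|V^c\cap\bigcap G_i|$ holds exactly, not merely modulo a power of $2$.

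A smaller issue: you have the moduli in \cref{eq:eq_lem2} off by one step. For $p=3$ the required congruences are $\bmod\,8$ at $m=1$, $\bmod\,4$ at $m=2$, and $\bmod\,2$ at $m=3$ (i.e.\ $2^{p-m+1}$), not $\bmod\,4,\bmod\,2,\bmod\,1$. This does not affect the outcome here because the differences are all zero, but it is worth correcting.
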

\begin{proof}
	Let $C$ be the capped color code in T form constructed from any $CCC(d)$ ($C$ is a stabilizer code, i.e., it is a subsystem code in which the stabilizer group and the gauge group are the same). Note that the $X$-type stabilizer generators of the code are $v_0^x,v_1^x,\dots,v_{r}^x$, which are also the $X$-type gauge generators. Also, let $p=3$ (since $T = R_3$), $q=1$, and let $V$ and $V^c$ be the sets of qubits similar to those represented by black and white vertices in \cref{fig:CCC}\hyperlink{target:CCC}{a} (this kind of representation is always possible for any $CCC(d)$ since the set of physical qubits of $CCC(d)$ is bipartite). We will use \cref{lem:transversal_R} and show that \cref{eq:eq_lem2} is satisfied for $m=1,2,3$.
	
	Let $G_i$ be the set of qubits that support $X$-type generator $g_i^x$. If $m=1$, we can easily verify that $\left| V \cap G_1 \right| = \left| V^c \cap G_1 \right|\mod 8$ for every $g_1^x \in \{v_0^x,v_1^x,\dots,v_{r}^x\}$ since half of supporting qubits of any $X$-type generator is in $V$ and the other half is in $V^c$. 
	
	In the case when $m=2$, let $\{g_1^x,g_2^x\}$ be a subset of $\{v_0^x,v_1^x,\dots,v_{r}^x\}$. If $g_1^x$ is a $\mathtt{cap}$ generator $v_0^x$ and $g_2^x$ is a $\mathtt{v}$ generator $v_i^x, i=1,\dots,r$, then $G_1\cap G_2$ are the qubits that support the face generator $f_i^x$. Since half of qubits in $G_1\cap G_2$ is in $V$ and the other half is in $V^c$, we have that $\left| V \cap G_1 \cap G_2 \right| = \left| V^c \cap G_1 \cap G_2 \right|$ (equal to 2 or 3, depending on $v_i^x$). If $g_1^x$ and $g_2^x$ are adjacent $\mathtt{v}$ generators, then $G_1\cap G_2$ have 4 qubits, two of them are in $V$ and the other two are in $V^c$. So $\left| V \cap G_1 \cap G_2 \right| = \left| V^c \cap G_1 \cap G_2 \right| = 2$. If $g_1^x$ and $g_2^x$ are non-adjacent $\mathtt{v}$ generators, then $\left| V \cap G_1 \cap G_2 \right| = \left| V^c \cap G_1 \cap G_2 \right| = 0$. Therefore, \cref{eq:eq_lem2} is satisfied for any subset $\{g_1^x,g_2^x\}$.
	
	In the case when $m=3$, let $\{g_1^x,g_2^x,g_3^x\}$ be a subset of $\{v_0^x,v_1^x,\dots,v_{r}^x\}$. If $g_1^x$ is a $\mathtt{cap}$ generator $v_0^x$ and $g_2^x,g_3^x$ are adjacent $\mathtt{v}$ generators, or $g_1^x,g_2^x,g_3^x$ are $\mathtt{v}$ generators in which any two of them are adjacent, then $G_1\cap G_2 \cap G_3$ have 2 qubits, one of them is in $V$ and the other one is in $V^c$. Thus, $\left| V \cap G_1 \cap G_2 \cap G_3\right| = \left| V^c \cap G_1 \cap G_2 \cap G_3 \right|=1$. If $g_1^x$ is a $\mathtt{cap}$ generator $v_0^x$ and $g_2^x,g_3^x$ are non-adjacent $\mathtt{v}$ generators, or $g_1^x,g_2^x,g_3^x$ are $\mathtt{v}$ generators in which some pair of them are not adjacent, then $G_1 \cap G_2 \cap G_3$ is the empty set. So $\left| V \cap G_1 \cap G_2 \cap G_3 \right| = \left| V^c \cap G_1 \cap G_2 \cap G_3 \right| = 0$. Therefore, \cref{eq:eq_lem2} is satisfied for any subset $\{g_1^x,g_2^x,g_3^x\}$.
	
	Since the sufficient condition in \cref{lem:transversal_R} is satisfied, a transversal $T$ gate can be implemented by applying $T$ gates to all qubits in $V$ (represented by black vertices) and applying $T^\dagger$ gates to all qubits in $V^c$ (represented by white vertices).
\end{proof}

Incidentally, the capped color codes in T form presented here are similar to some codes that appear in other literature. In fact, the capped color codes in T form is the same as the stacked codes with distance 3 protection defined in \cite{JB16} (where alternative proofs of \cref{prop:distance_T,prop:transversal_T} are also presented). Such a code is the basis for the construction of the $(d-1)+1$ stacked code defined in the same work, whose code distance is $d$ (see also \cite{BC15,JBH16} for other subsystem codes with similar construction).
\\



\noindent\textbf{Code switching}

Similar to the 3D color code of distance 3, one can transform between the capped color code in H form and the code in T form derived from the same $CCC(d)$ using the code switching technique \cite{PR13,ADP14,Bombin15,KB15}. Suppose that we start from the code in H form. The code switching can be done by first measuring $e_1^z,\dots,e_{r}^z$ (vertical face generators of weight 4), then applying an $X$-type Pauli operator that
\begin{enumerate}
	\item commutes with all $v_i^x$'s and $v_i^z$'s ($i=0,1,\dots,r$), and
	\item commutes with $f_{r+1}^z,\dots,f_{2r}^z$, and
	\item for each $j=1,\dots,r$, commutes with $e_j^z$ if the outcome from measuring such an operator is 0 (the eigenvalue is +1) or anticommutes with $e_j^z$ if the outcome is 1 (the eigenvalue is $-1$).
\end{enumerate}
We can use a similar process to switch from the code in T form to the code in H form, except that $f_{1}^x,\dots,f_{r}^x$ will be measured and the operator to be applied is a $Z$-type Pauli operator that commutes or anticommutes with $f_{1}^x,\dots,f_{r}^x$ (depending on the measurement outcomes).


\subsection{Recursive capped color codes}
\label{subsec:RCCC_def}
	
	
One drawback of a capped color code $CCC(d)$ is that the code in T form has only distance 3 regardless of the parameter $d$. This prevents us from performing fault-tolerant $T$ gate implementation through code switching because a few faults that occur to the code in T form can lead to a logical error. In this section, we will introduce a way to construct a code of distance $d$ from capped color codes through a process of recursive encoding. The resulting code will be called \emph{recursive capped color code}.

First, let us consider a capped color code in T form obtained from any (subsystem) capped color code $CCC(d)$. There are many possible errors of weight 3 which are nontrivial logical errors of this code, but all of them have one thing in common: any logical error of weight 3 has support on $\mathtt{q_0}$ (the top qubit of a capped color code). So if we can reduce the error rate on $\mathtt{q_0}$, a logical error of weight 3 on a capped color code in T form will be less likely. In particular, if $\mathtt{q_0}$ is encoded by a code of distance $d-2$, the distance of the resulting code will be $d$.


We define a \emph{recursive capped color code} $RCCC(d)$ ($d=3,5,7,\dots$) to be a subsystem CSS code obtained from the following procedure:
\begin{enumerate}
	\item $RCCC(3)$ and $CCC(3)$ are the same code.
	\item $RCCC(d)$ is obtained by encoding $\mathtt{q_0}$ (the top qubit) of $CCC(d)$ by $RCCC(d-2)$.
\end{enumerate}
Constructing a recursive capped color code is similar to constructing a concatenated code. However, instead of encoding every physical qubit of the original code by another code, here we only encode $\mathtt{q_0}$ of a capped color code by a recursive capped color code with smaller parameter. 



It should be noted that a stacked code of distance $d$ \cite{JB16} can be obtained using a recursive encoding procedure similar to the one presented above. However, in that case, the top qubit of a capped color code $CCC(d)$ is encoded by the same capped color code ($CCC(d)$), and the procedure is repeated $(d-3)/2$ times. The recursive capped color code $RCCC(d)$ with $d=7$ and the stacked code of distance 7 are illustrated in \cref{fig:RCCCvsStacked}.

\begin{figure}[tbp]
	\centering
	\begin{subfigure}[b]{0.22\textwidth}
		\includegraphics[width=\textwidth]{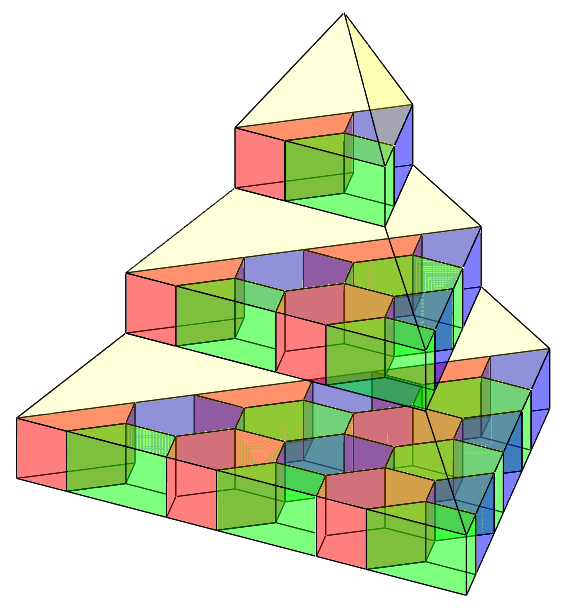}
		\captionsetup{justification=centering}
		\caption{}
		\label{subfig:RCCC_d7}
	\end{subfigure}	
	\begin{subfigure}[b]{0.22\textwidth}
		\includegraphics[width=\textwidth]{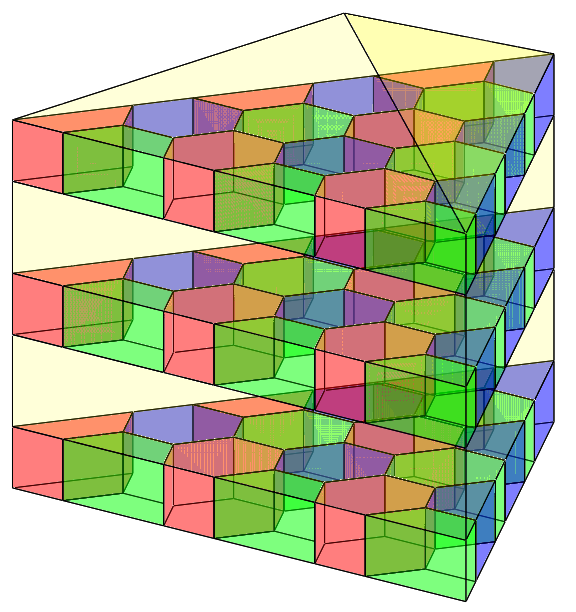}
		\captionsetup{justification=centering}
		\caption{}
		\label{subfig:Stacked_d7}
	\end{subfigure}
	\caption{(a) The recursive capped color code $RCCC(d)$ with $d=7$. (b) The stacked code of distance 7.}
	\label{fig:RCCCvsStacked}%
\end{figure}


The number of qubits of $RCCC(d)$ is $(d^3+3d^2+3d-3)/4$. For convenience, we will divide each $RCCC(d)$ into $d$ layers: 
\begin{enumerate}
	\item The first layer consists of the top qubit $\mathtt{q_0}$.
	\item The $(j-1)$-th layer where $j=3,5,\dots,d$ (which is similar to a 2D color code of distance $j$) will be called the center plane of inner $CCC(j)$. 
	\item The $j$-th layer where $j=3,5,\dots,d$ (which is similar to a 2D color code of distance $j$) will be called the bottom plane of inner $CCC(j)$.
\end{enumerate}
Similar to a capped color code, the stabilizer generators of $RCCC(d)$ are defined by volume operators of $X$ and $Z$ types, and the gauge generators are defined by volume and face generators of $X$ and $Z$ types.\\	
	
\noindent\textbf{Recursive capped color codes in H form}

The stabilizer group of a recursive capped color code in H form can be obtained by adding $X$- and $Z$-type face generators which are generators of 2D color codes on the center planes (layers $2,4,...,d-1$) to the original stabilizer generating set of $RCCC(d)$. Similar to the construction of the subsystem code previously described, the recursive capped color code in H form constructed from $RCCC(d)$ can also be obtained by encoding the top qubit $\mathtt{q_0}$ of the capped color code in H form constructed from $CCC(d)$ by the recursive capped color code in H form constructed from $RCCC(d-2)$.

The recursive capped color code in H form constructed from $RCCC(d)$ has distance $d$. This can be proved as follows:

\begin{proposition}
	The recursive capped color code in H form constructed from $RCCC(d)$ has distance $d$.
	\label{prop:distance_H_recur}
\end{proposition}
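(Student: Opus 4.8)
The plan is to prove this by induction on $d$, transcribing the proof of \cref{prop:distance_H} but with the top qubit replaced by an encoded block. The base case $d=3$ is immediate: $RCCC(3)=CCC(3)$, so the H-form code is the 3D color code of distance 3 in H form, which has distance 3 by \cref{prop:distance_H}. For the inductive step I would actually strengthen the hypothesis to record the structural facts needed to apply \cref{lem:err_equivalence} at the inner level: that the H-form code built from $RCCC(d-2)$ is an \codepar{n',1,d-2} CSS code, all of whose stabilizer generators have even weight, with $X^{\otimes n'}$ and $Z^{\otimes n'}$ as logical operators. These all propagate upward — the recursive construction keeps $k=1$ and keeps $X^{\otimes n},Z^{\otimes n}$ logical, and a $\mathtt{cap}$ generator of the inner $CCC(d)$ acquires the inner logical $X$ (resp.\ $Z$), of odd weight $n'$, on top of $n_\mathrm{2D}$ center-plane qubits, again an even total.

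In the inductive step I would describe the weight of a $Z$-type error $\mathcal{E}$ by a triple $(a,b,c)$, where $a$ is its weight on the block encoding the top qubit (the inner $RCCC(d-2)$), $b$ its weight on the center plane of the inner $CCC(d)$, and $c$ its weight on the bottom plane. Assume $a+b+c<d$ and that $\mathcal{E}$ has trivial syndrome. As in \cref{prop:distance_H}, the $X$-type face generators $f_j^x$ of the inner $CCC(d)$ act only on the center plane, so $\mathcal{E}_{\mathrm{center}}$ has trivial syndrome on the distance-$d$ center-plane 2D color code; since $b<d$, $\mathcal{E}_{\mathrm{center}}$ must be a $Z$-stabilizer of that 2D code, hence of even weight by \cref{lem:err_equivalence}. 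Measuring the $\mathtt{v}$ generators $v_j^x$ returns the sum of the 2D syndromes of $\mathcal{E}_{\mathrm{center}}$ and $\mathcal{E}_{\mathrm{bottom}}$, so $\mathcal{E}_{\mathrm{bottom}}$ also has trivial 2D syndrome, and $c<d$ makes it a $Z$-stabilizer of even weight. Commuting with the inner block's $X$-type generators puts $\mathcal{E}_{\mathrm{top}}$ in the inner $Z$-normalizer, so $\mathcal{E}_{\mathrm{top}}=M\cdot(\bar{Z}_{\mathrm{inner}})^s$ with $M$ an inner $Z$-stabilizer; commuting with $v_0^x=\bar{X}_{\mathrm{inner}}\otimes X^{\otimes(\mathrm{center})}$, together with $b$ being even, forces $s=0$, so $\mathcal{E}_{\mathrm{top}}$ is an inner $Z$-stabilizer, of even weight by \cref{lem:err_equivalence} at the inner level (licensed by the strengthened hypothesis). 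Hence $a,b,c$ are all even, $\mathcal{E}$ has even weight, and \cref{lem:err_equivalence} for the full $RCCC(d)$ in H form forbids its being a nontrivial logical $Z$, so $\mathcal{E}$ is a stabilizer. The identical argument handles $X$-type errors, so every error of weight $<d$ is a stabilizer or has nontrivial syndrome.

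For the matching bound I would exhibit a weight-$d$ logical operator exactly as in \cref{prop:distance_H}: a logical $Z$ of the bottom-plane 2D color code of the inner $CCC(d)$ has weight $d$; it is disjoint from the top block, commutes with $v_0^x$ (which avoids the bottom plane) and with every $\mathtt{v}$ generator (restricted to the bottom plane it is a 2D logical, hence commutes with the 2D $X$-stabilizers), so it commutes with all generators; being odd weight, \cref{lem:err_equivalence} makes it a genuine logical $Z$. Combining the two bounds gives distance exactly $d$, completing the induction.

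The main obstacle I anticipate is bookkeeping rather than conceptual: first, making sure \cref{lem:err_equivalence} is legitimately applicable at every level of the recursion — which is exactly why I would fold the "\codepar{n,1,d} CSS, even-weight generators, $X^{\otimes n}/Z^{\otimes n}$ logical" properties into the inductive statement instead of re-deriving them; and second, the commutation analysis with the $\mathtt{cap}$ generator $v_0^x$, since after encoding it now involves the inner logical $X$, and it is precisely the parity cancellation against the (even-weight) center-plane part of the error that pins the $\bar{Z}_{\mathrm{inner}}$ component of $\mathcal{E}_{\mathrm{top}}$ to be absent. Everything else is a direct transcription of the proof of \cref{prop:distance_H}.
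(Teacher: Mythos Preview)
Your proof is correct, but it takes a substantially more laborious route than the paper's. The paper's argument is essentially two lines: since the recursive code is obtained from the capped color code $CCC(d)$ in H form (which has distance $d$ by \cref{prop:distance_H}) by \emph{encoding} the top qubit with another code, the distance cannot decrease; and since the weight-$d$ logical on the bottom plane of $CCC(d)$ does not touch the top qubit at all, it survives the encoding unchanged, so the distance is still exactly $d$. The implicit ingredient is the standard fact that replacing a physical qubit by a logical qubit of any nontrivial code can only raise the minimum weight of a logical operator.

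Your approach instead reproves the lower bound from scratch by an explicit induction, carrying along the ``odd $n$, even-weight generators, $X^{\otimes n}/Z^{\otimes n}$ logical'' hypotheses needed to invoke \cref{lem:err_equivalence} at every level. This is more work but also more self-contained: you never appeal to the general principle that encoding preserves distance, and your parity analysis of the $\mathtt{cap}$ generator after encoding (odd $n'$ plus odd $n_\mathrm{2D}$ gives even weight) makes explicit something the paper leaves to the reader. Either argument is fine; yours would be preferable if one wanted to avoid citing the concatenation-distance fact without proof.
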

\begin{proof}
	Consider a capped color code in H form constructed from $CCC(d)$ which has distance $d$ (see \cref{prop:distance_H}). One example of a logical error of weight $d$ of this code is a logical error of a 2D color code of distance $d$ on the bottom plane. Encoding the top qubit of a capped color code in H form by the recursive capped color code in H form constructed from $RCCC(d-2)$ will not affect the aforementioned logical error, so the distance of the resulting code is still $d$.
\end{proof}

A recursive capped color code in H form constructed from $RCCC(d)$ is an \codepar{n,1,d} code where $n=(d^3+3d^2+3d-3)/4$. Similar to a capped color code in H form, a recursive capped color code in H form also possesses transversal Hadamard, $S$, and CNOT gates, where $\bar{H}=H^{\otimes n}$, $\overline{\mathrm{CNOT}}=\mathrm{CNOT}^{\otimes n}$, $\bar{S} = {(S^\dagger)^{\otimes n}}$ when $d=3,7,11,...$, and $\bar{S} = {(S)^{\otimes n}}$ when $d=5,9,13,...$.\\

\noindent\textbf{Recursive capped color codes in T form}

Consider the $(j-1)$-th and $j$-th layers ($j=3,5,\dots,d$) of a subsystem code $RCCC(d)$, which are similar to 2D color codes of distance $j$. We can define vertical face generators of inner $CCC(j)$ between these two layers similar to the way we define vertical face generators for $CCC(d)$ in \cref{subsec:CCC_def} (see \cref{fig:CCC_edge} for examples). The stabilizer group of a recursive capped color code in T form can be obtained by adding vertical face generators of $Z$ type of all inner $CCC(j)$'s to the original stabilizer generating set of $RCCC(d)$. Also, similar to the construction of the subsystem code $RCCC(d)$, the recursive capped color code in T form constructed from $RCCC(d)$ can be obtained by encoding the top qubit $\mathtt{q_0}$ of the capped color code in T form constructed from $CCC(d)$ by the recursive capped color code in T form constructed from $RCCC(d-2)$.

Unlike the capped color code in T form constructed from $CCC(d)$ whose distance is 3 regardless of the parameter $d$, the recursive capped color code in T form constructed from $RCCC(d)$ has distance $d$. This can be proved as follows:

\begin{proposition}
	The recursive capped color code in T form constructed from $CCC(d)$ has distance $d$.
	\label{prop:distance_T_recur}
\end{proposition}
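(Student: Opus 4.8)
The plan is to induct on $d$, using the recursive definition of the code. Write $D_d^{\mathrm{T}}$ for the (non-recursive) capped color code in T form built from $CCC(d)$, which has distance $3$ by \cref{prop:distance_T}, and $C_d^{\mathrm{T}}$ for the recursive capped color code in T form built from $RCCC(d)$; then $C_d^{\mathrm{T}}$ is obtained from $D_d^{\mathrm{T}}$ by encoding the top qubit $\mathtt{q_0}$ with $C_{d-2}^{\mathrm{T}}$. The base case $d=3$ is \cref{prop:distance_T}, since $RCCC(3)=CCC(3)$. I would strengthen the inductive hypothesis to also assert that $C_{d-2}^{\mathrm{T}}$ carries a nontrivial logical $Z$ operator of weight exactly $d-2$ (true at $d=3$, e.g.\ $Z_0Z_iZ_{r+i}$ in the $\codepar{15,1,3}$ code), and this will be re-established along the way.

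Two facts about $D_d^{\mathrm{T}}$ carry the argument. (i) $D_d^{\mathrm{T}}$ has no nontrivial logical operator of weight $\le 2$: weight $1$ is excluded by distance $3$, and weight $2$ because every generator of $D_d^{\mathrm{T}}$ has even weight (so there is no weight-$2$ stabilizer) together with distance $3$ again. In particular $\mathtt{q_0}$ alone supports no logical operator (it anticommutes with $v_0^x$ or with $v_0^z$), and any nontrivial logical of $D_d^{\mathrm{T}}$ acting nontrivially on $\mathtt{q_0}$ has weight $\ge 2$ on the remaining qubits. (ii) Every nontrivial logical of $D_d^{\mathrm{T}}$ whose support avoids $\mathtt{q_0}$ has weight $\ge d$; this is the main geometric input and is proved as in \cref{prop:distance_H}. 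For a $Z$-type such operator $E$, the only stabilizers that constrain it are $v_0^x,v_1^x,\dots,v_r^x$; splitting $E=E_{\mathrm c}E_{\mathrm b}$ into its center-plane and bottom-plane parts, commutation with the $\mathtt{v}$ generators $v_i^x$ ($i\ge 1$), which act as the same $2$D color code generator on both planes, forces $E_{\mathrm c}$ and $E_{\mathrm b}$, read as $Z$-errors on one copy of the $2$D color code, to carry the same $X$-syndrome, and by \cref{lem:err_equivalence} a nontrivial logical $Z$ has odd weight. Hence the operator $E_{\mathrm c}E_{\mathrm b}$ obtained by multiplying these on a single $2$D color code lattice has trivial syndrome and odd total weight, so it is a nontrivial logical $Z$ of the distance-$d$ $2$D color code, giving $\mathrm{wt}(E)=\mathrm{wt}(E_{\mathrm c})+\mathrm{wt}(E_{\mathrm b})\ge\mathrm{wt}(E_{\mathrm c}E_{\mathrm b})\ge d$. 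The $X$-type case is analogous (or can be reduced to \cref{prop:distance_H} for the H form, whose $Z$-type stabilizers are contained in those of $D_d^{\mathrm{T}}$).

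The induction step then runs as follows. For the upper bound, take the weight-$3$ logical $Z_0Z_iZ_{r+i}$ of $D_d^{\mathrm{T}}$ and, under the encoding of $\mathtt{q_0}$, replace the factor $Z_0$ by a weight-$(d-2)$ logical $Z$ of $C_{d-2}^{\mathrm{T}}$ (which exists by the strengthened hypothesis); this is a nontrivial logical $Z$ of $C_d^{\mathrm{T}}$ of weight $2+(d-2)=d$, which also re-establishes the strengthened hypothesis. For the lower bound, let $L$ be a nontrivial logical of $C_d^{\mathrm{T}}$, which we may take to be purely $X$- or $Z$-type since the code is CSS. Let $\pi(L)$ be the operator on the qubits of $D_d^{\mathrm{T}}$ agreeing with $L$ outside the inner block and acting on $\mathtt{q_0}$ according to the logical class of $L$ restricted to the block. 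A routine commutation check shows $\pi(L)$ lies in the normalizer of $D_d^{\mathrm{T}}$, and in fact $\pi(L)$ is a \emph{nontrivial} logical: if $\pi(L)$ were a $D_d^{\mathrm{T}}$-stabilizer, then $L$ times the corresponding lift would be an operator supported on the inner block commuting with both $\bar X$ and $\bar Z$ of $C_{d-2}^{\mathrm{T}}$, hence an inner stabilizer, making $L$ itself a stabilizer of $C_d^{\mathrm{T}}$. Now split into cases: if $\pi(L)$ acts trivially on $\mathtt{q_0}$, then $L$ restricted to the block is an inner stabilizer and $\mathrm{wt}(L)\ge\mathrm{wt}(\pi(L))\ge d$ by (ii); if $\pi(L)$ acts nontrivially on $\mathtt{q_0}$, then $L$ restricted to the block is a nontrivial logical of $C_{d-2}^{\mathrm{T}}$, of weight $\ge d-2$ by the inductive hypothesis, while $\pi(L)$ has weight $\ge 2$ off $\mathtt{q_0}$ by (i), so $\mathrm{wt}(L)\ge(d-2)+2=d$. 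In all cases $\mathrm{wt}(L)\ge d$, completing the induction.

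The two places I expect genuine work are fact (ii) — recycling the $2$D-color-code distance argument of \cref{prop:distance_H} while carefully tracking how the $\mathtt{v}$ generators glue the center and bottom planes — and the concatenation bookkeeping, i.e.\ making precise what the logical operators of a code obtained by encoding a single physical qubit look like and verifying that $\pi(L)$ is a genuine outer logical rather than an outer stabilizer. Everything else is routine.
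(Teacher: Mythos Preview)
Your proof is correct and follows the same inductive strategy as the paper (encode $\mathtt{q_0}$ by the distance-$(d-2)$ inner code, then combine inner distance with outer weight). In fact you are more thorough: the paper's proof only notes that every weight-$3$ logical of $D_d^{\mathrm T}$ meets $\mathtt{q_0}$ and then jumps directly to ``minimum weight $(d-2)+2=d$,'' whereas you explicitly isolate and prove the missing ingredient (your fact~(ii), that $\mathtt{q_0}$-avoiding logicals of $D_d^{\mathrm T}$ already have weight $\ge d$) and carry out the concatenation case split carefully.
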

\begin{proof}
	Consider a capped color code in T form constructed from $CCC(d)$ which has distance 3 (see \cref{prop:distance_T}). We find that any logical error of weight $3$ is of $Z$ type and has support on $\mathtt{q_0}$ (the top qubit of the capped color code). Suppose that $\mathtt{q_0}$ is encoded by a code of distance $d-2$, effectively becoming an inner logical qubit $\bar{\mathtt{q}}_\mathtt{0}$. To create a logical error on the resulting code similar to the logical error of weight 3 on a capped color code in T form, we need an error on $\bar{\mathtt{q}}_\mathtt{0}$ plus errors on two more qubits. Thus, the minimum weight of a logical error of the resulting code is $(d-2)+2=d$. In our case, the code being used to encode $\mathtt{q_0}$ is the recursive capped color code in T form constructed from $RCCC(d-2)$. By induction, the recursive capped color code in T form constructed from $RCCC(d)$ has distance $d$.
\end{proof}

A recursive capped color code in T form constructed from $RCCC(d)$ is an \codepar{n,1,d} code where $n=(d^3+3d^2+3d-3)/4$. Similar to a capped color code in T form, a recursive capped color code in T form also possesses transversal CNOT and $T$ gates. The proof of transversality of the $T$ gate is as follows:
\begin{proposition}
	A $T$ gate is transversal for the recursive capped color code in T form constructed from any $RCCC(d)$.
	\label{prop:transversal_T_recur}
\end{proposition}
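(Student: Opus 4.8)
The plan is to prove this by induction on $d$, exploiting the recursive construction of $RCCC(d)$ in the same spirit as the proofs of \cref{prop:distance_T_recur,prop:distance_H_recur}. The base case $d=3$ is immediate: $RCCC(3)=CCC(3)$, so \cref{prop:transversal_T} specialized to $d=3$ already gives a transversal $T$ gate, realized by applying $T$ to the black vertices and $T^\dagger$ to the white vertices. For the inductive step, recall that the recursive capped color code in T form built from $RCCC(d)$ is obtained by encoding the top qubit $\mathtt{q_0}$ of the capped color code in T form built from $CCC(d)$ using the recursive capped color code in T form built from $RCCC(d-2)$.

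By \cref{prop:transversal_T}, the outer capped color code in T form has a transversal logical $T$ gate $\bar T$ realized by a depth-one circuit of single-qubit $T$ and $T^\dagger$ gates, one per physical qubit; in particular this circuit applies either a single $T$ or a single $T^\dagger$ to the slot occupied by $\mathtt{q_0}$. By the induction hypothesis, the inner code (the recursive capped color code in T form from $RCCC(d-2)$) has a transversal logical $T$ gate, and taking the Hermitian conjugate of that depth-one circuit shows it also has a transversal logical $T^\dagger$ gate (obtained by interchanging the roles of its black and white vertices). Hence, whichever of $T$ or $T^\dagger$ the outer circuit wants on $\mathtt{q_0}$, we replace that single physical gate by the corresponding transversal logical $\bar T^{\pm 1}$ on the inner block. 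By the standard concatenation principle --- a circuit that behaves correctly with $\mathtt{q_0}$ as a bare physical qubit continues to behave correctly when the physical operation on $\mathtt{q_0}$ is replaced by its encoded logical counterpart --- the resulting operation is again a depth-one circuit of single-qubit $T$ and $T^\dagger$ gates on the full physical qubit set of $RCCC(d)$. It preserves every stabilizer (the lifted outer stabilizers are preserved by the outer circuit, and the inner stabilizers are preserved because the inner-block operation is a logical gate of the inner code) and acts as $\bar T$ on the logical qubit, so it is a transversal implementation of the logical $T$, completing the induction.

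The point deserving the most care is the bookkeeping of signs: depending on the color of $\mathtt{q_0}$ in the outer circuit, the inner block may be driven by $\bar T$ or by $\bar T^\dagger$, so the explicit ``$T$ on black, $T^\dagger$ on white'' description refers to a bipartition of the qubit set of $RCCC(d)$ that is assembled layer by layer and may flip orientation between successive layers --- the same parity phenomenon that makes $\bar S=(S^\dagger)^{\otimes n}$ when $d\equiv 3\pmod 4$ but $\bar S=S^{\otimes n}$ when $d\equiv 1\pmod 4$ in H form. Since the qubit set of $RCCC(d)$ is bipartite, a consistent global coloring exists, and in any case the conclusion ``a $T$ gate is transversal'' is unaffected. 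An alternative, non-inductive route would be to apply \cref{lem:transversal_R} directly to $RCCC(d)$ in T form with $p=3$, $q=1$, and $V$ the black vertices; the main obstacle there is evaluating $|V\cap\bigcap_{i=1}^m G_i|$ against $|V^c\cap\bigcap_{i=1}^m G_i|$ for $m=1,2,3$ when the relevant $X$-type gauge generators are $\mathtt{cap}$ and $\mathtt{v}$ volume operators drawn from different inner $CCC(j)$ layers, whose supports can intersect across layers. The inductive argument avoids this case analysis entirely, which is why I would use it.
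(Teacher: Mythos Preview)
Your proposal is correct and follows essentially the same inductive argument as the paper: use \cref{prop:transversal_T} for the outer $CCC(d)$, replace the physical $T^{\pm 1}$ on $\mathtt{q_0}$ by the inner logical $\bar T^{\pm 1}$ supplied by the induction hypothesis, and observe that the result is still a depth-one product of single-qubit $T/T^\dagger$ gates. The paper's version is terser and does not spell out the $T$-vs-$T^\dagger$ bookkeeping or the alternative direct route via \cref{lem:transversal_R} that you mention, but the core idea is identical.
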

\begin{proof}
	Recall that for any capped color code in T form, by \cref{prop:transversal_T}, a logical $T$ gate can be achieved by applying physical $T$ and $T^\dagger$ gates on qubits represented by black and white vertices, respectively (see \cref{fig:CCC}\hyperlink{target:CCC}{a} for examples; the representation can be extended to any $CCC(d)$ since the set of physical qubits of $CCC(d)$ is bipartite). Suppose that the top qubit $\mathtt{q_0}$ of a capped color code in T form constructed from $CCC(d)$ is encoded by the recursive capped color code in T form constructed from $RCCC(d-2)$, becoming an inner logical qubit $\bar{\mathtt{q}}_\mathtt{0}$. A logical $T$ gate of the resulting code is similar to a logical $T$ gate of the capped color code, except that an (inner) logical $T$ gate is applied on $\bar{\mathtt{q}}_\mathtt{0}$. By induction, the $T$ gate for $\bar{\mathtt{q}}_\mathtt{0}$ is transversal, and the $T$ gate for the recursive capped color code in T form constructed from $RCCC(d)$ is also transversal.
\end{proof}

\noindent\textbf{Code switching}

Similar to the capped color codes, the code switching technique can be used to transform between the recursive capped color codes in H and T forms constructed from the same $RCCC(d)$. In particular, we can switch from the code in H form to the code in T form by measuring $Z$-type vertical face generators of all inner $CCC(j)$'s and apply an appropriate Pauli operator depending on the measurement outcome. Switching from the code in T form to the code in H form can be done in a similar fashion, except that $X$-type generators of 2D color codes on the center planes (layers $2,4,...,d-1$) will be measured instead. Please refer to the process of finding a appropriate Pauli operator for code switching in \cref{subsec:CCC_def}. 

We have not yet discussed whether the procedure above is fault tolerant when we switch between the recursive capped color codes in H form and T form. However, the discussion of the fault-tolerant implementation of $T$ gate will be deferred until \cref{subsec:FT_T_gate}.

\subsection{Circuit configuration for capped and recursive capped color codes}
\label{subsec:CCC_config}


One of the main goals of this work is to find circuits for measuring generators of a capped color code in H form in which the corresponding fault set $\mathcal{F}_t$ is distinguishable (where $t=\tau=(d-1)/2$ and $d=3,5,7,...$ is the code distance), and we expect that similar circuits will also work for a recursive capped color code in H form. As discussed before, the CNOT orderings and the number of flag ancillas are crucial for the circuit design. Finding such circuits for a capped color code of any distance using a random approach can be very challenging because of a few reasons: (1) the number of stabilizer generators of a capped color code increases quadratically as the distance increases. This means that the number of possible single faults in the circuits grow quadratically as well. (2) for a code with larger distance, a fault set $\mathcal{F}_t$ with larger $t$ will be considered. Since it concerns all possible fault combinations arising from up to $t$ faults, the size of $\mathcal{F}_t$ grows dramatically (perhaps exponentially) as $t$ and the number of possible single faults increase. For these reasons, verifying whether $\mathcal{F}_t$ is distinguishable using the conditions in \cref{def:distinguishable} requires a lot of computational resources, and exhaustive search for appropriate CNOT orderings may turn intractable.



Fortunately, there is a way to simplify the search for the CNOT orderings. From the structure of the capped color code in H form, it is possible to relate CNOT orderings for the 3D-like generators to those for the 2D-like generators, as we have seen in the circuit construction in \cref{subsec:3D_code_config}. Instead of finding CNOT orderings directly for all generators, we will simplify the problem and develop sufficient conditions for the CNOT orderings of the 2D-like generators which, if satisfied, can guarantee that the fault set $\mathcal{F}_t$ (which concerns both 3D-like and 2D-like generators) is distinguishable. Although we still need to check whether the sufficient conditions are satisfied for given CNOT orderings, the process is much simpler than checking the conditions in \cref{def:distinguishable} directly when the size of $\mathcal{F}_t$ is large.



We begin by dividing the stabilizer generators of the capped color code in H form constructed from $CCC(d)$ into 3 categories (similar to the discussion in \cref{subsec:3D_code_config}):

\begin{enumerate}
	\item $\mathtt{cap}$ generators consisting of $v_0^x$ and $v_0^z$,
	\item $\mathtt{v}$ generators consisting of $v_1^x,\dots,v_{r}^x,$ and $v_1^z,\dots,$ $v_{r}^z$,
	\item $\mathtt{f}$ generators consisting of $f_1^x,\dots,f_{r}^x,$ and $f_{r+1}^z,$ $\dots,f_{2r}^z$.
\end{enumerate}
Here we will only consider fault combinations arising from circuits for measuring $Z$-type generators which can lead to purely $Z$-type data errors of any weight. This is because $i$ faults in circuits for measuring $X$-type generators cannot cause $Z$-type data error of weight greater than $i$ (and vice versa). Similar analysis will be applicable to the case of purely $X$-type errors, and also the case of mixed-type errors. We will first consider a $Z$-type data error and a flag vector arising from each single fault. Afterwards, fault combinations constructed from multiple faults will be considered, where the combined data error and the cumulative flag vector for each fault combination can be calculated using \cref{eq:combined_E,eq:cumulative_f}.

Observe that the center plane of a capped color code behaves like a 2D color code, and the weight of a $Z$-type error that occurred on the center plane can be measured by the $\mathtt{cap}$ generator $v_0^x$. In order to find CNOT orderings for generators of each category, we will use an idea similar to that presented in \cref{subsec:3D_code_config}; we will try to design circuits for measuring $Z$-type generators so that most of possible $Z$-type errors arising from a single fault are on the center plane. In this work, we will start by imposing general configurations of data and flag CNOT gates; these general configurations will facilitate finding CNOT orderings. Then, exact configurations of CNOT gates which can make $\mathcal{F}_t$ distinguishable will be found using the theorem developed later in this section. The general configurations of data CNOT gates, which depend on the category of the generator, are as follows:\\

\noindent\textbf{General configurations of data CNOT gates}

\begin{enumerate}
	\item $\mathtt{f}$ generator: there is no constraint for the ordering of data CNOTs since each $\mathtt{f}$ generator lies on the center plane, but the ordering for $f_{r+i}^z$ (or $f_i^x$) must be related to the ordering for $v_i^z$ (or $v_i^x$) where $i=1,\dots,r$.
	\item $\mathtt{v}$ generator: The \emph{sawtooth configuration} will be used; the qubits on which the data CNOTs act must be alternated between on-plane and off-plane qubits. The ordering of data CNOTs for $v_i^z$ (or $v_i^x$) is referenced by the ordering of data CNOTs for $f_{r+i}^z$ (or $f_i^x$) where $i=1,\dots,r$ (see examples in \cref{fig:flag_config} and \cref{subsec:3D_code_config}).
	\item $\mathtt{cap}$ generator: The first data CNOT must always be the one that couples $\mathtt{q_0}$ with the syndrome ancilla. The ordering of the other data CNOTs has yet to be fixed.
\end{enumerate}

\begin{figure}[tbp]
	\centering
	\begin{subfigure}{0.33\textwidth}
		\includegraphics[width=\textwidth]{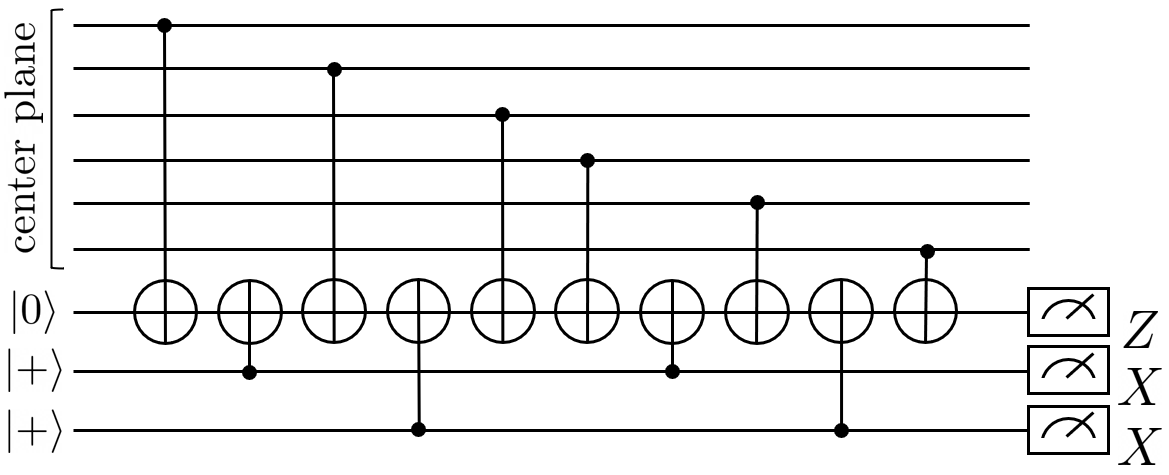}
		\captionsetup{justification=centering}
		\caption{}
		\label{subfig:flag_config_f}
	\end{subfigure}	
	\begin{subfigure}{0.48\textwidth}
		\includegraphics[width=\textwidth]{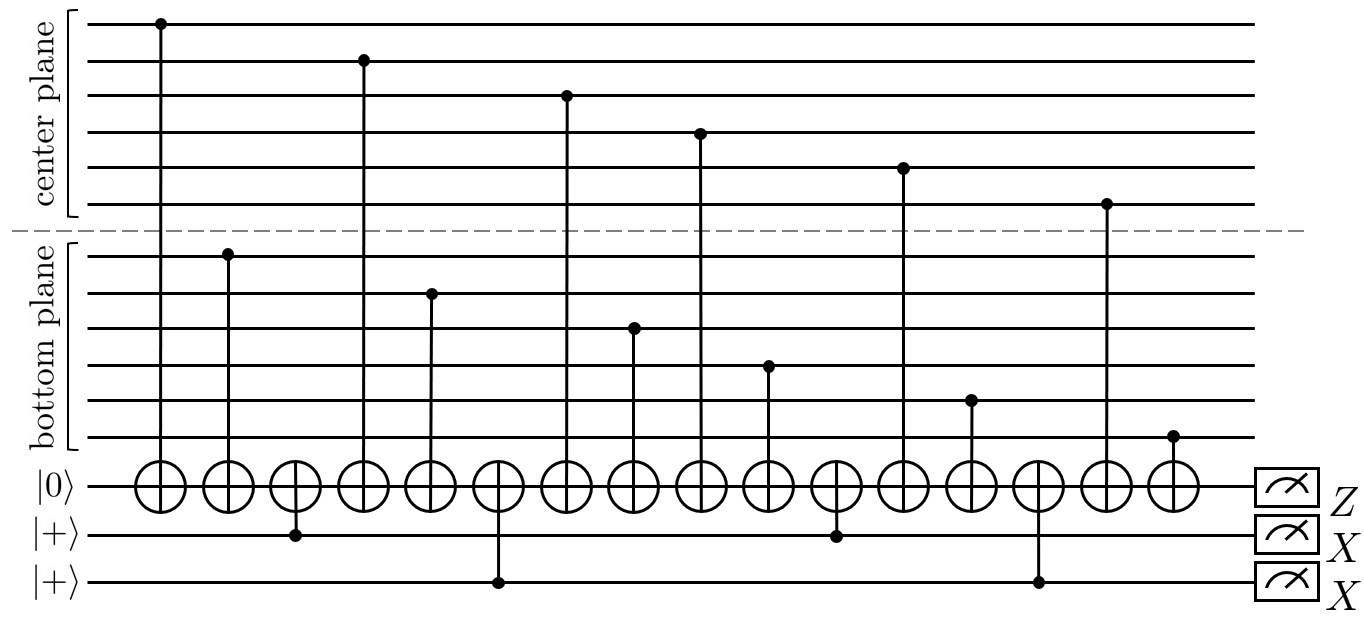}
		\captionsetup{justification=centering}
		\caption{}
		\label{subfig:flag_config_v}
	\end{subfigure}
	\caption{(a) An example of flag circuit for measuring $f$ generator with two flag ancillas. (b) A flag circuit for measuring the corresponding $v$ generator. The circuit is obtained by replacing each data CNOT which couples $\mathtt{q}_j$ with the syndrome ancilla by two data CNOTs which couple $\mathtt{q}_j$ and $\mathtt{q}_{n_\mathrm{2D}+j}$ with the syndrome ancilla.}
	\label{fig:flag_config}
\end{figure}


In general, some flag ancillas may be added to the circuits for measuring a generator to help distinguish some possible errors and make $\mathcal{F}_t$ distinguishable. In that case, the general configurations for data CNOT gates will also be applied to the data CNOTs in each flag circuit. Moreover, additional configurations for flag CNOT gates will be required.\\

\noindent\textbf{General configurations of flag CNOT gates}

\begin{enumerate}
	\item For each flag circuit, the first and the last data CNOTs must not be in between any pair of flag CNOT gates.
	\item The arrangements of flag CNOTs in the circuits for each pair of $\mathtt{f}$ and $\mathtt{v}$ generators must be similar; Suppose that a flag circuit for $f_{r+i}^z$ (or $f_i^x$) where $i=1,\dots,r$ is given. A flag circuit for $v_i^z$ (or $v_i^x$) is obtained by replacing each data CNOT which couples $\mathtt{q}_j$ with the syndrome ancilla ($j=1,\dots,n_\mathrm{2D}$) by two data CNOTs which couple $\mathtt{q}_j$ and $\mathtt{q}_{n_\mathrm{2D}+j}$ with the syndrome ancilla; see an example in \cref{fig:flag_config}.
\end{enumerate}

By imposing the general configurations for data and flag CNOTs, what have yet to be determined before $\mathcal{F}_t$ is specified are the ordering of data CNOTs for each $f$ generator, the ordering of data CNOTs after the first data CNOT for each $\mathtt{cap}$ generator, and the number of flag ancillas and the ordering of their relevant flag CNOTs. (Note that having more flag ancillas can make fault distinguishing become easier, but more resources such as qubits and gates are also required.)

\begin{figure}[tbp]
	\centering
	\includegraphics[width=0.13\textwidth]{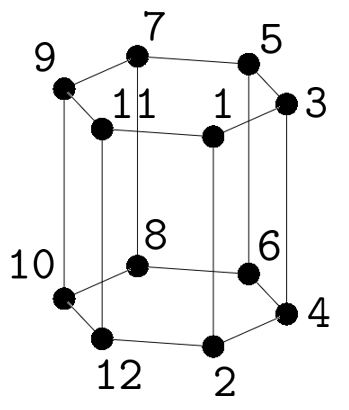}
	\caption{Consider a circuit for measuring a $\mathtt{v}$ generator of $Z$ type in which its supporting qubits are labeled as displayed above and the ordering of data CNOT gates is $(1,2,\dots,12)$. A single fault in the circuit is either $\mathtt{v}$ type or $\mathtt{v^*}$ type, depending on whether the data errors on the center and the bottom planes have the same form. For example, an $IZ$ fault on the 7th data CNOT is a $\mathtt{v^*}$ fault since the data error arising from the fault is $Z_9 Z_{11} \otimes Z_8 Z_{10} Z_{12}$, while an $IZ$ fault on the 8th data CNOT is a $\mathtt{v}$ fault since the data error arising from the fault is $Z_9 Z_{11}\otimes Z_{10} Z_{12}$.}
	\label{fig:v_vstar}
\end{figure}



In this work, possible single faults which can give $Z$-type errors will be divided into 7 types (based on relevant faulty locations) as follows:
\begin{enumerate}
	\item Type $\mathtt{q_0}$: a fault causing a $Z$-type error on $\mathtt{q_0}$ which does not arise from any $Z$-type generator measurement. The total number of $\mathtt{q_0}$ faults is $n_0$ (which is 0 or 1).
	\item Type $\mathtt{q_{on}}$: a fault causing a single-qubit $Z$-type error on the center plane which does not arise from any $Z$-type generator measurement. The syndrome of an error is denoted by $\vec{q}_\mathtt{on}$. The total number of $\mathtt{q_{on}}$ faults is $n_\mathtt{on}$.
	\item Type $\mathtt{q_{off}}$: a fault causing a single-qubit $Z$-type error on the bottom plane which does not arise from any $Z$-type generator measurement. The syndrome of an error is denoted by $\vec{q}_\mathtt{off}$. The total number of $\mathtt{q_{off}}$ faults is $n_\mathtt{off}$.
	\item Type $\mathtt{f}$: a fault occurred during a measurement of a $\mathtt{f}$ generator of $Z$ type. A $Z$-type error from each fault of this type and its syndrome are denoted by $\sigma_\mathtt{f}$ and $\vec{p}_\mathtt{f}$. A flag vector corresponding to each fault of this type is denoted by $\vec{f}_\mathtt{f}$. The total number of $\mathtt{f}$ faults is $n_\mathtt{f}$.
	\item Type $\mathtt{v}$: a fault occurred during a measurement of a $\mathtt{v}$ generator of $Z$ type which give errors of the same form on both center and bottom planes (see an example in \cref{fig:v_vstar}). A part of a $Z$-type error from each fault of this type occurred on the center plane only (or the bottom plane only) and its syndrome are denoted by $\sigma_\mathtt{v}$ and $\vec{p}_\mathtt{v}$. A flag vector corresponding to each fault of this type is denoted by $\vec{f}_\mathtt{v}$. The total number of $\mathtt{v}$ faults is $n_\mathtt{v}$. 
	\item Type $\mathtt{v^*}$: a fault occurred during a measurement of a $\mathtt{v}$ generator of $Z$ type in which an error occurred on the center plane and an error on the bottom plane are different (see an example in \cref{fig:v_vstar}). A part of a $Z$-type error from each fault of this type occurred on the center plane only and its syndrome are denoted by $\sigma_\mathtt{v^*,cen}$ and $\vec{p}_\mathtt{v^*,cen}$. The other part of the $Z$-type error that occurred on the bottom plane only and its syndrome are denoted by $\sigma_\mathtt{v^*,bot}$ and $\vec{p}_\mathtt{v^*,bot}$. A flag vector corresponding to each fault of this type is denoted by $\vec{f}_\mathtt{v^*}$. The total number of $\mathtt{v^*}$ faults is $n_\mathtt{v^*}$.
	\item Type $\mathtt{cap}$: a fault occurred during a measurement of a $\mathtt{cap}$ generator of $Z$ type. A $Z$-type error from each fault of this type and its syndrome are denoted by $\sigma_\mathtt{cap}$ and $\vec{p}_\mathtt{cap}$ ($\sigma_\mathtt{cap}$ is always on the center plane up to a multiplication of the $\mathtt{cap}$ generator being measured). A flag vector corresponding to each fault of this type is denoted by $\vec{f}_\mathtt{cap}$. The total number of $\mathtt{cap}$ faults is $n_\mathtt{cap}$.
\end{enumerate}
Examples of faults of each type on the 3D structure are illustrated in \cref{fig:fault_2D_3D}\hyperlink{target:fault_2D_3D}{a}.
Note that a fault of $\mathtt{q_0}$, $\mathtt{q_{on}}$, or $\mathtt{q_{off}}$ type can be a $Z$-type input error, a single-qubit error from phase flip, or a single fault during any $X$-type generator measurement which gives a $Z$-type error.

\begin{figure}[tbp]
	\centering
	\hypertarget{target:fault_2D_3D}{}
	\includegraphics[width=0.35\textwidth]{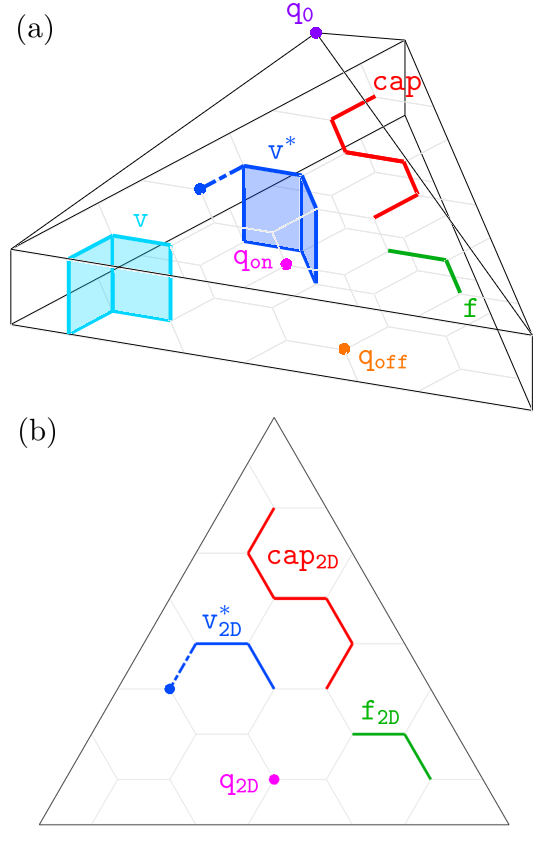}
	\caption{(a) Examples of faults of each type on the 3D structure. (b) Examples of faults of each type on the 2D plane.}
	\label{fig:fault_2D_3D}
\end{figure}

\begin{table*}[tbp]
	\begin{center}
		\begin{tabular}{| c | c | c | c | c | c | c | c | c |}
			\hline
			\multicolumn{2}{| c |}{} & \multicolumn{7}{| c |}{Type of fault}\\
			\cline{3-9}
			\multicolumn{2}{| c |}{} & $\mathtt{q_0}$ & $\mathtt{q_{on}}$ & $\mathtt{q_{off}}$ & $\mathtt{f}$ & $\mathtt{v}$ & $\mathtt{v^*}$ & $\mathtt{cap}$\\
			\hline
			\parbox[t]{3mm}{\multirow{4}{*}{\rotatebox[origin=c]{90}{Syndrome}}} & $s_a$ ($\mathtt{cap}$) & 1 & 1 & 0 & $\mathrm{wp}(\sigma_\mathtt{f})$ & $\mathrm{wp}(\sigma_\mathtt{v})$ & $\mathrm{wp}(\sigma_\mathtt{v^*,cen})$ & $\mathrm{wp}(\sigma_\mathtt{cap})$ \\
			\cline{2-9}
			& $\vec{s}_b$ ($\mathtt{f}$) & 0 & $\vec{q}_\mathtt{on}$ & 0 & $\vec{p}_\mathtt{f}$ & $\vec{p}_\mathtt{v}$ & $\vec{p}_\mathtt{v^*,cen}$ & $\vec{p}_\mathtt{cap}$ \\
			\cline{2-9}
			& \multirow{2}{*}{$\vec{s}_c$ ($\mathtt{v}$)} & \multirow{2}{*}{0} & \multirow{2}{*}{$\vec{q}_\mathtt{on}$} & \multirow{2}{*}{$\vec{q}_\mathtt{off}$} & \multirow{2}{*}{$\vec{p}_\mathtt{f}$} & \multirow{2}{*}{0} & $\vec{p}_\mathtt{v^*,cen}+\vec{p}_\mathtt{v^*,bot}$& \multirow{2}{*}{$\vec{p}_\mathtt{cap}$} \\
			&&&&&&&(or $\vec{q}_\mathtt{v^*}$)& \\
			\hline
			\multicolumn{2}{| c |}{Weight parity} & 1 & 1 & 1 & $\mathrm{wp}(\sigma_\mathtt{f})$ & 0 & 1 & $\mathrm{wp}(\sigma_\mathtt{cap})$ \\
			\hline
			\parbox[t]{3mm}{\multirow{3}{*}{\rotatebox[origin=c]{90}{Flag}}} & $\vec{f}_a$ ($\mathtt{cap}$)& 0 & 0 & 0 & 0 & 0 & 0 & $\vec{f}_\mathtt{cap}$ \\
			\cline{2-9}
			& $\vec{f}_b$ ($\mathtt{f}$)& 0 & 0 & 0 & $\vec{f}_\mathtt{f}$ & 0 & 0 & 0 \\
			\cline{2-9}
			& $\vec{f}_c$ ($\mathtt{v}$)& 0 & 0 & 0 & 0 & $\vec{f}_\mathtt{v}$ & $\vec{f}_\mathtt{v^*}$ & 0 \\
			\hline
		\end{tabular}
	\end{center}
	\caption{Syndrome $\vec{s}=(s_a,\vec{s}_b,\vec{s}_c)$, weight parity, and flag vector $\vec{f}=(\vec{f}_a,\vec{f}_b,\vec{f}_c)$ corresponding to a single fault of each type which leads to a $Z$-type error. $s_a,\vec{s}_b,\vec{s}_c$ are syndromes evaluated by $\mathtt{cap}, \mathtt{f}$ and $\mathtt{v}$ generators of $X$ type, while $\vec{f}_a,\vec{f}_b,\vec{f}_c$ are flag outcomes obtained from circuits for measuring $\mathtt{cap}, \mathtt{f}$ and $\mathtt{v}$ generators of $Z$ type. Note that in some cases, a syndrome bit is equal to the weight parity of an error.}
	\label{tab:fault-syndrome}
\end{table*}

Suppose that a single fault causes a $Z$-type data error $E$ and a flag vector $\vec{f}$. The syndrome of $E$ evaluated by $X$-type generators can be written as $(s_a,\vec{s}_b,\vec{s}_c)$, where $s_a,\vec{s}_b,\vec{s}_c$ are syndromes obtained from measuring $\mathtt{cap},\mathtt{f},$ and $\mathtt{v}$ generators of $X$ type. In addition, the flag vector can be written as $(\vec{f}_a,\vec{f}_b,\vec{f}_c)$, where $\vec{f}_a,\vec{f}_b,\vec{f}_c$ are flag outcomes obtained from circuits for measuring $\mathtt{cap},\mathtt{f},$ and $\mathtt{v}$ generators of $Z$ type, respectively. (The lengths of $s_a,\vec{s}_b,\vec{s}_c$ are equal to the number of generators of each category, while  the lengths of $\vec{f}_a,\vec{f}_b,\vec{f}_c$ are equal to the number of generators of each category times the number of flag ancillas in each flag circuit, assuming that all flag circuits have equal number of flag ancillas.) Let $\mathrm{wp}(\sigma)$ denote the weight parity of error $\sigma$. Due to the general configurations of CNOT gates being used, the weight parity and the syndromes of a $Z$-type error (evaluated by $X$-type generators) and a flag vector arising from each type of faults can be summarized as in \cref{tab:fault-syndrome}. Note that for a $\mathtt{v^*}$ fault, $\sigma_\mathtt{v^*,cen}$ and $\sigma_\mathtt{v^*,bot}$ differ by a $Z$ error on a single qubit; i.e., $\mathrm{wp}(\sigma_\mathtt{v^*,cen})+\mathrm{wp}(\sigma_\mathtt{v^*,bot})=1$. Sometimes we will write $\vec{p}_\mathtt{v^*,cen}+\vec{p}_\mathtt{v^*,bot} = \vec{q}_\mathtt{v^*}$ to emphasize its similarity to the syndrome of a single-qubit error.


Now, let us consider the case that a fault combination arises from multiple faults. The syndrome and the weight parity of the combined error, and the cumulative flag vector of a fault combination can be calculated by adding the syndromes and the flag outcomes of all faults in the fault combination (the addition is modulo 2). For example, suppose that a fault combination consists of 2 faults which are of $\mathtt{q_{on}}$ type and $\mathtt{v}$ type. The syndrome $\vec{s}(\mathbf{E})$ and the weight parity $\mathrm{wp}(\mathbf{E})$ of the combined error $\mathbf{E}$, and the cumulative flag vector $\vec{\mathbf{f}}$ correspond to such a fault combination are,
\begin{align}
	\vec{s}(\mathbf{E}) &= (1+\mathrm{wp}(\sigma_\mathtt{v}),\vec{q}_\mathtt{on}+\vec{p}_\mathtt{v},\vec{q}_\mathtt{on}),\nonumber \\
	\mathrm{wp}(\mathbf{E}) &= 1,\nonumber \\
	\vec{\mathbf{f}} &= (\vec{0},\vec{0},\vec{f}_\mathtt{v}).\nonumber
\end{align}

For a general fault combination composed of multiple faults, the corresponding syndrome, weight parity, and cumulative flag vector can be calculated as follows: let $s_\mathtt{cap},\vec{s}_\mathtt{f},\vec{s}_\mathtt{v}$ denote syndromes of the combined error evaluated by $\mathtt{cap},\mathtt{f},$ and $\mathtt{v}$ generators of $X$ type, let $\mathrm{wp}_\mathrm{tot}$ denote the weight parity, and let $\vec{\mathbf{f}}_\mathtt{cap},\vec{\mathbf{f}}_\mathtt{f},\vec{\mathbf{f}}_\mathtt{v}$ denote parts of the cumulative flag vector obtained from circuits for measuring $\mathtt{cap},\mathtt{f},$ and $\mathtt{v}$ generators of $Z$ type. From \cref{tab:fault-syndrome}, we find that for each fault combination,
\begin{align}
	s_\mathtt{cap} =&\; n_{0}+n_\mathtt{on}+\sum\mathrm{wp}(\sigma_\mathtt{f})+\sum\mathrm{wp}(\sigma_\mathtt{v}) \nonumber \\
	&+\sum\mathrm{wp}(\sigma_\mathtt{v^*,cen})+\sum\mathrm{wp}(\sigma_\mathtt{cap}), \label{eq:main1}\\	
	\vec{s}_\mathtt{f} =& \sum \vec{q}_\mathtt{on} + \sum \vec{p}_\mathtt{f} + \sum \vec{p}_\mathtt{v} + \sum \vec{p}_\mathtt{v^*,cen} \nonumber \\ &+ \sum \vec{p}_\mathtt{cap}, \label{eq:main2}
\end{align}	

\vspace*{-0.8cm}
\begin{align}	
	\vec{s}_\mathtt{v} =& \sum \vec{q}_\mathtt{on}+\sum \vec{q}_\mathtt{off}+\sum \vec{p}_\mathtt{f}+\sum \vec{q}_\mathtt{v^*} \nonumber \\ &+ \sum \vec{p}_\mathtt{cap}, \label{eq:main3}	\\
	\mathrm{wp}_\mathtt{tot} =&\; n_{0}+n_\mathtt{on}+n_\mathtt{off}+\sum \mathrm{wp}(\sigma_\mathtt{f})+n_\mathtt{v^*} \nonumber \\ &+ \sum\mathrm{wp}(\sigma_\mathtt{cap}), \label{eq:main4}\\
	\vec{\mathbf{f}}_\mathtt{cap} =& \sum \vec{f}_\mathtt{cap}, \label{eq:main5}\\
	\vec{\mathbf{f}}_\mathtt{f} =& \sum \vec{f}_\mathtt{f}, \label{eq:main6}\\
	\vec{\mathbf{f}}_\mathtt{v} =& \sum \vec{f}_\mathtt{v}+\sum \vec{f}_\mathtt{v^*}, \label{eq:main7}
\end{align}
where each sum is over the same type of faults (the equations are modulo 2). In addition, adding \cref{eq:main1} to \cref{eq:main4} and adding \cref{eq:main2} to \cref{eq:main3} give the following equations:
\begin{align}
	\mathrm{wp}_\mathtt{bot}=&n_\mathtt{off}+\sum \mathrm{wp}(\sigma_\mathtt{v})+\sum \mathrm{wp}(\sigma_\mathtt{v^*,bot}), \label{eq:main8} \\
	\vec{s}_\mathtt{bot} =& \sum \vec{q}_\mathtt{off}+\sum \vec{p}_\mathtt{v} + \sum \vec{p}_\mathtt{v^*,bot}, \label{eq:main9}
\end{align}
where $\mathrm{wp}_\mathtt{bot}= s_\mathtt{cap}+\mathrm{wp}_\mathtt{tot}$ and $\vec{s}_\mathtt{bot}=\vec{s}_\mathtt{f}+\vec{s}_\mathtt{v}$.


\begin{table*}[tbp]
	\begin{center}
		\begin{tabular}{| c | c | c | c | c | c | c | c |}
			\hline
			\multicolumn{4}{| c |}{2D plane} & \multicolumn{4}{| c |}{3D structure}\\
			\hline
			{\small Fault type} & {\small Syndrome} & {\small Weight parity} & {\small Flag vector} & {\small Fault type} & {\small Syndrome} & {\small Weight parity} & {\small Flag vector} \\
			\hline
			$\mathtt{q_{2D}}$ & $\vec{q}_\mathtt{2D}$ & 1 & - & $\mathtt{q_{on}}$, $\mathtt{q_{off}}$, or $\mathtt{q_{v^*}}$ & $\vec{q}_\mathtt{on}$, $\vec{q}_\mathtt{off}$ or $\vec{q}_\mathtt{v^*}$ & 1 & - \\
			\hline
			\multirow{2}{*}{$\mathtt{f_{2D}}$} & \multirow{2}{*}{$\vec{p}_\mathtt{f_{2D}}$} & \multirow{2}{*}{$\mathrm{wp}(\sigma_\mathtt{f_{2D}})$} & \multirow{2}{*}{$\vec{f}_\mathtt{f_{2D}}$} & \multirow{2}{*}{$\mathtt{f}$, $\mathtt{v}$, or $\mathtt{v^*}$} & $\vec{p}_\mathtt{f}$, $\vec{p}_\mathtt{v}$, $\vec{p}_\mathtt{v^*,cen}$, & $\mathrm{wp}(\sigma_\mathtt{f})$, $\mathrm{wp}(\sigma_\mathtt{v})$, & $\vec{f}_\mathtt{f}$, $\vec{f}_\mathtt{v}$, \\
			& & & & & or $\vec{p}_\mathtt{v^*,bot}$ & $\mathrm{wp}(\sigma_\mathtt{v^*,cen})$, or $\mathrm{wp}(\sigma_\mathtt{v^*,bot})$ & or $\vec{f}_\mathtt{v^*}$ \\
			\hline
			\multirow{2}{*}{$\mathtt{v^*_{2D}}$} & $\vec{p}_\mathtt{v^*_{2D},cen}$ & $\mathrm{wp}(\sigma_\mathtt{v^*_{2D},cen})$ & \multirow{2}{*}{$\vec{f}_\mathtt{v^*_{2D}}$} & \multirow{2}{*}{$\mathtt{v^*}$} & $\vec{p}_\mathtt{v^*,cen}$ & $\mathrm{wp}(\sigma_\mathtt{v^*,cen})$ & \multirow{2}{*}{$\vec{f}_\mathtt{v^*}$} \\
			\cline{2-3} \cline{6-7} 
			& $\vec{p}_\mathtt{v^*_{2D},bot}$ & $\mathrm{wp}(\sigma_\mathtt{v^*_{2D},bot})$ & & & $\vec{p}_\mathtt{v^*,bot}$ & $\mathrm{wp}(\sigma_\mathtt{v^*,bot})$ & \\
			\hline
			$\mathtt{cap_{2D}}$ & $\vec{p}_\mathtt{cap_{2D}}$ & $\mathrm{wp}(\sigma_\mathtt{cap_{2D}})$ & $\vec{f}_\mathtt{cap_{2D}}$ & $\mathtt{cap}$ & $\vec{p}_\mathtt{cap}$ & $\mathrm{wp}(\sigma_\mathtt{cap})$ & $\vec{f}_\mathtt{cap}$ \\
			\hline
		\end{tabular}
	\end{center}
	\caption{The correspondence between the notations for types of faults on the 2D plane and the 3D structure.}
	\label{tab:2D-3D}
\end{table*}

\cref{eq:main1,eq:main2,eq:main3,eq:main4,eq:main5,eq:main6,eq:main7,eq:main8,eq:main9} are the main ingredients for the proof of the main theorem to be developed. One may notice that \cref{eq:main1,eq:main2}, \cref{eq:main3,eq:main4}, and \cref{eq:main8,eq:main9} come in pairs. They have the following physical meanings: suppose that the combined error $\mathbf{E}$ is $\mathbf{E}_\mathtt{0}\cdot\mathbf{E}_\mathtt{on}\cdot\mathbf{E}_\mathtt{off}$ where $\mathbf{E}_\mathtt{0},\mathbf{E}_\mathtt{on},\mathbf{E}_\mathtt{off}$ are the error on $\mathtt{q_0}$, the error on the center plane, and the error on the bottom plane. Then,
\begin{enumerate}
	\item \cref{eq:main2} is the syndrome of $\mathbf{E}_\mathtt{on}$, while \cref{eq:main1} is the weight parity $\mathbf{E}_\mathtt{on}$ plus the weight parity of $\mathbf{E}_\mathtt{0}$.
	\item \cref{eq:main3} is the syndrome of $\mathbf{E}_\mathtt{on}\cdot\mathbf{E}_\mathtt{off}$, while \cref{eq:main4} is the weight parity of $\mathbf{E}_\mathtt{on}\cdot\mathbf{E}_\mathtt{off}$ plus the weight parity of $\mathbf{E}_\mathtt{0}$. (Since $\mathtt{v}$ generators capture errors on both planes simultaneously, $\mathbf{E}_\mathtt{on}\cdot\mathbf{E}_\mathtt{off}$ can be viewed as a remaining error when $\mathbf{E}_\mathtt{on}$ and $\mathbf{E}_\mathtt{off}$ are `projected' on the same plane.)
	\item \cref{eq:main9} is the syndrome of $\mathbf{E}_\mathtt{off}$, while \cref{eq:main8} is the weight parity of $\mathbf{E}_\mathtt{off}$.
\end{enumerate}
From these pairs of equations, and from the fact that now we only have to specify the ordering of data CNOTs for each $\mathtt{f}$ generator, the ordering of data CNOTs after the first gate for the $\mathtt{cap}$ generator, and the ordering of flag CNOTs for each flag circuit, we can now simplify the CNOT ordering finding problem for a 3D structure to the problem of finding CNOT orderings on a 2D plane (which is similar to the 2D color code of distance $d$). In particular, each pair of equations concern errors on a 2D plane (the center, the bottom, or the projected plane). We will try to find conditions for the CNOT orderings on a 2D plane such that if satisfied, a bad case which makes $\mathcal{F}_t$ indistinguishable cannot happen.


Some types of faults on the 3D structure can be considered as the same type of faults when the problem is simplified. The followings are types of possible single faults on the 2D plane and their correspondence on the 3D structure:
\begin{enumerate}
	\item Type $\mathtt{q_{2D}}$: a fault causing a single-qubit $Z$-type error on the 2D plane which does not arise from any $Z$-type generator measurement. The syndrome of an error is denoted by $\vec{q}_\mathtt{2D}$. The total number of $\mathtt{q_{2D}}$ faults is $n_\mathtt{q_{2D}}$. The combined error from only $\mathtt{q_{2D}}$ faults is denoted by $\mathbf{E}_\mathtt{q_{2D}}$. This type of faults corresponds to $\mathtt{q_{on}}$ and $\mathtt{q_{off}}$ faults on the 3D structure.
	\item Type $\mathtt{f_{2D}}$: a fault occurred during a measurement of a $\mathtt{f}$ generator of $Z$ type. A $Z$-type error from each fault of this type and its syndrome are denoted by $\sigma_\mathtt{f_{2D}}$ and $\vec{p}_\mathtt{f_{2D}}$. A flag vector corresponding to each fault of this type is denoted by $\vec{f}_\mathtt{f_{2D}}$. The total number of $\mathtt{f_{2D}}$ faults is $n_\mathtt{f_{2D}}$. The combined error from only $\mathtt{f_{2D}}$ faults is denoted by $\mathbf{E}_\mathtt{f_{2D}}$. This type of faults corresponds to $\mathtt{f}$ and $\mathtt{v}$ faults on the 3D structure (since an error on the center plane and an error on the bottom plane from a $\mathtt{v}$ fault have the same form; see an example in \cref{fig:v_vstar}).
	\item Type $\mathtt{v^*_{2D}}$: a fault occurred during a measurement of a $\mathtt{v}$ generator of $Z$ type in which an error occurred on the center plane and an error on the bottom plane are different (see an example in \cref{fig:v_vstar}). A part of a $Z$-type error from each fault of this type occurred on the center plane only and its syndrome are denoted by $\sigma_\mathtt{v^*_{2D},cen}$ and $\vec{p}_\mathtt{v^*_{2D},cen}$. The other part of the $Z$-type error that occurred on the bottom plane only and its syndrome are denoted by $\sigma_\mathtt{v^*_{2D},bot}$ and $\vec{p}_\mathtt{v^*_{2D},bot}$. A flag vector corresponding to each fault of this type is denoted by $\vec{f}_\mathtt{v^*_{2D}}$. The total number of $\mathtt{v^*_{2D}}$ faults is $n_\mathtt{v^*_{2D}}$. The part of the combined error from only $\mathtt{v^*_{2D}}$ faults on the center plane and the part on the bottom plane are denoted by $\mathbf{E}_\mathtt{v^*_{2D},cen}$ and $\mathbf{E}_\mathtt{v^*_{2D},bot}$. This type of faults corresponds to $\mathtt{v^*}$ faults on the 3D structure. (Note that this is the only type of faults which cannot be represented completely on the 2D plane since the error on the center plane and the error on the bottom plane are different. However, when running a computer simulation, we can treat a fault of $\mathtt{v^*_{2D}}$ type similarly to a fault of $\mathtt{f_{2D}}$ type except that two values of errors will be assigned to each fault.)
	\item Type $\mathtt{cap_{2D}}$: a fault occurred during a measurement of a $\mathtt{cap}$ generator of $Z$ type. A $Z$-type error from each fault of this type and its syndrome are denoted by $\sigma_\mathtt{cap_{2D}}$ and $\vec{p}_\mathtt{cap_{2D}}$ ($\sigma_\mathtt{cap_{2D}}$ is always on the center plane up to a multiplication of the $\mathtt{cap}$ generator being measured). A flag vector corresponding to each fault of this type is denoted by $\vec{f}_\mathtt{cap_{2D}}$. The total number of $\mathtt{cap_{2D}}$ faults is $n_\mathtt{cap_{2D}}$. The combined error from only $\mathtt{cap_{2D}}$ faults is denoted by $\mathbf{E}_\mathtt{cap_{2D}}$. This type of faults corresponds to $\mathtt{cap}$ faults on the 3D structure.
\end{enumerate}
Examples of faults of each type on the 2D plane are illustrated in \cref{fig:fault_2D_3D}\hyperlink{target:fault_2D_3D}{b}. The correspondence between the notations for types of faults on the 2D plane and the 3D structure can be summarized in \cref{tab:2D-3D}.



We can see that possible $Z$-type errors on the 2D plane depend on the CNOT orderings for measuring $\mathtt{f}$ and $\mathtt{cap}$ generators of $Z$ type. Next, we will state the sufficient conditions for the CNOT orderings on the 2D plane which will make $\mathcal{F}_t$ (which concerns fault combinations from the 3D structure) distinguishable. These sufficient conditions are introduced in order to prevent the case that can lead to an `indistinguishable' pair (a pair of fault combinations from the 3D structure which does not satisfy any condition in \cref{def:distinguishable}).


First, we will state a condition which is automatically satisfied if a code being considered on the 2D plane is a code of distance $d$ to which \cref{lem:err_equivalence} is applicable: 
\begin{condition}{0}
	For any fault combination on the 2D plane which satisfies $n_\mathtt{q_{2D}} \leq d-1$, $\mathbf{E}_\mathtt{q_{2D}}$ is not a nontrivial logical operator; equivalently, at least one of the followings is satisfied:
	\begin{enumerate}
		\item $\sum \vec{q}_\mathtt{2D} \neq 0 \mod 2$, or
		\item $n_\mathtt{q_{2D}} \neq 1 \mod 2$.
	\end{enumerate}
	\label{con:con0}%
	\vspace*{-0.3cm}
\end{condition}
Note that a nontrivial logical operator is an error corresponding to the trivial syndrome whose weight parity is odd (from \cref{lem:err_equivalence}). \cref{con:con0} is equivalent to the fact that an error of weight $\leq d-1$ is detectable by a code of distance $d$; i.e., it either has a nontrivial syndrome or is a stabilizer. We state \cref{con:con0} explicitly (although it is automatically satisfied) because the condition in this form looks similar to other conditions, which will simplify the proof of the main theorem. 

Next, we will state five sufficient conditions for the CNOT orderings on the 2D plane which will make $\mathcal{F}_t$ distinguishable. The conditions are as follows: 
\begin{condition}{1}
	For any fault combination on the 2D plane which satisfies $n_\mathtt{f_{2D}} \leq d-2$, $\mathbf{E}_\mathtt{f_{2D}}$ is not a nontrivial logical operator or the cumulative flag vector is not zero; equivalently, at least one of the followings is satisfied:
	\begin{enumerate}
		\item $\sum \vec{p}_\mathtt{f_{2D}} \neq 0 \mod 2$, or
		\item $\sum \mathrm{wp}(\sigma_\mathtt{f_{2D}})\neq 1 \mod 2$, or
		\item $\sum \vec{f}_\mathtt{f_{2D}} \neq 0 \mod 2$.
	\end{enumerate}
	\label{con:con1}%
	\vspace*{-0.3cm}
\end{condition}
\begin{condition}{2}
	For any fault combination on the 2D plane which satisfies $n_\mathtt{q_{2D}}+n_\mathtt{f_{2D}} \leq d-3$, $\mathbf{E}_\mathtt{q_{2D}}\cdot\mathbf{E}_\mathtt{f_{2D}}$ is not a nontrivial logical operator or the cumulative flag vector is not zero; equivalently, at least one of the followings is satisfied:
	\begin{enumerate}
		\item $\sum \vec{q}_\mathtt{2D}+\sum \vec{p}_\mathtt{f_{2D}} \neq 0 \mod 2$, or
		\item $n_\mathtt{q_{2D}}+\sum \mathrm{wp}(\sigma_\mathtt{f_{2D}})\neq 1 \mod 2$, or
		\item $\sum \vec{f}_\mathtt{f_{2D}} \neq 0 \mod 2$.
	\end{enumerate}
	\label{con:con2}%
	\vspace*{-0.3cm}
\end{condition}

\begin{condition}{3}
	For any fault combination on the 2D plane which satisfies $n_\mathtt{f_{2D}}=1$ and $n_\mathtt{q_{2D}}+n_\mathtt{f_{2D}} \leq d-2$, $\mathbf{E}_\mathtt{q_{2D}}\cdot\mathbf{E}_\mathtt{f_{2D}}$ is not a nontrivial logical operator or the cumulative flag vector is not zero; equivalently, at least one of the followings is satisfied:
	\begin{enumerate}
		\item $\sum \vec{q}_\mathtt{2D}+\sum \vec{p}_\mathtt{f_{2D}} \neq 0 \mod 2$, or
		\item $n_\mathtt{q_{2D}}+\sum \mathrm{wp}(\sigma_\mathtt{f_{2D}})\neq 1 \mod 2$, or
		\item $\sum \vec{f}_\mathtt{f_{2D}} \neq 0 \mod 2$.
	\end{enumerate}
	\label{con:con3}%
\end{condition}
\begin{condition}{4}
	For any fault combination on the 2D plane which satisfies $n_\mathtt{f_{2D}}=1$, $n_\mathtt{q_{2D}}\geq1$, $n_\mathtt{v^*_{2D}}\geq 2$, and $n_\mathtt{q_{2D}}+n_\mathtt{f_{2D}}+n_\mathtt{v^*_{2D}} = d-1$, the following does not happen: $\mathbf{E}_\mathtt{f_{2D}}\cdot\mathbf{E}_\mathtt{v^*_{2D},cen}$ is a stabilizer, and $\mathbf{E}_\mathtt{q_{2D}}\cdot\mathbf{E}_\mathtt{v^*_{2D},bot}$ is a nontrivial logical operator, and the cumulative flag vector is zero; equivalently, at least one of the followings is satisfied:
	\begin{enumerate}
		\item $\sum \vec{p}_\mathtt{f_{2D}} + \sum \vec{p}_\mathtt{v^*_{2D},cen} \neq 0 \mod 2$, or
		\item $\sum \mathrm{wp}(\sigma_\mathtt{f_{2D}})+\sum\mathrm{wp}(\sigma_\mathtt{v^*_{2D},cen}) \neq 0 \mod 2$, or
		\item $\sum \vec{q}_\mathtt{2D} + \sum \vec{p}_\mathtt{v^*_{2D},bot} \neq 0 \mod 2$, or
		\item $n_\mathtt{q_{2D}}+\sum\mathrm{wp}(\sigma_\mathtt{v^*_{2D},bot}) \neq 1 \mod 2$, or
		\item $\sum \vec{f}_\mathtt{f_{2D}} \neq 0 \mod 2$, or
		\item $\sum \vec{f}_\mathtt{v^*_{2D}} \neq 0 \mod 2$.
	\end{enumerate}
	\label{con:con4}%
\end{condition}
\begin{condition}{5}
	For any fault combination on the 2D plane which satisfies $n_\mathtt{cap_{2D}}=1$, $n_\mathtt{q_{2D}}\geq1$, $n_\mathtt{f_{2D}}+n_\mathtt{v^*_{2D}}\geq 2$, and $n_\mathtt{q_{2D}}+n_\mathtt{f_{2D}}+n_\mathtt{v^*_{2D}}+n_\mathtt{cap_{2D}} = d-1$, the following does not happen: $\mathbf{E}_\mathtt{f_{2D}}\cdot\mathbf{E}_\mathtt{v^*_{2D},cen}\cdot \mathbf{E}_\mathtt{cap_{2D}}$ is a stabilizer, and $\mathbf{E}_\mathtt{q_{2D}}\cdot\mathbf{E}_\mathtt{f_{2D}}\cdot\mathbf{E}_\mathtt{v^*_{2D},bot}$ is a nontrivial logical operator, and the cumulative flag vector is zero; equivalently, at least one of the followings is satisfied:
	\begin{enumerate}
		\item $\sum \vec{p}_\mathtt{f_{2D}} + \sum \vec{p}_\mathtt{v^*_{2D},cen} + \sum \vec{p}_\mathtt{cap_{2D}} \neq 0 \mod 2$, or
		\item $\sum \mathrm{wp}(\sigma_\mathtt{f_{2D}})+\sum\mathrm{wp}(\sigma_\mathtt{v^*_{2D},cen}) +\sum \mathrm{wp}(\sigma_\mathtt{cap_{2D}}) \neq 0 \mod 2$, or
		\item $\sum \vec{q}_\mathtt{2D} + \sum \vec{p}_\mathtt{f_{2D}} + \sum \vec{p}_\mathtt{v^*_{2D},bot} \neq 0 \mod 2$, or
		\item $n_\mathtt{q_{2D}}+\sum\mathrm{wp}(\sigma_\mathtt{f_{2D}}) +\sum\mathrm{wp}(\sigma_\mathtt{v^*_{2D},bot}) \neq 1 \mod 2$, or
		\item $\sum \vec{f}_\mathtt{f_{2D}}+\vec{f}_\mathtt{v^*_{2D}}\neq 0 \mod 2$, or
		\item $\sum \vec{f}_\mathtt{cap_{2D}} \neq 0 \mod 2$.
	\end{enumerate}
	\label{con:con5}%
\end{condition}


Conditions \ref{con:con1} to \ref{con:con5} prevent fault combinations of some form from occurring on the 2D plane (such fault combinations can lead to an indistinguishable fault set). If we arrange the CNOT gates in the circuits for $\mathtt{f}$ and $\mathtt{cap}$ generators so that all conditions are satisfied, then a fault set $\mathcal{F}_t$ (which considers the 3D structure) will be distinguishable. The main theorem of this work is as follows:
\begin{theorem}
	Let $\mathcal{F}_t$ be the fault set corresponding to circuits for measuring $\mathtt{f}, \mathtt{v}$, and $\mathtt{cap}$ generators of the capped color code in H form constructed from $CCC(d)$ (where $t=(d-1)/2$, $d=3,5,7,...$), and suppose that the general configurations of CNOT gates for $\mathtt{f}$, $\mathtt{v}$, and $\mathtt{cap}$ generators are imposed, and the circuits for each pair of $X$-type and $Z$-type generators use the same CNOT ordering. Let the code on the (simplified) 2D plane be the 2D color code of distance $d$. If all possible fault combinations on the 2D plane arising from the circuits for measuring $\mathtt{f}$ and $\mathtt{cap}$ generators satisfy Conditions \ref{con:con1} to \ref{con:con5}, then $\mathcal{F}_t$ is distinguishable.
	\label{thm:main}
\end{theorem}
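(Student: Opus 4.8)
The plan is to avoid checking \cref{def:distinguishable} pairwise and instead invoke \cref{prop:2t}: $\mathcal{F}_t$ is distinguishable iff no fault combination in $\mathcal{F}_{2t}$ corresponds to a nontrivial logical operator together with the zero cumulative flag vector. So I would argue the contrapositive: assume there is $\Lambda\in\mathcal{F}_{2t}$ (involving at most $2t=d-1$ faults) whose combined data error $\mathbf{E}$ is a nontrivial logical operator with $\vec{\mathbf{f}}=\vec{0}$, and produce a fault combination on the 2D plane that violates one of Conditions \ref{con:con1} to \ref{con:con5}. By the CSS structure and the fact that $X$-type and $Z$-type generators have the same form and use the same CNOT orderings, it is enough to treat a $Z$-type $\mathbf{E}$ (the $X$-type case being symmetric, a mixed logical decomposing into the two pure cases up to stabilizers).

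The first step is to translate the hypotheses on $\Lambda$ into equations in the quantities of \cref{tab:fault-syndrome} and \cref{eq:main1,eq:main2,eq:main3,eq:main4,eq:main5,eq:main6,eq:main7,eq:main8,eq:main9}. Trivial syndrome gives $s_\mathtt{cap}=0$, $\vec{s}_\mathtt{f}=\vec{0}$, $\vec{s}_\mathtt{v}=\vec{0}$; since \cref{lem:err_equivalence} applies to the code in H form, nontriviality of the logical is equivalent to $\mathrm{wp}_\mathtt{tot}=1$; and $\vec{\mathbf{f}}=\vec{0}$ gives $\vec{\mathbf{f}}_\mathtt{cap}=\vec{\mathbf{f}}_\mathtt{f}=\vec{\mathbf{f}}_\mathtt{v}=\vec{0}$. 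Adding the paired equations yields $\mathrm{wp}_\mathtt{bot}=1$ and $\vec{s}_\mathtt{bot}=\vec{0}$. Writing $\mathbf{E}=\mathbf{E}_\mathtt{0}\cdot\mathbf{E}_\mathtt{on}\cdot\mathbf{E}_\mathtt{off}$ for the pieces on the top qubit, center plane, and bottom plane, the physical reading of these paired equations (given after \cref{eq:main9}) says that $\mathbf{E}_\mathtt{on}$, $\mathbf{E}_\mathtt{off}$, and the projected error $\mathbf{E}_\mathtt{on}\cdot\mathbf{E}_\mathtt{off}$ all have trivial 2D-color-code syndrome, with weight parities $n_0$, $1$, and $1+n_0$ respectively (here $n_0\in\{0,1\}$ is the number of $\mathtt{q_0}$ faults). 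Applying \cref{lem:err_equivalence} on the distance-$d$ 2D color code, each of these three errors is then a logical operator of the 2D code: a stabilizer if even weight, a nontrivial logical if odd weight.

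The core of the argument is a case analysis, organized first by $n_0$. If $n_0=0$, then $\mathbf{E}_\mathtt{on}$ is an (even-weight) stabilizer of the center 2D code and $\mathbf{E}_\mathtt{off}$ is a nontrivial logical of the bottom 2D code; if $n_0=1$, both $\mathbf{E}_\mathtt{on}$ and $\mathbf{E}_\mathtt{off}$ are nontrivial logicals and one fault of the budget is already consumed by the top qubit. Within each branch I would split further on which 3D fault types occur in $\Lambda$ (if only $\mathtt{q_0},\mathtt{q_{on}},\mathtt{q_{off}}$ faults occur, $\mathbf{E}$ has weight $<d$ and cannot be a nontrivial logical, which is what the automatically-satisfied \cref{con:con0} records), and, when a 2D logical is present, on how few fault types are needed to build it. Each sub-case is then transported to a 2D fault combination via \cref{tab:2D-3D} ($\mathtt{q_{on}},\mathtt{q_{off}}\mapsto\mathtt{q_{2D}}$; $\mathtt{f},\mathtt{v}\mapsto\mathtt{f_{2D}}$; $\mathtt{v^*}\mapsto\mathtt{v^*_{2D}}$; $\mathtt{cap}\mapsto\mathtt{cap_{2D}}$; a $\mathtt{q_0}$ fault leaves the plane but still feeds the weight parity through $n_0$), and one verifies that the resulting 2D combination falls within the scope of one of Conditions \ref{con:con1} to \ref{con:con5} and displays exactly the forbidden data/parity/flag pattern that condition excludes — matching the ``equivalently, at least one of the following'' disjunctions term by term against \cref{eq:main1,eq:main2,eq:main3,eq:main4,eq:main5,eq:main6,eq:main7,eq:main8,eq:main9}. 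The count hypotheses in the conditions are precisely what makes this land: a nontrivial logical of a distance-$d$ 2D code cannot be assembled from too few faults, and subtracting off both $n_0$ and the faults responsible for the bottom-plane logical leaves the residual budget bounded by $d-2$ (\cref{con:con1}), $d-3$ (\cref{con:con2}), or the tight equalities governing \cref{con:con3}, \cref{con:con4}, \cref{con:con5}; the latter three, with their $n_\mathtt{f_{2D}}=1$ or $n_\mathtt{cap_{2D}}=1$ restrictions, cover the boundary sub-cases in which the 2D logical is split between the center and bottom planes so that no single reduced 2D error captures it.

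The main obstacle is exactly this exhaustive sub-case bookkeeping. The delicate parts will be the $\mathtt{v^*}$ faults, whose center-plane and bottom-plane errors genuinely differ and so cannot be represented faithfully by one 2D error (hence the separate $\mathtt{v^*_{2D}}$ type with two assigned errors), and the asymmetry between the center plane (where $\mathbf{E}_\mathtt{on}$ is forced to be a \emph{stabilizer} once $n_0=0$) and the bottom plane (where $\mathbf{E}_\mathtt{off}$ is forced to be a \emph{nontrivial logical}). Getting every admissible pattern of 3D fault types to fall inside the hypotheses of one of the five conditions, with the fault-count inequalities checking out exactly, is where the real work lies; the reduction to the 2D plane and the uses of \cref{prop:2t} and \cref{lem:err_equivalence} are the easy scaffolding.
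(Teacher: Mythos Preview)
Your high-level strategy is exactly the paper's: invoke \cref{prop:2t}, reduce to the purely $Z$-type case by the CSS symmetry, set the left-hand sides of \cref{eq:main1,eq:main2,eq:main3,eq:main4,eq:main5,eq:main6,eq:main7,eq:main8,eq:main9} to the values forced by ``nontrivial logical with trivial flags'' via \cref{lem:err_equivalence}, and then run a case analysis that maps each sub-case through \cref{tab:2D-3D} onto a violation of one of Conditions \ref{con:con1}--\ref{con:con5}. Your handling of mixed-type errors (splitting off $X$- and $Z$-parts and using that $i$ faults in an $X$-type circuit contribute at most weight-$i$ $Z$-type error) is also the paper's.

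The one substantive difference is the case organization. You propose splitting first on $n_0\in\{0,1\}$ and then on which 3D fault types appear; the paper instead splits first on $(n_\mathtt{f},n_\mathtt{cap})\in\{(0,0),(\geq 1,0),(0,\geq 1),(\geq 1,\geq 1)\}$, and only within those branches uses $n_0$ (e.g.\ deducing $n_0=1$ in the $(0,0)$ branch from \cref{con:con0}). The paper's organization is cleaner for the bookkeeping you identify as the hard part: the $\mathtt{f}$ and $\mathtt{cap}$ faults are exactly the ones that live only on the center plane, so fixing their counts first lets you read off the bottom-plane budget $n_\mathtt{off}+n_\mathtt{v}+n_\mathtt{v^*}$ directly and match it against the thresholds $d-2$, $d-3$ in \cref{con:con1,con:con2}; the boundary cases with $n_\mathtt{f}=1$ or $n_\mathtt{cap}=1$ then line up with \cref{con:con3,con:con4,con:con5}. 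Your $n_0$-first split would also work in principle, but the sub-case tree would be less uniform because $n_0$ does not by itself control which planes the remaining faults touch.
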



\textit{Proof ideas.} \cref{thm:main} is proved in \cref{app:proof_main_thm}. The proof is organized as follows: First, we try to show that if Conditions \ref{con:con1} to \ref{con:con5} are satisfied, then for any fault combination arising from up to $d-1$ faults whose combined error is purely $Z$ type, the fault combination cannot lead to a logical $Z$ operator and the zero cumulative flag vector. The same analysis is also applicable to fault combinations whose combined error is purely $X$ type since the circuits for measuring each pair of $X$-type and $Z$-type generators are of the same form. Afterwards, we use the fact that $i$ faults during the measurements of $Z$-type generators cannot cause an $X$-type error of weight more than $i$ (and vice versa), and show that there is no fault combination arising from up to $d-1$ faults which leads to a nontrivial logical operator and the zero cumulative flag vector. By \cref{prop:2t}, this implies that $\mathcal{F}_t$ is distinguishable.

In order to prove the first part, we will assume that Conditions \ref{con:con1} to \ref{con:con5} are satisfied and there exists a fault combination arising from $<d$ faults whose combined error is a logical $Z$ operator and its cumulative flag vector is zero, then show that some contradiction will happen. From \cref{lem:err_equivalence}, a logical $Z$ operator is a $Z$-type error with trivial syndrome and odd weight parity. Therefore, such a fault combination will give $s_\mathtt{cap}=0$, $\vec{s}_\mathtt{f}=\vec{0}$, $\vec{s}_\mathtt{v}=\vec{0}$, $\mathrm{wp}_\mathtt{tot}=1$, $\vec{\mathbf{f}}_\mathtt{cap}=\vec{0}$, $\vec{\mathbf{f}}_\mathtt{f}=\vec{0}$, $\vec{\mathbf{f}}_\mathtt{v}=\vec{0}$, $\mathrm{wp}_\mathtt{bot}=1$, and $\vec{s}_\mathtt{bot}=0$ in the main equations (\cref{eq:main1,eq:main2,eq:main3,eq:main4,eq:main5,eq:main6,eq:main7,eq:main8,eq:main9}). A proof for this part will be divided into 4 cases: (1) $n_\mathtt{f}=0$ and $n_\mathtt{cap}=0$, (2) $n_\mathtt{f} \geq 1$ and $n_\mathtt{cap}=0$, (3) $n_\mathtt{f} = 0$ and $n_\mathtt{cap}\geq 1$, and (4) $n_\mathtt{f} \geq 1$ and $n_\mathtt{cap}\geq 1$. In each case, the main equations will be simplified by eliminating the terms which are equal to zero. Afterwards, We will consider the following pairs of equations: \cref{eq:main1} and \cref{eq:main2}, \cref{eq:main3} and \cref{eq:main4}, \cref{eq:main8} and \cref{eq:main9}. For each pair, the types of faults on the 3D structure will be translated to their corresponding types of faults on the 2D plane in order to find matching conditions from Conditions \ref{con:con1} to \ref{con:con5}. Note that the total number of faults of each type will also help in finding the matching conditions, and the total number of faults of all types is at most $d-1$. When the matching conditions are found, we will find that some contradictions will happen (assuming that all conditions are satisfied), and this is true for all possible cases. \hfill $\square$

\cref{thm:main} can make the process of finding CNOT orderings which give a distinguishable fault set less laborious; instead of finding all possible fault combinations arising from the circuits for $\mathtt{f}$, $\mathtt{v}$, and $\mathtt{cap}$ generators and check whether any condition in \cref{def:distinguishable} is satisfied, we just have to check whether all possible fault combinations arising from the circuits for $\mathtt{f}$ and $\mathtt{cap}$ generators satisfy Conditions \ref{con:con1} to \ref{con:con5}. Note that number of possible fault combinations of the latter task is much smaller than that of the prior task because the total number of generators involved in the latter calculation roughly decreases by half, and the weight of an $\mathtt{f}$ generator is half of the weight of its corresponding $\mathtt{v}$ generator. After good CNOT orderings for $\mathtt{f}$ and $\mathtt{cap}$ generators are found, we can find the CNOT orderings of $\mathtt{v}$ generators by the constraints imposed by the general configurations for data and flag CNOTs.\\

\begin{figure}[btp]
	\centering
	\hypertarget{target:circuit_CCC}{}
	\includegraphics[width=0.42\textwidth]{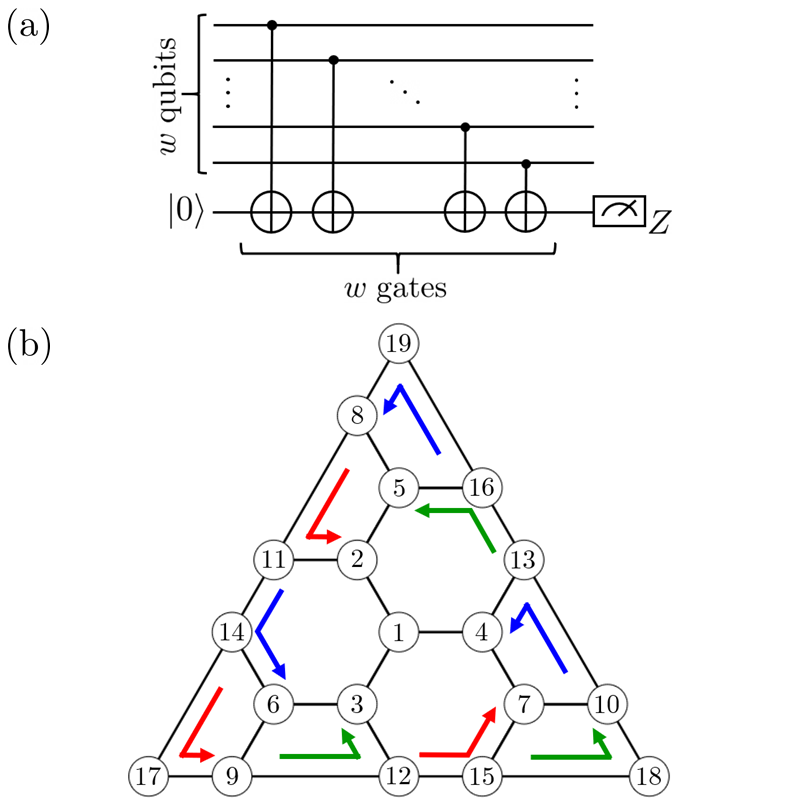}
	\caption{(a) A non-flag circuit for measuring a generator of the capped color code of distance 5 in H form, where $w$ is the weight of the generator. (b) The orderings of data CNOT gates which give a distinguishable fault set $\mathcal{F}_2$.}
	\label{fig:circuit_CCC}
\end{figure}	

\noindent\textbf{Non-flag circuits for measuring generators of capped color codes in H form of distance 3 and 5}



In the case that all circuits for measuring generators are \emph{non-flag circuits}, we can find good CNOT orderings (which give a distinguishable fault set) for the capped color code in H form of distance 3 and 5. The circuits and CNOT orderings for the code of distance 3 (which is the 3D color code of distance 3) are previously described in \cref{subsec:3D_code_config}. The circuit for measuring a generator of weight $w$ of the code of distance 5 is a non-flag circuit as shown in \cref{fig:circuit_CCC}\hyperlink{target:circuit_CCC}{a}, and the orderings of data CNOTs for $\mathtt{f}$ and $\mathtt{cap}$ generators are presented by the diagram in \cref{fig:circuit_CCC}\hyperlink{target:circuit_CCC}{b}. The meanings of the diagram are as follows: for each $\mathtt{f}$ generator, the qubits on which data CNOTs act start from the tail of an arrow then proceed counterclockwise, and the ordering of data CNOTs for the $\mathtt{cap}$ generator is in numerical order, i.e., (0,1,2,...,19), following the qubit labels in the diagram. Meanwhile, the ordering of data CNOTs for each $\mathtt{v}$ generator can be obtained from its corresponding $\mathtt{f}$ generator using the sawtooth configuration (see \cref{subsec:3D_code_config} and \cref{fig:diagram_3D} for more details).


The aforementioned results for the codes of distance 3 and 5 are found by manually picking the CNOT ordering for each $\mathtt{f}$ or $\mathtt{cap}$ generator, then using a computer simulation to verify that Conditions \ref{con:con1} to \ref{con:con5} are satisfied. However, searching for good CNOT orderings using this procedure might not be efficient when $d$ is large. We point out that in the case that all circuits for measuring generators are non-flag circuits, it is still not known whether good CNOT orderings exist for $d \geq 7$. Fortunately, we can prove analytically that if all circuits for measuring generators are \emph{flag circuits} of a particular form, it is always possible to obtain a distinguishable fault set for a capped color code in H form of \emph{any distance}.\\


	

\noindent\textbf{Flag circuits for measuring generators of a capped color code in H form of any distance}

Here we will show that there exist flag circuits for measuring generators of a capped color code in H form of any distance which can give a distinguishable fault set. First, assume that the circuit for measuring an $\mathtt{f}$ and a $\mathtt{cap}$ generator of weight $w$ is a flag circuit with one flag ancilla similar to the circuit in \cref{subfig:circuit_CCC_f}, and the circuit for measuring a $\mathtt{v}$ generator is a flag circuit with one flag ancilla similar to the circuit in \cref{subfig:circuit_CCC_v} (which follows the general configurations of data and flag CNOTs). 

Next, let us consider \cref{eq:main1,eq:main2,eq:main3,eq:main4,eq:main5,eq:main6,eq:main7,eq:main8,eq:main9}. A nontrivial logical operator of a capped color code in H form with trivial flags happens whenever $s_\mathtt{cap}=0$, $\vec{s}_\mathtt{f}=\vec{0}$, $\vec{s}_\mathtt{v}=\vec{0}$, $\mathrm{wp}_\mathtt{tot}=1$, $\vec{\mathbf{f}}_\mathtt{cap}=\vec{0}$, $\vec{\mathbf{f}}_\mathtt{f}=\vec{0}$, $\vec{\mathbf{f}}_\mathtt{v}=\vec{0}$, $\mathrm{wp}_\mathtt{bot}=1$, and $\vec{s}_\mathtt{bot}=0$. This means that a nontrivial logical operator of a capped color code in H form (constructed from $CCC(d)$) occurs if and only if (1) the combined data error on the bottom plane ($\mathbf{E}_\mathtt{off}$) is a nontrivial logical operator of the 2D color code of distance $d$ with trivial flags, and either (2.a) $n_{0}=0$ and the combined data error on the center plane ($\mathbf{E}_\mathtt{on}$) is a stabilizer of the 2D color code of distance $d$ with trivial flags, or (2.b) $n_{0}=1$ and the combined data error on the center plane ($\mathbf{E}_\mathtt{on}$) is a nontrivial logical operator of the 2D color code of distance $d$ with trivial flags. For this reason, if we can show that there is no fault combination from up to $d-1$ faults that can cause a nontrivial logical operator of the 2D color code of distance $d$ with trivial flags on the bottom plane, then a nontrivial logical operator of the capped color code in H form (constructed from $CCC(d)$) with trivial flags cannot happen, meaning that the fault set $\mathcal{F}_t$ is distinguishable.

\begin{figure}[tbp]
	\centering
	\begin{subfigure}{0.33\textwidth}
		\includegraphics[width=\textwidth]{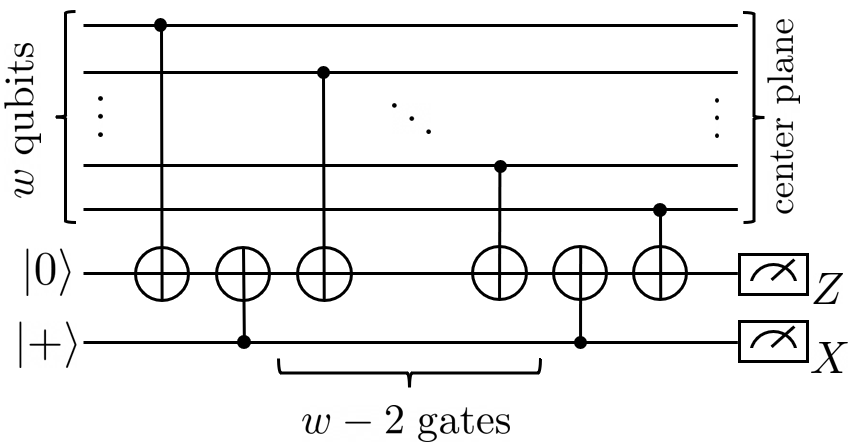}
		\captionsetup{justification=centering}
		\caption{}
		\label{subfig:circuit_CCC_f}
	\end{subfigure}	
	\begin{subfigure}{0.40\textwidth}
		\includegraphics[width=\textwidth]{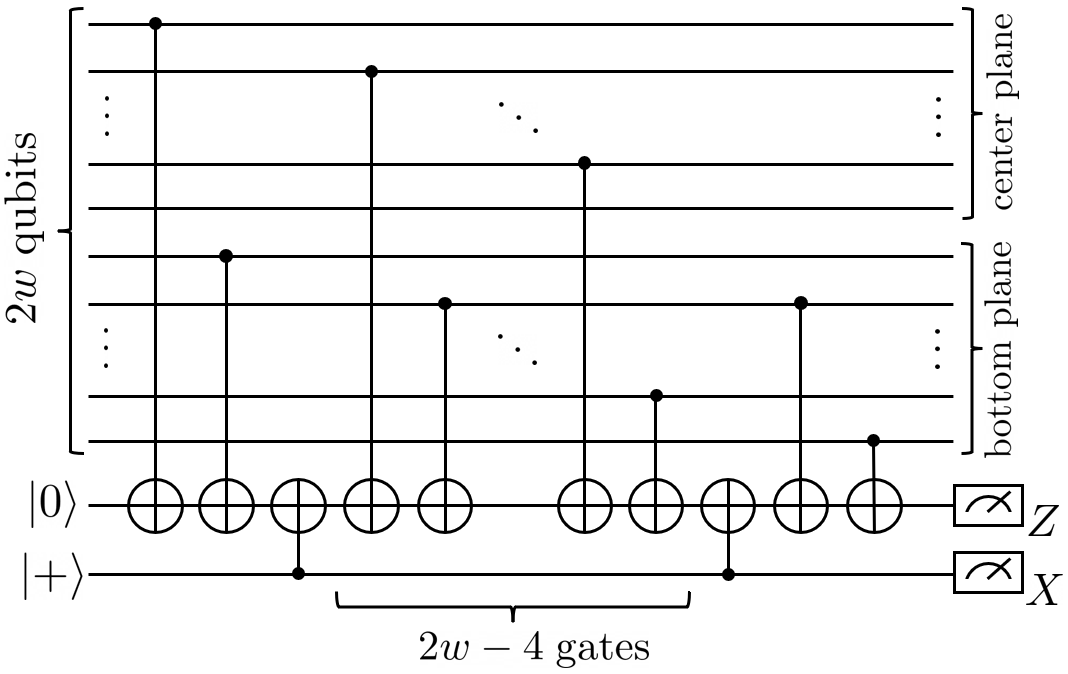}
		\captionsetup{justification=centering}
		\caption{}
		\label{subfig:circuit_CCC_v}
	\end{subfigure}
	\caption{(a) A flag circuit with one flag ancilla for measuring an $\mathtt{f}$ or a $\mathtt{cap}$ generator of weight $w$. (b) A flag circuit with one flag ancilla for measuring a $\mathtt{v}$ generator of weight $2w$.}
	\label{fig:circuit_CCC_flag}
\end{figure}

Observe that faults that can contribute to $\mathbf{E}_\mathtt{off}$ are $\mathtt{q_{off}}$, $\mathtt{v}$, and $\mathtt{v^*}$ faults only. Moreover, from the flag circuit for a $\mathtt{v}$ generator in \cref{subfig:circuit_CCC_v}, a single fault of $\mathtt{v}$ or $\mathtt{v^*}$ type will give a trivial flags only when the part of the corresponding data error on the bottom plane has weight $\leq 1$. This fact leads to the following claim:
\begin{claim}
Suppose that $\mathtt{v}$ generators are measured using flag circuits with one flag ancilla similar to the circuit in \cref{subfig:circuit_CCC_v}.
\begin{enumerate}
\item If there is exactly one fault during a measurement of generator $v_i^z$ and the bit of the flag vector corresponding to $v_i^z$ is zero, then the data error on the bottom plane has weight 0 or 1. In this case, the data error on the bottom plane from one fault of $\mathtt{v}$ (or $\mathtt{v^*}$) type is similar to some data error from 0 or 1 fault of $\mathtt{q_{off}}$ type. 
\item If there are exactly two faults during measurements of the same generator $v_i^z$ (possibly on different rounds) and the bit of the cumulative flag vector corresponding to $v_i^z$ is zero, then the combined data error on the bottom plane has weight 0, 1, 2, or 3 (up to a multiplication of $v_i^z$). The combined data error of weight 0, 1, or 2 on the bottom plane from two faults of $\mathtt{v}$ (or $\mathtt{v^*}$) type on the same generator is similar to some combined data error from 0, 1, or 2 faults of $\mathtt{q_{off}}$ type. The case that the combined data error on the bottom plane of weight 3 arising from two faults of $\mathtt{v}$ (or $\mathtt{v^*}$) type on the same generator is the only case that the weight of the combined data error on the bottom plane is greater than the number of faults.
\item If there are three or more faults during measurements of the same generator $v_i^z$ (possibly from different rounds) and the bit of the cumulative flag vector corresponding to $v_i^z$ is zero, then the combined data error on the bottom plane has weight 0, 1, 2, or 3 (up to a multiplication of $v_i^z$) and is similar to some combined data error from 0, 1, 2, or 3 faults of $\mathtt{q_{off}}$ type.
\end{enumerate}
\label{claim:v_meas_flag}
\end{claim}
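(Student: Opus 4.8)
\emph{Proof plan.} The plan is to track how $Z$-type errors propagate through the one-flag-ancilla circuit for $v_i^z$ of \cref{subfig:circuit_CCC_v}, reducing any single fault to either an $IZ$ error on one data CNOT or a $Z$ error on a single qubit (data, syndrome ancilla, or flag ancilla) as in \cref{subsec:fault_set}, and then invoking two structural facts. First, $v_i^z$ acts on the bottom plane as a generator of the 2D color code used here, so it is supported on at most $6$ bottom-plane qubits, and its restriction to the bottom plane is exactly $Z$ on \emph{all} of those qubits; consequently \emph{any} $Z$-type error confined to the bottom-plane support of $v_i^z$ has weight at most $3$ after multiplication by $v_i^z$ (replace the support by its complement when the former is larger). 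Second, by the observation preceding the claim, a single $\mathtt{v}$- or $\mathtt{v^*}$-type fault in this circuit has trivial flag only when the bottom-plane part of the induced data error has weight $\le 1$ — this is the effect of the sawtooth ordering together with placing the flag CNOTs so that the first and last data-CNOT pairs lie outside the flag bracket.

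\emph{Part 1.} I would take a single fault during a measurement of $v_i^z$ whose flag bit is $0$ and enumerate cases. A $\mathtt{v}$- or $\mathtt{v^*}$-type fault (an $IZ$ on a data CNOT, or an equivalent $Z$ on the syndrome ancilla) gives a bottom-plane error of weight $\le 1$ by the second fact above. A $Z$ error on a data qubit has bottom-plane weight $0$ (center qubit) or $1$ (bottom qubit) and does not reach the flag ancilla, so its flag bit is $0$, consistent with the hypothesis. A $Z$ error on the flag ancilla, or any ancilla preparation/measurement fault of this circuit, either flips the flag bit (contradiction) or yields no data error. Hence in every admissible case the bottom-plane error is a $Z$ error of weight $0$ or $1$, i.e. the data error of $0$ or $1$ faults of $\mathtt{q_{off}}$ type.

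\emph{Part 2.} For two faults during measurements of $v_i^z$ (possibly in different rounds) with cumulative flag bit $0$ for $v_i^z$, write $\phi_1,\phi_2\in\{0,1\}$ for the individual flag contributions, so $\phi_1+\phi_2=0\bmod 2$. If $\phi_1=\phi_2=0$, each fault alone has trivial flag, so by Part 1 its bottom-plane contribution has weight $\le 1$, and by linearity of error propagation the combined bottom-plane error has weight $\le 2$, matching at most $2$ faults of $\mathtt{q_{off}}$ type. If $\phi_1=\phi_2=1$, both faults are $\mathtt{v}$- or $\mathtt{v^*}$-type faults lying between the two flag CNOTs; pushing the two induced $Z$-errors on the syndrome ancilla forward to the final measurement, the combined data error is $Z$ on the qubits touched by the data CNOTs \emph{strictly between} the two fault positions, so its bottom-plane part lies in the bottom-plane support of $v_i^z$ and thus has weight at most $3$ after multiplying by $v_i^z$ if needed. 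This is the only way the bottom-plane weight can exceed the number of faults, since Part 1 caps it at $1$ for one fault and Part 3 caps it at $3\le k$ for $k\ge 3$ faults; the remaining mixed case $\{\phi_1,\phi_2\}=\{0,1\}$ gives cumulative flag bit $1$ and is excluded.

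\emph{Part 3 and the main obstacle.} For $k\ge 3$ faults during measurements of $v_i^z$ with cumulative flag bit $0$, every contributing fault ($\mathtt{q_{off}}$-, $\mathtt{v}$-, or $\mathtt{v^*}$-type) produces a bottom-plane $Z$-error inside the bottom-plane support of $v_i^z$, so the combined bottom-plane error is $Z$ on a subset of those $\le 6$ qubits; taking the complement when that subset is larger shows its weight is $\le 3$, and it is, up to multiplication by $v_i^z$, the data error of $0,1,2$, or $3$ faults of $\mathtt{q_{off}}$ type. The main obstacle is really step two of the plan: carefully verifying from the CNOT ordering in \cref{subfig:circuit_CCC_v} that a flag-$0$ single $\mathtt{v}$/$\mathtt{v^*}$ fault indeed has bottom-plane weight $\le 1$ (which pins down the exact positions of the flag bracket relative to the first and last data-CNOT pairs), together with the Part 2 refinement that the weight-$3$ case is exactly two interior $\mathtt{v}$/$\mathtt{v^*}$ faults; the remaining error-propagation bookkeeping and the reductions modulo $v_i^z$ are routine, and the bound $3$ is ultimately a consequence of the bounded ($\le 6$) weight of 2D color-code faces.
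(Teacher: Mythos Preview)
Your proposal is essentially correct and follows the natural line of argument the paper leaves implicit: the paper does not prove this claim but simply states it as a consequence of the circuit structure (specifically, the observation that a single $\mathtt{v}/\mathtt{v^*}$ fault with trivial flag has bottom-plane weight $\leq 1$, together with the fact that the bottom-plane support of $v_i^z$ is a 2D color-code face of weight $4$ or $6$), so your detailed case analysis is exactly the argument the paper is relying on.

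One small gap worth patching in Part 2: in the $\phi_1=\phi_2=1$ subcase you assume both faults are $IZ$-type errors on data CNOTs inside the flag bracket, but a fault can also flip the flag bit without touching the data at all (a $Z$ on the flag ancilla from preparation, measurement, or a flag-CNOT fault). When one of the two faults is of this ``flag-only'' type, the combined bottom-plane error comes from a \emph{single} interior fault and can have raw weight up to $w-1$; your ``between the two fault positions'' description does not literally apply. However, your complement argument (the bottom-plane support of $v_i^z$ has $\leq 6$ qubits, so any $Z$-error confined to it has weight $\leq 3$ modulo the bottom-plane restriction of $v_i^z$) covers this case as well, and in fact suffices for Parts 2 and 3 uniformly---only Part 1 needs the sharper flag-position analysis. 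Also, in Part 3 you should drop the mention of $\mathtt{q_{off}}$-type faults, since by hypothesis all faults occur during measurements of $v_i^z$ and are therefore $\mathtt{v}$- or $\mathtt{v^*}$-type (or ancilla faults); this is a terminology slip only and does not affect the argument.
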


\cref{claim:v_meas_flag} will be later used to prove that a nontrivial logical operator of the 2D color code of distance $d$ with trivial flags cannot happen on the bottom plane.
 
Because the ordering of CNOT gates for each $\mathtt{v}$ generator is related to its corresponding $\mathtt{f}$ generator, the problem of finding CNOT orderings for a 3D structure which give a distinguishable fault set can be simplified to the problem of finding CNOT orderings on a 2D plane. In particular, since we are now considering the bottom plane only, $\mathtt{f_{2D}}$ faults on the 2D plane correspond to both $\mathtt{v}$ and $\mathtt{v^*}$ faults on the 3D structure, while $\mathtt{q_{2D}}$ faults on the 2D plane correspond to $\mathtt{q_{off}}$ faults on the 3D structure. 

A fault set $\mathcal{F}_t$ is distinguishable if the following condition is satisfied:
\begin{condition}{6}
	For any fault combination on the 2D plane which satisfies $n_\mathtt{q_{2D}}+n_\mathtt{f_{2D}} \leq d-1$, $\mathbf{E}_\mathtt{q_{2D}}\cdot\mathbf{E}_\mathtt{f_{2D}}$ is not a nontrivial logical operator or the cumulative flag vector is not zero; equivalently, at least one of the followings is satisfied:
	\begin{enumerate}
		\item $\sum \vec{q}_\mathtt{2D}+\sum \vec{p}_\mathtt{f_{2D}} \neq 0 \mod 2$, or
		\item $n_\mathtt{q_{2D}}+\sum \mathrm{wp}(\sigma_\mathtt{f_{2D}})\neq 1 \mod 2$, or
		\item $\sum \vec{f}_\mathtt{f_{2D}} \neq 0 \mod 2$.
	\end{enumerate}
	\label{con:con6}%
\end{condition}

Surprisingly, using the flag circuits with one flag ancilla as shown in \cref{subfig:circuit_CCC_f} and \cref{subfig:circuit_CCC_v} to measure the generators of a capped color code in H form, Condition \ref{con:con6} is satisfied regardless of the orderings of data CNOT gates of $\mathtt{f}$ generators (as long as the CNOT orderings of $\mathtt{v}$ generators follow the general configurations of data and flag CNOTs). And because we are considering faults on the (simplified) 2D plane, the fact that Condition \ref{con:con6} is satisfied regardless of the orderings of data CNOT gates in the flag circuits is also applicable to a 2D color code of any distance as well. This can be restated in the following theorem:

\begin{theorem}
Suppose that the generators of a 2D color code of distance $d$ are measured using the flag circuits with one flag ancilla as displayed in \cref{subfig:circuit_CCC_f}. Then, there is no fault combination arising from $d-1$ faults whose combined data error is a nontrivial logical operator and the cumulative flag vector is zero (i.e., Condition \ref{con:con6} is satisfied), regardless of the orderings of data CNOT gates in the flag circuits.
	\label{thm:main2}
\end{theorem}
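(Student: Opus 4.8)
\textit{Proof ideas.} The plan is to verify Condition \ref{con:con6} directly: given any fault combination on the 2D color code arising from at most $d-1$ faults on the flag circuits with one flag ancilla of \cref{subfig:circuit_CCC_f} and having zero cumulative flag vector, I must show that its combined data error $\mathbf{E}_\mathtt{q_{2D}}\cdot\mathbf{E}_\mathtt{f_{2D}}$ is not a nontrivial logical operator. Since the 2D color code has distance $d$ and satisfies \cref{lem:err_equivalence}, a nontrivial logical operator is exactly a $Z$-type error with trivial syndrome and odd weight parity; so it is enough to show that $\mathbf{E}_\mathtt{q_{2D}}\cdot\mathbf{E}_\mathtt{f_{2D}}$ is equivalent, modulo the stabilizer group, to an error of weight at most $d-1$ and, in the residual case where the bound only gives weight $d$, that its weight parity is even when its syndrome is trivial.

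First I would establish the key structural property of the circuit in \cref{subfig:circuit_CCC_f}: since its first and last data CNOTs lie outside the two flag CNOTs, any single fault during the measurement of a weight-$w$ generator $g$ with data-CNOT ordering $q_1,\dots,q_w$ either (i) flips the flag bit associated with $g$, in which case the induced $Z$-type data error equals, up to multiplication by $g$, a prefix $Z_{q_1}\cdots Z_{q_j}$ with $1\le j\le w-2$; or (ii) leaves that flag bit unchanged, in which case the data error has weight at most $1$ up to multiplication by $g$. This is a direct error-propagation calculation, with faults of other forms reduced to these cases by the equivalences collected in \cref{sec:flag_n_WPEC}.

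Next I would group the faults by the generator they affect (input errors, phase flips, and faults during $X$-type measurements count as free weight-$1$ errors). Because the cumulative flag vector vanishes, the flag-flipping faults on each generator occur in even number and can be paired; a flag-preserving fault contributes a weight-$\le1$ error, while a pair of flag-flipping faults on the same generator $g$ contributes, modulo $g$, a product of two prefixes of $g$, i.e.\ an ``interval'' $Z$-error confined to the support of $g$, of weight at most $w-3$. Using that every generator of a 2D color code has weight $4$ or $6$, so $w-3\le3$, one finds that $\mathbf{E}_\mathtt{q_{2D}}\cdot\mathbf{E}_\mathtt{f_{2D}}$ is equivalent modulo stabilizers to a product of single-qubit $Z$ errors together with a bounded number of face-confined interval errors; tracking the syndrome of each piece, organized by the number of faults per generator (one, two, or three-or-more, as in \cref{claim:v_meas_flag}), one rules out a trivial syndrome with odd total weight parity within the fault budget.

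The main obstacle is precisely the clusters of two faults on a single weight-$6$ generator: such a cluster can yield a weight-$3$ data error modulo stabilizers, so the crude bound ``weight $\le$ number of faults'' fails and a combined error of weight exactly $d$ is not excluded by counting alone. Resolving this requires a syndrome-sensitive case analysis in the spirit of the proof of \cref{thm:main} with Conditions \ref{con:con1}--\ref{con:con5}: one shows that once such a weight-$3$ cluster is present, too few faults remain for the leftover single-qubit $Z$ errors to cancel all syndrome bits while keeping the weight parity odd. Crucially, this analysis never refers to the ordering of the data CNOTs of the $\mathtt{f}$ generators, only to the bracketing of the flag CNOTs, which is why Condition \ref{con:con6} is satisfied for every such ordering and, in fact, for a 2D color code of arbitrary distance. \hfill $\square$
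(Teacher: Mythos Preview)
Your overall strategy matches the paper's: reduce to the structural fact that a single flag-preserving fault on the circuit of \cref{subfig:circuit_CCC_f} yields a data error of weight at most~$1$ (up to the generator), isolate the problematic case of two faults on the same weight-$6$ generator producing a weight-$3$ error, and then argue by a case analysis that such clusters cannot combine with the remaining single-qubit faults to form a nontrivial logical operator within the $d-1$ fault budget. The paper organizes this as an induction on the number $n_f$ of $\mathtt{f_{2D}}$ faults, and your proposal gestures at the same structure.

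The gap is in your last paragraph. Saying that ``too few faults remain for the leftover single-qubit $Z$ errors to cancel all syndrome bits'' is not enough, and appealing to ``the spirit of the proof of \cref{thm:main} with Conditions \ref{con:con1}--\ref{con:con5}'' points in the wrong direction: that proof is a bookkeeping reduction to the Conditions, not the combinatorial argument needed here. The actual obstruction is that when two weight-$3$ clusters sit on \emph{adjacent} generators, their errors can overlap and the naive syndrome-counting (``each cluster contributes a nonzero bit at its own $f_i^x$'') breaks down. The paper closes this with a concrete geometric step you do not mention: for two adjacent clusters one may multiply each weight-$3$ error by its own stabilizer so that the two errors overlap, collapsing four $\mathtt{f_{2D}}$ faults to a combined error of weight at most~$4$; this reduces $(n_f,n_q)$ to $(n_f-4,n_q+4)$. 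After iterating until no adjacent clusters remain, the syndrome argument you sketch (even versus odd number of $\mathtt{q_{2D}}$ faults on the support of $f_i^z$) then finishes the induction by reducing $(n_f,n_q)$ to $(n_f-2,n_q+2)$. Without the adjacent-generator overlap trick the induction does not close, so your proposal as written stops short of a proof.
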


\cref{thm:main2} has been proved in \cite{CKYZ20}, where the circuit in \cref{subfig:circuit_CCC_f} is a 1-flag circuit according to the definition in \cite{CB18}. Here we also provide an alternative proof of \cref{thm:main2} which is tailored to the notations being used throughout this work, so that the paper becomes self-contained. We also believe that our proof technique using the relationship between faults and error weights would be useful for finding proper CNOT orderings for other families of codes.

\begin{proof}
Assume by contradiction that Condition \ref{con:con6} is not satisfied; i.e., there exists a fault combination from $d-1$ faults which gives a nontrivial logical operator with trivial flags. For such a fault combination, the syndrome of $\mathbf{E}_\mathtt{q_{2D}}\cdot\mathbf{E}_\mathtt{f_{2D}}$ is zero, the total weight of $\mathbf{E}_\mathtt{q_{2D}}\cdot\mathbf{E}_\mathtt{f_{2D}}$ is odd, and the cumulative flag vector $\sum \vec{f}_\mathtt{f_{2D}}$ is zero. From the structure of the flag circuit in \cref{subfig:circuit_CCC_f}, a single fault of $\mathtt{f_{2D}}$ type will give a trivial flags only when the corresponding data error has weight $\leq 1$. Similar to \cref{claim:v_meas_flag} for faults of $\mathtt{v}$ and $\mathtt{v^*}$ type discussed previously, the only case that faults of $\mathtt{f_{2D}}$ type cannot be considered as faults of $\mathtt{q_{2D}}$ type of the same or smaller number is the case that for each generator $f_i^z$ of the 2D color code, there are exactly two faults during the generator measurements (on the same or different rounds) which lead to the combined data error of weight 3 (up to a multiplication of $f_i^z$). For this reason, we will assume that for each generator $f_i^z$, there are either no faults or exactly two faults during the measurements.

Let $(n_f,n_q)$ denote the case that a fault combination arises from \emph{exactly} $n_f$ faults of $\mathtt{f_{2D}}$ type and \emph{no more than} $n_q$ faults of $\mathtt{q_{2D}}$ type (where $n_f+n_q=d-1$). We will show that in any case with even $n_f$ (i.e., $(0,d-1),(2,d-3),\dots,(d-1,0)$), $\mathbf{E}_\mathtt{q_{2D}}\cdot\mathbf{E}_\mathtt{f_{2D}}$ cannot be a nontrivial logical operator. 
 
\textit{Case $(0,d-1)$}: Because the 2D color code has distance $d$ and the total weight of $\mathbf{E}_\mathtt{q_{2D}}$ is at most $d-1$, $\mathbf{E}_\mathtt{q_{2D}}$ cannot be a nontrivial logical operator.

\textit{Case $(2,d-3)$}: Suppose that a pair of $\mathtt{f_{2D}}$ faults causes a weight-3 error on the supporting qubits of generator $f_i^z$. Consider the following cases:
\begin{enumerate}
	\item If there are even number of $\mathtt{q_{2D}}$ faults on the supporting qubits of $f_i^z$, then the syndrome bit $s_i^x$ corresponding to generator $f_i^x$ is not zero. That is, $\mathbf{E}_\mathtt{q_{2D}}\cdot\mathbf{E}_\mathtt{f_{2D}}$ is not a nontrivial logical operator.
	\item If there are odd number of $\mathtt{q_{2D}}$ faults on the supporting qubits of $f_i^z$, then the total weight of the error on supporting qubits of $f_i^z$ is 0 or 2 (the total weight is even and no more than 3 up to a multiplication of $f_i^z$). Since two $\mathtt{f_{2D}}$ faults and one or more $\mathtt{q_{2D}}$ fault give an error of weight no more than 2, this case is covered by the $(0,d-1)$ case, in which a nontrivial logical operator cannot occur.
\end{enumerate}
Thus, $\mathbf{E}_\mathtt{q_{2D}}\cdot\mathbf{E}_\mathtt{f_{2D}}$ is not a nontrivial logical operator in the $(2,d-3)$ case.

\textit{Case $(n_f,n_q)$ with $n_f\geq 4$ and $n_f+n_q=d-1$}: consider the following cases:
\begin{enumerate}
	\item The case that there are two pairs of $\mathtt{f_{2D}}$ faults that occur on adjacent generators $f_i^z$ and $f_j^z$, and each pair leads to an error of weight 3 on the supporting qubits of each generator. We can always make these two errors of weight 3 overlap by multiplying each error with $f_i^z$ (or $f_j^z$); see examples below. 
	\begin{equation}
		\includegraphics[width=0.25\textwidth]{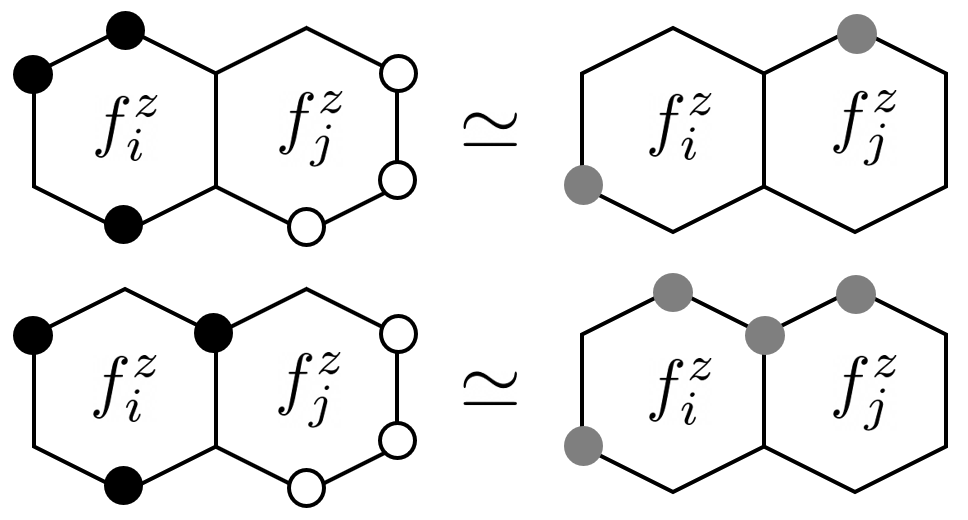} \nonumber
	\end{equation}
	As a result, the total weight of these two errors becomes 2 or 4. Since four $\mathtt{f_{2D}}$ faults give an error of weight no more than 4, this case is covered by the $(n_f-4,n_q+4)$ case. We can repeat this reduction process until there are no pairs of faults that occur on adjacent generators.
	\item The case that there are no pairs of $\mathtt{f_{2D}}$ faults that occur on adjacent generators. Suppose that a single pair of $\mathtt{f_{2D}}$ faults causes a weight-3 error on the supporting qubits of generator $f_i^z$.
	\begin{enumerate}
		\item If there are even number of $\mathtt{q_{2D}}$ faults on the supporting qubits of $f_i^z$, then the syndrome bit $s_i^x$ corresponding to generator $f_i^x$ is not zero. That is, $\mathbf{E}_\mathtt{q_{2D}}\cdot\mathbf{E}_\mathtt{f_{2D}}$ is not a nontrivial logical operator.
		\item If there are odd number of $\mathtt{q_{2D}}$ faults on the supporting qubits of $f_i^z$, then the total weight of the error on supporting qubits of $f_i^z$ is 0 or 2 (up to a multiplication of $f_i^x$). Since two $\mathtt{f_{2D}}$ faults and one or more $\mathtt{q_{2D}}$ fault give an error of weight no more than 2, this case is covered by the $(n_f-2,n_q+2)$ case.
	\end{enumerate}
\end{enumerate}
By induction, a nontrivial logical operator cannot occur in any case with $n_f\geq 4$ and $n_f+n_q=d-1$. 

Therefore, there is no fault combination from $d-1$ faults which gives the zero cumulative flag vector and a nontrivial logical operator on the 2D color code.
\end{proof}

From \cref{thm:main2}, it is always possible to obtain a distinguishable fault set $\mathcal{F}_t$ for a 2D color code of any distance (thus, fault-tolerant protocols for error correction, measurement, and state preparation described in \cref{sec:FT_protocol} are applicable).


Now let us consider the capped color code in H form. Because there is no fault combination from $d-1$ faults that can cause a nontrivial logical operator of the 2D color code with trivial flags on the bottom plane, a nontrivial logical operator of the capped color code in H form with trivial flags cannot occur from $d-1$ faults. By \cref{prop:2t}, this implies that the fault set $\mathcal{F}_t$ is distinguishable. The result can be summarized in the following theorem:

\begin{theorem}
	Let $\mathcal{F}_t$ be the fault set corresponding to circuits for measuring $\mathtt{f}, \mathtt{v}$, and $\mathtt{cap}$ generators of the capped color code in H form constructed from $CCC(d)$ (where $t=(d-1)/2$, $d=3,5,7,...$), and suppose that the general configurations of CNOT gates for $\mathtt{f}$, $\mathtt{v}$, and $\mathtt{cap}$ generators are imposed, and the circuits for each pair of $X$-type and $Z$-type generators use the same CNOT ordering. 
	Also, let circuits for measuring $\mathtt{f}$ and $\mathtt{cap}$ generators be flag circuits with one flag ancilla similar to the circuit in \cref{subfig:circuit_CCC_f}, and let circuits for measuring $\mathtt{v}$ generators be flag circuits with one flag ancilla similar to the circuit in \cref{subfig:circuit_CCC_v}. Then, $\mathcal{F}_t$ is distinguishable.
	\label{thm:main3}
\end{theorem}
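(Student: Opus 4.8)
\textit{Proof ideas.} The plan is to show that no fault combination in $\mathcal{F}_{2t}=\mathcal{F}_{d-1}$ can give rise to a nontrivial logical operator together with the zero cumulative flag vector; by \cref{prop:2t} this is equivalent to $\mathcal{F}_t$ being distinguishable. Since the circuits for each pair of $X$-type and $Z$-type generators have the same form, and $i$ faults during the measurements of $Z$-type (resp. $X$-type) generators cannot create an $X$-type (resp. $Z$-type) error of weight exceeding $i$, it suffices to treat fault combinations whose combined data error is purely $Z$ type; the argument for purely $X$-type errors and hence for mixed-type errors is then identical.

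First I would feed the values $s_\mathtt{cap}=0$, $\vec{s}_\mathtt{f}=\vec{s}_\mathtt{v}=\vec{0}$, $\mathrm{wp}_\mathtt{tot}=1$, $\vec{\mathbf{f}}_\mathtt{cap}=\vec{\mathbf{f}}_\mathtt{f}=\vec{\mathbf{f}}_\mathtt{v}=\vec{0}$, $\mathrm{wp}_\mathtt{bot}=1$, $\vec{s}_\mathtt{bot}=\vec{0}$ (which, by \cref{lem:err_equivalence}, characterize a $Z$-type logical operator with trivial flags) into the main equations \cref{eq:main1,eq:main2,eq:main3,eq:main4,eq:main5,eq:main6,eq:main7,eq:main8,eq:main9}. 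Writing the combined error as $\mathbf{E}=\mathbf{E}_\mathtt{0}\cdot\mathbf{E}_\mathtt{on}\cdot\mathbf{E}_\mathtt{off}$, the pairing of these equations (syndrome/weight-parity of $\mathbf{E}_\mathtt{on}$, of the ``projected'' error, and of $\mathbf{E}_\mathtt{off}$) shows that such a bad fault combination exists if and only if $\mathbf{E}_\mathtt{off}$ is a nontrivial logical operator of the 2D color code of distance $d$ on the bottom plane with trivial flags on the $\mathtt{v}$-circuits, and in addition either $n_0=0$ with $\mathbf{E}_\mathtt{on}$ a stabilizer of the 2D color code on the center plane with trivial flags, or $n_0=1$ with $\mathbf{E}_\mathtt{on}$ a nontrivial logical operator of that code with trivial flags. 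In particular it is \emph{necessary} that $\mathbf{E}_\mathtt{off}$ be a nontrivial logical operator of the 2D color code with trivial flags, so it is enough to rule this out.

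Next I would reduce the bottom-plane analysis to a purely two-dimensional fault-combination problem. The only faults that can contribute to $\mathbf{E}_\mathtt{off}$ are of type $\mathtt{q_{off}}$, $\mathtt{v}$, and $\mathtt{v^*}$; under the general configurations and the one-flag circuit of \cref{subfig:circuit_CCC_v}, \cref{claim:v_meas_flag} tells us that any collection of $\mathtt{v}$ or $\mathtt{v^*}$ faults on a single generator $v_i^z$ whose corresponding cumulative flag bit is zero produces a bottom-plane error equivalent (up to a multiple of $v_i^z$) to one from an equal or smaller number of $\mathtt{q_{off}}$ faults, \emph{except} for the case of exactly two such faults on one generator yielding a weight-$3$ bottom-plane error. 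Using the correspondence of \cref{tab:2D-3D} ($\mathtt{q_{off}}\to\mathtt{q_{2D}}$ and $\mathtt{v},\mathtt{v^*}\to\mathtt{f_{2D}}$), this is precisely the situation addressed inside the proof of \cref{thm:main2} by the adjacency-reduction of weight-$3$ pairs. Hence the bottom-plane problem is exactly the question of whether $d-1$ faults with trivial flags on the 2D color code of distance $d$ can produce a nontrivial logical operator, i.e.\ whether \cref{con:con6} can fail.

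Finally I would invoke \cref{thm:main2}: for the one-flag circuits of \cref{subfig:circuit_CCC_f}, \cref{con:con6} holds regardless of the data CNOT orderings, so no such bottom-plane logical operator with trivial flags exists. Therefore the necessary condition from the second step can never be met, no fault combination in $\mathcal{F}_{d-1}$ yields a nontrivial logical operator together with the zero cumulative flag vector, and $\mathcal{F}_t$ is distinguishable by \cref{prop:2t}. The main obstacle is the bookkeeping in the reduction step: one must verify that flag triviality on the $\mathtt{v}$-circuits genuinely collapses $\mathtt{v}$ and $\mathtt{v^*}$ faults into $\mathtt{q_{off}}$ faults in a way that does not increase the fault count (so that the hypothesis $n_\mathtt{q_{2D}}+n_\mathtt{f_{2D}}\le d-1$ of \cref{thm:main2} remains available), and that the single exceptional weight-$3$ configuration is handled consistently with \cref{thm:main2}'s argument — which is exactly what \cref{claim:v_meas_flag} is designed to guarantee.
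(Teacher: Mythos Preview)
Your proposal is correct and follows essentially the same approach as the paper: reduce via \cref{prop:2t} to ruling out a nontrivial logical operator with trivial flags in $\mathcal{F}_{d-1}$, use the main equations to show this forces $\mathbf{E}_\mathtt{off}$ to be a nontrivial 2D logical operator with trivial $\mathtt{v}$-flags, then apply \cref{claim:v_meas_flag} to translate the bottom-plane problem into the 2D setting of \cref{con:con6}, and conclude by \cref{thm:main2}. The paper's argument (given in the text immediately preceding \cref{thm:main3}) is organized in exactly this way.
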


(We can also see that whenever Condition \ref{con:con6} is satisfied, Conditions \ref{con:con1} to \ref{con:con5} are also satisfied. This leads to a distinguishable fault set by \cref{thm:main}.)

The fault-tolerant protocols for error correction, measurement, and state preparation in \cref{sec:FT_protocol} are applicable to a capped color code in H form of any distance whenever the fault set is distinguishable. Note that the protocols for the capped color code in H form of distance 3 and 5 need only one ancilla in total, while the protocols for the code of distance 7 or higher need only two ancillas in total (assuming that the ancillas can be reused).

In addition, the CNOT orderings which work for capped color codes in H form will work for recursive capped color codes in H form. That is, for a recursive capped color code in H form of distance $d=2t+1$, the fault set $\mathcal{F}_t$ is distinguishable if the followings are true:
\begin{enumerate}
	\item the $\mathtt{f}$ and $\mathtt{v}$ operators on the $(j-1)$-th and the $j$-th layers of the recursive capped color code are measured using the CNOT orderings for the $\mathtt{f}$ and $\mathtt{v}$ operators of a capped color code of in H form of distance $j$ ($j=3,5,...,d$) which give a distinguishable fault set, and
	\item the $\mathtt{cap}$ operator on the $(j-2)$-th and the $(j-1)$-th layers of the recursive capped color code is measured using the CNOT ordering for the $\mathtt{cap}$ operator of a capped color code in H form of distance $j$ ($j=3,5,...,d$) which give a distinguishable fault set (where an operator on $\mathtt{q_0}$ of the capped color code is replaced by operators on all qubits on the $(j-2)$-th layer of the recursive capped color code).
\end{enumerate}
The orderings above work because the recursive capped color code in H form of distance $d$ is obtained by encoding the top qubit ($\mathtt{q_0}$) of the capped color code in H form of distance $d$ by the recursive capped color code in H form of distance $d-2$. FTEC protocols for a recursive capped color code in H form are similar to conventional FTEC protocols for a concatenated code; we will start from correcting errors on the innermost code then proceed outwards. Other fault-tolerant protocols for a recursive capped color code will also use similar ideas.













\section{Fault-tolerant protocols}
\label{sec:FT_protocol}


So far, we have considered capped and recursive capped color codes in H form, and derived \cref{thm:main,thm:main2,thm:main3} which help us find CNOT orderings for the circuits for measuring the code generators such that the corresponding fault set is distinguishable.
In this section, we will show that whenever the fault set is distinguishable, a fault-tolerant protocol can be constructed. We will first state the definitions of fault-tolerant gadgets in \cref{subsec:FT_def}, which are a bit different from conventional definitions originally proposed by Aliferis, Gottesman, and Preskill in \cite{AGP06}. Afterwards, we will develop several fault-tolerant protocols for a capped or a recursive capped color code whose circuits for measuring generators give a distinguishable fault set, including a fault-tolerant error correction (FTEC) protocol (\cref{subsec:FTEC_ana}), fault-tolerant measurement (FTM) and fault-tolerant state preparation (FTP) protocols (\cref{subsec:FTM_ana}), transversal Clifford gates (\cref{subsec:other_FT_gadgets}), and fault-tolerant protocol for logical $T$-gate implementation (\cref{subsec:FT_T_gate}).

\subsection{Redefining fault tolerance}
\label{subsec:FT_def}

When a fault set $\mathcal{F}_t$ is distinguishable, all possible errors of any weight arising from up to $t$ faults can be accurately identified (up to a multiplication of some stabilizer) using their syndromes and cumulative flag vectors obtained from perfect subsequent syndrome measurements. Therefore, all possible errors arising from up to $t$ faults are correctable. However, one should be aware that faults can happen anywhere in an EC protocol, including the locations in the subsequent syndrome measurements. Our goal is to construct a protocol which is \emph{fault tolerant}; vaguely speaking, if an input state to an EC protocol has some error, we want to make sure that the output state is the same logical state as the input, and if output state has any error, the error must not be `too large'. 


What does it mean for the output error to be not too large? The general idea is that if an output error of a single round of the protocol becomes an input error of the next round of the protocol, the error should still be correctable by the latter round. In \cite{AGP06}, the authors proposed that the weight of the output error from a fault-tolerant protocol should be no more than the number of total faults occurred during the protocol. However, it should be noted that for an \codepar{n,k,d} code which can correct errors up to weight $\tau = \lfloor(d-1)/2\rfloor$ and is not a perfect code (or not a perfect CSS code)\footnote{A perfect code is a quantum code which saturates the quantum Hamming bound; i.e., there is a one-to-one correspondence between correctable errors and all possible syndromes \cite{Gottesman96,Gottesman97}. A perfect CSS code is defined similarly, except that the syndromes of $X$-type and $Z$-type errors are considered separately.}, the idea of correctable errors can be extended to some errors of weight more than $\tau$. For example, if the code being used is a non-perfect code of distance 3, there will be some error $E$ of weight more than 1 whose syndrome $\vec{s}(E)$ is different from those of errors of weight 1. If no other error $E'$ has the same syndrome as $E$ in the set of correctable errors, then in this case, $E$ is also correctable in the sense that we can perform an error correction by applying $E^\dagger$ every time we obtained the syndrome $\vec{s}(E)$. In this section, we will `refine' the idea of high-weight error correction and `redefine' fault tolerance using the notion of distinguishable fault set.



We will start by stating conventional definitions of fault-tolerant gadgets proposed by Aliferis, Gottesman, and Preskill \cite{AGP06}, then we will give the revised version of the same definitions. 
Recall that $\tau$ denotes the weight of errors that a stabilizer code can correct, and $t$ denotes the number of faults. The first two definitions are the definitions of an $r$-filter and an ideal decoder, which are the main tools for describing the properties of fault-tolerant gadgets. The definitions are as follows:

\begin{definition}{$r$-filter (AGP version)}
	
	Let $T(S)$ be the coding subspace defined by the stabilizer group $S$. An $r$-filter is the projector onto the subspace spanned by
	\begin{equation}
		\left\{E\left|\bar{\psi}\right\rangle;\left|\bar{\psi}\right\rangle\in T(S),\;\text{the weight of}\;E\;\text{is at most}\; r\right\}.
	\end{equation}
	An $r$-filter in the circuit form is displayed below:
	\begin{equation}
		\includegraphics[width=0.12\textwidth]{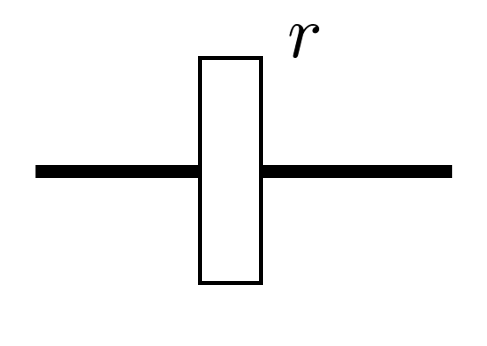} \nonumber
	\end{equation}
	where a thick line represents a block of code.
	\label{def:r_filter_old}%
\end{definition}

\begin{definition}{ideal decoder (AGP version)}
	
	Let $\tau=\lfloor (d-1)/2 \rfloor$ where $d$ is the code distance. An ideal decoder is a gadget which can correct any error of weight up to $\tau$ and map an encoded state $\left|\bar{\psi}\right\rangle$ on a code block to the corresponding (unencoded) state $\left|\psi\right\rangle$ on a single qubit without any fault. An ideal decoder in the circuit form is displayed below:\\
	\begin{equation}
		\includegraphics[width=0.18\textwidth]{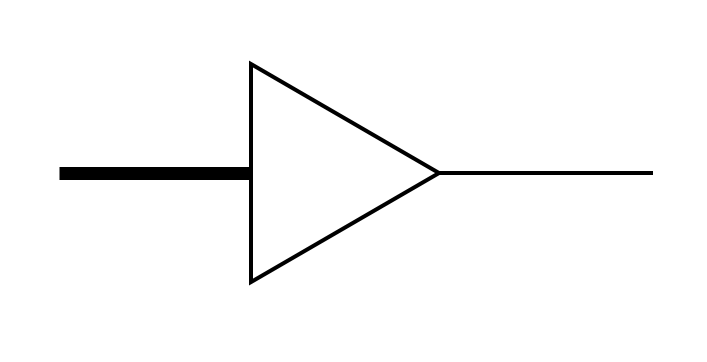} \nonumber
	\end{equation}
	where a thick line represents a block of code, and a thin line represents a single qubit.
	\label{def:ideal_old}%
\end{definition}


The intuition behind the definitions of these two gadgets are as follows: If an input state of an $r$-filter differs from a codeword by an error of weight $\leq r$, then the output state will also differ from the same codeword by an error of weight $\leq r$. However, if the input state has an error of weight $>r$, then the input and output states may correspond to different ideal codewords (i.e., they may be ideally decoded to different unencoded states). An ideal decoder is a gadget which guarantees that the output (unencoded) state and the input (encoded) state will be logically the same whenever the input state has an error of weight no more than $\tau$.

(Note that an $r$-filter is a linear, completely positive map but it is not trace-preserving; an $r$-filter cannot be physically implemented. In the definitions of fault-tolerant gadgets to be described, $r$-filters will be used as mathematical objects to express circuit identities that must be held when the weight of input or output errors and the number of faults are restricted. When each identity holds, both sides of the equation give the same output, including normalization, for the same input state, but the trace of the output might not be one.)

Using the definitions of $r$-filter and ideal decoder, fault-tolerant gate (FTG) gadget and fault-tolerant error correction (FTEC) gadget can be defined as follows:

\begin{definition}{Fault-tolerant gate gadget (AGP version)}
	
	A \emph{gate gadget} with $s$ faults simulating an ideal $m$-qubit gate is represented by the following picture:
	\begin{equation}
		\includegraphics[width=0.11\textwidth]{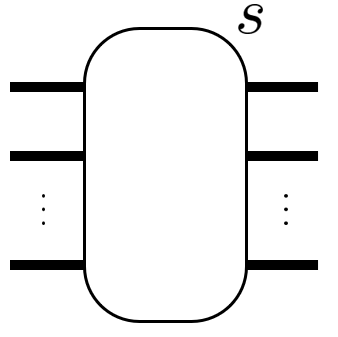} \nonumber
	\end{equation}
	where each thick line represents a block of code. Let $t \leq \lfloor (d-1)/2 \rfloor$. A gate gadget is \emph{$t$-fault tolerant} if it satisfies both of the following properties:
	\begin{enumerate}
		\item Gate correctness property (GCP): whenever $\sum_{i=1}^m r_i+s \leq t$,
		\begin{equation}
			\includegraphics[width=0.43\textwidth]{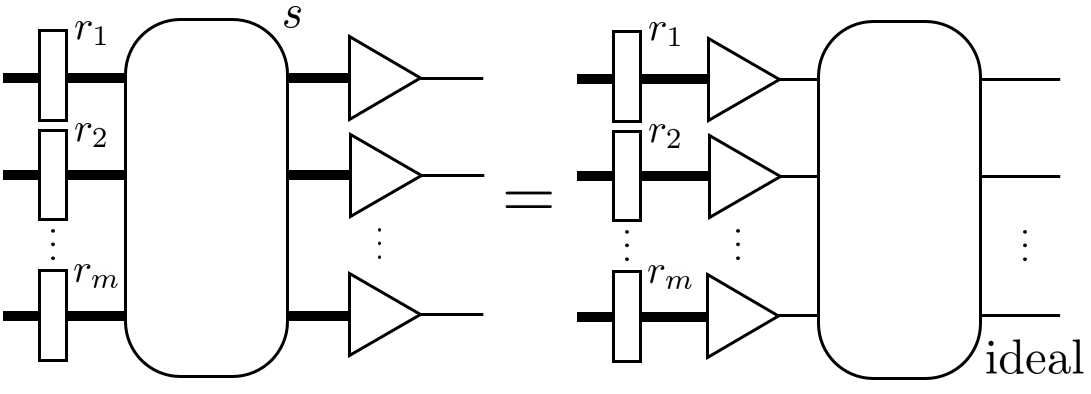} \nonumber
		\end{equation}
		\item Gate error propagation property (GPP): whenever $\sum_{i=1}^m r_i+s \leq t$,
		\begin{equation}
			\includegraphics[width=0.45\textwidth]{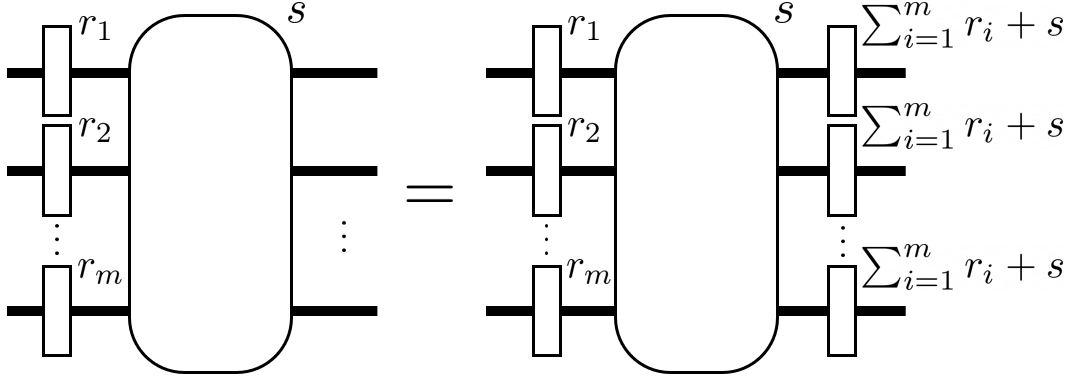} \nonumber
		\end{equation}
	\end{enumerate}
	where the $r$-filter and the ideal decoder are as defined in \cref{def:r_filter_old} and \cref{def:ideal_old}.
	\label{def:FTG_old}%
\end{definition}

\begin{definition}{Fault-tolerant error correction gadget (AGP version)}
	
	An \emph{error correction gadget} with $s$ faults is represented by the following picture:
	\begin{equation}
		\includegraphics[width=0.13\textwidth]{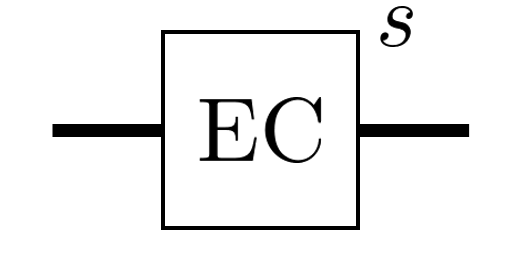} \nonumber
	\end{equation}
	where a thick line represents a block of code. Let $t \leq \lfloor (d-1)/2 \rfloor$. An error correction gadget is \emph{$t$-fault tolerant} if it satisfies both of the following properties:
	\begin{enumerate}
		\item Error correction correctness property (ECCP): whenever $r+s \leq t$,
		\begin{equation}
			\includegraphics[width=0.47\textwidth]{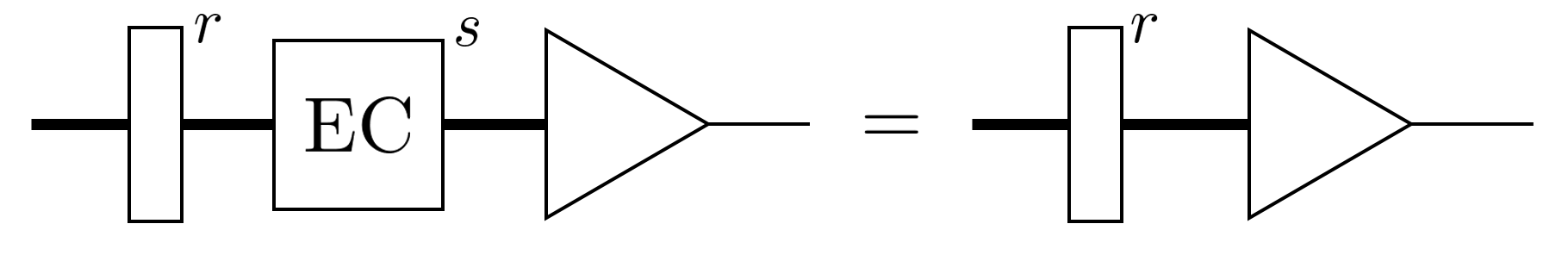} \nonumber
		\end{equation}
		\item Error correction recovery property (ECRP): whenever $s \leq t$,
		\begin{equation}
			\includegraphics[width=0.36\textwidth]{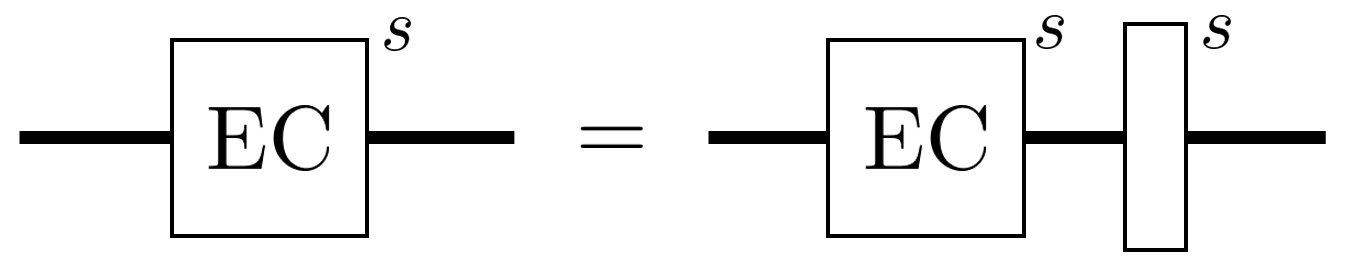} \nonumber
		\end{equation}
	\end{enumerate}
	where the $r$-filter and the ideal decoder are as defined in \cref{def:r_filter_old} and \cref{def:ideal_old}.
	\label{def:FTEC_old}%
\end{definition}


When an FTG gadget satisfies both properties in \cref{def:FTG_old}, it is guaranteed that whenever the weight of the input error plus the number of faults is no more than $t$, (1) the operation of an FTG gadget on an encoded state will be similar to the operation of its corresponding quantum gate on an unencoded state, and (2) an output state of an FTG gadget will have an error of weight no more than $t$ (which is also $\leq \tau$). Meanwhile, the two properties of an FTEC gadget in \cref{def:FTEC_old} guarantee that (1) the output and the input states of an FTEC gadget are logically the same whenever the weight of the input error plus the number of faults is no more than $t$, and (2) the weight of the output error of an FTEC gadget is no more than the number of faults whenever the number of faults is at most $t$, regardless of the weight of the input error.


Fault-tolerant state preparation (FTP) gadget and fault-tolerant (non-destructive) measurement (FTM) gadget, which are special cases of FTG gadget, can be defined as follows:

\begin{definition}{Fault-tolerant state preparation gadget (AGP version)}
	
	A \emph{state preparation gadget} with $s$ faults is represented by the following picture:
	\begin{equation}
		\includegraphics[width=0.11\textwidth]{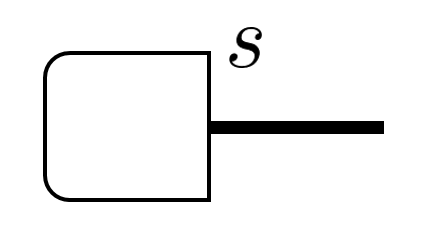} \nonumber
	\end{equation}
	where a thick line represents a block of code. Let $t \leq \lfloor (d-1)/2 \rfloor$. A state preparation gadget is \emph{$t$-fault tolerant} if it satisfies both of the following properties:
	\begin{enumerate}
		\item Preparation correctness property (PCP): whenever $s \leq t$,
		\begin{equation}
			\includegraphics[width=0.35\textwidth]{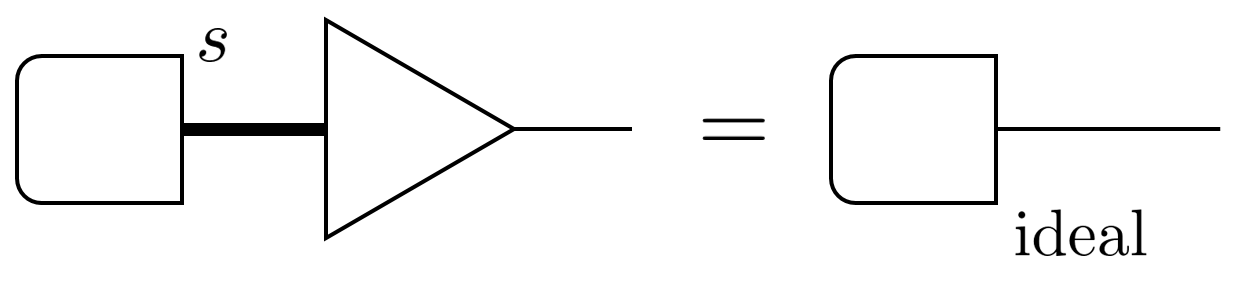} \nonumber
		\end{equation}
		\item Preparation error propagation property (PPP): whenever $s \leq t$,
		\begin{equation}
			\includegraphics[width=0.32\textwidth]{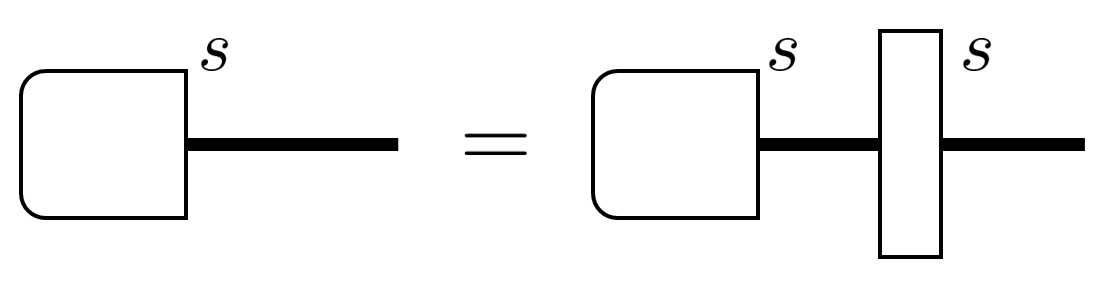} \nonumber
		\end{equation}
	\end{enumerate}
	where the $r$-filter and the ideal decoder are defined as in \cref{def:r_filter_old} and \cref{def:ideal_old}.
	\label{def:FTP_old}%
\end{definition}

\begin{definition}{Fault-tolerant (non-destructive) measurement gadget (AGP version)}
	
	A \emph{(non-destructive) measurement gadget} with $s$ faults is represented by the following picture:
	\begin{equation}
		\includegraphics[width=0.14\textwidth]{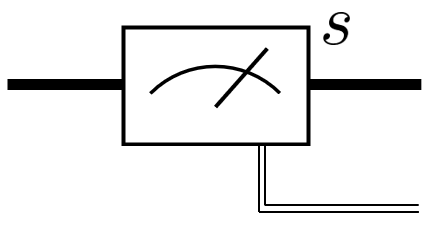} \nonumber
	\end{equation}
	where a thick line represents a block of code. Let $t \leq \lfloor (d-1)/2 \rfloor$. A (non-destructive) measurement gadget is \emph{$t$-fault tolerant} if it satisfies both of the following properties:
	\begin{enumerate}
		\item Measurement correctness property (MCP): whenever $r+s \leq t$,
		\begin{equation}
			\includegraphics[width=0.48\textwidth]{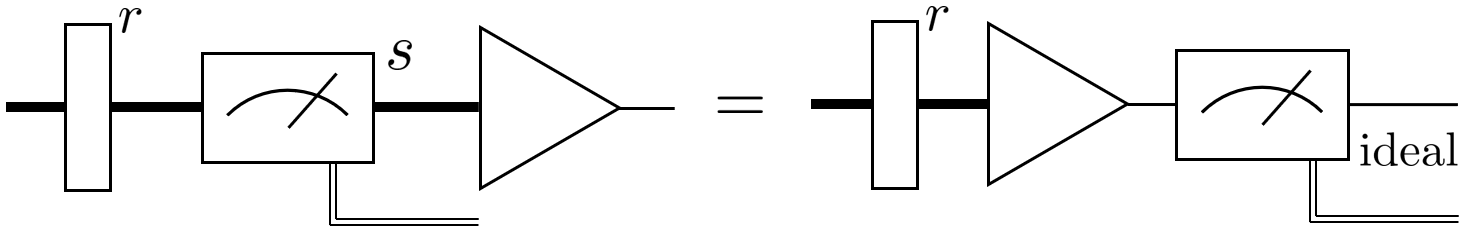} \nonumber
		\end{equation}
		\item Measurement error propagation property (MPP): whenever $r+s \leq t$,
		\begin{equation}
			\includegraphics[width=0.42\textwidth]{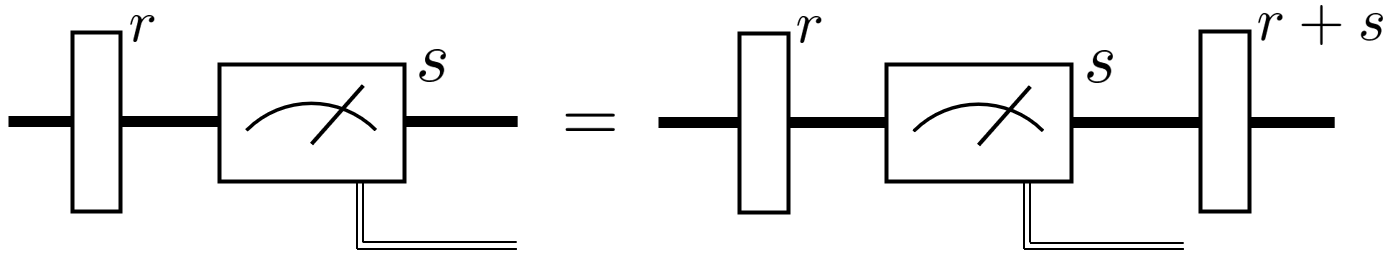} \nonumber
		\end{equation}
	\end{enumerate}
	where the $r$-filter and the ideal decoder are defined as in \cref{def:r_filter_old} and \cref{def:ideal_old}.
	\label{def:FTM_old}%
\end{definition}

The meanings of the properties of FTP and FTM gadgets are similar to the meanings of the properties of an FTG gadget as previously explained. 


From \cref{def:FTG_old,def:FTEC_old,def:FTP_old,def:FTM_old}, we can see that an action of a fault-tolerant gadget is guaranteed in the circumstance that the weight of the input error $r$ and the number of faults occurred in the gadget $s$ satisfy some condition. Now, a question arises: what will happen if the input error has weight greater than $\tau=\lfloor(d-1)/2\rfloor$, which is the weight of errors that a code can correct? By \cref{def:distinguishable}, we know that if a fault set $\mathcal{F}_t$ is distinguishable, possible errors arising from up to $t$ faults in an EC protocol (where $t \leq \lfloor(d-1)/2\rfloor$) can be distinguished using their corresponding syndromes or cumulative flag vectors, regardless of the error weights. Would it be more natural if the definitions of fault-tolerant gadgets depend on the \emph{number of faults} related to an input error, instead of the \emph{weight} of an input error? In this work, we will try to modify the definitions of fault-tolerant gadgets and rewrite them using the notion of distinguishable fault set.


To modify the definitions of fault-tolerant gadgets proposed in \cite{AGP06},
first, let us define distinguishable error set as follows:

\begin{definition}{Distinguishable error set}
	
	Let $\mathcal{F}_r$ be a distinguishable fault set, and let $\mathcal{F}_r|_{\vec{\mathbf{f}}=0}$ be a subset of $\mathcal{F}_r$ defined as follows:
	\begin{equation}
		\mathcal{F}_r|_{\vec{\mathbf{f}}=0} = \{\Lambda\in\mathcal{F}_r;\;\vec{\mathbf{f}}\;\text{of}\;\Lambda\;\text{is zero}\}.
	\end{equation}
	A \emph{distinguishable error set} $\mathcal{E}_r$ corresponding to $\mathcal{F}_r$ is,
	\begin{equation}
		\mathcal{E}_r = \{\mathbf{E}\;\text{of}\;\Lambda\in\mathcal{F}_r|_{\vec{\mathbf{f}}=0}\}.
	\end{equation}
	\label{def:dist_err}%
	\vspace*{-0.6cm}
\end{definition}


If $\mathcal{F}_r$ is distinguishable, $\mathcal{F}_r|_{\vec{\mathbf{f}}=0}$ is also distinguishable since all pairs of fault combinations in $\mathcal{F}_r|_{\vec{\mathbf{f}}=0}$ also satisfy the conditions in \cref{def:distinguishable}. Moreover, because all fault combinations in $\mathcal{F}_r|_{\vec{\mathbf{f}}=0}$ correspond to the zero cumulative flag vector, we find that for any pair of errors in $\mathcal{E}_r$, the errors either have different syndromes or are logically equivalent (up to a multiplication of a stabilizer). For this reason, we can safely say that $\mathcal{E}_r$ is a set of correctable errors.

Because the set of correctable errors is now expanded, the definitions of $r$-filter and ideal decoder can be revised as follows:


\begin{definition}{$r$-filter (revised version)}
	
	Let $T(S)$ be the coding subspace defined by the stabilizer group $S$, and let $\mathcal{E}_r$ be the distinguishable error set corresponding to a distinguishable fault set $\mathcal{F}_r$. An $r$-filter is the projector onto subspace spanned by
	\begin{equation}
		\left\{E\left|\bar{\psi}\right\rangle;\left|\bar{\psi}\right\rangle\in T(S),E\in\mathcal{E}_r\right\}.
	\end{equation}
	An $r$-filter in the circuit form is similar to the one illustrated in \cref{def:r_filter_old}.
	\label{def:r_filter_new}%
\end{definition}

\begin{definition}{ideal decoder (revised version)}
	
	Let $\mathcal{E}_t$ be the distinguishable error set corresponding to a distinguishable fault set $\mathcal{F}_t$, where $t\leq\lfloor (d-1)/2 \rfloor$ and $d$ is the code distance. An ideal decoder is a gadget which can correct any error in $\mathcal{E}_t$ and map an encoded state $\left|\bar{\psi}\right\rangle$ on a code block to the corresponding (unencoded) state $\left|\psi\right\rangle$ on a single qubit without any faults. An ideal decoder in the circuit form is similar to the one illustrated in \cref{def:ideal_old}.
	\label{def:ideal_new}%
\end{definition}


Using the revised definitions of $r$-filter and ideal decoder, fault-tolerant gadgets can be defined as follows:

\begin{definition}{Fault-tolerant gadgets (revised version)}
	
	Let $t \leq \lfloor (d-1)/2 \rfloor$. Fault-tolerant gadgets are defined as follows:
	\begin{enumerate}
		\item A \emph{gate gadget} is \emph{$t$-fault tolerant} if it satisfies both of the properties in \cref{def:FTG_old}, except that $r$-filter and ideal decoder are defined as in \cref{def:r_filter_new} and \cref{def:ideal_new}.
		\item An \emph{error correction gadget} is \emph{$t$-fault tolerant} if it satisfies both of the properties in \cref{def:FTEC_old}, except that $r$-filter and ideal decoder are defined as in \cref{def:r_filter_new} and \cref{def:ideal_new}.
		\item A \emph{state preparation gadget} is \emph{$t$-fault tolerant} if it satisfies both of the properties in \cref{def:FTP_old}, except that $r$-filter and ideal decoder are defined as in \cref{def:r_filter_new} and \cref{def:ideal_new}.
		\item A \emph{(non-destructive) measurement gadget} is \emph{$t$-fault tolerant} if it satisfies both of the properties in \cref{def:FTM_old}, except that $r$-filter and ideal decoder are defined as in \cref{def:r_filter_new} and \cref{def:ideal_new}.
	\end{enumerate}
	\label{def:FT_gadget_new}%
\end{definition}



The revised definitions of fault-tolerant gadgets in the circuit form may look very similar to the old definitions proposed in \cite{AGP06}, but the meanings are different: the conditions in the revised definitions depend on the number of faults which can cause an input or an output error, instead of the weight of an input or an output error. Roughly speaking, this means that (1) a fault-tolerant gadget is allowed to produce an output error of weight greater than $\tau$ (where $\tau=\lfloor(d-1)/2\rfloor$), and (2) a fault-tolerant gadget can work perfectly even though the input error has weight greater than $\tau$, as long as the input or the output error is similar to an error caused by no more than $t\leq \tau$ faults. Because the revised definitions of $r$-filter and ideal decoder are more general than the old definitions, we expect that a gadget that satisfies one of the old definitions of fault-tolerant gadgets (\cref{def:FTG_old,def:FTEC_old,def:FTP_old,def:FTM_old}) will also satisfy the new definitions in \cref{def:FT_gadget_new}. Note that the revised definitions are based on the fact that a fault set relevant to a gadget is distinguishable, that is, whether the gadgets are fault tolerant depends on the way they are designed.


In a special case where the code being used is a CSS code and possible $X$-type and $Z$-type errors have the same form, the definition of distinguishable error set can be further extended as follows:
\begin{definition}{Distinguishable error set (for a special family of CSS codes)}
	
	Let $\mathcal{F}_r$ be a distinguishable fault set, and let $\mathcal{F}_r|_{\vec{\mathbf{f}}=0}$ be a subset of $\mathcal{F}_r$ defined as follows:
	\begin{equation}
		\mathcal{F}_r|_{\vec{\mathbf{f}}=0} = \{\Lambda\in\mathcal{F}_r;\;\vec{\mathbf{f}}\;\text{of}\;\Lambda\;\text{is zero}\}.
	\end{equation}
	A \emph{distinguishable-$X$ error set} $\mathcal{E}_r^x$ and a \emph{distinguishable-$Z$ error set} $\mathcal{E}_r^z$ corresponding to $\mathcal{F}_r$ are,
	\begin{align}
		\mathcal{E}_r^x &= \{\mathbf{E}\;\text{of}\;\Lambda\in\mathcal{F}_r|_{\vec{\mathbf{f}}=0};\mathbf{E}\;\text{is an}\;X\text{-type error}\},\\
		\mathcal{E}_r^z &= \{\mathbf{E}\;\text{of}\;\Lambda\in\mathcal{F}_r|_{\vec{\mathbf{f}}=0};\mathbf{E}\;\text{is a}\;Z\text{-type error}\}.
	\end{align}
	For a CSS code in which the elements of $\mathcal{E}_r^x$ and $\mathcal{E}_r^z$ have a similar form, a \emph{distinguishable error set} $\mathcal{E}_r$ corresponding to $\mathcal{F}_r$ is defined as follows:
	\begin{equation}
		\mathcal{E}_r = \{E_x\cdot E_z; E_x \in \mathcal{E}_r^x, E_z \in \mathcal{E}_r^z\}.
	\end{equation}
	\label{def:dist_err_CSS}%
	\vspace*{-0.6cm}
\end{definition}

Since a CSS code can detect and correct $X$-type and $Z$-type errors separately, here we modify the definition of distinguishable error set for a CSS code in which $\mathcal{E}_r^x$ and $\mathcal{E}_r^z$ are in the same form so that more $Y$-type errors are included in $\mathcal{E}_r$. For example, suppose that $t=2$, each of $XXXX$ and $ZZZZ$ can be caused by 2 faults, and $YYYY$ can be caused by 4 faults. By the old definition (\cref{def:dist_err}), we will say that $XXXX$ and $ZZZZ$ are in $\mathcal{E}_2$, and $YYYY$ is in $\mathcal{E}_4$ but not in $\mathcal{E}_2$. In contrast, by \cref{def:dist_err_CSS}, we will say that $XXXX$, $YYYY$, and $ZZZZ$ are all in $\mathcal{E}_2$. This modification will give more flexibility when developing a fault-tolerant gadget for this special kind of CSS codes, e.g., a transversal $S$ gate which produces an output error $YYYY$ from an input error $XXXX$ still satisfies the properties in \cref{def:FT_gadget_new} when a distinguishable fault set is defined as in \cref{def:dist_err_CSS}.


When performing a fault-tolerant quantum computation, FTEC gadgets will be used repeatedly in order to reduce the error accumulation during the computation. Normally, FTEC gadgets will be placed before and after other gadgets (FTG, FTP, or FTM gadgets). A group of gadgets including an FTG gadget, leading EC gadgets (the FTEC gadgets before the FTG gadget), and trailing EC gadgets (FTEC gadgets after the FTG gadget) as shown below is called an \emph{extended rectangle at level 1} or \emph{1-exRec}: 
\begin{equation}
	\includegraphics[width=0.2\textwidth]{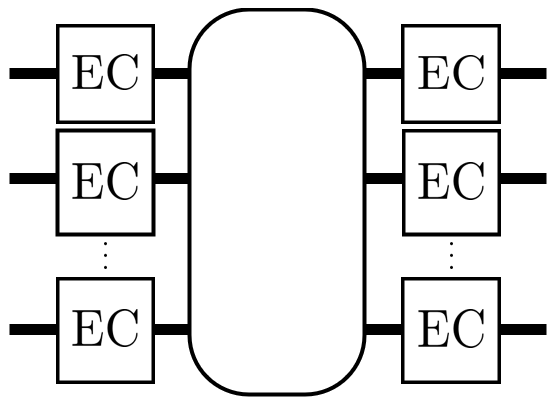} \nonumber
\end{equation}
(A 1-exRec of an FTP or FTM gadget is defined similarly to a 1-exRec of an FTG gadget, except that there is no leading gadget in an FTP gadget.) We say that a 1-exRec is \emph{good} if the total number of faults in a 1-exRec is no more than $t$. Using the revised definitions of fault-tolerant gadgets in \cref{def:FT_gadget_new}, a revised version of the exRec-Cor lemma at level 1, originally proposed in \cite{AGP06}, can be obtained:
\begin{lemma}{ExRec-Cor lemma at level 1 (revised version)}
	
	Suppose that all gadgets are $t$-fault tolerant according to \cref{def:FT_gadget_new}. If a 1-exRec is \emph{good} (i.e., a 1-exRec has no more than $t$ faults), then the 1-exRec is \emph{correct}; that is, the following condition is satisfied:
	\begin{equation}
		\includegraphics[width=0.48\textwidth]{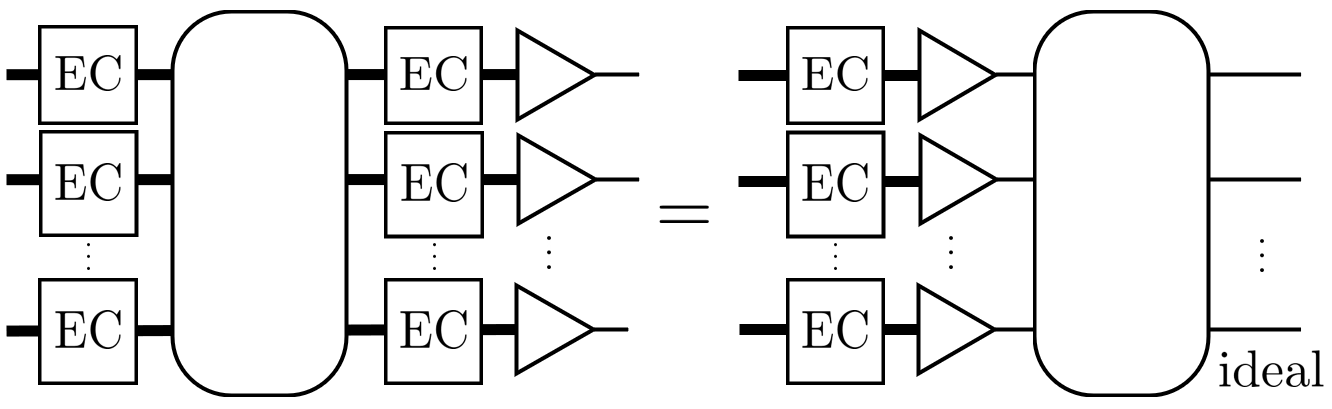} \nonumber
	\end{equation}
	where the $r$-filter and the ideal decoder are defined as in \cref{def:r_filter_new,def:ideal_new}.
	\label{lem:exRec-cor}%
\end{lemma}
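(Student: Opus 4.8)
\emph{Proof proposal.} The plan is to reuse the diagrammatic argument of Aliferis, Gottesman, and Preskill \cite{AGP06}, with every weight-based $r$-filter and ideal decoder replaced by the revised ones of \cref{def:r_filter_new,def:ideal_new}. Since the lemma \emph{assumes} that all gadgets satisfy the properties of \cref{def:FT_gadget_new}, nothing about the internal structure of the gadgets needs to be re-derived; the only genuinely new ingredient is that the revised filters compose additively. Concretely, I would first record the following closure fact: if $\mathbf{E}_1\in\mathcal{E}_a$ comes from some $\Lambda_1\in\mathcal{F}_a|_{\vec{\mathbf{f}}=0}$ and $\mathbf{E}_2\in\mathcal{E}_b$ comes from some $\Lambda_2\in\mathcal{F}_b|_{\vec{\mathbf{f}}=0}$, then $\Lambda_1\cup\Lambda_2\in\mathcal{F}_{a+b}$ has cumulative flag vector $\vec{\mathbf{f}}_1+\vec{\mathbf{f}}_2=0$ and combined error $\mathbf{E}_1\cdot\mathbf{E}_2$, so $\mathbf{E}_1\cdot\mathbf{E}_2\in\mathcal{E}_{a+b}$ — the same union-of-fault-combinations observation used in the proof of \cref{prop:2t}. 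This is exactly what legitimizes merging an $a$-filter on a wire with $b$ downstream faults into an $(a+b)$-filter, just as in the weight-based setting.

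Next I would partition the faults. As the 1-exRec is good, I can write $s_1+s_2+s_3\le t$, where $s_1,s_2,s_3$ are the numbers of faults occurring in the leading EC gadget(s), the FTG gadget, and the trailing EC gadget(s) respectively (for a two-block gadget such as the transversal CNOT the bookkeeping is done per code block; for an FTP or FTM exRec there is no leading gadget and the chain below is correspondingly shorter). Reading operators right-to-left, the ideally-decoded 1-exRec is $D\circ\mathrm{EC}_{\mathrm{tr}}\circ G\circ\mathrm{EC}_{\mathrm{ld}}$, and any input filter carried by the identity in the lemma sits to the far left and plays no role. I would then transform it in the following steps: (i) by ECRP of \cref{def:FTEC_old}, read with the revised filters, the output of $\mathrm{EC}_{\mathrm{ld}}$ always lies in $\mathcal{E}_{s_1}$, so I may insert an $s_1$-filter after $\mathrm{EC}_{\mathrm{ld}}$; (ii) since $s_1+s_2\le t$, GPP of \cref{def:FTG_old} lets me insert an $(s_1+s_2)$-filter after $G$; (iii) since $(s_1+s_2)+s_3\le t$, ECCP applied to $\mathrm{EC}_{\mathrm{tr}}$ with that input filter deletes the trailing EC, replacing $D\circ\mathrm{EC}_{\mathrm{tr}}\circ\Pi_{s_1+s_2}$ by $D\circ\Pi_{s_1+s_2}$; (iv) GPP again makes the $(s_1+s_2)$-filter redundant; (v) since $s_1+s_2\le t$, GCP pulls the ideal logical gate out in front of $D$, leaving $\bar G\circ D\circ\Pi_{s_1}\circ\mathrm{EC}_{\mathrm{ld}}$; (vi) ECRP once more removes the now-redundant $s_1$-filter, yielding $\bar G\circ D\circ\mathrm{EC}_{\mathrm{ld}}$, which is precisely the right-hand side of the identity in the lemma (the ideal logical gate applied after an ideal decoder applied to the output of the leading EC gadget). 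Since $(s_1,s_2,s_3)$ was an arbitrary partition compatible with a good exRec, the identity holds in all cases.

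The part that needs the most care — and the only place the distinguishable-fault-set machinery actively enters — is the step underlying (ii): verifying that the error delivered to the trailing EC, namely the $\mathcal{E}_{s_1}$ error left by the leading EC transformed by the transversal gate gadget and then multiplied by the $\le s_2$-fault error the gadget introduces, indeed lies in $\mathcal{E}_{s_1+s_2}$. For an $X$- or $Z$-type error this follows from the closure fact above together with the fact that a transversal gate permutes and relabels $X$- and $Z$-type errors without enlarging them; for the transversal $S$ gate one additionally needs the $Y$-type enlargement of \cref{def:dist_err_CSS}, so that the image of an $X$-type distinguishable error is still a distinguishable error. One also has to note the trivial-but-necessary point that an error ``worth $\le s$ faults'' produced by ECRP can be taken with zero cumulative flag vector, which is built into the revised ECRP via \cref{def:r_filter_new}. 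Once these are in place, the remaining manipulations are verbatim copies of the filter/ideal-decoder identities of \cite{AGP06} and require no new argument; the FTP, FTM, and multi-block versions follow from the same chain with the obvious bookkeeping changes.
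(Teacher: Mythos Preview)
Your proposal is correct and follows essentially the same approach as the paper: partition the faults as $s_1+s_2+s_3\le t$, use ECRP and GPP to insert filters after the leading EC and the gate gadget, then use ECCP, GCP, and the definition of the ideal decoder to collapse the chain to $\bar G\circ D\circ\mathrm{EC}_{\mathrm{ld}}$. The paper's version is simply more terse, compressing your steps (i)--(vi) into two diagrammatic moves.

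One minor remark: your closing paragraph about transversal gates and the $Y$-type enlargement of \cref{def:dist_err_CSS} is extraneous \emph{for this lemma}. The lemma assumes the gate gadget satisfies GPP in the sense of \cref{def:FT_gadget_new}, so step (ii) is immediate from that hypothesis; the verification that specific transversal gates actually satisfy GPP with the revised filters is a separate obligation handled later in the paper (\cref{subsec:other_FT_gadgets}), not part of the proof of the lemma itself.
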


\begin{proof}
	Here we will focus only on the case that a gate gadget simulates a single-qubit gate. The proofs for the case of multiple-qubit gate and other gadgets are similar. Suppose that the leading EC gadget, the gate gadget, and the trailing EC gadget in an exRec have $s_1,s_2,$ and $s_3$ faults where $s_1+s_2+s_3 \leq t$. We will show that the following equation holds:
	\begin{equation}
		\includegraphics[width=0.47\textwidth]{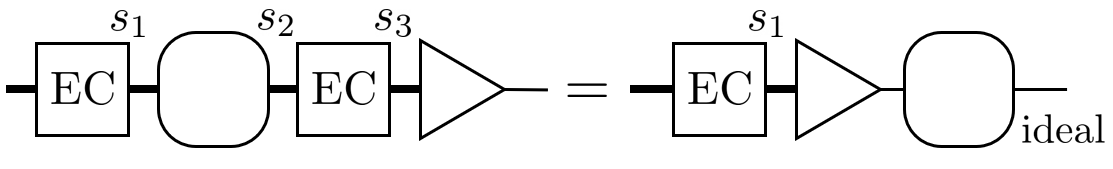} \label{eq:proof_exRec}
	\end{equation}
	Because the gate gadget satisfies GPP and the EC gadgets satisfy ECRP, the left-hand side of \cref{eq:proof_exRec} is
	\begin{equation}
		\includegraphics[width=0.42\textwidth]{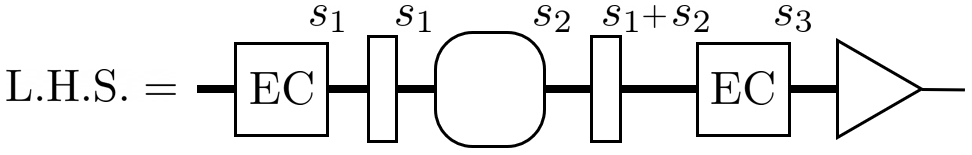} \nonumber
	\end{equation}
	Using GCP, ECCP, and the fact that an ideal decoder can correct any error in $\mathcal{E}_t$, we obtain the following:
	\begin{equation}
		\includegraphics[width=0.37\textwidth]{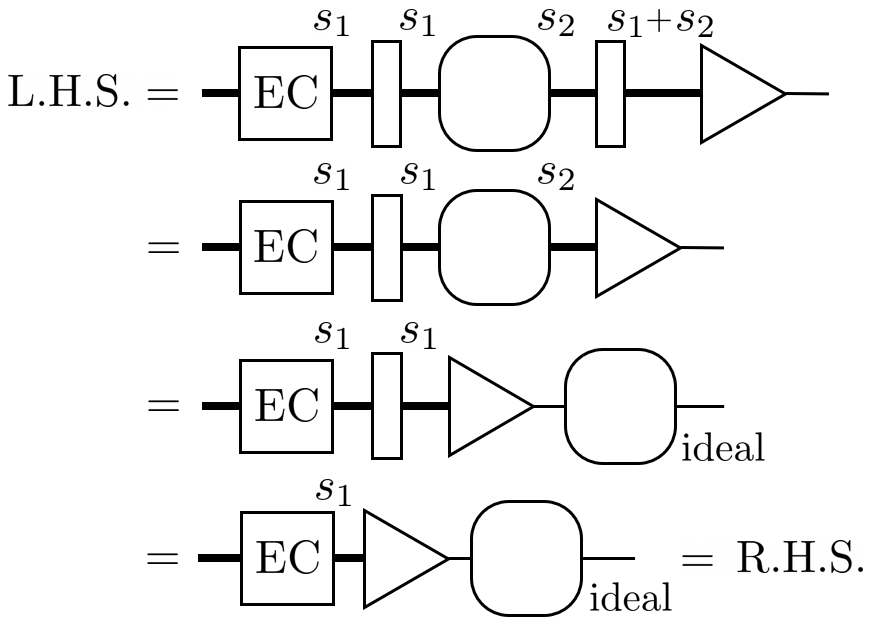} \nonumber
	\end{equation}
	\vspace*{-0.3cm}
\end{proof}

(Note that both sides of the equation in \cref{lem:exRec-cor} are trace-preserving, completely positive maps, even though $r$-filters introduced during the proof are not trace-preserving. This is possible since the total number of faults in a 1-exRec is restricted and all gadgets satisfy \cref{def:FT_gadget_new}.)

The revised version of the exRec-cor lemma developed in this work is very similar to the original version in \cite{AGP06}, even though the $r$-filter, the ideal decoder, and the fault-tolerant gadgets are redefined. 
The exRec-Cor lemma is one of the main ingredients for the proofs of other lemmas and theorems in \cite{AGP06}. As a result, other lemmas and theorems developed in \cite{AGP06} are also applicable to our case, including their version of the \emph{threshold theorem} (the proofs of revised versions of the lemmas and theorems are similar to the proofs presented in \cite{AGP06}, except that \cref{lem:exRec-cor} is used instead of the original exRec-Cor lemma).
This means that fault-tolerant gadgets satisfying \cref{def:FT_gadget_new} can be used to simulate any quantum circuit, and the logical error rate can be made arbitrarily small if the physical error rate is below some constant threshold value. The main advantage of the revised definitions of fault-tolerant gadgets over the conventional definitions is that high-weight errors are allowed as long as they arise from a small number of faults. These revised definitions can give us more flexibility when developing fault-tolerant protocols.

\subsection{Fault-tolerant error correction protocol}
\label{subsec:FTEC_ana}



So far, we have shown that it is possible to redefine $r$-filter and ideal decoder as in \cref{def:r_filter_new,def:ideal_new} using the notions of distinguishable fault set (\cref{def:distinguishable}) and distinguishable error set (\cref{def:dist_err} or \cref{def:dist_err_CSS}), and redefine fault-tolerant gadgets as in \cref{def:FT_gadget_new}. These revised definitions give us more flexibility when designing fault-tolerant protocols, while ensuring that the simulated circuit constructed from these protocols still work fault-tolerantly. In this section, we will construct an FTEC protocol for a capped color code in H form of any distance in which its fault set is distinguishable. Note that having only the FTEC protocol is not enough for general fault-tolerant quantum computation, so we will also construct other fault-tolerant protocols which share the same distinguishable fault set with the FTEC protocol for a particular code in \cref{subsec:FTM_ana,subsec:other_FT_gadgets,subsec:FT_T_gate}.

\begin{figure*}[htbp]
	\centering
	\includegraphics[width=0.85\textwidth]{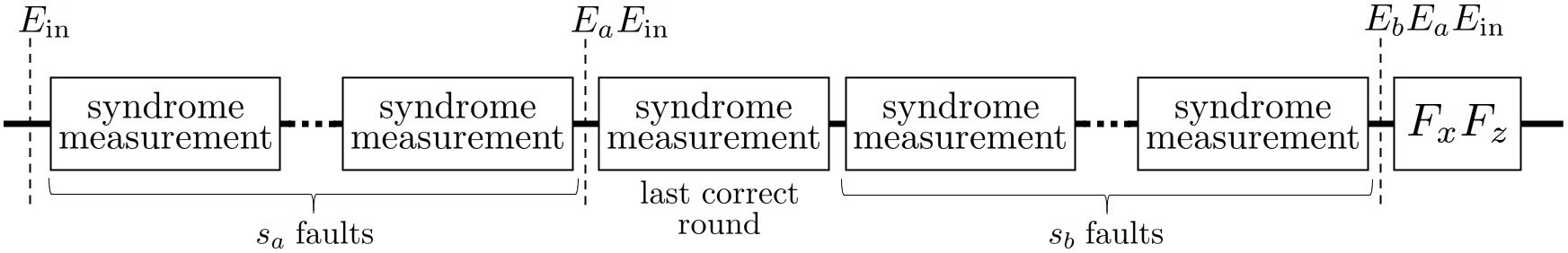}
	\captionsetup{justification=centering}
	\caption{Fault-tolerant error correction protocol for a capped color code.}
	\label{fig:err_in_FTEC}
\end{figure*}


To construct an FTEC protocol for a capped color code in H form obtained from $CCC(d)$, we will first assume that the fault set $\mathcal{F}_t$ (where $t=(d-1)/2$) corresponding to the circuits for measuring the generators of the code is distinguishable, and the orderings of gates in the circuits for each pair of $X$-type and $Z$-type generators are the same. From the fact that $\mathcal{F}_t$ is distinguishable, we can build a list of all possible fault combinations and their corresponding combined error, syndrome of the combined error, and cumulative flag vector. Note that if several fault combinations have the same syndrome and cumulative flag vector, their combined errors are all logically equivalent (from \cref{def:distinguishable}).

Let $\vec{\mathbf{s}}=(\vec{\mathbf{s}}_x|\vec{\mathbf{s}}_z)$ be the syndrome obtained from the measurements of $X$-type and $Z$-type generators, and let $\vec{\mathbf{f}}=(\vec{\mathbf{f}}_x|\vec{\mathbf{f}}_z)$ be the cumulative flag vector corresponding to the flag outcomes from the circuits for measuring $X$-type and $Z$-type generators, where $\vec{\mathbf{f}}$ is accumulated from the first round until the current round. We define the \emph{outcome bundle} $(\vec{\mathbf{s}},\vec{\mathbf{f}})$ to be the collection of $\vec{\mathbf{s}}$ and $\vec{\mathbf{f}}$ obtained during a single round of full syndrome measurement. An FTEC protocol for the capped color code in H form is as follows:
\\

\noindent\textbf{FTEC protocol for a capped color code in H form}

During a single round of full syndrome measurement, measure the generators in the following order: measure $v_i^x$'s, then $f_i^x$'s, then $v_i^z$'s, then $f_i^z$'s. Perform full syndrome measurements until the outcome bundles $(\vec{\mathbf{s}},\vec{\mathbf{f}})$ are repeated $t+1$ times in a row. Afterwards, do the following:

\begin{enumerate}
	\item Determine an EC operator $F_x$ using the list of possible fault combinations as follows: 
	\begin{enumerate}
		\item If there is a fault combination on the list whose syndrome and cumulative flag vector are $(\vec{0}|\vec{\mathbf{s}}_z)$ and $(\vec{\mathbf{f}}_x|\vec{0})$, then $F_x$ is the combined error of such a fault combination. (If there are more than one fault combination corresponding to $(\vec{0}|\vec{\mathbf{s}}_z)$ and $(\vec{\mathbf{f}}_x|\vec{0})$, a combined error of any of such fault combinations will work.) \label{step:EC_1a}
		\item If none of the fault combinations on the list corresponds to $(\vec{0}|\vec{\mathbf{s}}_z)$ and $(\vec{\mathbf{f}}_x|\vec{0})$, then $F_x$ can be any Pauli $X$ operator whose syndrome is $(\vec{0}|\vec{\mathbf{s}}_z)$. \label{step:EC_1b}
	\end{enumerate}
	\item Determine an EC operator $F_z$ using the list of possible fault combinations: 
	\begin{enumerate}
		\item If there is a fault combination on the list whose syndrome and cumulative flag vector are $(\vec{\mathbf{s}}_x|\vec{0})$ and $(\vec{0}|\vec{\mathbf{f}}_z)$, then $F_z$ is the combined error of such a fault combination. (If there are more than one fault combination corresponding to $(\vec{\mathbf{s}}_x|\vec{0})$ and $(\vec{0}|\vec{\mathbf{f}}_z)$, a combined error of any of such fault combinations will work.) \label{step:EC_2a}
		\item If none of the fault combinations on the list corresponds to $(\vec{\mathbf{s}}_x|\vec{0})$ and $(\vec{0}|\vec{\mathbf{f}}_z)$, then $F_z$ can be any Pauli $Z$ operator whose syndrome is $(\vec{\mathbf{s}}_x|\vec{0})$. \label{step:EC_2b}
	\end{enumerate}
	
	\item Apply $F_x\cdot F_z$ to the data qubits to perform error correction. \label{step:EC_3}
\end{enumerate}	


To verify that the above EC protocol is fault tolerant according to the revised definition (\cref{def:FT_gadget_new}), we have to show that the two properties in \cref{def:FTEC_old} are satisfied when the $r$-filter and the ideal decoder are defined as in \cref{def:r_filter_new,def:ideal_new} (instead of \cref{def:r_filter_old,def:ideal_old}) and the distinguishable error set is defined as in \cref{def:dist_err_CSS} (the circuits for $X$-type and $Z$-type generators of the capped color code in H form use similar gate orderings). Here we will assume that there are no more than $t$ faults during the whole protocol. Therefore, the condition that the outcome bundles are repeated $t+1$ times in a row will be satisfied within $(t+1)^2$ rounds. We will divide the analysis into two cases: (1) the case that the last round of the full syndrome measurement has no faults, and (2) the case that the last round has some faults.

(1) Because the outcome bundles are repeated $t+1$ times and the last round of the full syndrome measurement has no faults, we know that the outcome bundle of the last round is correct and corresponds to the data error before the error correction in Step \ref{step:EC_3}. Let $E_\mathrm{in}$ be the input error and $E_a$ be the combined error of a fault combination arising from the $s_a$ faults where $s_a \leq t$. The error on the data qubits before Step \ref{step:EC_3} is $E_a\cdot E_\mathrm{in}$. First, consider the case that $E_\mathrm{in}$ is in $\mathcal{E}_r$ (defined in \cref{def:dist_err_CSS}) where $r+s_a \leq t$. Both $E_\mathrm{in}$ and $E_a$ can be separated into $X$ and $Z$ parts. We find that the $X$ part of $E_\mathrm{in}$ is in  $\mathcal{E}_r^x$ (which is derived from $\mathcal{F}_r|_{\vec{\mathbf{f}}=0}$). Thus, the $X$ part of $E_a\cdot E_\mathrm{in}$ is the combined error of $X$ type of some fault combination in $\mathcal{F}_{r+s_a}$. Similarly, the $Z$ part of $E_\mathrm{in}$ is in  $\mathcal{E}_r^z$, and the $Z$ part of $E_a\cdot E_\mathrm{in}$ is the combined error of $Z$ type of some fault combination in $\mathcal{F}_{r+s_a}$. By picking EC operators $F_x$ and $F_z$ as in Steps \ref{step:EC_1a} and \ref{step:EC_2a}, Step \ref{step:EC_3} can completely remove the data error. Thus, both ECCP and ECRP in \cref{def:FTEC_old} are satisfied. On the other hand, if $E_\mathrm{in}$ is not in $\mathcal{E}_r$ where $r+s_a \leq t$, the $X$ part or the $Z$ part of $E_a\cdot E_\mathrm{in}$ might not correspond to any fault combination in $\mathcal{F}_t$. In this case, $F_x$ or $F_z$ will be picked as in Step \ref{step:EC_1b} or \ref{step:EC_2b}. Because the $X$ part (or the $Z$ part) of $E_a\cdot E_\mathrm{in}$ and $F_x$ (or $F_z$) have the same syndrome no matter how we pick $F_x$ (or $F_z$), the output state after Step \ref{step:EC_3} is a valid codeword, but it may or may not be logically the same as the input state. In any cases, the output state can pass the $s_a$-filter, so the ECRP in \cref{def:FTEC_old} is satisfied.

(2) In the case that the last round of the full syndrome measurement has some faults, the outcome bundle of the last round may not correspond to the data error before the error correction in Step \ref{step:EC_3}. Fortunately, since the outcome bundles are repeated $t+1$ times in a row and there are no more than $t$ faults during the whole protocol, we know that at least one round in the last $t+1$ rounds must be correct, and the outcome bundle of the last round must correspond to the data error right before the last correct round. Let $E_\mathrm{in}$ be the input error, $E_a$ be the combined error arising from $s_a$ faults which happen before the last correct round, and $E_b$ be the combined error arising from $s_b$ faults which happen after the last correct round, where the total number of faults is $s = s_a+s_b \leq t$ (see \cref{fig:err_in_FTEC}). First, consider the case that $E_\mathrm{in}$ is in $\mathcal{E}_r$ where $r+s \leq t$. By an analysis similar to that presented in (1), we find that both $X$ and $Z$ parts of $E_a\cdot E_\mathrm{in}$ are the combined errors of some fault combinations in $\mathcal{F}_{r+s_a}$, and $F_x$ and $F_z$ from Steps \ref{step:EC_1a} and \ref{step:EC_2a} can completely remove $E_a\cdot E_\mathrm{in}$. Thus, the output data error after Step \ref{step:EC_3} is $E_b$. Since $s_b \leq t$ and the cumulative flag vectors do not change after the last correct round, we find that $E_b$ is the combined error of some fault combination arising from $s_b$ faults whose cumulative flag vector is zero; that is, $E_b$ is in $\mathcal{E}_{s_b}$ where $s_b \leq t$. For this reason, $E_b$ can pass the $s$-filter and can be corrected by the ideal decoder, meaning that both ECCP and ECRP in \cref{def:FTEC_old} are satisfied. In contrast, if $E_\mathrm{in}$ is not in $\mathcal{E}_r$ where $r+s \leq t$, $E_\mathrm{in}$ may not correspond to any fault combination in $\mathcal{F}_t$, and $F_x$ or $F_z$ may be picked as in Step \ref{step:EC_1b} or \ref{step:EC_2b}. Similar to the previous analysis, $F_x \cdot F_z$ will have the same syndrome as that of $E_a\cdot E_\mathrm{in}$. By an operation in Step \ref{step:EC_3}, the output state will be a valid codeword with error $E_b$, which can pass the the $s$-filter. Therefore, the ECRP in \cref{def:FTEC_old} is satisfied in this case.

In addition to the capped color code in H form, the FTEC protocol above is also applicable to any CSS code in which $\mathcal{F}_t$ is distinguishable and the possible $X$-type and $Z$-type errors are of the same form (i.e., a code to which \cref{def:dist_err_CSS} is applicable for all $r \in \{1,\dots,t\}$, $t\leq \lfloor (d-1)/2 \rfloor$). Besides this, we can also construct an FTEC protocol for a general stabilizer code whose circuits for the syndrome measurement give a distinguishable fault set (a code in which $\mathcal{E}_r$ is defined by \cref{def:dist_err} instead of \cref{def:dist_err_CSS}) using similar ideas. An FTEC protocol for such a code is provided in \cref{sec:app:protocol_general}.

Because a recursive capped color code in H form of distance $d$ is constructed by recursively encoding the top qubit of the capped color code in H form of distance $d$ using capped color codes of smaller distances, an FTEC protocol for a recursive capped color code in H form can be constructed similarly to an FTEC protocol for a concatenated code. The FTEC protocol is as follows:\\

\noindent\textbf{FTEC protocol for a recursive capped color code in H form}

For $j=3,5,7,\dots,d$, perform error correction on the first $j$ layers of the recursive capped color code of distance $d$ using the FTEC protocol for a capped color code in H form of distance $j$.


\subsection{Fault-tolerant measurement and state preparation protocols}
\label{subsec:FTM_ana}

\begin{figure*}[htbp]
	\centering
	\includegraphics[width=0.85\textwidth]{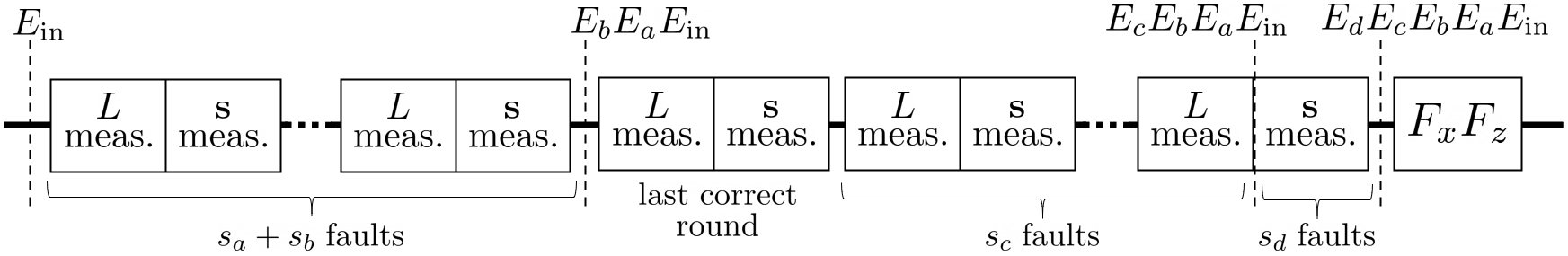}
	\captionsetup{justification=centering}
	\caption{Fault-tolerant measurement protocol for a capped color code.}
	\label{fig:err_in_FTM}
\end{figure*}


Besides FTEC protocols, we also need other gadgets such as FTM, FTP, and FTG gadgets in order to perform fault-tolerant quantum computation. Note that the definitions of the $r$-filter (\cref{def:r_filter_new}) and the ideal decoder (\cref{def:ideal_new}) depend on how the distinguishable error set is defined. Therefore, in order to utilize the new definitions of fault-tolerant gadgets in \cref{def:FT_gadget_new}, all protocols used in the computation must share the same definition of distinguishable error set. In this section, we will construct an FTM protocol for a capped color code in H form, which is also applicable to other CSS codes with similar properties. The distinguishable error set being used in the construction of the FTM protocol will be similar to the distinguishable error set defined for the FTEC protocol for the same code. In addition, an FTP protocol can also be obtained from the FTM protocol.

We will start by constructing an FTM protocol for a capped color code in H form obtained from $CCC(d)$. The FTM protocol discussed below can be used to fault-tolerantly measure any logical $X$ or logical $Z$ operator of the form $X^{\otimes n}M$ or $Z^{\otimes n}N$, where $M,N$ are some stabilizers. Let $L$ be the logical operator being measured. We will assume that the circuits for measuring $X$-type and $Z$-type generators are similar to the ones used in the FTEC protocol for a capped color code, which give a distinguishable fault set $\mathcal{F}_t$ with $t=(d-1)/2$ (the list of possible fault combinations for the FTM protocol is the same as the list used in the FTEC protocol). In addition, we can always use a non-flag circuit with an arbitrary gate ordering for measuring $L$ (since any error arising from the circuit faults can always be corrected as we will see later in the protocol analysis). For the FTM protocol, the outcome bundle will be defined as $(m,\vec{\mathbf{s}},\vec{\mathbf{f}})$, where 
$m$ is the measurement outcome of the logical operator $L$ ($m=0$ and $m=1$ correspond to $+1$ and $-1$ eigenvalues of $L$), and $\vec{\mathbf{s}}=(\vec{\mathbf{s}}_x|\vec{\mathbf{s}}_z)$ and $\vec{\mathbf{f}} = (\vec{\mathbf{f}}_x|\vec{\mathbf{f}}_z)$ are the syndrome and the cumulative flag vector obtained from the measurements of $X$-type and $Z$-type generators ($\vec{\mathbf{f}}$ is accumulated from the first round until the current round). An FTM protocol is as follows:
\\

\noindent\textbf{FTM protocol for a capped color code in H form}

During a single round of logical operator and full syndrome measurements, measure the operators in the following order: measure $L$, then $v_i^x$'s, then $f_i^x$'s, then $v_i^z$'s, then $f_i^z$'s. Perform logical operator and full syndrome measurements until the outcome bundles $(m,\vec{\mathbf{s}},\vec{\mathbf{f}})$ are repeated $t+1$ times in a row. Afterwards, do the following:

\begin{enumerate}
	\item \label{step:M_1} Determine an EC operator $F_x$ using the list of possible fault combinations as follows: 
	\begin{enumerate}
		\item If there is a fault combination on the list whose syndrome and cumulative flag vector are $(\vec{0}|\vec{\mathbf{s}}_z)$ and $(\vec{\mathbf{f}}_x|\vec{0})$, then $F_x$ is the combined error of such a fault combination. (If there are more than one fault combination corresponding to $(\vec{0}|\vec{\mathbf{s}}_z)$ and $(\vec{\mathbf{f}}_x|\vec{0})$, a combined error of any of such fault combinations will work.) \label{step:M_1a}
		\item If none of the fault combinations on the list corresponds to $(\vec{0}|\vec{\mathbf{s}}_z)$ and $(\vec{\mathbf{f}}_x|\vec{0})$, then $F_x$ can be any Pauli $X$ operator whose syndrome is $(\vec{0}|\vec{\mathbf{s}}_z)$. \label{step:M_1b}%
	\end{enumerate}
	\item \label{step:M_2} Determine an EC operator $F_z$ using the list of possible fault combinations as follows: 
	\begin{enumerate}
		\item If there is a fault combination on the list whose syndrome and cumulative flag vector are $(\vec{\mathbf{s}}_x|\vec{0})$ and $(\vec{0}|\vec{\mathbf{f}}_z)$, then $F_z$ is the combined error of such a fault combination. (If there are more than one fault combination corresponding to $(\vec{\mathbf{s}}_x|\vec{0})$ and $(\vec{0}|\vec{\mathbf{f}}_z)$, a combined error of any of such fault combinations will work.) \label{step:M_2a}
		\item If none of the fault combinations on the list corresponds to $(\vec{\mathbf{s}}_x|\vec{0})$ and $(\vec{0}|\vec{\mathbf{f}}_z)$, then $F_z$ can be any Pauli $Z$ operator whose syndrome is $(\vec{\mathbf{s}}_x|\vec{0})$. \label{step:M_2b}%
	\end{enumerate}
	\item Apply $F_x\cdot F_z$ to the data qubits to perform error correction. \label{step:M_3}%
	\item If $L$ and $F_x\cdot F_z$ anticommute, modify $m$ from 0 to 1 or from 1 to 0. If $L$ and $F_x\cdot F_z$ commute, do nothing. \label{step:M_4}%
	\item Output $m$ as the operator measurement outcome, where $m=0$ and $m=1$ correspond to $+1$ and $-1$ eigenvalues of $L$. If $L$ is a logical $Z$ operator, the output state is the logical $|0\rangle$ or logical $|1\rangle$ state for $m=0$ or $1$. If $L$ is a logical $X$ operator, the output state is the logical $|+\rangle$ or logical $|-\rangle$ state for $m=0$ or $1$. \label{step:M_5}%
\end{enumerate}	


To verify that the FTM protocol for a capped color code is fault tolerant according to the revised definition (\cref{def:FT_gadget_new}), we will show that both of the properties in \cref{def:FTM_old} is satisfied when the $r$-filter, the ideal decoder, and the distinguishable error set $\mathcal{E}_r$ are defined as in \cref{def:r_filter_new,def:ideal_new,def:dist_err_CSS}. The distinguishable fault set $\mathcal{F}_t$ for this protocol is the same fault set as the one defined for the FTEC protocol (i.e., $\mathcal{F}_t$ concerns the circuits for measuring $X$-type and $Z$-type generators, and does not concern the circuit for measuring $L$). We will also assume that there are no more than $t$ faults during the whole protocol, so the outcome bundles must be repeated $t+1$ times in a row within $(t+1)^2$ rounds. First, suppose that the operator being measured $L$ is a logical $Z$ operator. The analysis will be divided into two cases: (1) the case that the last round of operator and full syndrome measurements has no faults, and (2) the case that the last round of operator and full syndrome measurements has some faults.

(1) Because the last round is correct and the outcome bundles are repeated $(t+1)$ times in a row, $m, \vec{\mathbf{s}}$, and $\vec{\mathbf{f}}$ exactly correspond to the error on the state before Step \ref{step:M_3}. Let $E_\mathrm{in} \in \mathcal{E}_r$ be the input error, 
$E_a$ be the combined error arising from $s_a$ faults in the circuits for measuring $L$, and $E_b$ be the combined error arising from $s_b$ faults in the syndrome measurement circuits, where $r+s_a+s_b \leq t$. Also, assume that the (uncorrupted) input state is $|\bar{m}_\mathrm{in}\rangle$ where $m_\mathrm{in}=0$ or $1$. The data error on the state before the last round is $E_b E_a E_\mathrm{in}$. Since $L$ is of the form $Z^{\otimes n}N$ where $N$ is some stabilizer, the $X$ part of $E_a$ has weight no more than $s_a$, while the $Z$ part of $E_a$ can be any $Z$-type error. We find that the $X$ part of $E_b E_a E_\mathrm{in}$, denoted as $(E_b E_a E_\mathrm{in})_x$, is similar to a combined error of $X$ type of some fault combination in $\mathcal{F}_{r+s_a+s_b}$. However, the $Z$ part of $E_b E_a E_\mathrm{in}$, denoted as $(E_b E_a E_\mathrm{in})_z$, may or may not correspond to a $Z$-type error of some fault combination in $\mathcal{F}_t$. By picking $F_x$ and $F_z$ as in Steps \ref{step:M_1} and \ref{step:M_2}, $F_x$ is logically equivalent to $(E_b E_a E_\mathrm{in})_x$ and $F_z$ is logically equivalent to $(E_b E_a E_\mathrm{in})_z$ or $(E_b E_a E_\mathrm{in})_z Z^{\otimes n}$. So after the error correction in Step \ref{step:M_3}, the output state is $|\bar{m}_\mathrm{in}\rangle$ or $Z^{\otimes n}|\bar{m}_\mathrm{in}\rangle$. Note that $|\bar{m}_\mathrm{in}\rangle$ and $Z^{\otimes n}|\bar{m}_\mathrm{in}\rangle$ are the same state for both $m_\mathrm{in}=0$ and $m_\mathrm{in}=1$ cases (the $-1$ global phase can be neglected in the case of $m_\mathrm{in}=1$).


Next, let us consider the result $m$ obtained from the last round, which tell us whether the state before the measurement of $L$ during the last round is $+1$ or $-1$ eigenstate of $L$. We find that if $m_\mathrm{in}=0$, $m=0$ whenever $E_b E_a E_\mathrm{in}$ commutes with $L$, and $m=1$ whenever $E_b E_a E_\mathrm{in}$ anticommutes with $L$. On the other hand, if $m_\mathrm{in}=1$, $m=1$ whenever $E_b E_a E_\mathrm{in}$ commutes with $L$, and $m=0$ whenever $E_b E_a E_\mathrm{in}$ anticommutes with $L$. Also, note that $F_x\cdot F_z$ is either $E_b E_a E_\mathrm{in}$ or $E_b E_a E_\mathrm{in} Z^{\otimes n}$ and $L$ is a logical $Z$ operator, so $E_b E_a E_\mathrm{in}$ commutes (or anticommutes) with $L$ if and only if $F_x\cdot F_z$ commutes (or anticommutes) with $L$. Thus, we need to flip the output as in Step \ref{step:M_4} whenever $F_x\cdot F_z$ anticommutes with $L$ so that $m=m_\mathrm{in}$. As a result, the measurement protocol gives an output state $|\bar{m}_\mathrm{in}\rangle$ and its corresponding measurement outcome $m=m_\mathrm{in}$ which reflect the uncorrupted input state. 

Now, let us consider the case that the uncorrupted input state is of the form $\alpha|\bar{0}\rangle+\beta|\bar{1}\rangle$. If there is at least one round before the last correct round in which the measurement of $L$ is correct, then the superposition state collapses and the state before the last correct round is either $E_b E_a E_\mathrm{in}|\bar{0}\rangle$ or $E_b E_a E_\mathrm{in}|\bar{1}\rangle$, so the analysis above is applicable. However, if the measurements of $L$ before the last correct round are all incorrect, it is possible that the superposition state may not collapse and the state before the last correct round is of the form $E_b E_a E_\mathrm{in}(\alpha|\bar{0}\rangle+\beta|\bar{1}\rangle)$. Suppose that the measurement of $L$ in the last correct round gives $m=0$. Then the output state from the last correct round is a $+1$ eigenstate of $L$, which is $E_b E_a E_\mathrm{in}|\bar{0}\rangle$ if $E_b E_a E_\mathrm{in}$ commutes with $L$, or $E_b E_a E_\mathrm{in}|\bar{1}\rangle$ if $E_b E_a E_\mathrm{in}$ anticommutes with $L$. In constrast, if the measurement of $L$ in the last correct round gives $m=1$, then the output state from the last correct round is a $-1$ eigenstate of $L$. This state is $E_b E_a E_\mathrm{in}|\bar{1}\rangle$ if $E_b E_a E_\mathrm{in}$ commutes with $L$, or $E_b E_a E_\mathrm{in}|\bar{0}\rangle$ if $E_b E_a E_\mathrm{in}$ anticommutes with $L$. By applying $F_x\cdot F_z$ as in Step \ref{step:M_3} and modifying $m$ whenever $F_x\cdot F_z$ anticommutes with $L$ as in Step \ref{step:M_4}, the outputs from the protocol are either $m=0$ and $|\bar{0}\rangle$, or $m=1$ and $|\bar{1}\rangle$ (up to some global phase). Therefore, both MCP and MPP in \cref{def:FTM_old} are satisfied.

(2) In the case that the last round has some faults, because the outcome bundles are repeated $(t+1)$ times in a row and there are no more than $t$ faults in the protocol, there must be at least one correct round in the last $t+1$ rounds, and the outcome bundles correspond to the error on the state before the last correct round. Let $E_\mathrm{in} \in \mathcal{E}_r$ be the input error, $E_a$ be the combined error arising from $s_a$ faults in the circuits for measuring $L$ before the last correct round, $E_b$ be the combined error arising from $s_b$ faults in the syndrome measurement circuits before the last correct round, and $E_c$ be the combined error arising from $s_c$ faults in any circuits after the last correct round but before the syndrome measurement circuits of the very last round, and $E_d$ be the combined error arising from $s_d$ faults in the syndrome measurement circuits of the very last round, where $r+s_a+s_b+s_c+s_d \leq t$ (see \cref{fig:err_in_FTM}). By an analysis similar to (1), we find that $F_x$ from Step \ref{step:M_1} is logically equivalent to $(E_b E_a E_\mathrm{in})_x$, and $F_z$ from Step \ref{step:M_2} is logically equivalent to $(E_b E_a E_\mathrm{in})_z$ or $(E_b E_a E_\mathrm{in})_z Z^{\otimes n}$. 

Now, let us consider $E_c$ which can arise from the circuits for measuring $L$ or the syndrome measurement circuits, and $E_d$ which can arise from the syndrome measurement circuits. Because the syndromes and the cumulative flag vectors do not change after the last correct round, and because $i$ faults in the circuits for measuring $L$ cannot cause $X$-type error of weight more than $i$, the $X$ part of $E_c$ (denoted as $(E_c)_x$) is similar to the combined error of $X$ type of a fault combination arising from $s_c$ faults whose cumulative flag vector is zero, i.e., $(E_c)_x$ is an error in $\mathcal{E}_{s_c}^x$. In contrast, because the circuits for measuring $L$ can cause $Z$-type error of any weight but the syndromes and the cumulative flag vectors do not change after the last correct round, the $Z$ part of $E_c$ (denoted as $(E_c)_z$) can be written as $(\tilde{E}_c)_z$ or $(\tilde{E}_c)_z Z^{\otimes n}$, where $(\tilde{E}_c)_z \in \mathcal{E}_{s_c}^z$. That is, $E_c$ is either $\tilde{E}_c$ or $\tilde{E}_c Z^{\otimes n}$ where $\tilde{E}_c \in \mathcal{E}_{s_c}$. For $E_d$ which arising from $s_d$ the syndrome measurement circuits in the very last round, we find that it is an error in $\mathcal{E}_{s_d}$ since the cumulative flag vector from the very last round remains the same. 

Let the (uncorrupted) input state be of the form $\alpha|\bar{0}\rangle+\beta|\bar{1}\rangle$. Suppose that the measurement outcome of $L$ from the last correct round is $m=0$. From the argument on a superposition state in (1), we find that the output state from the last correct round is $E_b E_a E_\mathrm{in}|\bar{0}\rangle$ if $E_b E_a E_\mathrm{in}$ commutes with $L$, or $E_b E_a E_\mathrm{in}|\bar{1}\rangle$ if $E_b E_a E_\mathrm{in}$ anticommutes with $L$. Thus, the state before Step \ref{step:M_3} is $E_d \tilde{E}_c E_b E_a E_\mathrm{in}|\bar{0}\rangle$ or $E_d \tilde{E}_c Z^{\otimes n} E_b E_a E_\mathrm{in}|\bar{0}\rangle$ if $E_b E_a E_\mathrm{in}$ commutes with $L$, or $E_d \tilde{E}_c E_b E_a E_\mathrm{in}|\bar{1}\rangle$ or $E_d \tilde{E}_c Z^{\otimes n} E_b E_a E_\mathrm{in}|\bar{1}\rangle$ if $E_b E_a E_\mathrm{in}$ anticommutes with $L$. Recall that $F_x\cdot F_z$ is either $E_b E_a E_\mathrm{in}$ or $E_b E_a E_\mathrm{in} Z^{\otimes n}$, and $E_b E_a E_\mathrm{in}$ commutes (or anticommutes) with $L$ if and only if $F_x\cdot F_z$ commutes (or anticommutes) with $L$. By applying $F_x\cdot F_z$ as in Step \ref{step:M_3} and modifying $m$ whenever $F_x\cdot F_z$ anticommutes with $L$ as in Step \ref{step:M_4}, the protocol either outputs $m=0$ with the output state $E_d \tilde{E}_c|\bar{0}\rangle$ (up to some global phase), or outputs $m=1$ with the output state $E_d \tilde{E}_c|\bar{1}\rangle$ (up to some global phase). Similar results will be obtained in the case that the measurement outcome of $L$ from the last correct round is $m=1$.

Since $E_d \tilde{E}_c \in \mathcal{E}_s$ where $s=s_a+s_b+s_c+s_d$ and $r+s\leq t$, the output bit corresponds to the logical qubit of the output state in every case, and the output bit is 0 (or 1) if the (uncorrupted) input state is $|\bar{0}\rangle$ (or $|\bar{1}\rangle$), both of MCP and MPP in \cref{def:FTM_old} are satisfied. Similar analysis can be made for the case that $L$ is a logical $X$ operator. In that case, we will let $m=0$ and $m=1$ correspond to $|\bar{+}\rangle$ and $|\bar{-}\rangle$, and the analysis similar to (1) and (2) can be applied.

In addition, it is possible to construct an FTP protocol from the FTM protocol described above. For example, if we want to prepare the state $|\bar{0}\rangle$, we can do so by applying the FTM protocol for a logical $Z$ operator to any state, then applying a logical $X$ operator on the output state if $m=1$ or do nothing if $m=0$.

The FTM and the FTP protocols presented in this section is also applicable to any CSS code in which the number of encoded qubit is 1, $\mathcal{F}_t$ is distinguishable (where $\mathcal{F}_t$ corresponds to the circuits for measuring code generators), and the errors in $\mathcal{E}_r^x$ and $\mathcal{E}_r^z$ have the same form for all $r=1,\dots,t$, $t \leq \lfloor(d-1)/2\rfloor$.

Similar to the FTEC protocol for a recursive capped color code, we can construct an FTM protocol for a recursive capped color code similarly to an FTM protocol for a concatenated code. The FTM protocol is as follows:\\

\noindent\textbf{FTM protocol for a recursive capped color code in H form}

Let $L^{(j)}$ be a logical $Z$ (or logical $X$) operator of a recursive capped color code of distance $j$. The following procedure can fault-tolerantly measure $L^{(d)}$ on a recursive capped color code of distance $d$: for $j=3,5,7,\dots,d$, perform $L^{(j)}$ measurement on the first $j$ layers of the recursive capped color code of distance $d$ using the FTM protocol for a capped color code in H form of distance $j$.\\

An FTP protocol for a recursive capped color code is similar to the FTM protocol a recursive capped color code, except that some logical operator will be applied to the output state depending on the measurement outcome so that the desired logical state can be obtained.

\subsection{Transversal Clifford gates}
\label{subsec:other_FT_gadgets}


From the properties of a capped color code in H form discussed in \cref{subsec:CCC_def}, we know that $H$, $S$, and CNOT gates are transversal. These gates can play an important role in fault-tolerant quantum computation because transversal gates satisfy both properties of fault-tolerant gate gadgets originally proposed in \cite{AGP06} (\cref{def:FTG_old}). However, since the definition of fault-tolerant gadgets being used in this work is revised as in \cref{def:FT_gadget_new}, transversal gates which satisfy the old definition may or may not satisfy the new one. In this section, we will show that transversal $H$, $S$, and CNOT gates are still fault tolerant according to the new definition of fault-tolerant gadgets when the distinguishable error set $\mathcal{E}_r$ of a capped (or a recursive capped) color code in H form is defined as in \cref{def:dist_err_CSS}. 

We start by observing the operations of $H$, $S$, and CNOT gates. These gates can transform Pauli operators as follows:
\begingroup
\setlength\arraycolsep{1pt}	
\begin{equation}
	\begin{matrix}
		H: \quad &X &\mapsto &Z, \quad &Y &\mapsto &-Y, \quad &Z &\mapsto &X, \\
		S: \quad &X &\mapsto &Y, \quad &Y &\mapsto &-X, \quad &Z &\mapsto &Z, \\
		\mathrm{CNOT}: \quad &XI &\mapsto &XX, \quad &ZI &\mapsto &ZI, \\
		&IX &\mapsto &IX, \quad &IZ &\mapsto &ZZ.
	\end{matrix} \nonumber
\end{equation}%
\endgroup 
Meanwhile, the transversal $H$, $S$, and CNOT gates can map logical operators $\bar{X}=X^{\otimes n}$ and $\bar{Z}=Z^{\otimes n}$ as follows:
\begingroup
\setlength\arraycolsep{0.7pt}	
\begin{equation}
	\begin{matrix}
		H^{\otimes n}: \quad &\bar{X} &\mapsto &\bar{Z}, \quad &\bar{Z} &\mapsto &\bar{X}, \\
		S^{\otimes n}: \quad &\bar{X} &\mapsto &-\bar{Y}, \quad &\bar{Z} &\mapsto &\bar{Z}, \\
		\mathrm{CNOT}^{\otimes n}: \quad &\bar{X}\otimes \bar{I} &\mapsto &\bar{X} \otimes \bar{X}, \quad &\bar{Z} \otimes \bar{I} &\mapsto &\bar{Z} \otimes \bar{I}, \\
		&\bar{I} \otimes \bar{X} &\mapsto &\bar{I} \otimes \bar{X}, \quad &\bar{I} \otimes \bar{Z} &\mapsto & \bar{Z} \otimes \bar{Z},
	\end{matrix} \nonumber
\end{equation}%
\endgroup 
where $\bar{I}=I^{\otimes n}$, $\bar{Y}=i\bar{X}\bar{Z}=-Y^{\otimes n}$, and $n=3(d^2+1)/2$ is the total number of qubits for each $CCC(d)$ (since $d=3,5,7,...$, we find that $n=3\;(\text{mod}\;4)$ and $\bar{Y}=-Y^{\otimes n}$ for any $CCC(d)$). In addition, the coding subspace is preserved under the operation of $H^{\otimes n}$, $S^{\otimes n}$, or $\mathrm{CNOT}^{\otimes n}$ (i.e., each stabilizer is mapped to another stabilizer). Therefore, $H^{\otimes n}$, $S^{\otimes n}$, and $\mathrm{CNOT}^{\otimes n}$ are logical $H$, logical $S^\dagger$, and logical CNOT gates, respectively.

For an \codepar{n,1,d} recursive capped color code in H form in which $n=(d^3+3d^2+3d-3)/4$, we find that $n=3\;(\text{mod}\;4)$ when $d=3,7,11,\dots$, and $n=1\;(\text{mod}\;4)$ when $d=5,9,13,\dots$. That is, $S^{\otimes n}$ is a logical $S^\dagger$ gate when $d=3,7,11,\dots$, and $S^{\otimes n}$ is a logical $S$ gate when $d=5,9,13,\dots$. $H^{\otimes n}$ and $\mathrm{CNOT}^{\otimes n}$ are logical $H$ and logical CNOT gates for a recursive capped color code in H form of any distance.


Next, we will verify whether the new definition of fault-tolerant gate gadgets in \cref{def:FT_gadget_new} is satisfied. We will start by considering logical $H$ and CNOT gates. Let the distinguishable error set $\mathcal{E}_r$ ($r=1,\dots,t$) be defined as in \cref{def:dist_err_CSS}, where the distinguishable fault set $\mathcal{F}_t$ is the same fault set as the one defined for the FTEC protocol for a capped color code in H form. Suppose that the operation of $H^{\otimes n}$ 
or $\mathrm{CNOT}^{\otimes n}$ has $s$ faults, the input error of $H^{\otimes n}$ 
is an error in $\mathcal{E}_r$ where $r+s \leq t$, and the input error of $\mathrm{CNOT}^{\otimes n}$ is an error in $\mathcal{E}_{r_1}\times \mathcal{E}_{r_2}$ where $r_1+r_2+s \leq t$. The input error for $H^{\otimes n}$ 
can be written as $E_1^x\cdot E_2^z$ where $E_1^x \in \mathcal{E}_r^x$ and $E_2^z \in \mathcal{E}_r^z$, and the input error for $\mathrm{CNOT}^{\otimes n}$ can be written as $(E_3^x\otimes E_4^x) \cdot (E_5^z \otimes E_6^z)$ where $E_3^x \in \mathcal{E}_{r_1}^x$, $E_4^x \in \mathcal{E}_{r_2}^x$, $E_5^z \in \mathcal{E}_{r_1}^z$, $E_6^z \in \mathcal{E}_{r_2}^z$. Let $E_i^x$ and $E_i^z$ be $X$-type and $Z$-type operators which act on the same qubits. We find that,
\begin{enumerate}
	\item $H^{\otimes n}$ maps $E_1^x\cdot E_2^z$ to $E_1^z\cdot E_2^x$, which is an error in $\mathcal{E}_r$.
	\item $\mathrm{CNOT}^{\otimes n}$ maps $(E_3^x\otimes E_4^x) \cdot (E_5^z \otimes E_6^z)$ to $(E_3^x\otimes E_3^xE_4^x)\cdot (E_5^z E_6^z \otimes E_6^z)$, which is an error in $\mathcal{E}_{{r_1}+{r_2}} \times \mathcal{E}_{{r_1}+{r_2}}$.
\end{enumerate}
The operation of a logical $S$ gate can be tricky to analyze since it can map $X$-type errors to a product of $X$- and $Z$-type errors (up to some phase factor). Let us consider a single-qubit error $P \in \{I,X,Y,Z\}$, an error from a single CNOT fault during the measurement of an $X$-type generator which is of the form $P\otimes X^{\otimes m}$, and an error from a single CNOT fault during the measurement of a $Z$-type generator which is of the form $P\otimes Z^{\otimes m}$ (where $m \geq 0$). The operation of $S^{\otimes n}$ will transform such errors as follows (up to some phase factor):
\begin{equation}
	\begin{matrix}
		I\otimes X^{\otimes m}  &\mapsto &(I \otimes X^{\otimes m})\cdot (I \otimes Z^{\otimes m}) \\
		X\otimes X^{\otimes m}  &\mapsto &(X \otimes X^{\otimes m})\cdot (Z \otimes Z^{\otimes m}) \\
		Y\otimes X^{\otimes m}  &\mapsto &(X \otimes X^{\otimes m})\cdot (I \otimes Z^{\otimes m}) \\
		Z\otimes X^{\otimes m}  &\mapsto &(I \otimes X^{\otimes m})\cdot (Z \otimes Z^{\otimes m}) \\
		I\otimes Z^{\otimes m}  &\mapsto &(I \otimes I^{\otimes m})\cdot (I \otimes Z^{\otimes m}) \\
		X\otimes Z^{\otimes m}  &\mapsto &(X \otimes I^{\otimes m})\cdot (Z \otimes Z^{\otimes m}) \\
		Y\otimes Z^{\otimes m}  &\mapsto &(X \otimes I^{\otimes m})\cdot (I \otimes Z^{\otimes m}) \\
		Z\otimes Z^{\otimes m}  &\mapsto &(I \otimes I^{\otimes m})\cdot (Z \otimes Z^{\otimes m})
	\end{matrix} \nonumber
\end{equation}%
We can see that any error from a single fault will be transformed to an error of the form $E_x\cdot E_z$ where $E_x$ and $E_z$ are $X$- and $Z$-type errors from a single fault. For this reason, a combined error from $r$ faults $\mathbf{E}=E_1 \cdots E_r$ will be transformed to $(\bar{S}E_1\bar{S}^\dagger) \cdots (\bar{S}E_r\bar{S}^\dagger)$, which is of the form $\mathbf{E}_x\cdot \mathbf{E}_z$ where $\mathbf{E}_x$ and $\mathbf{E}_z$ are $X$- and $Z$-type errors from $r$ faults. That is, the error after the transformation of $S^{\otimes n}$ is an error in $\mathcal{E}_r$.

In addition, $s$ faults during the application of $H^{\otimes n}$ or $S^{\otimes n}$ can cause an error in $\mathcal{E}_s$, and $s$ faults during the application of $\mathrm{CNOT}^{\otimes n}$ can cause an error in $\mathcal{E}_s \times \mathcal{E}_s$. Combining the input error and the error from faults, we find that an output error from $H^{\otimes n}$ or $S^{\otimes n}$ is an error in $\mathcal{E}_{r+s}$, while an output error from $\mathrm{CNOT}^{\otimes n}$ is an error in $\mathcal{E}_{{r_1}+{r_2}+s} \times \mathcal{E}_{{r_1}+{r_2}+s}$. As a result, $H^{\otimes n}$, $S^{\otimes n}$, and $\mathrm{CNOT}^{\otimes n}$ satisfy both GCP and GPP in \cref{def:FTG_old} when the $r$-filter, the ideal decoder, and the distinguishable error set are defined in \cref{def:r_filter_new,def:ideal_new,def:dist_err_CSS}. That is, transversal $H$, $S$, and CNOT gates are fault tolerant according to the revised definition. Similar analysis is also applicable to a recursive capped color code in H form. Since the Clifford group can be generated by $H$, $S$, and CNOT \cite{CRSS97,Gottesman98b}, any Clifford gate can be fault-tolerantly implemented on a capped (or a recursive capped) color code in H form using transversal $H$, $S$, and CNOT gates.

(Note that whether a transversal gate satisfies the revised definition of fault-tolerant gate gadgets in \cref{def:FT_gadget_new} depends on how the distinguishable error set is defined (as in either \cref{def:dist_err} or \cref{def:dist_err_CSS}). For example, if the input error $E_\textrm{in}$ can arise from $t$ faults ($E_\textrm{in}$ is in $\mathcal{E}_t$) and a transversal gate transforms such an error to another error $E_\textrm{out}$ which cannot arise from $\leq t$ faults ($E_\textrm{out}$ is not in $\mathcal{E}_t$), then this transversal gate is not considered fault tolerant.)

\subsection{Fault-tolerant implementation of a logical $T$ gate via code switching}
\label{subsec:FT_T_gate}






In order to achieve a universal set of quantum gates, we also need a fault-tolerant implementation of some gate outside the Clifford group \cite{NRS01}. One possible way to implement a non-Clifford gate on the capped color code in H form is to use magic state distillation \cite{BK05}, but large overhead might be required \cite{FMMC12}. Another possible way is to perform code switching; since the code in H form possesses transversal $H$, $S$, and CNOT gates, and the code in T form possesses a transversal $T$ gate, we can apply transversal $H$, $S$, or CNOT gates and perform FTEC on the code in H form, and switch to code in T form to apply a transversal $T$ gate when necessary. However, logical $T$ gate implementation via code switching on a capped color code might not be fault tolerant since the code in T form constructed from $CCC(d)$ has distance 3 regardless of the parameter $d$, and a few faults occurred to the code in T form can cause a logical error. Fortunately, for a recursive capped color code, both distances of the code in H form and the code in T form constructed from $RCCC(d)$ are $d$. Thus, fault-tolerant $T$ gate implementation via code switching is possible. The fault-tolerant protocol for logical $T$ gate implementation on a recursive capped color code will be developed in this section.

First, let us assume that the $T$-gate implementation protocol is performed after the FTEC protocol for a recursive capped color code in H form developed in \cref{subsec:FTEC_ana} and the following CNOT orderings are being used:
\begin{enumerate}
	\item In the preceding FTEC protocol, the $\mathtt{f}$, $\mathtt{v}$, and $\mathtt{cap}$ operators on the $(j-2)$-th, $(j-1)$-th, and $j$-th layers of the recursive capped color code are measured using the CNOT orderings for the $\mathtt{f}$, $\mathtt{v}$, $\mathtt{cap}$ operators of a capped color code of in H form of distance $j$ ($j=3,5,...,d$) which give a distinguishable fault set (where an operator on $\mathtt{q_0}$ of a capped color code is replaced by operators on all qubits on the $(j-2)$-th layer of a recursive capped color code).
	\item During the switching from the code in H form to T form, all $Z$-type vertical face generators $e^z_i$ are measured using flag circuits with one flag ancilla similar to the circuit in \cref{fig:vertical_op} (see the definition of vertical face generators in \cref{subsec:CCC_def,subsec:RCCC_def}).
	\item During the switching from the code in T form to H form, all $X$-type generators of 2D color codes on layers $2,4,...,d-1$ of the code are measured using circuits similar to those being used in the preceding FTEC protocol.
\end{enumerate}

\begin{figure}[tbp]
	\centering
	\begin{subfigure}[b]{0.30\textwidth}
		\includegraphics[width=\textwidth]{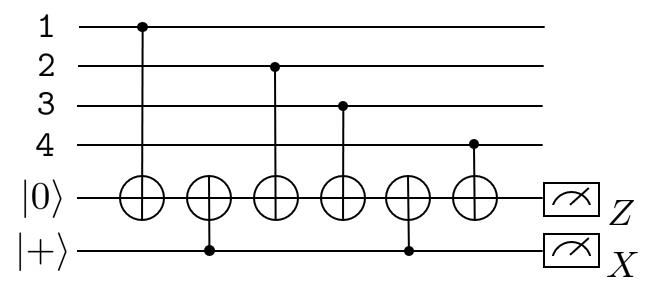}
		\captionsetup{justification=centering}
		\caption{}
		\label{subfig:vertical_circuit}
	\end{subfigure}	
	\begin{subfigure}[b]{0.12\textwidth}
		\includegraphics[width=\textwidth]{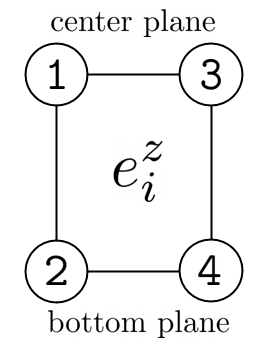}
		\captionsetup{justification=centering}
		\caption{}
		\label{subfig:vertical_ordering}
	\end{subfigure}
	\caption{(a) A flag circuit for measuring a vertical face generator $e_i^z$. (b) The ordering of data CNOTs in the circuit for each $e_i^z$.}
	\label{fig:vertical_op}
\end{figure}

The logical $T$-gate implementation protocol will use the following ideas: we will start from the recursive capped color code in H form, switch to the code in T form, apply a transversal $T$ gate, switch back to the code in H form, then perform error correction using an FTEC protocol similar to the FTEC protocol for a recursive capped color code in H form, except that possible faults from $e_i^z$ measurements are also included in the distinguishable fault set (note that we will never perform error correction on the code in T form). The full procedure of the $T$-gate implementation protocol is as follows:\\

\noindent\textbf{Fault-tolerant $T$-gate implementation protocol for a recursive capped color code in H form}

\begin{enumerate}
	\item During a single round of operator measurements, measure all $Z$-type vertical face generators. Perform measurements until the outcomes are repeated $t+1$ times in a row. After that, apply a Pauli operator corresponding to the repeated measurement outcome (see also the code switching procedure in \cref{subsec:CCC_def,subsec:RCCC_def}). \label{step:T_1}
	\item Perform a logical $T$ operation by applying physical $T$ and $T^\dagger$ gates on qubits represented by black and white vertices, respectively (see also \cref{prop:transversal_T_recur}). \label{step:T_2}
	\item During a single round of operator measurements, measure all $X$-type generators of 2D color codes on layers $2,4,...,d-1$ of the code. Perform measurements until the outcomes are repeated $t+1$ times in a row. After that, apply a Pauli operator corresponding to the repeated measurement outcome (see also the code switching procedure in \cref{subsec:CCC_def,subsec:RCCC_def}). \label{step:T_3}
	\item Perform error correction using an FTEC protocol similar to the FTEC protocol for a recursive capped color code in H form described in \cref{subsec:FTEC_ana}, except that possible faults from vertical face generator measurements are also included in the distinguishable fault set. \label{step:T_4}
\end{enumerate}

We can show that the protocol described above is fault tolerant using the following facts:
\begin{enumerate}
	\item Both codes in H form and T form have distance $d$ (in fact, the distance of $RCCC(d)$ does not depend on the gauge choice).
	\item An input error to the logical $T$-gate implementation protocol is an error $E_\mathrm{in}$ in the distinguishable set $\mathcal{E}_r$, where $r$ is the number of faults in the preceding FTEC protocol.
	\item During the switching from the code in H form to the code in T form (Step \ref{step:T_1}), the flag outcome is not zero whenever a single fault that leads to a data error of weight 2 occurs. That is, when the flag is zero, $s_1$ faults will lead to an error $E_1$ of weight $\leq s_1$.
	\item A logical $T$ gate is transversal, so $s_2$ faults during Step \ref{step:T_2} will lead to an error $E_2$ of weight $\leq s_2$.
	\item Any fault that can occur during the switching from the code in T form to the code in H form (Step \ref{step:T_3}) will lead to an error on layer 2,4,..., or $d-1$ (a center plane of inner $CCC(j)$, $j=3,5,7,...$).
	\item The gauge measurements and Pauli operation during the code switching correct the part of the data error that acts on the gauge qubits being measured. The code switching does not affect the part of the data error that acts on the logical qubit.
\end{enumerate}

Consider the data error $E_3E_2\bar{T}E_1E_\mathrm{in}\bar{T}^\dagger$ (the total error on the desired state $\bar{T}|\bar{\psi}_\mathrm{in}\rangle$). We can show that when $r+s_1+s_2+s_3\leq t$, $E_3E_2\bar{T}E_1E_\mathrm{in}\bar{T}^\dagger$ is correctable by the FTEC protocol in Step \ref{step:T_4}; this is equivalent to showing that $(\bar{T}E'^\dagger_\mathrm{in}E'^\dagger_1\bar{T}^\dagger E'^\dagger_2E'^\dagger_{3})(E_3E_2\bar{T}E_1E_\mathrm{in}\bar{T}^\dagger)$ is not a logical operator with zero cumulative flag vector when $r+r'+s_1+s'_1+s_2+s'_2+s_3+s'_3 \leq 2t$ (using a technique similar to the proof of \cref{thm:main2}). In addition, we know from the analysis of the FTEC protocol in \cref{subsec:FTEC_ana} that if the FTEC protocol in Step \ref{step:T_4} can correct any possible error after Step \ref{step:T_3} whenever $s_4=0$, then in case that $s_4 \leq t$, the output error will be an error in $\mathcal{E}_{s_4}$.

We point out that the protocol described in this section works for a recursive capped color code in H form of any distance given that \emph{flag circuits} are used in the gauge operator measurements during the code switching. Note that for the recursive capped color codes in H form for distance 3 and 5, it is possible to obtain a distinguishable fault set when the circuits for generator measurements are \emph{non-flag circuits} (thus, FTEC, FTP, FTM, and fault-tolerant Clifford computation with one ancilla are possible). In that case, however, an additional ancilla is required if one wants to perform logical $T$ gate implementation via code switching using the fault-tolerant protocol provided in this section.

\section{Discussion and conclusions}
\label{sec:discussions}


In this work, we observe that errors arising from a few faults depend on the structure of the circuits chosen for syndrome measurement, and develop an FTEC protocol accordingly. A fault set which includes all possible fault combinations arising from at most a certain number of faults is said to be distinguishable if any pair of fault combinations in the set either lead to logically equivalent data errors, or lead to different syndromes or cumulative flag vectors (as defined in \cref{def:distinguishable}). Distinguishability may depend on the number of flag ancillas being used in the circuits, the ordering of gates in the circuits, and the choice of stabilizer generators being measured. If we can find a set of circuits for a stabilizer code which leads to a distinguishable fault set, we can construct an FTEC protocol, as shown in \cref{subsec:FTEC_ana}.


We prove in \cref{lem:err_equivalence} that if an \codepar{n,k,d} CSS code has odd $n$, $k=1$, even weight stabilizer generators, and logical $X$ and $Z$ being $X^{\otimes n}$ and $Z^{\otimes n}$, then two Pauli errors of $X$ type (or $Z$ type) with the same syndrome are logically equivalent if and only if they have the same weight parity. 
One may notice that the weight parity of a Pauli operator and the anticommutation between the Pauli operator and a logical operator are closely related. In fact, for a given stabilizer code, the normalizer group can be generated by the stabilizer generators of the code and all independent logical Pauli operators; for example, the normalizer group of the Steane code is $N(S)=\langle g_i^x,g_i^z,X^{\otimes 7},Z^{\otimes 7}\rangle_{i=1,2,3}$. If the anticommutation between a Pauli error $E$ and each of the generators of $N(S)$ can be found, then a Pauli error logically equivalent to $E$ can be determined with certainty. The EC techniques presented in \cite{TL20} and this work use the fact that the weight parity of an error on a smaller code (or the anticommutation between the error and a logical operator of a smaller code) can be inferred by the measurement results of the stabilizer generators of a bigger code. We are hopeful that the relationship between the weight parity and the anticommutation can lead to EC techniques similar to the weight parity technique for a general stabilizer code in which the number of logical qubits can be greater than 1.

With \cref{lem:err_equivalence} in mind, we present the 3D color code of distance 3 in \cref{sec:3D_code} and construct a family of capped color codes in \cref{sec:CCC}, which are good candidates for our protocol construction (the 3D color code of distance 3 is the smallest capped color code). A capped color code is a subsystem code; it can be transformed to stabilizer codes, namely capped color codes in H form and T form, by the gauge fixing method. The code in H form has transversal $H$, $S$, and CNOT gates, while the code in T form has transversal CNOT and $T$ gates.
One interesting property of a capped color code in H form is that the code contains a 2D color code as a subcode lying on the center plane. Since a $\mathtt{cap}$ generator of $X$ type (or $Z$ type) has support on all qubits on the center plane, the weight parity of an error of $Z$ type (or $X$ type) occurred on the center plane can be obtained from the measurement result of the $\mathtt{cap}$ generator. The syndrome of the error on the center plane corresponding to the measurements of the 2D code generators together with the error weight parity can lead to an EC operator for such an error by \cref{lem:err_equivalence}. Exploiting these facts, we design circuits for measuring generators of a capped color code such that most of the possible errors are on the center plane. We prove in \cref{thm:main} that if the circuits satisfy some conditions, the fault set corresponding to all possible fault combinations arising from up to $t=(d-1)/2$ faults is distinguishable, where $d=3,5,7,...$ is the distance of the code. 
Furthermore, we prove in \cref{thm:main2,thm:main3} that a distinguishable fault set for a capped color code in H form of \emph{any distance} can be obtained, given that circuits for measuring code generators are flag circuits with one flag ancilla of a particular form. We also show that for the codes of distance 3 and 5, it is possible to obtain a distinguishable fault set using non-flag circuits with specific CNOT orderings. However, whether such non-flag circuits exist for the code of distance 7 or higher is still not known.

Besides capped color codes, we also construct a family of recursive capped color codes in \cref{sec:CCC}. A recursive capped color code $RCCC(d)$ can be obtained by recursively encoding the top qubit of a capped color code $CCC(d)$ by capped color codes of smaller distances. Similar to a capped color code, stabilizer codes namely recursive capped color codes in H form and T form can be obtained by gauge fixing method. Circuits for measuring code generators which work for capped color codes are also applicable to recursive capped color codes. The main advantage of a recursive capped color code is that both codes in H form and T form have the same distance, allowing us to perform fault-tolerant logical $T$ gate implementation via code switching.

In \cref{sec:FT_protocol}, we construct several fault-tolerant protocols using the fact that the fault set corresponding to the protocols being used is distinguishable. Our definitions of fault-tolerant gadgets in \cref{def:FT_gadget_new} also take the fact that some errors can be distinguished by their relevant flag information, so they can be viewed as a generalization of the definitions of fault-tolerant gadgets proposed in \cite{AGP06} (\cref{def:FTG_old,def:FTEC_old,def:FTP_old,def:FTM_old}). Our protocols are not limited to the capped or the recursive capped color codes; some of the protocols are also applicable to other families of stabilizer codes if their syndrome measurement circuits give a distinguishable fault set. Since possible errors depend on every fault-tolerant gadget being used, all protocols for quantum computation (including error correction, gate, measurement, and state preparation gadgets) must be designed in tandem in order to achieve fault tolerance. 

In our development, the ideal decoder and the $r$-filter (which define fault-tolerant gadgets) are defined by a distinguishable error set in which errors correspond to fault combinations with zero cumulative flag vector (see  \cref{def:dist_err,def:r_filter_new,def:ideal_new,def:FT_gadget_new}). The intuition behind the definitions with zero cumulative flag vector is that in a general flag FTEC protocol, we normally repeat the measurements until the outcomes (syndromes and flag vectors) are repeated $t+1$ times in a row. Thus, undetectable faults at the very end of the protocol which give repeated outcomes must correspond to the zero cumulative flag vector (see the analysis of the FTEC protocol in \cref{subsec:FTEC_ana} for more details). Note that nontrivial cumulative flag vectors are used to distinguish possible fault combinations arising during the FTEC protocol only; the flag information is used locally in each FTEC gadget and is not passed on to other gadgets. One interesting future direction would be studying how fault-tolerant protocols can be further improved by exploiting the flag information outside of the FTEC protocol. For example, we may define both ideal decoder and $r$-filter using fault combinations with trivial or nontrivial cumulative flag vectors. However, when an FTEC protocol is allowed to output nontrivial flag information, we have to make sure that
other subsequent 
fault-tolerant gadgets (such as FTG gadgets) must be able to process the flag information in the way that their possible output errors are still distinguishable. This study is beyond the scope of this work.


\begin{table*}[htbp]
	\begin{center}
		\begin{tabular}{| c | c | c | c |}
			\hline
			 & Number of data & Number of data & Number of data \\
			Code family & qubits only & and ancilla qubits & and ancilla qubits \\
			& $n(d)$ & assuming (1) and (2.a) & assuming (1) and (2.b) \\
			\hline
			2D color code \cite{BM06} & $3(d^2-1)/4+1$ & $3(d^2-1)/4+3$ & $3(d^2-1)/2+1$ \\
			\hline
			Capped color code & $3(d^2-1)/2+3$ & $3(d^2-1)/2+5$ & $9(d^2-1)/4+5$ \\
			\hline
			Recursive capped color code & $(d^3+3d^2+3d-3)/4$ & $(d^3+3d^2+3d+5)/4$  & $(3d^3+9d^2+13d-17)/8$  \\
			\hline
			3D color code \cite{Bombin15} & $(d^3+d)/2$ & $(d^3+d+2)/2$* & $(7d^3+3d^2+5d-3)/12$* \\
			\hline
			Stacked code \cite{JB16} & $(3d^3-3d^2+d+3)/4$ & $(3d^3-3d^2+d+7)/4$* & $(15d^3-15d^2+9d+7)/16$* \\
			\hline
		\end{tabular}
	\end{center}
	\caption{Comparison between the numbers of required qubits for a 2D color code, a capped color code (in H form), a recursive capped color code, a traditional 3D color code, and a stacked code of distance $d$. The assumptions being used in the third and the fourth columns are (1) qubit preparation and qubit measurement are fast, and (2.a) all-to-all connectivity between data and ancilla qubits are allowed or (2.b) there are dedicated syndrome and flag ancillas for each generator measurement. *We still do not know the actual minimum number of required ancillas for a 3D color code and a stacked code to achieve fault tolerance. The numbers for these codes in the table are for the case that only one ancilla per generator is required.}
	\label{tab:code_compare}
\end{table*}

One should note that it is possible to use fault-tolerant protocols satisfying the old definitions of fault-tolerant gadgets (\cref{def:FTG_old,def:FTEC_old,def:FTP_old,def:FTM_old}) in conjunction with fault-tolerant protocols satisfying our definitions of fault-tolerant gadgets (\cref{def:FT_gadget_new}). In particular, observe that for any Pauli error of weight $w\leq t$, we can always find a fault combination arising from $w$ faults whose combined error is such an error; i.e., any Pauli error of weight up to $t$ is contained in a distinguishable fault set $\mathcal{F}_t$. Therefore, an FTEC protocol satisfying \cref{def:FT_gadget_new} can be used to correct an output error of any fault-tolerant protocol satisfying one of the old definitions (assuming that both protocols can tolerate the same number of faults). However, the converse might not be true since an FTEC protocol satisfying the old definition of FTEC gadget might not be able to correct errors of high weight arising from a small number of faults in the protocol satisfying \cref{def:FT_gadget_new}.


In this work, we show that universal quantum computation can be performed fault-tolerantly on a recursive capped color code in H form of any distance; First, we provide FTEC, FTM, and FTP protocols for a capped color code in H form which are applicable to the code of any distance as long as the fault set is distinguishable (see \cref{subsec:FTEC_ana,subsec:FTM_ana}). From the aforementioned protocols, we can construct FTEC, FTM, and FTP protocols for a recursive capped color code in H form similarly to conventional fault-tolerant protocols for a concatenated code.
Second, we show that for a capped color code, transversal $H$, $S$, and CNOT gates are fault tolerant according to our revised definitions of fault-tolerant gadgets in \cref{def:FT_gadget_new} (see \cref{subsec:other_FT_gadgets}), and similar analysis is also applicable to a recursive capped color code. Last, we provide a fault-tolerant protocol for implementing a logical $T$ gate on a recursive capped color code in H form via code switching, which is applicable to the code of any distance given that circuits for measuring gauge operators are flag circuits of a particular form (see \cref{subsec:FT_T_gate}).

Compared with other codes with the same distance, capped and recursive capped color codes may not have the fewest number of data qubits. Nevertheless, these codes have some special properties which may be useful for fault-tolerant quantum computation. The numbers of data qubits $n$ (as functions of the code distance $d$) for the families of 2D color codes \cite{BM06}, capped color codes, recursive capped color codes, traditional 3D color codes \cite{Bombin15}, and stacked codes \cite{JB16} are provided in the second column of \cref{tab:code_compare}. We can observe the followings:
\begin{enumerate}
	\item The number of data qubits required for a capped color code in H form is about twice of that of a 2D color code of the same distance (both numbers are $O(d^2)$). One advantage that capped color codes have over 2D color codes is that a logical $T$ gate can be implemented on the capped color codes via code switching. Although the process might not be fully fault tolerant (because the codes in T form has distance 3 regardless of $d$), code switching uses fewer ancillas compared to magic state distillation and may be beneficial if the error rate is low enough.
	\item When $d$ is large, the number of data qubits required for a recursive capped color code is about two times smaller than that of a 3D color code, and about three times smaller than that of a stacked code of the same distance (all numbers are $O(d^3)$). For these three families of codes, a logical $T$ gate can be fault-tolerantly implemented via code switching since the code distance does not depend on the gauge choice.
\end{enumerate}


Using our fault-tolerant protocols, Clifford computation on a capped color code in H form of any distance can be achieved using only 2 ancillas, while universal quantum computation on a recursive capped color code in H form of any distance can be achieved using only 2 ancillas. This is equal to the number of ancillas required for fault-tolerant protocols for Clifford computation on a 2D color code of any distance (by \cref{thm:main2}). It should be noted that the aforementioned results on the number of ancillas are under the assumption that (1) qubit preparation and qubit measurement are fast enough so that the ancillas can be reused, and (2.a) all-to-all connectivity between data and ancilla qubits are allowed. In practice, attaining the minimum number of ancillas can be challenging because the qubit connectivity is restricted to the nearest neighbor interactions in most architectures. A more practical assumption is (2.b) having dedicated syndrome and flag ancillas for each stabilizer generator measurement. For a 2D color code, syndrome and flag ancillas can be shared between $X$-type and $Z$-type generators acting on the same set of qubits, so the number of required ancillas is the number of stabilizer generators, which is $3(d^2-1)/4$. For a capped (or recursive capped) color code in H form, syndrome and flag ancillas can be shared between $X$-type and $Z$-type volume ($\mathtt{v}$) generators acting on the same set of qubits, and face ($\mathtt{f}$) generators can share ancillas with their corresponding volume generators. Thus, the number of required ancillas is equal to the number of stabilizer generators of the subsystem code $CCC(d)$ (or $RCCC(d)$). 
(Note that on color codes, generators of the same color can be measured in parallel.)

The total numbers of data and ancilla qubits required for 2D color codes, capped color codes, recursive capped color codes, traditional 3D color codes, and stacked codes under assumptions (1) and (2.b) are displayed in the fourth column of \cref{tab:code_compare}. (Note that we still do not know the actual minimum number of required ancillas for a 3D color code and a stacked code to achieve fault tolerance. The numbers for these two codes in the table are for the case that only one ancilla per generator is required.) We find the followings: 
\begin{enumerate}
	\item In exchange for having $T$-gate implementation via code switching available (although the process is not fully fault tolerant), protocols for a capped color code in H form require about 50 percent more qubits than those for a 2D color code of the same distance.
	\item When $d=5$, the recursive capped color code outperforms the stacked code and is comparable to the 3D color code. When $d\geq 7$, the recursive capped color code outperforms both 3D color code and stacked code. (These three codes are the same code when $d=3$.)
\end{enumerate}

Recently, Beverland, Kubica, and Svore \cite{BKS21} compare the overhead required for $T$ gate implementation with two methods: using a 2D color code via magic state distillation versus using a (traditional) 3D color code via code switching. They found that magic state distillation outperforms code switching except at some low physical error rate and when certain fault-tolerant schemes are used in the simulation.
%
Since our protocols require only a few ancillas per generator and the data block of a recursive capped color code is smaller than that of a 3D color code of the same distance, we are hopeful that the range of physical error rate in which code switching beats magic state distillation could be improved by our protocols. A careful simulation on the overhead is required, thus we leave this for future work.




Last, we point out that our fault-tolerant protocols using the flag and the weight parity techniques are specially designed for the \emph{circuit-level noise} so that all possible data errors arising from a few faults (including any 1- and 2-qubit gate faults, faults during the ancilla preparation and measurement, and faults during wait time) can be corrected. However, our protocols require repeated syndrome measurements in order to avoid syndrome bit flips which may occur during the protocols, and the processes can increase the number of gate operations. The single-shot error correction \cite{Bombin15b} is one technique that can deal with the syndrome bit flips without using repeated syndrome measurements. We hope that the flag, the weight parity, and the single-shot error correction techniques could be used together to build fault-tolerant protocols which can protect the data against the circuit-level noise and require only small numbers of gates and ancillas.

\section{Acknowledgements}
\label{sec:acknowledgement}
We thank Christopher Chamberland and Michael Vasmer for the suggestion of considering weight parity error correction on color codes. We also thank Ken Brown, Robert Calderbank, Arun Aloshious, Rui Chao, Shilin Huang, Eric Sabo, and other members of Duke Quantum Center for helpful discussion on similarities between capped color codes and stacked codes, the alternative proof of \cref{thm:main2} presented in this work, and possible future directions. We would like to extend our gratitude to Dan Browne, Andrew Cross, David Gosset, Raymond Laflamme, Ted Yoder, and Beni Yoshida for their helpful comments and suggestions. T.T. acknowledges the support of The Queen Sirikit Scholarship under The Royal Patronage of Her Majesty Queen Sirikit of Thailand. D.L. is supported by an NSERC Discovery grant. Perimeter Institute is supported in part by the Government of Canada and the Province of Ontario.

\bibliographystyle{ieeetr}
\bibliography{bibtex_FT_Thesis}

\newpage

\appendix

\section{Proof of Theorem 1}
\label{app:proof_main_thm}%

In the first part of the proof, we will assume that data errors arising from all faults are purely $Z$ type, and show that if Conditions \ref{con:con1} to \ref{con:con5} are satisfied, then there is no fault combination arising from up to $d-1$ faults whose combined error is a logical $Z$ operator and its cumulative flag vector is zero. Because $i$ faults during the measurements of $X$-type generators cannot cause a $Z$-type error of weight more than $i$, we can assume that each fault is either a qubit fault causing a $Z$-type error (which is $\mathtt{q_0}$, $\mathtt{q_{on}}$, or $\mathtt{q_{off}}$ fault), or a fault during a measurement of some $Z$-type generator (which is $\mathtt{f}$, $\mathtt{v}$, $\mathtt{v^*}$, or $\mathtt{cap}$ fault).

First, recall the main equations (in mod 2):
\begin{align}
	s_\mathtt{cap} =& n_{0}+n_\mathtt{on}+\sum\mathrm{wp}(\sigma_\mathtt{f})+\sum\mathrm{wp}(\sigma_\mathtt{v}) \nonumber \\
	&+\sum\mathrm{wp}(\sigma_\mathtt{v^*,cen})+\sum\mathrm{wp}(\sigma_\mathtt{cap}), \label{eq:mainA1}\\
	\vec{s}_\mathtt{f} =& \sum \vec{q}_\mathtt{on} + \sum \vec{p}_\mathtt{f} + \sum \vec{p}_\mathtt{v} + \sum \vec{p}_\mathtt{v^*,cen} \nonumber \\ &+ \sum \vec{p}_\mathtt{cap}, \label{eq:mainA2}\\
	\vec{s}_\mathtt{v} =& \sum \vec{q}_\mathtt{on}+\sum \vec{q}_\mathtt{off}+\sum \vec{p}_\mathtt{f}+\sum \vec{q}_\mathtt{v^*} \nonumber \\ &+ \sum \vec{p}_\mathtt{cap}, \label{eq:mainA3}\\
	\mathrm{wp}_\mathtt{tot} =& n_{0}+n_\mathtt{on}+n_\mathtt{off}+\sum \mathrm{wp}(\sigma_\mathtt{f})+n_\mathtt{v^*} \nonumber \\ &+ \sum\mathrm{wp}(\sigma_\mathtt{cap}), \label{eq:mainA4}\\
	\vec{\mathbf{f}}_\mathtt{cap} =& \sum \vec{f}_\mathtt{cap}, \label{eq:mainA5}\\
	\vec{\mathbf{f}}_\mathtt{f} =& \sum \vec{f}_\mathtt{f}, \label{eq:mainA6}\\
	\vec{\mathbf{f}}_\mathtt{v} =& \sum \vec{f}_\mathtt{v}+\sum \vec{f}_\mathtt{v^*}, \label{eq:mainA7}\\
	\mathrm{wp}_\mathtt{bot}=&n_\mathtt{off}+\sum \mathrm{wp}(\sigma_\mathtt{v})+\sum \mathrm{wp}(\sigma_\mathtt{v^*,bot}), \label{eq:mainA8}\\
	\vec{s}_\mathtt{bot} =& \sum \vec{q}_\mathtt{off}+\sum \vec{p}_\mathtt{v} + \sum \vec{p}_\mathtt{v^*,bot}. \label{eq:mainA9}
\end{align}

Note that the types of faults involved in the main equations and the types of faults involved in the conditions are related by the correspondence in \cref{tab:2D-3D}. Here we will show that if Conditions \ref{con:con1} to \ref{con:con5} are satisfied and there exists a fault combination arising from up to $d-1$ faults which corresponds to a logical $Z$ operator and the zero cumulative flag vector, some contradictions will happen (also note that Condition \ref{con:con0} is automatically satisfied). By \cref{lem:err_equivalence}, a fault combination corresponding to a logical $Z$ operator and the zero cumulative flag vector gives $s_\mathtt{cap}=0$, $\vec{s}_\mathtt{f}=\vec{0}$, $\vec{s}_\mathtt{v}=\vec{0}$, $\mathrm{wp}_\mathtt{tot}=1$, $\vec{\mathbf{f}}_\mathtt{cap}=\vec{0}$, $\vec{\mathbf{f}}_\mathtt{f}=\vec{0}$, $\vec{\mathbf{f}}_\mathtt{v}=\vec{0}$, $\mathrm{wp}_\mathtt{bot}=1$, and $\vec{s}_\mathtt{bot}=0$. We will divide the proof into 4 cases: (1) $n_\mathtt{f}=0$ and $n_\mathtt{cap}=0$, (2) $n_\mathtt{f} \geq 1$ and $n_\mathtt{cap}=0$, (3) $n_\mathtt{f} = 0$ and $n_\mathtt{cap}\geq 1$, and (4) $n_\mathtt{f} \geq 1$ and $n_\mathtt{cap}\geq 1$. 


\pagebreak

\textit{Case 1}: $n_\mathtt{f}=0$ and $n_\mathtt{cap}=0$. The main equations can be simplified as follows (trivial equations are neglected):
\begin{align}
	0 =& n_{0}+n_\mathtt{on}+\sum\mathrm{wp}(\sigma_\mathtt{v}) +\sum\mathrm{wp}(\sigma_\mathtt{v^*,cen}), \tag{A1}\\
	\vec{0} =& \sum \vec{q}_\mathtt{on} + \sum \vec{p}_\mathtt{v} + \sum \vec{p}_\mathtt{v^*,cen}, \tag{A2} \\
	\vec{0} =& \sum \vec{q}_\mathtt{on}+\sum \vec{q}_\mathtt{off}+\sum \vec{q}_\mathtt{v^*}, \tag{A3}\\
	1 =& n_{0}+n_\mathtt{on}+n_\mathtt{off}+n_\mathtt{v^*}, \tag{A4}\\
	\vec{0} =& \sum \vec{f}_\mathtt{v}+\sum \vec{f}_\mathtt{v^*}, \tag{A7}\\
	1=&n_\mathtt{off}+\sum \mathrm{wp}(\sigma_\mathtt{v})+\sum \mathrm{wp}(\sigma_\mathtt{v^*,bot}), \tag{A8} \\
	\vec{0} =& \sum \vec{q}_\mathtt{off}+\sum \vec{p}_\mathtt{v} + \sum \vec{p}_\mathtt{v^*,bot}. \tag{A9}
\end{align}

All faults involved in \cref{eq:mainA3,eq:mainA4} correspond to $\mathtt{q_{2D}}$ faults on the 2D code and the total number of faults are at most $d-1$. Because Condition \ref{con:con0} is satisfied, from \cref{eq:mainA3,eq:mainA4}, we must have that $n_\mathtt{on}+n_\mathtt{off}+n_\mathtt{v^*}=0$ (mod 2) which implies that $n_{0}=1$. Thus, \cref{eq:mainA1} becomes,
\begin{equation}
	1 = n_\mathtt{on}+\sum\mathrm{wp}(\sigma_\mathtt{v}) +\sum\mathrm{wp}(\sigma_\mathtt{v^*,cen}). \tag{A1}\\
\end{equation}

Since the total number of faults are $n_{0}+n_\mathtt{on}+n_\mathtt{off}+n_\mathtt{v}+n_\mathtt{v^*} \leq d-1$, we find that $n_\mathtt{on}+n_\mathtt{off}+n_\mathtt{v}+n_\mathtt{v^*} \leq d-2$. Let us consider the following cases:

(1.a) If $n_\mathtt{off}=0$, we have $n_\mathtt{v}+n_\mathtt{v^*} \leq d-2-n_\mathtt{on} \leq d-2$. In this case, \cref{eq:mainA7,eq:mainA8,eq:mainA9} contradict Condition \ref{con:con1} (where $\mathtt{v}$ and $\mathtt{v^*}$ faults correspond to $\mathtt{f_{2D}}$ fault).

(1.b) If $n_\mathtt{off}\geq 1$, we have $n_\mathtt{on}+n_\mathtt{v}+n_\mathtt{v^*} \leq d-2-n_\mathtt{off} \leq d-3$. In this case, \cref{eq:mainA1,eq:mainA2,eq:mainA7} contradict Condition \ref{con:con2} (where $\mathtt{q_{on}}$ fault corresponds to $\mathtt{q_{2D}}$ fault, and $\mathtt{v}$ and $\mathtt{v^*}$ faults correspond to $\mathtt{f_{2D}}$ fault).\\

\textit{Case 2}: $n_\mathtt{f} \geq 1$ and $n_\mathtt{cap}=0$. The main equations can be simplified as follows:
\begin{align}
	0 =& n_{0}+n_\mathtt{on}+\sum\mathrm{wp}(\sigma_\mathtt{f})+\sum\mathrm{wp}(\sigma_\mathtt{v}) \nonumber \\
	&+\sum\mathrm{wp}(\sigma_\mathtt{v^*,cen}), \tag{A1}\\
	\vec{0} =& \sum \vec{q}_\mathtt{on} + \sum \vec{p}_\mathtt{f} + \sum \vec{p}_\mathtt{v} + \sum \vec{p}_\mathtt{v^*,cen}, \tag{A2}
\end{align}	
\vspace*{-1.2cm}

\begin{align}	
	\vec{0} =& \sum \vec{q}_\mathtt{on}+\sum \vec{q}_\mathtt{off}+\sum \vec{p}_\mathtt{f}+\sum \vec{q}_\mathtt{v^*}, \tag{A3}\\
	1 =& n_{0}+n_\mathtt{on}+n_\mathtt{off}+\sum \mathrm{wp}(\sigma_\mathtt{f})+n_\mathtt{v^*}, \tag{A4}\\
	\vec{0} =& \sum \vec{f}_\mathtt{f}, \tag{A6}\\
	\vec{0} =& \sum \vec{f}_\mathtt{v}+\sum \vec{f}_\mathtt{v^*}, \tag{A7}\\
	1=&n_\mathtt{off}+\sum \mathrm{wp}(\sigma_\mathtt{v})+\sum \mathrm{wp}(\sigma_\mathtt{v^*,bot}), \tag{A8} \\
	\vec{0} =& \sum \vec{q}_\mathtt{off}+\sum \vec{p}_\mathtt{v} + \sum \vec{p}_\mathtt{v^*,bot}. \tag{A9}
\end{align}
The total number of faults are $n_{0}+n_\mathtt{on}+n_\mathtt{off}+n_\mathtt{f}+n_\mathtt{v}+n_\mathtt{v^*} \leq d-1$, which means that  $n_\mathtt{off}+n_\mathtt{v}+n_\mathtt{v^*} \leq d-1-n_{0}-n_\mathtt{on}-n_\mathtt{f}$ (where $n_\mathtt{f}\geq 1$). Consider the following cases:


(2.a) If $n_{0}=1$ or $n_\mathtt{on}\geq 1$ or $n_\mathtt{f}\geq 2$, we have $n_\mathtt{off}+n_\mathtt{v}+n_\mathtt{v^*}\leq d-3$. In this case, \cref{eq:mainA7,eq:mainA8,eq:mainA9} contradict Condition \ref{con:con2} (where $\mathtt{q_{off}}$ fault corresponds to $\mathtt{q_{2D}}$ fault, and $\mathtt{v}$ and $\mathtt{v^*}$ faults correspond to $\mathtt{f_{2D}}$ fault).

(2.b) If $n_{0}=0,n_\mathtt{on}=0,$ and $n_\mathtt{f}=1$, we find that $n_\mathtt{off}+n_\mathtt{f}+n_\mathtt{v}+n_\mathtt{v^*}\leq d-1$ and $n_\mathtt{off}+n_\mathtt{v}+n_\mathtt{v^*}\leq d-2$. Let us divide this case into the following subcases (where some subcases may overlap):
\begin{enumerate}[label=(\roman*)]
	\item If $n_\mathtt{v} \geq 1$, then $n_\mathtt{off}+n_\mathtt{f}+n_\mathtt{v^*}\leq d-2$. In this case, \cref{eq:mainA3,eq:mainA4,eq:mainA6} contradict Condition \ref{con:con3} (where $\mathtt{q_{off}}$ and $\mathtt{q_{v^*}}$ faults correspond to $\mathtt{q_{2D}}$ fault, and $\mathtt{f}$ fault corresponds to $\mathtt{f_{2D}}$ fault).
	\item If $n_\mathtt{v}=0$ and $n_\mathtt{v^*}=0$, then \cref{eq:mainA8,eq:mainA9} contradict Condition \ref{con:con0} (where $\mathtt{q_{off}}$ fault corresponds to $\mathtt{q_{2D}}$ fault).
	\item If $n_\mathtt{off}=0$, then $n_\mathtt{v}+n_\mathtt{v^*}\leq d-2$ and \cref{eq:mainA7,eq:mainA8,eq:mainA9} contradict Condition \ref{con:con1} (where $\mathtt{v}$ and $\mathtt{v^*}$ faults correspond to $\mathtt{f_{2D}}$ fault).
	\item If $n_\mathtt{off}\geq 1$, $n_\mathtt{v}=0$, and $n_\mathtt{v^*}=1$, then $n_\mathtt{off}+n_\mathtt{v^*}\leq d-2$ and \cref{eq:mainA7,eq:mainA8,eq:mainA9} contradict Condition \ref{con:con3} (where $\mathtt{q_{off}}$ fault correspond to $\mathtt{q_{2D}}$ fault, and $\mathtt{v^*}$ fault corresponds to $\mathtt{f_{2D}}$ fault).
	\item If $n_\mathtt{off}\geq 1$, $n_\mathtt{v}=0$, $n_\mathtt{v^*}\geq 2$, and $n_\mathtt{off}+n_\mathtt{f}+n_\mathtt{v^*} \leq d-2$, then \cref{eq:mainA3,eq:mainA4,eq:mainA6} contradict Condition \ref{con:con3} (where $\mathtt{q_{off}}$ and $\mathtt{q_{v^*}}$ faults correspond to $\mathtt{q_{2D}}$ fault, and $\mathtt{f}$ fault corresponds to $\mathtt{f_{2D}}$ fault).
	\item If $n_\mathtt{off}\geq 1$, $n_\mathtt{v}=0$, $n_\mathtt{v^*}\geq 2$, and $n_\mathtt{off}+n_\mathtt{f}+n_\mathtt{v^*}= d-1$, then \cref{eq:mainA1,eq:mainA2,eq:mainA6,eq:mainA7,eq:mainA8,eq:mainA9} contradict Condition \ref{con:con4} (where $\mathtt{q_{off}}$, $\mathtt{q_{f}}$, and $\mathtt{q_{v^*}}$ faults correspond to $\mathtt{q_{2D}}$, $\mathtt{f_{2D}}$, and $\mathtt{v^*_{2D}}$ faults, respectively).
\end{enumerate}

\textit{Case 3}: $n_\mathtt{f} = 0$ and $n_\mathtt{cap}\geq 1$. The main equations can be simplified as follows:
\begin{align}
	0 =& n_{0}+n_\mathtt{on}+\sum\mathrm{wp}(\sigma_\mathtt{v})+\sum\mathrm{wp}(\sigma_\mathtt{v^*,cen}) \nonumber \\
	&+\sum\mathrm{wp}(\sigma_\mathtt{cap}), \tag{A1}\\
	\vec{0} =& \sum \vec{q}_\mathtt{on} + \sum \vec{p}_\mathtt{v} + \sum \vec{p}_\mathtt{v^*,cen} + \sum \vec{p}_\mathtt{cap}, \tag{A2} \\
	\vec{0} =& \sum \vec{q}_\mathtt{on}+\sum \vec{q}_\mathtt{off}+\sum \vec{q}_\mathtt{v^*} + \sum \vec{p}_\mathtt{cap}, \tag{A3}\\
	1 =& n_{0}+n_\mathtt{on}+n_\mathtt{off}+n_\mathtt{v^*} + \sum\mathrm{wp}(\sigma_\mathtt{cap}), \tag{A4}\\
	\vec{0} =& \sum \vec{f}_\mathtt{cap}, \tag{A5}\\
	\vec{0} =& \sum \vec{f}_\mathtt{v}+\sum \vec{f}_\mathtt{v^*}, \tag{A7}\\
	1=&n_\mathtt{off}+\sum \mathrm{wp}(\sigma_\mathtt{v})+\sum \mathrm{wp}(\sigma_\mathtt{v^*,bot}), \tag{A8} \\
	\vec{0} =& \sum \vec{q}_\mathtt{off}+\sum \vec{p}_\mathtt{v} + \sum \vec{p}_\mathtt{v^*,bot}. \tag{A9}
\end{align}
The total number of faults are $n_\mathtt{0}+n_\mathtt{on}+n_\mathtt{off}+n_\mathtt{v}+n_\mathtt{v^*}+n_\mathtt{cap}\leq d-1$, which means that $n_\mathtt{off}+n_\mathtt{v}+n_\mathtt{v^*}\leq d-1-n_{0}-n_\mathtt{on}-n_\mathtt{cap}$ (where $n_\mathtt{cap} \geq 1$). Consider the following cases:

(3.a) If $n_\mathtt{0}\geq 1$ or $n_\mathtt{on}\geq 1$ or $n_\mathtt{cap}\geq 2$, then $n_\mathtt{off}+n_\mathtt{v}+n_\mathtt{v^*}\leq d-3$. In this case, \cref{eq:mainA7,eq:mainA8,eq:mainA9} contradict Condition \ref{con:con2} (where $\mathtt{q_{off}}$ fault corresponds to $\mathtt{q_{2D}}$ fault, and $\mathtt{v}$ and $\mathtt{v^*}$ faults correspond to $\mathtt{f_{2D}}$ fault).

(3.b) If $n_\mathtt{0}=0, n_\mathtt{on}=0,$ and $n_\mathtt{cap}=1$, we find that $n_\mathtt{off}+n_\mathtt{v}+n_\mathtt{v^*}+n_\mathtt{cap}\leq d-1$ and $n_\mathtt{off}+n_\mathtt{v}+n_\mathtt{v^*}\leq d-2$. Let us divide the proof into the following subcases (where some subcases may overlap):
\begin{enumerate}[label=(\roman*)]
	\item If $n_\mathtt{v}+n_\mathtt{v^*}=0$, then \cref{eq:mainA8,eq:mainA9} contradict Condition \ref{con:con0} (where $\mathtt{q_{off}}$ fault corresponds to $\mathtt{q_{2D}}$ fault).
	\item If $n_\mathtt{v}+n_\mathtt{v^*}=1$, then \cref{eq:mainA7,eq:mainA8,eq:mainA9} contradict Condition \ref{con:con3} (where $\mathtt{q_{off}}$ fault corresponds to $\mathtt{q_{2D}}$ fault, and $\mathtt{v}$ and $\mathtt{v^*}$ faults correspond to $\mathtt{f_{2D}}$ fault).
	\item If $n_\mathtt{off}=0$, then $n_\mathtt{v}+n_\mathtt{v^*}\leq d-2$. In this case, \cref{eq:mainA7,eq:mainA8,eq:mainA9} contradict Condition \ref{con:con1} (where $\mathtt{v}$ and $\mathtt{v^*}$ faults correspond to $\mathtt{f_{2D}}$ fault).
	\item If $n_\mathtt{off}+n_\mathtt{v}+n_\mathtt{v^*}+n_\mathtt{cap}\leq d-2$ (or equivalently, $n_\mathtt{off}+n_\mathtt{v}+n_\mathtt{v^*}\leq d-3$), then \cref{eq:mainA7,eq:mainA8,eq:mainA9} contradict Condition \ref{con:con2} (where $\mathtt{q_{off}}$ fault corresponds to $\mathtt{q_{2D}}$ fault, and $\mathtt{v}$ and $\mathtt{v^*}$ faults correspond to $\mathtt{f_{2D}}$ fault).
	\item If $n_\mathtt{off}\geq1$, $n_\mathtt{v}+n_\mathtt{v^*}\geq2$, and $n_\mathtt{off}+n_\mathtt{v}+n_\mathtt{v^*}+n_\mathtt{cap}= d-1$, then \cref{eq:mainA1,eq:mainA2,eq:mainA5,eq:mainA7,eq:mainA8,eq:mainA9} contradict Condition \ref{con:con5} (where $\mathtt{q_{off}}$, $\mathtt{v}$, $\mathtt{v^*}$, $\mathtt{cap}$ faults correspond to $\mathtt{q_{2D}}$, $\mathtt{f_{2D}}$, $\mathtt{v^*_{2D}}$, and $\mathtt{cap_{2D}}$ faults, respectively).
\end{enumerate}

\textit{Case 4}: $n_\mathtt{f} \geq 1$ and $n_\mathtt{cap}\geq 1$ (the main equations cannot be simplified in this case). From the fact that the total number of faults is at most $d-1$, we have $n_\mathtt{off}+n_\mathtt{v}+n_\mathtt{v^*} \leq d-3$. In this case, we find that \cref{eq:mainA7,eq:mainA8,eq:mainA9} contradict Condition \ref{con:con2} (where $\mathtt{q_{off}}$ fault corresponds to $\mathtt{q_{2D}}$ fault, and $\mathtt{v}$ and $\mathtt{v^*}$ faults correspond to $\mathtt{f_{2D}}$ fault).\\

So far, we have shown that if Conditions \ref{con:con1} to \ref{con:con5} are satisfied and all faults give rise to purely $Z$-type errors, then there is no fault combination arising from up to $d-1$ faults whose combined error is a logical $Z$ operator and its cumulative flag vector is zero. Because the circuits for each pair of $X$-type and $Z$-type generators use the same CNOT ordering, the same analysis is also applicable to the case of purely $X$-type errors; i.e., if Conditions \ref{con:con1} to \ref{con:con5} are satisfied and all faults give rise to purely $X$-type errors, then there is no fault combination arising from up to $d-1$ faults whose combined error is a logical $X$ operator and its cumulative flag vector is zero. In the next part of the proof, we will use these results to show that $\mathcal{F}_t$ is distinguishable.

Let us consider a fault combination whose combined error is of mixed type. Let $t_x$ and $t_z$ denote the total number of faults during the measurements of $X$-type and $Z$-type generators, and let $u_x$, $u_y$, $u_z$ denote the number of qubit faults which give $X$-type, $Y$-type, and $Z$-type errors, respectively. Suppose that the fault combination arises from no more than $d-1$ faults, we have $t_x+t_z+u_x+u_y+u_z \leq d-1$. Next, observe that $t_x$ faults during the measurement of $X$-type generators cannot cause a $Z$-type error of weight more than $t_x$, and $t_z$ faults during the measurement of $Z$-type generators cannot cause a $X$-type error of weight more than $t_z$. Thus, the $Z$ part of the combined error and the cumulative flag vector corresponding to $Z$-type generators can be considered as an error and a cumulative flag vector arising from $t_z+t_x+u_z+u_y \leq d-1$ faults which give rise to purely $Z$-type errors. Similarly, the $X$ part of the combined error and the cumulative flag vector corresponding to $X$-type generators can be considered as an error and a cumulative flag vector arising from $t_x+t_z+u_x+u_y \leq d-1$ faults which give rise to purely $X$-type errors. Recall that there is no fault combination arising from up to $d-1$ faults whose combined error is a logical $X$ (or a logical $Z$) operator and its cumulative flag vector is zero when all faults give rise to purely $X$-type (or purely $Z$-type) errors. Using this, we find that for any fault combination arising from $d-1$ faults, it cannot correspond to a nontrivial logical operator and the zero cumulative flag vector. That is, there is no fault combination corresponding to a nontrivial logical operator and the zero cumulative flag vector in $\mathcal{F}_{2t}$ where $2t=d-1$. By \cref{prop:2t}, this implies that $\mathcal{F}_t$ is distinguishable. 

\section{Fault-tolerant error correction protocol for a general stabilizer code}
\label{sec:app:protocol_general}

In \cref{subsec:FTEC_ana}, we construct an FTEC protocol for a capped color code in H form of any distance in which its fault set is distinguishable. We also show that such a protocol is fault tolerant when the $r$-filter, the ideal decoder, and the distinguishable error set are defined as in \cref{def:r_filter_new,def:ideal_new,def:dist_err_CSS}. Using similar ideas, we can also construct an FTEC protocol for a general stabilizer code whose circuits for the syndrome measurement give a distinguishable fault set $\mathcal{F}_t$, i.e., a code in which $\mathcal{E}_r$ is defined by \cref{def:dist_err} instead of \cref{def:dist_err_CSS}. The outcome bundle defined for the protocol in this section is similar to the outcome bundle defined for the FTEC protocol for a capped color code, except that the syndrome $\vec{\mathbf{s}}$ and the cumulative flag vector $\vec{\mathbf{f}}$ are not separated into $X$ and $Z$ parts. We can also build a list of all possible fault combinations and their corresponding combined error and cumulative vector from the distinguishable fault set $\mathcal{F}_t$. The FTEC protocol for a general stabilizer code is as follows:
\\

\noindent\textbf{FTEC protocol for a stabilizer code whose syndrome measurement circuits give a distinguishable fault set}

During a single round of full syndrome measurement, measure the all generators in any order. Perform full syndrome measurements until the outcome bundles $(\vec{\mathbf{s}},\vec{\mathbf{f}})$ are repeated $t+1$ times in a row. Afterwards, do the following:

\begin{enumerate}
	\item Determine an EC operator $F$ using the list of possible fault combinations as follows:
	\begin{enumerate}
		\item If there is a fault combination on the list whose syndrome and cumulative flag vector are $\vec{\mathbf{s}}$ and $\vec{\mathbf{f}}$, then $F$ is the combined error of such a fault combination. (If there are more than one fault combination corresponding to $\vec{\mathbf{s}}$ and $\vec{\mathbf{f}}$, a combined error of any of such fault combinations will work since they are logically equivalent.)
		\item If none of the fault combinations on the list corresponds to $\vec{\mathbf{s}}$ and $\vec{\mathbf{f}}$, then $F$ can be any Pauli operator whose syndrome is $\vec{\mathbf{s}}$.
	\end{enumerate}
	\item Apply $F$ to the data qubits to perform error correction.
\end{enumerate}

To verify that the FTEC protocol for a general stabilizer code satisfies both properties of an FTEC gadget according to the revised definition (\cref{def:FT_gadget_new}), we can use an analysis similar to that presented in \cref{subsec:FTEC_ana}, except that $\mathcal{E}_r$ is defined by \cref{def:dist_err} instead of \cref{def:dist_err_CSS} and the errors in the analysis ($E_\mathrm{in}, E_a,$ and $E_b$) need not be separated into $X$ and $Z$ parts.

\end{document}